\begin{document}

\newcommand{\OMIT}[1]{}

\DeclarePairedDelimiter\ceil{\lceil}{\rceil}
\DeclarePairedDelimiter\floor{\lfloor}{\rfloor}

\allowdisplaybreaks

\newcommand{\oprh}[3]{\mathsf{ORep}_{#3}(#1,#2)}
\newcommand{\probhom}[2]{\probrep{#1}{#2}}
\newcommand{\crsh}[3]{\mathsf{CRS}_{#3}(#1,#2)}

\newcommand{\op}{\mathit{op}}
\newcommand{\PS}{\mathcal{P}}
\newcommand{\viol}[2]{\mathsf{V}(#1,#2)}
\newcommand{\rs}[2]{\mathsf{RS}(#1,#2)}
\newcommand{\rsone}[2]{\mathsf{RS}^1(#1,#2)}
\newcommand{\crs}[2]{\mathsf{CRS}(#1,#2)}
\newcommand{\crss}[3]{\mathsf{CRS}_{#3}(#1,#2)}
\newcommand{\cancrs}[2]{\mathsf{CanCRS}(#1,#2)}
\newcommand{\cancrss}[3]{\mathsf{CanCRS}_{#3}(#1,#2)}
\newcommand{\opr}[2]{\mathsf{ORep}(#1,#2)}
\newcommand{\copr}[2]{\mathsf{CORep}(#1,#2)}
\newcommand{\ops}[3]{\mathsf{Ops}_{#1}(#2,#3)}
\newcommand{\opsone}[3]{\mathsf{Ops}^1_{#1}(#2,#3)}
\newcommand{\abs}[1]{\mathsf{abs}_{>0}(#1)}
\renewcommand{\abs}[1]{\mathsf{RL}(#1)}
\newcommand{\insP}{\ins{P}}
\newcommand\sem[1]{{[\![ #1 ]\!]}}
\newcommand{\probrep}[2]{\mathsf{P}_{#1}(#2)}
\newcommand{\oca}[2]{\mathsf{OCA}_{#2}(#1)}
\newcommand{\ocqa}[1]{\mathsf{OCQA}(#1)}
\newcommand{\rrelfreq}[1]{\mathsf{RRFreq}(#1)}
\newcommand{\rrelfreqone}[1]{\mathsf{RRFreq}^1(#1)}
\newcommand{\orfreq}[2]{\mathsf{rrfreq}_{#1}(#2)}
\newcommand{\orfreqone}[2]{\mathsf{rrfreq}^1_{#1}(#2)}
\newcommand{\srelfreq}[1]{\mathsf{SRFreq}(#1)}
\newcommand{\srelfreqone}[1]{\mathsf{SRFreq}^1(#1)}
\newcommand{\srfreq}[2]{\mathsf{srfreq}_{#1}(#2)}
\newcommand{\srfreqone}[2]{\mathsf{srfreq}^1_{#1}(#2)}
\newcommand{\ur}{\mathsf{ur}}
\newcommand{\us}{\mathsf{us}}
\newcommand{\uo}{\mathsf{uo}}

\newcommand{\IS}{\mathsf{IS}}
\newcommand{\ISZ}{\mathsf{IS}_{\neq \emptyset}}
\newcommand{\ISC}{\mathsf{IS}^{\mathsf{con}}}
\newcommand{\CC}{\mathsf{CC}}
\newcommand{\cg}[2]{\mathsf{CG}(#1,#2)}

\newcommand{\crsone}[2]{\mathsf{CRS}^1(#1,#2)}
\newcommand{\coprone}[2]{\mathsf{CORep}^1(#1,#2)}

\newcommand{\mi}[1]{\mathit{#1}}
\newcommand{\ins}[1]{\mathbf{#1}}
\newcommand{\adom}[1]{\mathsf{dom}(#1)}
\renewcommand{\paragraph}[1]{\textbf{#1}}
\newcommand{\ra}{\rightarrow}
\newcommand{\fr}[1]{\mathsf{fr}(#1)}
\newcommand{\dep}{\Sigma}
\newcommand{\sch}[1]{\mathsf{sch}(#1)}
\newcommand{\ar}[1]{\mathsf{ar}(#1)}
\newcommand{\body}[1]{\mathsf{body}(#1)}
\newcommand{\head}[1]{\mathsf{head}(#1)}
\newcommand{\guard}[1]{\mathsf{guard}(#1)}
\newcommand{\class}[1]{\mathbb{#1}}
\newcommand{\pos}[2]{\mathsf{pos}(#1,#2)}
\newcommand{\app}[2]{\langle #1,#2 \rangle}
\newcommand{\crel}[1]{\prec_{#1}}

\newcommand{\ccrel}[1]{\prec_{#1}^+}

\newcommand{\tcrel}[1]{\prec_{#1}^{\star}}
\newcommand{\rctaa}{\class{CT}_{\forall \forall}^{\mathsf{res}}}
\newcommand{\rctaapr}{\mathsf{CT}_{\forall \forall}^{\mathsf{res}}}
\newcommand{\rctae}{\class{CT}_{\forall \exists}^{\mathsf{res}}}
\newcommand{\rctaepr}{\mathsf{CT}_{\forall \exists}^{\mathsf{res}}}
\newcommand{\base}[1]{\mathsf{base}(#1)}
\newcommand{\eqt}[1]{\mathsf{eqtype}(#1)}
\newcommand{\result}[1]{\mathsf{result}(#1)}
\newcommand{\chase}[2]{\mathsf{ochase}(#1,#2)}
\newcommand{\pred}[1]{\mathsf{pr}(#1)}
\newcommand{\origin}[1]{\mathsf{org}(#1)}
\newcommand{\eq}[1]{\mathsf{eq}(#1)}
\newcommand{\dept}[1]{\mathsf{depth}(#1)}

\newcommand{\comp}[2]{\mathsf{comp}_{#2}(#1)}

\newcommand{\rep}[2]{\mathsf{rep}_{#2}(#1)}
\newcommand{\repp}[2]{\mathsf{rep}_{#2}\left(#1\right)}
\newcommand{\rfreq}[2]{\mathsf{rfreq}_{#2}(#1)}
\newcommand{\homs}[3]{\mathsf{hom}_{#2,#3}(#1)}
\newcommand{\prob}[1]{\mathsf{#1}}
\newcommand{\key}[1]{\mathsf{key}(#1)}
\newcommand{\keyval}[2]{\mathsf{key}_{#1}(#2)}
\newcommand{\block}[2]{\mathsf{block}_{#2}(#1)}
\newcommand{\sblock}[2]{\mathsf{sblock}_{#2}(#1)}

\newcommand{\rt}[1]{\mathsf{root}(#1)}
\newcommand{\child}[1]{\mathsf{child}(#1)}

\newcommand{\var}[1]{\mathsf{var}(#1)}
\newcommand{\const}[1]{\mathsf{const}(#1)}
\newcommand{\pvar}[2]{\mathsf{pvar}_{#2}(#1)}

\newcommand{\att}[1]{\mathsf{att}(#1)}
\newcommand{\card}[1]{\sharp #1}

\newcommand{\pr}{\mathsf{Pr}}
\newcommand{\prsp}{\mathsf{PS}}

\newcommand{\sign}[1]{\mathsf{sign}(#1)}
\newcommand{\litval}[2]{\mathsf{lval}_{#2}(#1)}
\newcommand{\angletup}[1]{\langle #1 \rangle}

\def\qed{\hfill{\qedboxempty}      
  \ifdim\lastskip<\medskipamount \removelastskip\penalty55\medskip\fi}

\def\qedboxempty{\vbox{\hrule\hbox{\vrule\kern3pt
                 \vbox{\kern3pt\kern3pt}\kern3pt\vrule}\hrule}}

\def\qedfull{\hfill{\qedboxfull}   
  \ifdim\lastskip<\medskipamount \removelastskip\penalty55\medskip\fi}

\def\qedboxfull{\vrule height 4pt width 4pt depth 0pt}

\newcommand{\markfull}{\qedboxfull}
\newcommand{\markempty}{\qed}

\newtheorem{claim}[theorem]{Claim}
\newtheorem{fact}[theorem]{Fact}
\newtheorem{observation}{Observation}
\newtheorem{remark}{Remark}
\newtheorem{apptheorem}{Theorem}[section]
\newtheorem{appcorollary}[apptheorem]{Corollary}
\newtheorem{appproposition}[apptheorem]{Proposition}
\newtheorem{applemma}[apptheorem]{Lemma}
\newtheorem{appclaim}[apptheorem]{Claim}
\newtheorem{appfact}[apptheorem]{Fact}

\newtheorem{manualtheoreminner}{Theorem}
\newenvironment{manualtheorem}[1]{%
	\renewcommand\themanualtheoreminner{#1}%
	\manualtheoreminner
}{\endmanualtheoreminner}

\newtheorem{manualpropositioninner}{Proposition}
\newenvironment{manualproposition}[1]{%
	\renewcommand\themanualpropositioninner{#1}%
	\manualpropositioninner
}{\endmanualpropositioninner}

\newtheorem{manuallemmainner}{Lemma}
\newenvironment{manuallemma}[1]{%
	\renewcommand\themanuallemmainner{#1}%
	\manuallemmainner
}{\endmanuallemmainner}

\fancyhead{}

\title{Uniform Operational Consistent Query Answering}

\author{Marco Calautti}
\affiliation{%
	\institution{University of Trento}
	\country{}
}
\email{marco.calautti@unitn.it}

\author{Ester Livshits}
\affiliation{%
	\institution{University of Edinburgh}
	\country{}
}
\email{ester.livshits@ed.ac.uk}

\author{Andreas Pieris}
\affiliation{%
	\institution{University of Edinburgh \&}
	\country{}
}
\affiliation{%
	\institution{University of Cyprus}
	\country{}
}
\email{apieris@inf.ed.ac.uk}

\author{Markus Schneider}
\affiliation{%
	\institution{University of Edinburgh}
	\country{}
}
\email{m.schneider@ed.ac.uk}

\begin{abstract} 
	Operational consistent query answering (CQA) is a recent framework for CQA, based on revised definitions of repairs and consistent answers, which opens up the possibility of efficient approximations with explicit error guarantees.
	The main idea is to iteratively apply operations (e.g., fact deletions), starting from an inconsistent database, until we reach a database that is consistent w.r.t.~the given set of constraints. This gives us the flexibility of choosing the probability with which we apply an operation, which in turn allows us to calculate the probability of an operational repair, and thus, the probability with which a consistent answer is entailed.	
	A natural way of assigning probabilities to operations is by targeting the uniform probability distribution over a reasonable space such as the set of operational repairs, the set of sequences of operations that lead to an operational repair, and the set of available operations at a certain step of the repairing process. This leads to what we generally call uniform operational CQA.
	The goal of this work is to perform a data complexity analysis of both exact and approximate uniform operational CQA, focusing on functional dependencies (and subclasses thereof), and conjunctive queries. The main outcome of our analysis (among other positive and negative results), is that uniform operational CQA pushes the efficiency boundaries further by ensuring the existence of efficient approximation schemes in scenarios that go beyond the simple case of primary keys, which seems to be the limit of the classical approach to CQA.
	%
	%
	%
\end{abstract}

\maketitle

\section{Introduction}\label{sec:introduction}

Consistent query answering (CQA) is an elegant framework, introduced in the late 1990s by Arenas, Bertossi, and Chomicki~\cite{ArBC99}, that allows us to compute conceptually meaningful answers to queries posed over inconsistent databases, that is, databases that do not conform to their specifications.
%
The key elements underlying CQA are (i) the notion of {\em (database) repair} of an inconsistent database $D$, that is, a consistent database whose difference with $D$ is somehow minimal, and (ii) the notion of query answering based on {\em consistent answers}, that is, answers that are entailed by every repair.
%
Since deciding whether a candidate answer is a consistent answer is most commonly intractable in data complexity (in fact, even for primary keys and conjunctive queries, the problem is coNP-hard~\cite{ChMa05}), there was a great effort on drawing the tractability boundary for CQA; see, e.g.,~\cite{FuFM05,FuMi07,GePW15,KoSu14,KoWi15,KoWi21}.
Much of this effort led to interesting dichotomy results that precisely characterize when CQA is tractable/intractable in data complexity.
However, the tractable fragments do not cover many relevant scenarios that go beyond primary keys.

As extensively argued in~\cite{CaLP18}, the goal of a practically applicable CQA approach should be efficient approximate query answering with explicit error guarantees rather than exact query answering. In the realm of the CQA approach described above, one could try to devise efficient probabilistic algorithms with bounded one- or two-sided error. However, it is unlikely that such algorithms exist since, even for very simple scenarios (e.g., primary keys and conjunctive queries), placing the problem in tractable randomized complexity classes such as RP or BPP would imply that the polynomial hierarchy collapses~\cite{KaLi80}.
Another promising idea is to replace the rather strict notion of consistent answers with the more refined notion of relative frequency, that is, the percentage of repairs that entail an answer, and then try to approximate it via a fully polynomial-time randomized approximation scheme (FPRAS); computing it exactly is, unsurprisingly, $\sharp ${\rm P}-hard~\cite{MaWi13}. Indeed, for primary keys and conjunctive queries, one can approximate the relative frequency via an FPRAS; this is implicit in~\cite{DaSu07}, and it has been made explicit in~\cite{CaCP19}. Moreover, a recent experimental evaluation revealed that approximate CQA in the presence of primary keys and conjunctive queries is not unrealistic in practice~\cite{CaCP21}.
However, it seems that the simple case of primary keys is the limit of this approach. We have strong indications that in the case of FDs the problem of computing the relative frequency does not admit an FPRAS, while in the case of keys it is a highly non-trivial problem~\cite{corr}.


The above limitations of the classical CQA approach led the authors of~\cite{CaLP18} to propose a new framework for CQA, based on revised definitions of repairs and consistent answers, which opens up the possibility of efficient approximations with error guarantees. The main idea underlying this new framework is to replace the declarative approach to repairs with an {\em operational} one that explains the process of constructing a repair.
In other words, we can iteratively apply operations (e.g., fact deletions), starting from an inconsistent database, until we reach a database that is consistent w.r.t.~the given set of constraints. This gives us the flexibility of choosing the probability with which we apply an operation, which in turn allows us to calculate the probability of an operational repair, and thus, the probability with which an answer is entailed.

Probabilities can be naturally assigned to operations in many scenarios leading to inconsistencies. This is illustrated by the following example from~\cite{CaLP18}.
%
	Consider a data integration scenario that results in a database containing the facts ${\rm Emp}(1, {\rm Alice})$ and ${\rm Emp}(1,{\rm Tom})$ that violate the constraint that the first attribute of the relation name ${\rm Emp}$ (the id) is a key.
	Suppose we have a level of
	trust in each of the sources; say we believe that each is 50\% reliable. With probability $0.5 \cdot 0.5 = 0.25$ we do not trust either tuple and apply the operation that removes both facts. With probability $(1-0.25)/2=0.375$ we remove either ${\rm Emp}(1,{\rm Alice})$ or ${\rm Emp}(1,{\rm Tom})$.
	%
	%

The preliminary data complexity analysis of operational CQA performed in~\cite{CaLP18} revealed that computing the probability of a candidate answer is $\sharp ${\rm P}-hard and inapproximable, even for primary keys and conjunctive queries.
%
%
%
However, these negative results should not be seen as the end of the story, but rather as the beginning since operational CQA gives us the flexibility to choose the probabilities assigned to operations.
Indeed, the main question left open by~\cite{CaLP18} is the following: how can we choose the probabilities assigned to operations so that the existence of an FPRAS is guaranteed?

A natural way of choosing those probabilities is to follow the uniform probability distribution over a reasonable space. The obvious candidates for such a space are (i) the set of operational repairs, (ii) the set of sequences of operations that lead to a repair (note that multiple such sequences can lead to the same repair), and (iii) the set of available operations at a certain step of the repairing process. This leads to the so-called {\em uniform operational CQA}.
The obvious question is how the complexity of exact and approximate operational CQA is affected if we assign probabilities to operations according to the above refined ways. In particular, we would like to understand whether uniform operational CQA allows us to go beyond the relatively simple case of primary keys.

Our goal is to perform a complexity analysis of uniform operational CQA, and provide answers to the above central questions.
Our main findings can be summarized as follows:

\begin{enumerate}
	\item Exact uniform operational CQA remains $\sharp ${\rm P}-hard, even in the case of primary keys and conjunctive queries.
	
	\item Uniform operational CQA admits an FPRAS if we focus on primary keys and conjunctive queries.
	
	\item In the case of arbitrary keys and FDs, although assigning probabilities to operations based on uniform repairs and sequences (approaches (i) and (ii) discussed above) does not lead (or it remains open whether it leads) to the approximability of operational CQA, the approach of uniform operations renders the problem approximable. The latter is a significant result since it goes beyond the simple case of primary keys.
\end{enumerate}

\OMIT{
A database is inconsistent if it does not conform to its specifications given in the form of integrity constraints. There is a consensus that inconsistency is a real-life phenomenon that arises due to many reasons such as integration of conflicting sources. With the aim of obtaining conceptually meaningful answers to queries posed over inconsistent databases, Arenas, Bertossi, and Chomicki introduced in the late 1990s the notion of Consistent Query Answering (CQA)~\cite{ArBC99}. The key elements underlying CQA are (i) the notion of {\em (database) repair} of an inconsistent database $D$, that is, a consistent database whose difference with $D$ is somehow minimal, and (ii) the notion of query answering based on {\em certain answers}, that is, answers that are entailed by every repair. A simple example, taken from~\cite{CaCP19}, that illustrates the above notions follows:

\begin{example}\label{exa:cqa}
	Consider the relational schema consisting of a single relation name $\text{\rm Employee}(\text{\rm id}, \text{\rm name}, \text{\rm dept})$ that comes with the constraint that the attribute {\text{\rm id}} functionally determines \text{\rm name} and \text{\rm dept}.
	Consider also the database $D$ consisting of the tuples:
	(1, \text{\rm Bob}, \text{\rm HR}), (1, \text{\rm Bob}, \text{\rm IT}), (2, \text{\rm Alice}, \text{\rm IT}), (2, \text{\rm Tim}, \text{\rm IT}).
	It is easy to see that $D$ is inconsistent since we are uncertain about Bob's department, and the name of the employee with id $2$. To devise a repair, we need to keep one tuple from each conflicting pair, which leads to a maximal subset of $D$ that is consistent.
	Observe now that the query that asks whether employees $1$ and $2$ work in the same department is true only in two out of four repairs, and thus, not entailed. \hfill\markfull
\end{example}

\noindent
\paragraph{Counting Repairs Entailing a Query.}
A key task in this context is to count the number of repairs of an inconsistent database $D$ w.r.t.~a set $\dep$ of constraints that entail a given query $Q$; for clarity, we base our discussion on Boolean queries.
Depending on the shape of the constraints and the query, the data complexity of the above problem can be tractable, i.e., in \text{\rm FP} (the counting analogue of \textsc{PTime}), or intractable, i.e., $\sharp \text{\rm P}$-complete (with $\sharp \text{\rm P}$ being the counting analogue of \text{\rm NP}).
In other words, given a set $\dep$ of constraints and a query $Q$, the problem $\sharp \prob{Repairs}(\dep,Q)$ that takes as input a database $D$, and asks for the number of repairs of $D$ w.r.t.~$\dep$ that entail $Q$, can be tractable or intractable depending on the shape of $\dep$ and $Q$. This leads to the natural question whether we can establish a complete classification, i.e., for every $\dep$ and $Q$, classify $\sharp \prob{Repairs}(\dep,Q)$ as tractable or intractable by simply inspecting $\dep$ and $Q$.

This is a highly non-trivial question for which Maslowski and Wijsen gave an affirmative answer providing that we concentrate on primary keys, i.e., at most one key constraint per relation name, and self-join-free conjunctive queries (SJFCQs), i.e., CQs that cannot mention a relation name more than once~\cite{MaWi13}. 
More precisely, they have established the following dichotomy result: given a set $\dep$ of primary keys, and an SJFCQ $Q$, $\sharp \prob{Repairs}(\dep,Q)$ is either in \text{\rm FP} or $\sharp$\text{\rm P}-complete, and we can determine in polynomial time, by simply analyzing $\dep$ and $Q$, which complexity statement holds.
An analogous dichotomy for arbitrary CQs with self-joins was established by the same authors in~\cite{MaWi14} under the assumption that the primary keys are simple, i.e., they consist of a single attribute. The question whether such a dichotomy result exists for arbitrary primary keys and CQs with self-joins remains a challenging open problem.


Although the picture is rather well-understood for primary keys and SJFCQs, once we go beyond primary keys we know very little concerning the existence of a complete data complexity classification as the one described above.
In particular, the dichotomy result by Maslowski and Wijsen does not apply when we consider the more general class of functional dependencies (FDs), i.e., constraints of the form $R : X \ra Y$, where $X,Y$ are subsets of the set of attributes of $R$, stating that the attributes of $X$ functionally determine the attributes of $Y$.
This brings us to the following question:

\smallskip

\noindent {\em \textbf{Question 1:} Can we lift the dichotomy result for primary keys and SJFCQs to the more general case of functional dependencies?}

\smallskip

The closest known result to the complexity classification asked by Question 1 is for the problem $\sharp \prob{Repairs}(\dep)$, where $\dep$ is a set of FDs, that takes as input a database $D$, and asks for the number of repairs of $D$ w.r.t.~$\dep$ (without considering a query). In particular, we know from~\cite{LiKW21} that whenever $\dep$ has a so-called left-hand side (LHS, for short) chain (up to equivalence), $\sharp \prob{Repairs}(\dep)$ is in \text{\rm FP}; otherwise, it is $\sharp\text{\rm P}$-complete. We also know that checking whether $\dep$ has an LHS chain (up to equivalence) is feasible in polynomial time. Let us recall that a set $\dep$ of FDs has a LHS chain if, for every two FDs $R : X_1 \ra Y_1$ and $R : X_2 \ra Y_2$ of $\dep$, $X_1 \subseteq X_2$ or $X_2 \subseteq X_1$.


\noindent
\paragraph{Approximate Counting.} Another key task in the context of database repairing is to classify $\sharp \prob{Repairs}(\dep,Q)$, for a set of constraints $\dep$ and a query $Q$, as approximable, that is, the target value can be efficiently approximated with error guarantees via a fully polynomial-time randomized approximation scheme (FPRAS), or as inapproximable. Of course, whenever $\sharp \prob{Repairs}(\dep,Q)$ is tractable, then it is trivially approximable. Thus, the interesting task is to classify the intractable cases as approximable or inapproximable.

For a set $\dep$ of primary keys, and a CQ $Q$ (even with self-joins), $\sharp \prob{Repairs}(\dep,Q)$ is always approximable; this is implicit in~\cite{DaSu07}, and it has been made explicit in~\cite{CaCP19}. However, for FDs this is not the case. Depending on the syntactic shape of $\dep$ and $Q$, $\sharp \prob{Repairs}(\dep,Q)$ can be approximable or not; these are actually results of the present work. This leads to the following question:




\noindent
\paragraph{Summary of Contributions.} 
Concerning Question (1), we lift the dichotomy of~\cite{MaWi13} for primary keys and SJFCQs to the general case of FDs (Theorem~\ref{the:fds-dichotomy}). To this end, we build on the dichotomy for the problem of counting repairs (without a query) from~\cite{LiKW21}, which allows us to concentrate on FDs with an LHS chain (up to equivalence) since for all the other cases we can inherit the $\sharp \text{P}$-hardness.
Therefore, our main technical task was actually to lift the dichotomy for primary keys and SJFCQs from~\cite{MaWi13} to the case of FDs with an LHS chain (up to equivalence). Although the proof of this result borrows several ideas from the proof of~\cite{MaWi13}, the task of lifting the result to FDs with an LHS chain (up to equivalence) was a non-trivial one. This is due to the significantly more complex structure of database repairs under FDs with an LHS chain compared to those under primary keys; further details are given in Section~\ref{sec:lhs-chain-fds}.

\OMIT{
Concerning Question (2), although we do not establish a complete classification, we provide results that, apart from being interesting in their own right, are crucial steps towards a complete classification (Theorem~\ref{the:apx-main-result}). After discussing the difficulty underlying a proper 
dichotomy (it will resolve the challenging open problem of whether counting maximal matchings in a bipartite graph is approximable), we show that, for every set $\dep$ of FDs with an LHS chain (up to equivalence) and a CQ $Q$ (even with self-joins), $\sharp \prob{Repairs}(\dep,Q)$ admits an FPRAS. On the other hand, we show that there is a very simple set $\dep$ of FDs such that, for every SJFCQ $Q$, $\sharp \prob{Repairs}(\dep,Q)$ does not admit an FPRAS (under a standard complexity assumption). 
}
}

\section{Preliminaries}\label{sec:preliminaries}

We recall the basics on relational databases, functional dependencies, and conjunctive queries.  
In the rest of the paper, we assume the disjoint countably infinite sets $\ins{C}$ and $\ins{V}$ of {\em constants} and {\em variables}, respectively. For $n > 0$, let $[n]$ be the set $\{1,\ldots,n\}$.

\medskip

\noindent\paragraph{Relational Databases.}
A {\em (relational) schema} $\ins{S}$ is a finite set of relation names with associated arity; we write $R/n$ to denote that $R$ has arity $n > 0$. Each relation name $R/n$ is associated with a tuple of distinct attribute names $(A_1,\ldots,A_n)$; we write $\att{R}$ for the set $\{A_1,\ldots,A_n\}$ of attributes . 
%
A {\em fact} over $\ins{S}$ is an expression of the form $R(c_1,\ldots,c_n)$, where $R/n \in \ins{S}$, and $c_i \in \ins{C}$ for each $i \in [n]$.
A {\em database} $D$ over $\ins{S}$ is a finite set of facts over $\ins{S}$. 
The {\em active domain} of $D$, denoted $\adom{D}$, is the set of constants occurring in $D$.
For a fact $f = R(c_1,\ldots,c_n)$, with $(A_1,\ldots,A_n)$ being the tuple of attribute names of $R$, we write $f[A_i]$ for the constant $c_i$.
%

\medskip

\noindent
\paragraph{Functional Dependencies.}
A {\em functional dependency} (FD) $\phi$ over a schema $\ins{S}$ is an expression of the form $R : X \ra Y$, where $R/n \in \ins{S}$ and $X,Y \subseteq \att{R}$. 
When $X$ or $Y$ are singletons, we avoid the curly brackets, and simply write the attribute name.
We call $\phi$ a {\em key} if $X \cup Y = \att{R}$. 
%
%
Given a set $\dep$ of keys over $\ins{S}$, we say that $\dep$ is a set of {\em primary keys} if,
for each $R \in \ins{S}$, there exists at most one key in $\dep$ of the form $R : X \ra Y$.
A database $D$ satisfies an FD $\phi = R : X \ra Y$, denoted $D \models \phi$, if, for every two facts $R(\bar c_1),R(\bar c_2) \in D$ the following holds: $R(\bar c_1)[A]=R(\bar c_2)[A]$ for every $A \in X$ implies $R(\bar c_1)[B]=R(\bar c_2)[B]$ for every $B \in Y$. 
%
We say that $D$ is {\em consistent} w.r.t.~a set $\dep$ of FDs, written $D \models \dep$, if $D \models \phi$ for every $\phi \in \dep$; otherwise, we say that $D$ is {\em inconsistent} w.r.t.~$\dep$.
%

\medskip

\noindent
\paragraph{Conjunctive Queries.}
A {\em (relational) atom} $\alpha$ over a schema $\ins{S}$ is an expression of the form $R(t_1,\ldots,t_n)$, where $R/n \in \ins{S}$, and $t_i \in \ins{C} \cup \ins{V}$ for each $i \in [n]$.
%
%
A {\em conjunctive query} (CQ) $Q$ over $\ins{S}$ is an expression of the form $\textrm{Ans}(\bar x)\ \text{:-}\ R_1(\bar y_1), \ldots, R_n(\bar y_n)$, where $R_i(\bar y_i)$, for $ i \in [n]$, is an atom over $\ins{S}$, $\bar x$ are the {\em answer variables} of $Q$, and each variable in $\bar x$ is mentioned in $\bar y_i$ for some $i \in [n]$. We may write $Q(\bar x)$ to indicate that $\bar x$ are the answer variables of $Q$. When $\bar x$ is empty, $Q$ is called {\em Boolean}. 
%
%
The semantics of CQs is given via homomorphisms. Let $\var{Q}$ and $\const{Q}$ be the set of variables and constants in $Q$, respectively. A {\em homomorphism} from a CQ $Q$ of the form $\textrm{Ans}(\bar x)\ \text{:-}\ R_1(\bar y_1), \ldots, R_n(\bar y_n)$ to a database $D$ is a function $h : \var{Q} \cup \const{Q} \ra \adom{D}$, which is the identity over $\ins{C}$, such that $R_i(h(\bar y_i)) \in D$ for each $i \in [n]$.
A tuple $\bar c \in \adom{D}^{|\bar x|}$ is an {\em answer to $Q$ over $D$} if there is a homomorphism $h$ from $Q$ to $D$ with $h(\bar x) = \bar c$. Let $Q(D)$ be the answers to $Q$ over $D$. For Boolean CQs, we write $D \models Q$, and say that $D$ {\em entails} $Q$, if $() \in Q(D)$.


\OMIT{
\smallskip
\noindent
\paragraph{Database Repairs.}
Given a database $D$ and a set $\dep$ of FDs, both over a schema $\ins{S}$, a {\em repair} of $D$ w.r.t.~$\dep$ is a database $D' \subseteq D$ such that (i) $D' \models \dep$, and (ii) for every $D'' \supsetneq D'$, $D'' \not\models \dep$. In simple words, a repair of $D$ w.r.t.~$\dep$ is a maximal subset of $D$ that is consistent w.r.t.~$\dep$. 
Let $\rep{D}{\dep}$ be the set of repairs of $D$ w.r.t~$\dep$.
Given a CQ $Q(\bar x)$ over $\ins{S}$, and a tuple $\bar t \in \adom{D}^{|\bar x|}$, we write $\rep{D}{\dep,Q,\bar t}$ for $\{D' \in \rep{D}{\dep} \mid \bar t \in Q(D')\}$.
If $Q$ is Boolean, and thus, the only possible answer is the empty tuple $()$, we write $\rep{D}{\dep,Q}$ instead of $\rep{D}{\dep,Q,()}$.
%
For brevity, we write $\card{\rep{D}{\dep}}$ and $\card{\rep{D}{\dep,Q,\bar t}}$ for the cardinality of $\rep{D}{\dep}$ and $\rep{D}{\dep,Q,\bar t}$, respectively.
A useful observation is that $\card{\rep{D}{\dep}} = \card{\rep{D'}{\dep,Q,\bar t}}$, for some easily computable database $D'$ and tuple $\bar t$, providing that $Q$ is self-join-free.

\def\lemmarepairreduction{
	Consider a database $D$, a set $\dep$ of FDs, and an SJFCQ $Q(\bar x)$. We can compute in polynomial time a database $D'$ and a tuple $\bar t \in \adom{D'}^{|\bar x|}$ such that $\card{\rep{D}{\dep}} = \card{\rep{D'}{\dep,Q,\bar t}}$.
}
\begin{lemma}\label{lem:cook-reduction}
\lemmarepairreduction
\end{lemma}



\section{Problem Definition}\label{sec:problem-definition}
%

A key problem in the CQA context is $\sharp \prob{Repairs}$, 
which accepts as input a database $D$, a set $\dep$ of FDs, a CQ $Q(\bar x)$, and a tuple $\bar t \in \adom{D}^{|\bar x|}$, and asks for $\card{\rep{D}{\dep,Q,\bar t}}$.
\OMIT{
\medskip

\begin{center}
\fbox{\begin{tabular}{ll}
{\small PROBLEM} : & $\sharp \prob{Repairs}$
\\
{\small INPUT} : & A database $D$, a set $\dep$ of FDs, a CQ $Q(\bar x)$,
\\
& and a tuple $\bar t \in \adom{D}^{|\bar x|}$
\\
{\small OUTPUT} : &  $\card{\rep{D}{\dep,Q,\bar t}}$
\end{tabular}}
\end{center}

\medskip

\noindent 
}
The ultimate goal is to draw the tractability boundary of $\sharp \prob{Repairs}$ in {\em data complexity}. In other words, for each set $\dep$ of FDs and CQ $Q(\bar x)$, we focus on the problem

\medskip

\begin{center}
	\fbox{\begin{tabular}{ll}
			{\small PROBLEM} : & $\sharp \prob{Repairs}(\dep,Q(\bar x))$
			\\
			{\small INPUT} : & A database $D$, and a tuple $\bar t \in \adom{D}^{|\bar x|}$
			\\
			{\small OUTPUT} : &  $\card{\rep{D}{\dep,Q,\bar t}}$
	\end{tabular}}
\end{center}

\medskip

\noindent and the goal is to classify it as tractable (place it in FP) or as intractable (show that is $\sharp$P-complete). 
%
\OMIT{
In particular, FP (the counting analog of \textsc{PTime}) is defined as
\[
\text{\rm FP}\ =\ \{f \mid f \text{ is polynomial-time computable}\},
\]
whereas $\sharp$P (the counting analog of NP) is defined as
\[
\sharp \text{\rm P}\ =\ \{\mathsf{accept}_M \mid M \text{ is a polynomial-time non-deterministic Turing machine}\},
\]
where $\mathsf{accept}_M$ is the function $\{0,1\}^* \ra \mathbb{N}$ such that $\mathsf{accept}_M(x)$ is the number of accepting computations of $M$ on input $x$.
}
Typically, hardness results for $\sharp$P rely on polynomial-time Turing reductions (a.k.a. Cook reductions). In particular, given two counting functions $f,g : \{0,1\}^* \ra \mathbb{N}$, we say that $f$ is {\em Cook reducible} to $g$ if there is a polynomial-time deterministic transducer $M$, with access to an oracle for $g$, such that, for every $x \in \{0,1\}^*$, $f(x) = M(x)$. 
As discussed in Section~\ref{sec:introduction}, Maslowski and Wijsen~\cite{MaWi13} established the following classification:

\begin{theorem}[\cite{MaWi13}]\label{the:pk-dichotomy}
	For a set $\dep$ of primary keys, and an SJFCQ $Q$:
	\begin{enumerate}
		\item $\sharp \prob{Repairs}(\dep,Q)$ is either in \text{\rm FP} or $\sharp$\text{\rm P}-complete. 
		\item We can decide in polynomial time in $||\dep|| + ||Q||$ whether $\sharp \prob{Repairs}(\dep,Q)$ is in \text{\rm FP} or $\sharp$\text{\rm P}-complete.\footnote{As usual, $||o||$  denotes the size of the encoding of a syntactic object $o$.}
	\end{enumerate}
\end{theorem}

\noindent As we shall see in Section~\ref{sec:exact-counting}, one of the main results of this work is a generalization of the above result to arbitrary FDs. 

\OMIT{
A useful observation, which will be used throughout the paper, is that $\sharp \prob{Repairs}(\dep,Q)$ for an SJFCQ $Q$ is as hard as $\sharp \prob{Repairs}(\dep)$, which takes as input a database $D$, and asks for $\card{\rep{D}{\dep}}$.

\begin{lemma}\label{lem:cook-reduction}
	Consider a database $D$, a set $\dep$ of FDs, and an SJFCQ $Q(\bar x)$. We can compute in polynomial time a database $D' \supseteq D$ and a tuple $\bar t \in \adom{D'}^{|\bar x|}$ such that $\card{\rep{D}{\dep}} = \card{\rep{D'}{\dep,Q,\bar t}}$.
\end{lemma}
}

\OMIT{
{\color{red} A useful observation, which will be used throughout the paper, is that $\sharp \prob{Repairs}(\dep,Q)$ for a SJFCQ $Q$ is as hard as the problem $\sharp \prob{Repairs}(\dep)$, which takes as input a database $D$, and asks for $\card{\rep{D}{\dep}}$, i.e., the number of repairs of $D$ w.r.t.~$\dep$ (without considering any query).
The converse statement is unlikely to be true; for example, in the case of primary keys, we know that $\sharp \prob{Repairs}(\dep,Q)$ is $\sharp \text{\rm P}$-hard, but $\sharp \prob{Repairs}(\dep)$ is in \text{\rm FP}. However, it is true for {\em ground} CQs, i.e., Boolean CQs that mention only constants, that are consistent w.r.t.~$\dep$ and entailed by the given database. Note that a ground CQ, seen as a set of atoms, is actually a database, and thus we can naturally say that is consistent w.r.t.~a set of FDs.

\begin{lemma}\label{lem:cook-reduction}
	Consider a set $\dep$ of FDs, and a CQ $Q(\bar x)$. For every database $D$ the following hold:
	\begin{enumerate}
		\item If $Q$ is self-join-free, then we can compute in polynomial time a database $D' \supseteq D$ and a tuple $\bar t \in \adom{D'}^{|\bar x|}$ such that $\card{\rep{D}{\dep}} = \card{\rep{D'}{\dep,Q,\bar t}}$.
		\item If $Q$ is ground, $Q \models \dep$, and $D \models Q$, then we can compute in polynomial time a database $D' \subseteq D$ such that $\rep{D}{\dep,Q} = \rep{D'}{\dep}$.
	\end{enumerate}
\end{lemma}
}
}

\OMIT{
\begin{lemma}\label{lem:cook-reduction}
	Consider a set $\dep$ of FDs, and a CQ $Q$. Then:
	\begin{enumerate}
	\item $\sharp \prob{Repairs}(\dep)$ is Cook reducible to $\sharp \prob{Repairs}(\dep,Q)$.
	\item If $Q$ is a ground CQ, then $\sharp \prob{Repairs}(\dep,Q)$ is Cook reducible to $\sharp \prob{Repairs}(\dep)$.
	\end{enumerate}
\end{lemma}
}

\OMIT{
\begin{proof}
Given a database $D$, we simply add to $D$ a set of facts $D_Q$ such that $D_Q \models \dep$, $D_Q \models Q$, and $\adom{D} \cap (\adom{D_Q} \setminus \const{Q}) = \emptyset$; the latter condition simply states that $D$ and $D_Q$ share only constants that appear in $Q$. Note that such a set of facts $D_Q$ always exists: simply replace each variable $x$ in $Q$ with a fresh constant $c_x \in \ins{C}$, which leads to the so-called canonical database of $Q$, and then apply the FDs over the canonical database of $Q$ in the obvious way in order to obtain the database $D_Q$ that is consistent w.r.t.~$\dep$. It is easy to verify that $\card{\rep{D}{\dep}} = \card{\rep{D \cup D_Q}{\dep,Q,c(\bar x)}}$, where $c(\bar x)$ is the tuple of constants obtained by replacing each variable $x$ in $\bar x$ with $c_x$.
Thus, we can compute $\card{\rep{D}{\dep}}$ by simply constructing in polynomial time $D_Q$ and $c(x)$, and then computing $\card{\rep{D \cup D_Q}{\dep,Q,c(\bar x)}}$ by using an oracle for the problem $\sharp \prob{Repairs}(\dep,Q)$.
\end{proof}
}


\OMIT{
\begin{theorem}\label{the:pk-dichotomy}
	Consider a set $\dep$ of primary keys, and an SJFCQ $Q$. $\sharp \prob{Repairs}(\dep,Q)$ is either in \text{\rm FP} or $\sharp$\text{\rm P}-hard. We can decide in polynomial time in $||\dep|| + ||Q||$ whether $\sharp \prob{Repairs}(\dep,Q)$ is in \text{\rm FP} or $\sharp$\text{\rm P}-hard.\footnote{As usual, $||o||$ is the size of a syntactic object $o$.}
\end{theorem}
}


Another important task is whenever $\sharp \prob{Repairs}(\dep,Q)$ is intractable to classify it as approximable, i.e., the target value can be efficiently approximated with error guarantees via a {\em fully polynomial-time randomized approximation scheme} (FPRAS, for short), or as inapproximable.
Formally, an FPRAS for $\sharp \prob{Repairs}(\dep,Q(\bar x))$ is a randomized algorithm $\mathsf{A}$ that takes as input a database $D$, a tuple $\bar t \in \adom{D}^{|\bar x|}$, $\epsilon > 0$, and $0 < \delta < 1$, runs in polynomial time in $||D||$, $||\bar t||$, $1/\epsilon$ and $\log(1/\delta)$, and produces a random variable $\mathsf{A}(D,\bar t,\epsilon,\delta)$ such that
\[
\text{\rm Pr}\left(|\mathsf{A}(D,\bar t,\epsilon,\delta) - \card{\rep{D}{\dep,Q,\bar t}}|\ \leq\ \epsilon \cdot \card{\rep{D}{\dep,Q,\bar t}}\right)\ \geq\ 1-\delta.
\]
For primary keys, we know that our problem is always approximable; this is implicit in~\cite{DaSu07}, and it has been made explicit in~\cite{CaCP19}.
%
%
Although for primary keys the picture concerning approximate counting is well-understood, for arbitrary FDs is rather unexplored. We make crucial steps of independent interest towards a complete classification of approximate counting for FDs. We show that (under a reasonable assumption) the existence of an FPRAS 
for FDs is not guaranteed, but it is guaranteed for 
%
FDs with a so-called left-hand side chain (up to equivalence); details on this class are given below.

\medskip
\noindent
\paragraph{Boolean vs. Non-Boolean CQs.}
For the sake of technical clarity, both exact and approximate counting are studied for Boolean CQs, but all the results can be generalized to non-Boolean CQs. 
Thus, in the rest of the paper, by CQ or SJFCQ we refer to a Boolean query.

\OMIT{
\medskip

\noindent
\paragraph{Counting Complexity Classes.} We recall basic counting complexity classes,  that is, classes of counting functions of the form $\{0,1\}^* \ra \mathbb{N}$, that are crucial for our analysis.
We first recall the counting analog of polynomial-time:
\begin{eqnarray*}
\text{FP} &=& \{f \mid f \text{ is computable in polynomial-time}\}.
\end{eqnarray*}
Given a nondeterministic Turing machine (NTM) $M$, let $\mathsf{accept}_M$ be the function $\{0,1\}^* \ra \mathbb{N}$ such that $\mathsf{accept}_M(x)$ is the number of accepting computations of $M$ on input $x$. We can then define
\[
\sharp \text{\rm P}\ =\ \{\mathsf{accept}_M \mid M \text{ is a polynomial-time NTM}\}.
\]
Given an NTM $M$ with output tape, i.e., a nondeterministic transducer (NTT), let $\mathsf{span}_M$ be the function $\{0,1\}^* \ra \mathbb{N}$ such that $\mathsf{span}_M(x)$ is the number of {\em distinct valid} outputs of $M$ on input $x$; the output of a computation is considered to be valid if the machine halts in an accepting state. Then:
\[
\text{SpanL}\ =\ \{\mathsf{span}_M \mid M \text{ is a logarithmic-space NTT}\}.
\]
For the above classes we know that $\text{FP} \subseteq \sharp \text{P}$ and $\text{SpanL} \subseteq \sharp \text{P}$. It is also believed that the above inclusions are strict (under reasonable complexity-theoretic assumptions)~\cite{AlJe93}.

Consider two functions $f,g : \{0,1\}^* \ra \mathbb{N}$. We say that $f$ is {\em Cook reducible} to $g$ if there exists a polynomial-time deterministic transducer $M$, with access to an oracle for $g$, such that, for every $x \in \{0,1\}^*$, $f(x) = M(x)$. In other words, a Cook reduction is a polynomial-time Turing reduction. As usual, our hardness results for the class $\sharp$P rely on Cook reductions.
%
}
}
\section{Operational CQA}\label{sec:operational-cqa}

We proceed to recall the recent operational approach to consistent query answering, introduced in~\cite{CaLP18}. Although this new framework can deal with arbitrary integrity constraints (i.e., tuple-generating dependencies, equality-generating dependencies, and denial constraints), for our purposes we need its simplified version that only deals with functional dependencies.

\medskip

\noindent\paragraph{Operations and Violations.}
The notion of operation is the building block of the operational approach. In the original framework, the operations are standard updates $+F$ that add a set $F$ of facts to the database, or $-F$ that remove $F$ from the database. However, since in this work we deal with FDs, we only need to remove facts because the addition of a fact would never resolve a conflict.
The formal definition of the notion of operation follows. As usual, we write $\PS(S)$ for the powerset of a set $S$.
%

\begin{definition}(\textbf{Operation})\label{def:operation}
	For a database $D$ over a schema $\ins{S}$, a {\em $D$-operation} is a function $\op : \PS(D) \ra
	\PS(D)$ such that, for some non-empty set $F \subseteq D$ of facts, for every $D' \in \PS(D)$, $\op(D') = D' \setminus F$. We write $-F$ to refer to this operation. \hfill\markfull
\end{definition}

The operations $-F$ depend on the database $D$ as they are defined over $D$. Since $D$ will be clear from the context, we may refer to them simply as operations, omitting $D$. Also, when $F$ contains a single fact $f$, we write $-f$ instead of the more formal $-\{f\}$.
The main idea of the operational approach to CQA is to iteratively apply operations, starting from an inconsistent database $D$, until we reach a database $D' \subseteq D$ that is consistent w.r.t. the given set $\dep$ of FDs. However, as discussed in~\cite{CaLP18}, we need to ensure that at each step of this repairing process, at least one violation is resolved. To this end, we need to keep track of all the reasons that cause the inconsistency of $D$ w.r.t.~$\dep$. This brings us to the notion of FD violation.

\begin{definition}(\textbf{FD Violation})\label{def:violation}
	For a database $D$ over a schema $\ins{S}$, a {\em $D$-violation} of an FD $\phi = R : X \ra Y$ over $\ins{S}$ is a set $\{f,g\} \subseteq D$ of facts such that $\{f,g\} \not\models \phi$.
	%
	%
	We denote the set of $D$-violations of $\phi$ by $\viol{D}{\phi}$. Furthermore, for a set $\dep$ of FDs, we denote by $\viol{D}{\dep}$ the set $\{(\phi,v) \mid \phi \in \dep \textrm{~~and~~} v \in \viol{D}{\phi}\}$. \hfill\markfull
\end{definition}

Thus, a pair $(\phi,\{f,g\}) \in \viol{D}{\dep}$ means that one of the reasons why the database $D$ is inconsistent w.r.t.~$\dep$ is because it violates $\phi$ due to the facts $f$ and $g$.
As discussed in~\cite{CaLP18}, apart from forcing an operation to be fixing, i.e., to fix at least one violation, we also need to force an operation to remove a set of facts only if it contributes as a whole to a violation. Such operations are called justified.

\begin{definition}(\textbf{Justified Operation})\label{def:justified}
	Let $D$ be a database over a schema $\ins{S}$, and $\dep$ a set of FDs over $\ins{S}$. For a database $D' \subseteq D$, a
	$D$-operation $-F$ is called {\em $(D',\dep)$-justified}
	if there exists $(\phi,\{f,g\}) \in \viol{D'}{\dep}$ such that $F \subseteq \{f,g\}$. \hfill\markfull
\end{definition}

Note that justified operations do not try to minimize the number of atoms that need to be removed. As argued in~\cite{CaLP18}, a set of facts that collectively contributes to a violation should be considered as a justified operation during the iterative repairing process since we do not know a priori which atoms should be deleted, and therefore, we need to explore all the possible scenarios.

\medskip

\noindent\paragraph{Repairing Sequences.}
As said above, the main idea of the operational approach is to iteratively apply justified operations.
This is formalized via the notion of repairing sequence.
Consider a database $D$ and a set $\dep$ of FDs. Given a sequence $s = (\mi{op}_i)_{1 \leq i \leq n}$ of $D$-operations, we define $D_{0}^{s} = D$ and $D_{i}^{s} = \mi{op}_i(D_{i-1}^{s})$
for $i \in [n]$. In other words, $D_{i}^{s}$ is obtained by applying to $D$ the first $i$ operations of $s$. The notion of repairing sequence follows:

\begin{definition}(\textbf{Repairing Sequence})\label{def:rep-sequence}
	Consider a database $D$ and a set $\dep$ of FDs. A sequence of $D$-operations $s = (\op_i)_{1 \leq i \leq n}$ is called {\em $(D,\dep)$-repairing} if, for every $i \in [n]$, $\op_i$ is $(D_{i-1}^{s},\dep)$-justified.
	Let $\rs{D}{\dep}$ be the set of all $(D,\dep)$-repairing sequences. \hfill\markfull
\end{definition}

It is easy to verify that
the length of a $(D,\dep)$-repairing sequence is linear in the size of $D$. It is also clear that the set $\rs{D}{\dep}$ is finite.
For a $(D,\dep)$-repairing sequence $s = (\op_i)_{1 \leq i \leq n}$, we define its {\em result} as the database $s(D) = D_{n}^{s}$, and call it {\em complete} if $s(D) \models \dep$, i.e., it leads to a consistent database. Let $\crs{D}{\dep}$ be the set of all complete $(D,\dep)$-repairing sequences.


\medskip

\noindent\paragraph{Operational Repairs.} 
A {\em candidate (operational) repair} of a database $D$ w.r.t.~a set $\dep$ of FDs is a database $D'$ such that $D' = s(D)$ for some $s \in \crs{D}{\dep}$. Let $\copr{D}{\dep}$ be the set of all candidate repairs of $D$ w.r.t.~$\dep$.
Although every database of $\copr{D}{\dep}$ corresponds to a conceptually meaningful way of repairing the database $D$, we would like to have a mechanism that allows us to choose which candidate repairs should be considered for query answering purposes, and assign likelihoods to those repairs.

The fact that we can operationally repair an inconsistent database via repairing sequences gives us the flexibility of choosing which operations (that is, fact deletions) are more likely than others, which in turn allows us to talk about the probability of a repair, and thus, the probability with which an answer is entailed.
The idea of assigning likelihoods to operations extending sequences can be described as follows: for all possible extensions $s\cdot \op_1,\ldots, s\cdot \op_k$ of a repairing sequence $s$, we assign
probabilities $p_1,\ldots, p_k$ to them so they add up to $1$. This is done by exploiting a {\em tree-shaped Markov chain} that arranges its states (i.e., repairing sequences) in a rooted tree, where (i) the empty sequence of operations, which is by definition repairing, is the root, (ii) the children of each state are its possible extensions, and (iii) the set of states corresponding to complete sequences coincide with the set of leaves.
We write $\varepsilon$ for the empty sequence of operations.
We further write $\ops{s}{D}{\dep}$ for the set of $(D,\dep)$-repairing sequences $\{s' \in \rs{D}{\dep} \mid s' = s \cdot \op \text{ for some } D\text{-operation } \op\}$.

\begin{definition}(\textbf{Repairing Markov Chain})\label{def:repaiting-mc}
	For a database $D$ and a set $\dep$ of FDs, a {\em $(D,\dep)$-repairing Markov chain} is an edge-labeled rooted tree $T = (V,E,\ins{P})$, where $V = \rs{D}{\dep}$, $E \subseteq V \times V$, and $\ins{P} : E \ra [0,1]$, such that:
	\begin{enumerate}
		\item the root is the empty sequence $\varepsilon$,
		\item for a non-leaf node $s \in V$, $\{s' \mid (s,s') \in E\} = \ops{s}{D}{\dep}$,
		\item for a non-leaf node $s \in V$, $\sum_{t \in \{s' \mid (s,s') \in E\}} \ins{P}(s,t) = 1$, and
		\item $\{s \in V \mid s \text{ is a leaf}\} = \crs{D}{\dep}$.
	\end{enumerate}
	A {\em repairing Markov chain generator} w.r.t.~$\dep$ is a function $M_\dep$ assigning to every database $D$ a $(D,\dep)$-repairing Markov chain.\hfill\markfull
\end{definition}

We give a simple example, which will serve as a running example, that illustrates the notion of repairing Markov chain:

\begin{figure}[t]
	\centering
	\includegraphics[width=.45\textwidth]{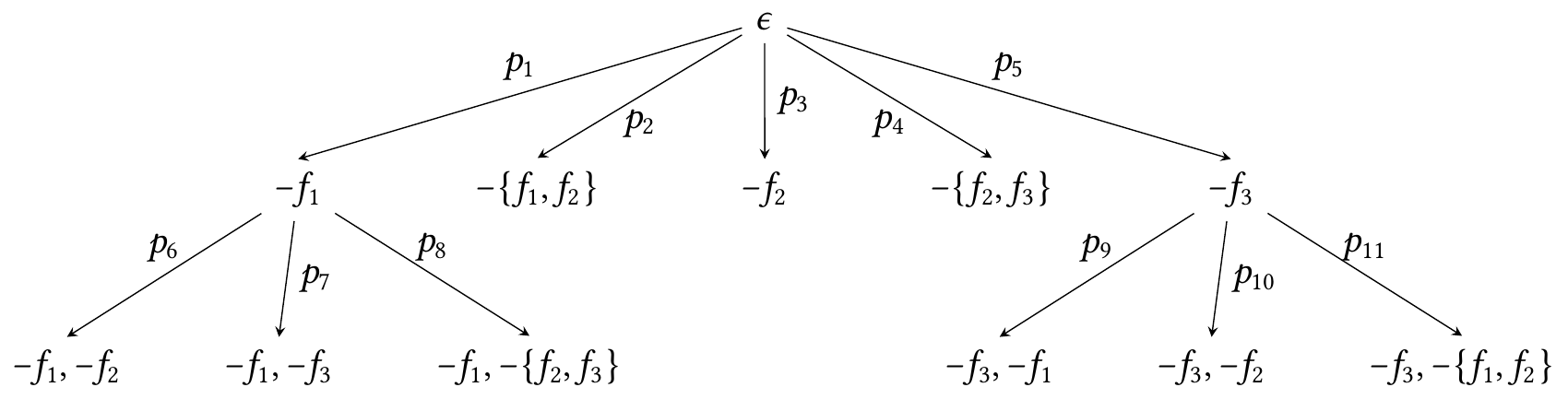}
	\caption{Repairing Markov Chain}
	\label{fig:markov-chain}
\end{figure}

\begin{example}\label{exa:repairing-mc}
	Consider the database $D=\{f_1,f_2,f_3\}$ over the schema $\ins{S} = \{R/3\}$, where $f_1 = R(a_1,b_1,c_1)$, $f_2 = R(a_1,b_2,c_2)$ and $f_3 = R(a_2,b_1,c_2)$. Consider also the set $\dep = \{\phi_1,\phi_2\}$ of FDs over $\ins{S}$, where $\phi_1 = R: A \ra B$ and $\phi_2 = R: C \ra B$, assuming that $(A,B,C)$ is the tuple of attributes of $R$. 
	It is easy to see that $D \not\models \dep$. In particular, we have that $\viol{D}{\dep} = \{(\phi_1,\{f_1,f_2\}), (\phi_2,\{f_2,f_3\})\}$.
	It is easy to verify that for the edge-labeled rooted tree $T = (V,E,\ins{P})$ in Figure~\ref{fig:markov-chain}, $V = \rs{D}{\dep}$, for a non-leaf node $s$ the set of its children is $\ops{s}{D}{\dep}$, and the set of leaves coincides with $\crs{D}{\dep}$. Hence, providing that $p_1+p_2+p_3+p_4+p_5=1$, $p_6+p_7+p_8=1$ and $p_9+p_{10}+p_{11}=1$, $T$ is a $(D,\dep)$-repairing Markov chain. \hfill\markfull
\end{example}

The purpose of a repairing Markov chain generator is to provide a mechanism for defining a family of repairing Markov chains independently of the database. One can design a repairing Markov chain generator $M_\dep$ once, and whenever the database $D$ changes, the desired $(D,\dep)$-repairing Markov chain is simply $M_\dep(D)$.
%


We now recall the notion of operational repair: they are candidate operational repairs obtained via repairing sequences that are {\em reachable} leaves of a repairing Markov chain, i.e., leaves with non-zero probability. The probability of a leaf is coming from the so-called leaf distribution of a repairing Markov chain. Formally, given a database $D$ and a set $\dep$ of FDs, the {\em leaf distribution} of a $(D,\dep)$-repairing Markov chain $T = (V,E,\ins{P})$ is a function $\pi$ that assigns to each leaf $s$ of $T$ a number from $[0,1]$ as follows: assuming that $(s_0,s_1)$, $(s_1,s_2)$, $\ldots$, $(s_{n-1},s_n)$, where $n \geq 0$, $\varepsilon = s_0$ and $s = s_n$, is the unique path in $T$ from $\varepsilon$ to $s$, $\pi(s) = \ins{P}(s_0,s_1) \cdot \ins{P}(s_1,s_2) \cdot \cdots \cdot \ins{P}(s_{n-1},s_n)$.
The set of {\em reachable leaves} of $T$, denoted $\abs{T}$, is the set of leaves of $T$ that have non-zero probability according to the leaf distribution of $T$.

\begin{definition}(\textbf{Operational Repair})\label{def:operational-repair}
	Given a database $D$, a set $\dep$ of FDs, and a repairing Markov chain generator $M_\dep$ w.r.t.~$\dep$, an {\em (operational) repair} of $D$ w.r.t.~$M_\dep$ is a database $D' \in \copr{D}{\dep}$ such that $D' = s(D)$ for some $s \in \abs{M_\dep(D)}$. 	Let $\opr{D}{M_\dep}$ be the set of all operational repairs of $D$ w.r.t.~$M_\dep$. \hfill\markfull
\end{definition}

An operational repair may be obtainable via multiple repairing
sequences that are reachable leaves of the underlying
repairing Markov chain. The probability of a repair $D'$ is calculated by summing up the probabilities of all reachable leaves $s$ so that $D' = s(D)$.


\begin{definition}(\textbf{Operational Semantics})\label{def:operational-semantics}
	Given a database $D$, a set $\dep$ of FDs, and a repairing Markov chain generator $M_\dep$ w.r.t.~$\dep$, the probability of an operational repair $D'$ of $D$ w.r.t.~$M_\dep$ is
	\[
	\probrep{D,M_\dep}{D'}\ =\ \sum\limits_{s \in \abs{M_\dep(D)} \text{ and } D'=s(D)} \pi(s),
	\]
	where $\pi$ is the leaf distribution of $M_\dep(D)$.
	%
	%
	The {\em operational semantics} of $D$ w.r.t.~$M_\dep$ is defined as the set of repair-probability pairs $\sem{D}_{M_\dep} =
	\left\{\left(D',\probrep{D,M_\dep}{D'}\right) \mid D' \in \opr{D}{M_\dep}\right\}$.
	\hfill\markfull
\end{definition}

\medskip

\noindent\paragraph{Operational CQA.} We now have in place all the necessary notions to recall the operational approach to consistent query answering, and define the main problem of interest.
For a database $D$, a set $\dep$ of FDs, a Markov chain generator
$M_\dep$ w.r.t.~$\dep$, a query $Q(\bar x)$, and a tuple $\bar c \in \adom{D}^{|\bar x|}$, the probability of $\bar c$ being an answer to $Q$ over some operational repair of $D$ is defined as
\begin{eqnarray*}
	\probrep{M_\dep,Q}{D,\bar c} &=& \sum\limits_{(D',p) \in \sem{D}_{M_\dep} \textrm{~and~} \bar c \in
		Q(D')} p.
\end{eqnarray*} 
We can now talk about operational consistent answers. 
In particular, the set of {\em operational consistent answers} to $Q$ over $D$ according to $M_\dep$ is defined as the set $\big\{\big(\bar c, \probrep{M_\dep,Q}{D,\bar c}\big) \mid \bar c \in \adom{D}^{|\bar x|}\big\}$.
\OMIT{
\begin{definition}(\textbf{Operational Consistent Answers})\label{def:consistent-answers}
	For a database $D$, a set $\dep$ of FDs, a repairing Markov
	chain generator $M_\dep$ w.r.t.~$\dep$, and a query $Q(\bar x)$, the
	set of {\em operational consistent answers} to $Q$ over $D$ according to $M_\dep$ is defined as the set of tuple-probability pairs
	$\oca{D}{M_\dep,Q}\ =\ \big\{\big(\bar c, \probrep{M_\dep,Q}{D,\bar c}\big) \mid \bar c \in \adom{D}^{|\bar x|}\big\}.$
	\hfill\markfull
\end{definition}
}


\OMIT{
\medskip
\noindent\paragraph{Operational Repairs.} The fact that we can operationally repair an inconsistent database via repairing sequences gives us the flexibility of choosing which operations (i.e., fact deletions) are more likely than others. This allows us to talk about the probability of a repair, which in turn allows us to talk about the probability with which a consistent answer is entailed. The tool from probability theory that allows us to formalize the above discussion is {\em Markov chains}~\cite{KeLa83}.

A Markov chain is essentially an edge-labeled directed graph, where
the nodes are its states and the edges are labeled with a probability
$p \in [0,1]$ so that for each node $s$, the labels of its
outgoing edges sum up to 1. An edge $(s,s')$ with label $p$ says
that with probability $p$, the state changes from
$s$ to $s'$.
Formally, a {\em Markov chain} $M$ over a (finite) state space $S =
\{s_0,\ldots,s_k\}$ is a pair $(s_0,\ins{P})$, where
$s_0$ is the initial state of $M$ and $\ins{P} : S \times S \ra [0,1]$
is a stochastic function, i.e.,  $\sum_{s' \in S}
\ins{P}(s,s') = 1$ for every state $s \in S$.
Since $S$ is finite, the function $\ins{P}$ can be naturally seen as an
$|S| \times |S|$ matrix, called {\em probability transition matrix},
whose $(i,j)$-th cell contains $\ins{P}(s_i,s_j)$. By abuse of notation, whenever $\ins{P}$ is treated as a matrix, we write $\ins{P}(s_i,s_j)$ instead of $\ins{P}(i,j)$.
Starting from the state $s_i$, the probability of reaching $s_j$ after $n$ steps is $\ins{P}^n(s_i,s_j)$, where $\ins{P}^n = \prod_{i=1}^{n} \ins{P}$.
A state $s \in S$ is called {\em absorbing} if $\ins{P}(s,s) = 1$, i.e., $s$ is reachable from itself with probability $1$. The set of {\em reachable absorbing states} of $M$, denoted $\abs{M}$, is the set of absorbing states of $M$ that are reachable from $s_0$ with non-zero probability. \OMIT{i.e., the set
\[
\abs{M} =
\left\{s \in S~~\biggl|
\begin{array}{c}
\ins{P}(s,s) = 1 \text{ and } \\
\text{there exists } n \text{ such that } \ins{P}^n(s_0,s) > 0
\end{array}\right\}.
\]}
Finally, a key notion is the {\em hitting distribution} of $M$, which describes the long-term behaviour of the Markov chain. It is defined as the limit $\lim_{n \ra \infty} \ins{P}^n(s_0)$ if it exists, where
$\ins{P}^n(s_0)$ is the $0$-th row of $\ins{P}^n$.

We now formalize the idea of assigning likelihoods to operations
extending sequences: for all possible extensions $s\cdot \op_1,
\ldots, s\cdot \op_k$ of a repairing sequence $s$, we assign
probabilities $p_1,\ldots, p_k$ to them so they add up to $1$. This is done by exploiting a tree-shaped Markov chain that arranges its states (i.e., repairing sequences) in a tree, where the children of each state are its possible extensions. Furthermore, states corresponding to complete sequences coincide with the absorbing states of the Markov chain.
We write $\varepsilon$ for the empty sequence of operations, which is by definition repairing.
We further write $\ops{s}{D}{\dep}$ for the set of $(D,\dep)$-repairing sequences $\{s' \in \rs{D}{\dep} \mid s' = s \cdot \op \text{ for some } D\text{-operation } \op\}$.

\begin{definition}(\textbf{Repairing Markov Chain})\label{def:repaiting-mc}
	For a database $D$ and a set $\dep$ of FDs, a {\em $(D,\dep)$-repairing Markov chain} is a Markov chain $(\varepsilon, \insP)$, where $\insP : \rs{D}{\dep} \times \rs{D}{\dep} \ra [0,1]$ is such that:
	\begin{enumerate}
		\item For each $s \in \rs{D}{\dep}$, $s$ is complete iff it is absorbing.
		
		\item For each $s,s' \in \rs{D}{\dep}$, if $s \neq s'$ and $\insP(s,s') > 0$, then $s' \in \ops{s}{D}{\dep}$.
	\end{enumerate}
	A {\em repairing Markov chain generator} w.r.t.~$\dep$ is a function $M_\dep$ assigning to every database $D$ a $(D,\dep)$-repairing Markov chain.\hfill\markfull
\end{definition}

The purpose of the repairing Markov chain generator is to provide a generic mechanism for defining a family of repairing Markov chains independently of the input database. In other words, one can design a repairing Markov chain generator $M_\dep$ once, and whenever the database $D$ changes, the desired repairing Markov chain is obtained by simply applying $M_\dep$ on $D$.
We now recall the notion of operational repair: they are obtained by repairing sequences that are reachable absorbing states of a repairing Markov chain.

\begin{definition}(\textbf{Operational Repair})\label{def:operational-repair}
	Given a database $D$, a set $\dep$ of FDs, and a repairing Markov chain generator $M_\dep$ w.r.t.~$\dep$, an {\em (operational) repair} of $D$ w.r.t.~$M_\dep$ is a database $D'$ such that $D' = s(D)$ for some $s \in \abs{M_\dep(D)}$. 	Let $\opr{D}{M_\dep}$ be the set of all operational repairs of $D$ w.r.t.~$M_\dep$. \hfill\markfull
\end{definition}

An operational repair may be obtainable via multiple repairing
sequences that are reachable absorbing states of the underlying
repairing Markov chain. The probability of a repair $D'$ is calculated by summing up the probabilities of all reachable absorbing
states $s$ so that $D' = s(D)$. These probabilities are coming from the hitting distribution, which always exists.
In other words, given a database $D$ and a set $\dep$ of FDs, for every $(D,\dep)$-repairing Markov chain $(\varepsilon, \insP)$, the limit $\lim_{n \ra \infty} \ins{P}^n(s_0)$ always exists.

\begin{definition}(\textbf{Operational Semantics})\label{def:operational-semantics}
	Given a database $D$, a set $\dep$ of FDs, and a repairing Markov chain generator $M_\dep$ w.r.t.~$\dep$, the probability of an operational repair $D'$ of $D$ w.r.t.~$M_\dep$ is
	\[
	\probrep{D,M_\dep}{D'}\ =\ \sum\limits_{s \in \abs{M_\dep(D)} \text{ and } D'=s(D)} \pi(s),
	\]
	where $\pi$ is the hitting distribution of $M_\dep(D)$.
	%
	%
	The {\em operational semantics} of $D$ w.r.t.~$M_\dep$ is defined as the set of repair-probability pairs $\sem{D}_{M_\dep} =
	\left\{\left(D',\probrep{D,M_\dep}{D'}\right) \mid D' \in \opr{D}{M_\dep}\right\}$.
	 \hfill\markfull
\end{definition}

\medskip

\noindent\paragraph{Operational CQA.} We now have in place all the necessary notions to recall the operational approach to consistent query answering, and define the main problem of interest.
For a database $D$, a set $\dep$ of FDs, a Markov chain generator
$M_\dep$ w.r.t.~$\dep$, a query $Q(\bar x)$, and a tuple $\bar c \in \adom{D}^{|\bar x|}$, the probability of $\bar c$ being in the answer to $Q$ over some operational repair of $D$ is defined as
\begin{eqnarray*}
	\probrep{M_\dep,Q}{D,\bar c} &=& \sum\limits_{(D',p) \in \sem{D}_{M_\dep} \textrm{~and~} \bar c \in
			Q(D')} p
\end{eqnarray*} 
The definition of consistent answers follows.

\begin{definition}(\textbf{Operational Consistent Answers})\label{def:consistent-answers}
	For a database $D$, a set $\dep$ of FDs, a repairing Markov
	chain generator $M_\dep$ w.r.t.~$\dep$, and a query $Q(\bar x)$, the
	set of {\em operational consistent answers} to $Q$ over $D$ according to $M_\dep$ is defined as the set of tuple-probability pairs
	$\oca{D}{M_\dep,Q}\ =\ \big\{\big(\bar c, \probrep{M_\dep,Q}{D,\bar c}\big) \mid \bar c \in \adom{D}^{|\bar x|}\big\}.$
	\hfill\markfull
\end{definition}
}


The problem of interest in this context, dubbed $\mathsf{OCQA}$, accepts as input a database $D$, a set $\dep$ of FDs, a repairing Markov chain generator $M_\dep$ w.r.t.~$\dep$, a query $Q(\bar x)$, and a tuple $\bar c \in \adom{D}^{|\bar x|}$, and asks for the probability $\probrep{M_\dep,Q}{D,\bar c}$. We are, in fact, interested in the {\em data complexity} of $\mathsf{OCQA}$, i.e., for a set $\dep$ of FDs, a repairing Markov chain generator $M_\dep$ w.r.t.~$\dep$, and a query $Q(\bar x)$, we focus on

\medskip

\begin{center}
	\fbox{\begin{tabular}{ll}
			{\small PROBLEM} : & $\ocqa{\dep,M_\dep,Q(\bar x)}$
			\\
			{\small INPUT} : & A database $D$,  and a tuple $\bar c \in \adom{D}^{|\bar x|}$.
			\\
			{\small OUTPUT} : &  $\probrep{M_\dep,Q}{D,\bar c}$.
	\end{tabular}}
\end{center}

\medskip


Until now, a repairing Markov chain generator is a general function.
We proceed to discuss the novel idea of uniform operational CQA, which provides concrete ways of defining such a function.

\OMIT{
\medskip

The data complexity of $\mathsf{OCQA}$ has been already studied in~\cite{CaLP18}, showing that it is, in general, intractable. In particular, it has been shown via a reduction from query answering over probabilistic databases, which is known to be $\sharp $P-hard~\cite{DaSu07}, that:

\begin{theorem}[\cite{CaLP18}]\label{the:ocqa-exact}
	There exist a set $\dep$ of primary keys, a repairing Markov chain generator $M_\dep$ w.r.t.~$\dep$, and a CQ $Q$ such that $\ocqa{\dep,M_\dep,Q}$ is $\sharp ${\rm P}-hard.
\end{theorem}

Let us stress that the above result holds even for {\em well-behaved} Markov chain generators $M_\dep$, where, for every database $D$, $M_\dep(D) = (V,E,\insP)$ is computable in polynomial time in the size of $D$, and $\insP$ is also computable in polynomial time in the size of $D$. In simple words, this tells us that the intractability of $\mathsf{OCQA}$ in data complexity is not due to the complexity of the underlying Markov chain generator, but due to reasons that are intrinsic to operational CQA.

With the above intractability result in place, it is natural to ask whether $\ocqa{\dep,M_\dep,Q(\bar x)}$ is approximable, i.e., whether the target probability can be approximated via a {\em fully polynomial-time randomized approximation scheme} (FPRAS, for short).
Formally, an FPRAS for $\ocqa{\dep,M_\dep,Q(\bar x)}$ is a randomized algorithm $\mathsf{A}$ that takes as input a database $D$, a tuple $\bar c \in \adom{D}^{|\bar x|}$, $\epsilon > 0$, and $0 < \delta < 1$, runs in polynomial time in $||D||$, $||\bar c||$,\footnote{As usual, $||o||$ denotes the size of the encoding of a syntactic object $o$.} $1/\epsilon$ and $\log(1/\delta)$, and produces a random variable $\mathsf{A}(D,\bar c,\epsilon,\delta)$ such that
\[
\text{\rm Pr}\left(|\mathsf{A}(D,\bar c,\epsilon,\delta) - \probrep{M_\dep,Q}{D,\bar c}|\ \leq\ \epsilon \cdot \probrep{M_\dep,Q}{D,\bar c}\right)\ \geq\ 1-\delta.
\]
The approximability of the problem in question has been already studied in~\cite{CaLP18}, showing that it does not admit an FPRAS, under the widely accepted complexity assumption that ${\rm RP} \neq {\rm NP}$. Recall that RP is the complexity class of problems that are efficiently solvable via a randomized algorithm with a bounded one-sided error (i.e., the answer may mistakenly be ``no'')~\cite{arora-book}.

\begin{theorem}[\cite{CaLP18}]\label{the:ocqa-apx}
	Unless  ${\rm RP} = {\rm NP}$, there exist a set $\dep$ of primary keys, a Markov chain generator $M_\dep$ w.r.t.~$\dep$, and a CQ $Q$ such that there is no FPRAS for $\ocqa{\dep,M_\dep,Q}$.
\end{theorem}

Let us note that, as for Theorem~\ref{the:ocqa-exact}, the above result holds even for well-behaved Markov chain generators, which means that the reasons for the inapproximability are intrinsic to operational CQA.
}
\section{Uniform Operational CQA}\label{sec:uniform}

A natural way of defining a repairing Markov chain is to assign probabilities to operations according to the uniform probability distribution over a reasonable space. The obvious options for such a space are (i) the set of candidate operational repairs, (ii) the set of complete repairing sequences, and (iii) the set of available operations at a certain step of the repairing process.
More precisely, given a set $\dep$ of FDs, it is natural to consider the repairing Markov chain generators $M_{\dep}^{\ur}$ (uniform repairs), $M_{\dep}^{\us}$ (uniform sequences), and $M_{\dep}^{\uo}$ (uniform operations) w.r.t.~$\dep$ such that, for a database $D$:
\begin{enumerate}
	\item[(i)] $\opr{D}{M_{\dep}^{\ur}} = \copr{D}{\dep}$, and for $D' \in \opr{D}{M_{\dep}^{\ur}}$, $\probrep{D,M_{\dep}^{\ur}}{D'} = \frac{1}{|\opr{D}{M_{\dep}^{\ur}}|}$.
	
	\item[(ii)] For every $s \in \crs{D}{\dep}$, assuming that $\pi$ is the leaf distribution of $M_{\dep}^{\us}(D)$, $\pi(s) = \frac{1}{|\crs{D}{\dep}|}$.
	
	\item[(iii)] For every $s,s' \in \rs{D}{\dep}$, assuming that $M_{\dep}^{\uo}(D) = (V,E,\insP)$, $s' \in \ops{s}{D}{\dep}$ implies $\insP(s,s') = \frac{1}{|\ops{s}{D}{\dep}|}$.
\end{enumerate}
%
%
We now explain, by means of an example, how these Markov chain generators are defined; the formal definitions are in Appendix~\ref{appsec:uniform-generators}.

In the rest of the section, let $D$ and $\dep$ be the database and the set of FDs, respectively, from Example~\ref{exa:repairing-mc}. Recall that any $(D,\dep)$-repairing Markov chain looks as the one depicted in Figure~\ref{fig:markov-chain} with $p_1+p_2+p_3+p_4+p_5=1$, $p_6+p_7+p_8=1$ and $p_9+p_{10}+p_{11}=1$.
Thus, the task of understanding how the Markov chain generators $M_{\dep}^{\ur}$, $M_{\dep}^{\us}$ and $M_{\dep}^{\uo}$ should be defined boils down to understanding how the probabilities $p_1,\ldots,p_{11}$ should be calculated by $M_{\dep}^{\ur}$, $M_{\dep}^{\us}$ and $M_{\dep}^{\uo}$ in order to guarantee 
the properties discussed above.
%
%
\OMIT{
Consider the database $D=\{f_1,f_2,f_3\}$ over the schema $\ins{S} = \{R/3\}$, where $f_1 = R(a_1,b_1,c_1)$, $f_2 = R(a_1,b_2,c_2)$ and $f_3 = R(a_2,b_1,c_2)$. Consider also the set $\dep = \{\phi_1,\phi_2\}$ of FDs over $\ins{S}$, where $\phi_1 = R: A \ra B$ and $\phi_2 = R: C \ra B$, assuming that $(A,B,C)$ is the tuple of attributes of $R$. 
It is easy to see that $D \not\models \dep$. In particular, we have that $\viol{D}{\dep} = \{(\phi_1,\{f_1,f_2\}), (\phi_2,\{f_2,f_3\})\}$.
The following tree represents a Markov chain $(\epsilon,\ins{P})$ whose state space is $\rs{D}{\dep}$; an edge $(s,s')$ with label $p$ means that $\ins{P}(s,s')=p$, whereas an edge $(s,s')$ that is not present in the tree means that $\ins{P}(s,s')=0$:

\smallskip

\centerline{\includegraphics[width=.45\textwidth]{example-mc.pdf}}

\smallskip

\noindent It is easy to verify that, for each $s \in \rs{D}{\dep}$, the set of children of $s$ in the above tree is precisely $\ops{s}{D}{\dep}$. One can also verify that the set of leaves coincides with $\crs{D}{\dep}$. 
Thus, any $(D,\dep)$-repairing Markov chain has the shape of $(\epsilon,\ins{P})$ shown above. 
The question is how the probabilities $p_1,\ldots,p_{11}$ should be calculated by the Markov chain generators $M_{\dep}^{\ur}$, $M_{\dep}^{\us}$ and $M_{\dep}^{\uo}$ in order to guarantee that the $(D,\dep)$-repairing Markov chains $M_{\dep}^{\ur}(D)$, $M_{\dep}^{\us}(D)$ and $M_{\dep}^{\uo}(D)$, respectively, enjoy the three desired properties discussed above.
\OMIT{
	, namely:
\begin{enumerate}
	\item for an operational repair $D' \in \{\emptyset,\{f_1\},\{f_2\},\{f_3\},\{f_1,f_3\}\}$, the sum of the products of the probabilities that label the edges of the paths from $\epsilon$ to a sequence $s$ such that $s(D) = D'$ is $\frac{1}{5}$ since there are five operational repairs,
	
	\item for each leaf $s$, the product of the probabilities that label the edges of the path from $\epsilon$ to $s$ is $\frac{1}{9}$ since there are nine leaves, and thus, nine complete $(D,\dep)$-repairing sequences, and
	
	\item for each node $s$, assuming that $s_1,\ldots,s_n$ are the children of $s$, for $n \geq 1$, each edge $(s,s_i)$ is labeled by the probability $\frac{1}{n}$.
\end{enumerate}
}
}
We start by explaining how the probabilities are calculated by $M_{\dep}^{\us}$, which will then help us to explain how the probabilities are calculated by $M_{\dep}^{\ur}$. We finally discuss $M_{\dep}^{\uo}$, which is the simplest one.


\medskip
\noindent \paragraph{Uniform Sequences.}
For a sequence $s \in \rs{D}{\dep}$, let $\crss{D}{\dep}{s}$ be the set of all sequences of $\crs{D}{\dep}$ that have $s$ as a prefix. Thus, $\crss{D}{\dep}{s}$ collects the leaves of the subtree rooted at $s$, with $\crss{D}{\dep}{\epsilon} = \crs{D}{\dep}$ being the set of leaves.
Hence, for $M_\dep^{\us}(D) = (V,E,\ins{P})$ to induce the uniform distribution over the leaves, it suffices, for $s,s' \in \rs{D}{\dep}$ with $s' \in \ops{s}{D}{\dep}$, to let 
\[
\ins{P}(s,s') = \frac{|\crss{D}{\dep}{s'}|}{|\crss{D}{\dep}{s}|}.
\]
Observe that
\[
|\crss{D}{\dep}{\epsilon}|\ =\ 9
\]
\[
|\crss{D}{\dep}{-f_1}|\ =\ |\crss{D}{\dep}{-f_3}|\ =\ 3
\]
\[
|\crss{D}{\dep}{-\{f_1,f_2\}}|\ =\ |\crss{D}{\dep}{-f_2}|\ =\ |\crss{D}{\dep}{-\{f_2,f_3\}}|\ =\ 1.
\]
Hence, $p_1$ $=$ $p_5$ $=$ $\frac{3}{9}$, $p_2 = p_3$ $=$ $p_4$ $=$ $\frac{1}{9}$. Similarly, we obtain that $p_6 = p_7 = p_8 = \frac{1}{3}$, and $p_9 = p_{10} = p_{11} = \frac{1}{3}$.
Thus, $\abs{M_{\dep}^{\us}(D)} = \crs{D}{\dep}$, and $\pi(s)$ $=$ $\frac{1}{9}$, for each $s \in \abs{M_\dep^{\us}(D)}$, with $\pi$ being the leaf distribution of $M_{\dep}^{\us}(D)$, as needed.
\OMIT{
Consequently, 
\[ 
p_1\ =\ p_5\ =\ \frac{3}{9} \quad \text{and} \quad p_2\ =\ p_3\ =\ p_4\ =\ \frac{1}{9}. 
\]
Similarly, we obtain that
\[
p_6\ =\ p_7\ =\ p_8\ =\ \frac{1}{3} \quad \text{and} \quad p_9\ =\ p_{10}\ =\ p_{11}\ =\ \frac{1}{3}.
\]
}

\medskip
\noindent \paragraph{Uniform Repairs.}
Since multiple complete sequences can lead to the same database (e.g., $-f_1,-\{f_2,f_3\}$ and $-f_3,-\{f_1,f_2\}$) we would like to have a mechanism that gives non-zero probability to exactly one such sequence. To this end, for each set of complete sequences that lead to the same consistent database, we identify a representative one.
We say that a $(D,\dep)$-repairing sequence $s \in \crs{D}{\dep}$ is \emph{canonical} if there is no $s' \in  \crs{D}{\dep}$ such that $s(D) = s'(D)$ and $s' \prec s$ for some arbitrary ordering $\prec$ over the set $\rs{D}{\dep}$.
Let $\cancrs{D}{\dep}$ be the set of all sequences of $\crs{D}{\dep}$ that are canonical.
Furthermore, for a sequence $s \in \rs{D}{\dep}$, we write $\cancrss{D}{\dep}{s}$ for the set of all sequences $s'$ of $\cancrs{D}{\dep}$ that have $s$ as a prefix.
Hence, for $s \in \rs{D}{\dep}$, $\cancrss{D}{\dep}{s}$ is the set of canonical leaves of the subtree rooted at $s$, with $\cancrss{D}{\dep}{\epsilon} = \cancrs{D}{\dep}$ being the set of canonical leaves of the tree.
We can now follow the same approach discussed above for uniform sequences with the key difference that only canonical sequences are considered.
In other words, for $M_\dep^{\ur}(D) = (V,E,\ins{P})$ to induce the uniform distribution over the set of operational repairs, it suffices, for nodes $s,s' \in \rs{D}{\dep}$ with $s' \in \ops{s}{D}{\dep}$, to let
\[
\ins{P}(s,s') = \frac{|\cancrss{D}{\dep}{s'}|}{|\cancrss{D}{\dep}{s}|}.
\]
Notice that $\ins{P}(s,s')$ is not defined if the subtree $T_s$ rooted at $s$ has no canonical leaves, i.e., $\cancrss{D}{\dep}{s} = \emptyset$. In this case, none of the leaves of $T_s$ is reachable with non-zero probability, and thus, $\ins{P}(s,s')$ can get an arbitrary probability, e.g., $\frac{1}{|\ops{s}{D}{\dep}|}$.

Let us illustrate the above discussion. Assuming, e.g., that for $s,s' \in \rs{D}{\dep}$, $s \prec s'$ iff $s$ comes before $s'$ in a depth-first traversal of the tree, we have that $\cancrs{D}{\dep}$ consists of  the sequences
\[
\begin{array}{ccccc}
-f_1,-f_2 & -f_1,-f_3 & -f_1,-\{f_2,f_3\} & -f_2 & -\{f_2,f_3\}.
\end{array}
\]
Therefore, we get that
\[
|\cancrss{D}{\dep}{\epsilon}|\ =\ 5 \qquad |\cancrss{D}{\dep}{-f_1}|\ =\ 3
\]
\[
|\cancrss{D}{\dep}{-f_2}|\ =\ |\cancrss{D}{\dep}{-\{f_2,f_3\}}|\ =\ 1
\]
\[
|\cancrss{D}{\dep}{-\{f_1,f_2\}}|\ =\ |\cancrss{D}{\dep}{-f_3}|\ =\ 0.
\]
Hence, $p_1$ $=$ $\frac{3}{5}$, $p_2$ = $p_5$ $=$ $0$, $p_3$ $=$ $p_4$ $=$ $\frac{1}{5}$, $p_6$ $=$ $p_7$ $=$ $p_8$ $=$ $\frac{1}{3}$, and $p_9$ $=$ $p_{10}$ $=$ $p_{11}$ $=$ $\frac{1}{3}$.
Thus, $\abs{M_{\dep}^{\ur}(D)} = \cancrs{D}{\dep}$, and $\pi(s)$ $=$ $\frac{1}{5}$, for each $s \in \abs{M_\dep^{\ur}(D)}$, with $\pi$ being the leaf distribution of $M_{\dep}^{\ur}(D)$. 
Hence, $\opr{D}{M_\dep^{\ur}} = \{\emptyset,\{f_1\},\{f_2\},\{f_3\},\{f_1,f_3\}\}$ with $\probrep{D,M_\dep}{D'} = \frac{1}{5}$, for each $D' \in \opr{D}{M_\dep^{\ur}}$, as needed.

\medskip
\noindent \paragraph{Uniform Operations.} For $M_\dep^{\uo}$ we simply follow our intention. In particular, $M_\dep^{\uo}(D) = (V,E,\ins{P})$ is such that, for nodes $s,s' \in \rs{D}{\dep}$ with $s' \in \ops{s}{D}{\dep}$, $\ins{P}(s,s') = \frac{1}{|\ops{s}{D}{\dep}|}$. Thus, $p_1 = p_2 = p_3 = p_4 = p_5 = \frac{1}{5}$, $p_6 = p_7 = p_8 = \frac{1}{3}$, and $p_9 = p_{10} = p_{11} = \frac{1}{3}$.
Notice that, unlike the Markov chain generators $M_{\dep}^{\ur}$ and $M_{\dep}^{\us}$ discussed above, $M_{\dep}^{\uo}$ is intrinsically ``local'' in the sense that the probabilities assigned to operations at a certain step are completely determined by that step.
As we shall see, the local nature of $M_{\dep}^{\uo}$ has a significant impact on operational CQA when it comes to approximations.

%

\medskip
\noindent \paragraph{Our Main Objective.} 
%
The data complexity of $\mathsf{OCQA}$ for {\em arbitrary} Markov chain generators has been already studied in~\cite{CaLP18}, showing that it is, in general, intractable. In particular:

\begin{theorem}[\cite{CaLP18}]\label{the:ocqa-exact}
	There exist a set $\dep$ of primary keys, a repairing Markov chain generator $M_\dep$ w.r.t.~$\dep$, and a CQ $Q$ such that $\ocqa{\dep,M_\dep,Q}$ is $\sharp ${\rm P}-hard.
\end{theorem}


With the above intractability result in place, the authors of~\cite{CaLP18} asked whether $\ocqa{\dep,M_\dep,Q(\bar x)}$ is approximable, i.e., whether the target probability can be approximated via a {\em fully polynomial-time randomized approximation scheme} (FPRAS, for short).
Formally, an FPRAS for $\ocqa{\dep,M_\dep,Q(\bar x)}$ is a randomized algorithm $\mathsf{A}$ that takes as input a database $D$, a tuple $\bar c \in \adom{D}^{|\bar x|}$, $\epsilon > 0$, and $0 < \delta < 1$, runs in polynomial time in $||D||$, $||\bar c||$,\footnote{As usual, $||o||$ denotes the size of the encoding of a syntactic object $o$.} $1/\epsilon$ and $\log(1/\delta)$, and produces a random variable $\mathsf{A}(D,\bar c,\epsilon,\delta)$ such that
\[
\text{\rm Pr}\left(|\mathsf{A}(D,\bar c,\epsilon,\delta) - \probrep{M_\dep,Q}{D,\bar c}|\ \leq\ \epsilon \cdot \probrep{M_\dep,Q}{D,\bar c}\right)\ \geq\ 1-\delta.
\]
It was shown that
the problem in question 
does not admit an FPRAS, under the widely accepted complexity assumption that ${\rm RP} \neq {\rm NP}$. Recall that RP is the complexity class of problems that are efficiently solvable via a randomized algorithm with a bounded one-sided error (i.e., the answer may mistakenly be ``no'')~\cite{arora-book}.

\begin{theorem}[\cite{CaLP18}]\label{the:ocqa-apx}
	Unless  ${\rm RP} = {\rm NP}$, there exist a set $\dep$ of primary keys, a Markov chain generator $M_\dep$ w.r.t.~$\dep$, and a CQ $Q$ such that there is no FPRAS for $\ocqa{\dep,M_\dep,Q}$.
\end{theorem}


Having the natural Markov chain generators discussed above in place, the question is how the complexity of exact and approximate operational CQA is affected, i.e., how Theorems~\ref{the:ocqa-exact} and~\ref{the:ocqa-apx} are affected if we consider these more refined Markov chain generators instead of an arbitrary one. The goal of this work is to perform such a complexity analysis.
Our main findings are as follows:

\begin{enumerate}
	\item The complexity of exact operational CQA remains $\sharp ${\rm P}-hard, even in the case of primary keys.
	
	\item Operational CQA is approximable, i.e., it admits an FPRAS, if we focus on primary keys.
	
	\item In the case of arbitrary keys and FDs, although the Markov chain generators based on uniform repairs and sequences do not lead (or it remains open whether they lead) to the approximability of operational CQA, the Markov chain generator based on uniform operations renders the problem approximable.\footnote{In the case of FDs, the approximability result holds assuming that only operations that remove a single fact (not a pair of facts) are considered; this is discussed in Section~\ref{sec:uniform-operations}.} The latter should be attributed to the ``local'' nature of the Markov chain generator based on uniform operations. 
\end{enumerate}

The rest of the paper is devoted to discussing the high-level ideas underlying the above results; the formal proofs are in the appendix.

\OMIT{
But first we would like to discuss how our findings demonstrate the advantage of the uniform operational  approach over the classical approach to CQA. Recall that in the classical CQA regime, we are interested in the relative frequency of a tuple of constants, i.e., the percentage of classical repairs (maximal consistent subsets of the input database~\cite{ArBC99}) that entail such a tuple.

\medskip

\noindent \paragraph{Uniform Operational vs. Classical CQA.} The approximability results for primary keys (see item (2)) are in striking contrast with Theorem~\ref{the:ocqa-apx}, and already illustrate the usefulness of the uniform approach to operational CQA. However, they do not demonstrate the advantage of the uniform operational  approach over the classical approach to CQA, since, in the case of primary keys, both frameworks guarantee the existence of an FPRAS for the problem of interest.
The superiority of the uniform operational over the classical approach is demonstrated by the approximability results for keys and FDs when we adopt the Markov chain generator based on uniform operations (see item (3)). Indeed, for keys it is a non-trivial open question whether CQA according to the classical approach admits an FPRAS, whereas in the case of FDs it has been recently shown that the problem does not admit an FPRAS.
}


\OMIT{
	Although exact operational CQA remains $\sharp ${\rm P}-hard even if we consider the Markov chain generators $M_{\dep}^{\ur}$, $M_{\dep}^{\us}$ and $M_{\dep}^{\uo}$ w.r.t.~a set $\dep$ of primary keys, in some cases approximate operational CQA benefits from our uniform Markov chain generators.
	Our main results concerning approximate operational CQA are collected in Table~\ref{}.

	\begin{figure}
		\centering
		\begin{tabular}{r||c|c|c}
			& Primary Keys & Keys & FDs\\\hline\hline
			Uniform Repairs & $\checkmark$ & $\times^\star$ & $\times$ \\\hline
			Uniform Sequences & $\checkmark$ & ? & ? \\\hline
			Uniform Operations & $\checkmark$ & $\checkmark$ & $\checkmark^\star$
		\end{tabular}
		\caption{Each cell corresponds to a class of constraints and a uniform Markov chain generator. The symbol $\checkmark$ (resp., $\times$) means that in the corresponding setting there is (resp., there is no) an FRPAS.} Note that 
		\label{fig:results}
	\end{figure}
	We would say that the main finding of our analysis
}


\OMIT{
\medskip
\noindent\paragraph{Uniform Repairs.} We start with the Markov chain generator based on the uniform probability distribution over the set of operational repairs. Since an operational repair may be obtainable via multiple complete repairing sequences, we need a mechanism that allows us to concentrate only on one sequence that leads to a certain repair, while all the other sequences that lead to the same repair are somehow ignored. This is done via the notion of canonical sequence.
For a database $D$, and a set $\dep$ of FDs, we say that a $(D,\dep)$-repairing sequence $s \in \crs{D}{\dep}$ is \emph{canonical} if there is no $s' \in  \crs{D}{\dep}$ such that $s(D) = s'(D)$ and $s' \prec s$ for some arbitrary ordering $\prec$ over the set $\rs{D}{\dep}$.
Let $\cancrs{D}{\dep}$ be the set of all sequences of $\crs{D}{\dep}$ that are canonical. Furthermore, for a sequence $s \in \rs{D}{\dep}$, we write $\cancrss{D}{\dep}{s}$ for the set of all sequences $s'$ of $\cancrs{D}{\dep}$ that have $s$ as a prefix.
We are now ready to define the desired Markov chain generator.

\begin{definition}(\textbf{Uniform Repairs})\label{def:uniform-repairs}
	Consider a set $\dep$ of FDs. Let $M_{\dep}^{\ur}$ be the function assigning to a database $D$ a pair $(\varepsilon,\insP)$, where $\insP : \rs{D}{\dep} \times \rs{D}{\dep} \ra [0,1]$ is as follows: for $s,s' \in \rs{D}{\dep}$,
	\begin{eqnarray*}
			\insP(s,s')\
			=\ \left\{
			\begin{array}{ll}
				1 & \text{if } s=s' \text{ is complete}\\
				&\\
				\frac{|\cancrss{D}{\dep}{s'}|}{|\cancrss{D}{\dep}{s}|} & 	\text{if } s' \in \ops{s}{D}{\dep} \text{ and } \\
				& \cancrss{D}{\dep}{s'} \neq \emptyset \\
				&\\
				0 & \text{otherwise}. \hspace{28mm}\hfill\markfull
			\end{array} \right. 
	\end{eqnarray*}
\end{definition}



Observe that $\cancrss{D}{\dep}{\varepsilon} = \cancrs{D}{\dep}$. Moreover, for a $(D,\dep)$-repairing sequence $s \in \rs{D}{\dep}$, it holds that
\[
|\cancrss{D}{\dep}{s}|\ =\ \sum_{s' \in \ops{s}{D}{\dep}} |\cancrss{D}{\dep}{s'}|.
\]
These simple facts allow us to show that $M_{\dep}^{\ur}$ captures our intention:

\begin{proposition}\label{pro:uniform-repairs}
	Consider a set $\dep$ of FDs. For every database $D$:
	\begin{enumerate}
		\item $M_{\dep}^{\ur}(D)$ is a $(D,\dep)$-repairing Markov chain such that $\abs{M_{\dep}^{\ur}(D)} = \cancrs{D}{\dep}$.
		\item For every $D' \in \opr{D}{M_{\dep}^{\ur}}$, $\probrep{D,M_{\dep}^{\ur}}{D'} = \frac{1}{|\opr{D}{M_{\dep}^{\ur}}|}$.
	\end{enumerate}
\end{proposition}

\noindent\paragraph{Uniform Sequences.} We now proceed to define the Markov chain generator based on the uniform probability distribution over the set of complete repairing sequences.
It is defined similarly to the Markov chain generator above 
with the difference that we consider arbitrary, not necessarily canonical, complete sequences.
For a database $D$, a set $\dep$ of FDs, and $s \in \rs{D}{\dep}$, let $\crss{D}{\dep}{s}$ be the set of all sequences of $\crs{D}{\dep}$ that have $s$ as a prefix; it is easy to verify that $\crss{D}{\dep}{s} \neq \emptyset$.

\begin{definition}(\textbf{Uniform Sequences})\label{def:uniform-seq}
	Consider a set $\dep$ of FDs. Let $M_{\dep}^{\us}$ be the function assigning to a database $D$ a pair $(\varepsilon,\insP)$, where $\insP : \rs{D}{\dep} \times \rs{D}{\dep} \ra [0,1]$ is as follows: for $s,s' \in \rs{D}{\dep}$,
	\begin{eqnarray*}
		\insP(s,s')\
		=\ \left\{
		\begin{array}{ll}
			1 & \text{if } s=s' \text{ is complete}\\
			&\\
			\frac{|\crss{D}{\dep}{s'}|}{|\crss{D}{\dep}{s}|} & 	\text{if } s' \in \ops{s}{D}{\dep}\\
			&\\
			0 & \text{otherwise}. \hspace{32mm}\hfill\markfull
		\end{array} \right. 
	\end{eqnarray*}
\end{definition}



Observe that $\crss{D}{\dep}{\varepsilon} = \crs{D}{\dep}$. Moreover, for a $(D,\dep)$-repairing sequence $s \in \rs{D}{\dep}$, it holds that
\[
|\crss{D}{\dep}{s}|\ =\ \sum_{s' \in \ops{s}{D}{\dep}} |\crss{D}{\dep}{s'}|.
\]
We can then easily show that $M_{\dep}^{\us}$ captures our intention:

\begin{proposition}\label{pro:uniform-seq}
	Consider a set $\dep$ of FDs. For every database $D$:
	\begin{enumerate}
		\item $M_{\dep}^{\us}(D)$ is a $(D,\dep)$-repairing Markov chain such that $\abs{M_{\dep}^{\us}(D)} = \crs{D}{\dep}$.
		\item For every $s \in \crs{D}{\dep}$, assuming that $\pi$ is the hitting distribution of $M_{\dep}^{\us}(D)$, $\pi(s) = \frac{1}{|\crs{D}{\dep}|}$ .
	\end{enumerate}
\end{proposition}

\noindent\paragraph{Uniform Operations.} 
We finally define the Markov chain generator based on the uniform probability distribution over the set of available operations at a single step of the repairing process.

\begin{definition}(\textbf{Uniform Operations})\label{def:uniform-ops}
	Consider a set $\dep$ of FDs. Let $M_{\dep}^{\uo}$ be the function assigning to a database $D$ a pair $(\varepsilon,\insP)$, where $\insP : \rs{D}{\dep} \times \rs{D}{\dep} \ra [0,1]$ is as follows: for $s,s' \in \rs{D}{\dep}$,
		\begin{eqnarray*}
	\insP(s,s')\
	=\ \left\{
	\begin{array}{ll}
		1 & \text{if } s=s' \text{ is complete}\\
		&\\
		\frac{1}{|\ops{s}{D}{\dep}|} & 	\text{if } s' \in\ops{s}{D}{\dep} \\
		&\\
		0 & \text{otherwise}.  \hspace{32mm}\hfill\markfull
	\end{array} \right.
\end{eqnarray*}
\end{definition}

It is straightforward to see that the function $M_{\dep}^{\uo}$ captures our intention; in fact, this holds by definition:

\begin{proposition}\label{pro:uniform-ops}
	Consider a set $\dep$ of FDs. For every database $D$:
	\begin{enumerate}
		\item $M_{\dep}^{\uo}(D)$ is a $(D,\dep)$-repairing Markov chain such that $\abs{M_{\dep}^{\uo}(D)} = \crs{D}{\dep}$.
		\item For every $s,s' \in \rs{D}{\dep}$, assuming that $M_{\dep}^{\uo}(D) = (\varepsilon, \insP)$, $s' \in \ops{s}{D}{\dep}$ implies $\insP(s,s') = \frac{1}{|\ops{s}{D}{\dep}|}$.
	\end{enumerate}
\end{proposition}

Notice that, unlike the Markov chain generators $M_{\dep}^{\ur}$ and $M_{\dep}^{\us}$ w.r.t.~a set $\dep$ of FDs defined above, the Markov chain generator $M_{\dep}^{\uo}$ is intrinsically ``local'' in the sense that the probabilities assigned to operations at a certain step are completely determined by that step.
As we shall see, the local nature of $M_{\dep}^{\uo}$ has a significant impact on operational CQA when it comes to approximations.

\medskip
\noindent\paragraph{Singleton Removals.} Note that the Markov chain generators $M_{\dep}^{\ur}$, $M_{\dep}^{\us}$ and $M_{\dep}^{\uo}$ w.r.t.~a set $\dep$ of FDs consider all the available operations, i.e., justified operations that either remove a single fact that is an element of a violation, or remove two facts that form a violation.
On the other hand, it is also natural to operationally repair an inconsistent database via a sequence of single fact deletions, which somehow brings operational CQA closer to classical CQA.
%
To this end, given a set $\dep$ of FDs, one can consider the Markov chain generators $M_{\dep}^{\ur,1}$, $M_{\dep}^{\us,1}$ and $M_{\dep}^{\uo,1}$ w.r.t.~$\dep$, which are defined in the same way as the Markov chain generators $M_{\dep}^{\ur}$, $M_{\dep}^{\us}$ and $M_{\dep}^{\uo}$, respectively, with the key difference that only repairing sequences of operations of the form $-f$, where $f$ is fact, are considered; the formal definitions can be found in the appendix.

}
\section{Uniform Repairs}\label{sec:uniform-repairs}

We start our complexity analysis by considering the Markov chain generator based on uniform repairs, and show the following result:

\begin{theorem}\label{the:uniform-repairs}
	\begin{enumerate}
		\item There exist a set $\dep$ of primary keys, and a CQ $Q$ such that $\ocqa{\dep,M_{\dep}^{\ur},Q}$ is $\sharp ${\rm P}-hard.
		
		\item For a set $\dep$ of primary keys, and a CQ $Q$, $\ocqa{\dep,M_{\dep}^{\ur},Q}$ admits an FPRAS.
		
		\item Unless ${\rm RP} = {\rm NP}$, there exist a set $\dep$ of FDs, and a CQ $Q$ such that there is no FPRAS for $\ocqa{\dep,M_{\dep}^{\ur},Q}$.
	\end{enumerate}
\end{theorem}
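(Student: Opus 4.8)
The plan is to handle the three parts separately, each building on a known result about the classical (declarative) repair semantics or about $\mathsf{OCQA}$ with arbitrary generators, and then upgrading it to the uniform-repairs generator $M_{\dep}^{\ur}$.

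First, for part~(1), the natural route is to reduce from counting classical repairs that entail a query, which is $\sharp\mathrm{P}$-hard even for primary keys and self-join-free CQs by Theorem~\ref{the:pk-dichotomy} (choose $\dep$ and $Q$ for which $\sharp\prob{Repairs}(\dep,Q)$ is $\sharp\mathrm{P}$-complete). The key observation is that for primary keys the classical repairs of $D$ w.r.t.~$\dep$ coincide with the operational repairs under $M_{\dep}^{\ur}$: each classical repair selects one fact per key-block, and the canonical complete repairing sequences are exactly the sequences of single-fact deletions realizing such a selection, so $\opr{D}{M_{\dep}^{\ur}} = \rep{D}{\dep}$ and each receives probability $1/|\rep{D}{\dep}|$. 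Hence $\probrep{M_\dep^{\ur},Q}{D,()} = |\rep{D}{\dep,Q}| / |\rep{D}{\dep}|$. Since $|\rep{D}{\dep}|$ is computable in polynomial time for primary keys (it is the product of block sizes), an oracle for $\ocqa{\dep,M_{\dep}^{\ur},Q}$ yields $|\rep{D}{\dep,Q}|$, giving the $\sharp\mathrm{P}$-hardness.

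Second, for part~(2), the approximability again leverages the identification $\probrep{M_\dep^{\ur},Q}{D,\bar c} = |\rep{D}{\dep,Q,\bar c}| / |\rep{D}{\dep}|$ for primary keys. The denominator is exactly computable; for the numerator we invoke the known FPRAS for counting classical repairs of an inconsistent database w.r.t.~primary keys that entail a CQ (implicit in~\cite{DaSu07}, explicit in~\cite{CaCP19}). Dividing an $(\epsilon,\delta)$-approximation of the numerator by the exact denominator gives an $(\epsilon,\delta)$-approximation of the ratio, i.e., an FPRAS for $\ocqa{\dep,M_{\dep}^{\ur},Q}$. The only point requiring care is that the FPRAS of~\cite{CaCP19} estimates a count and one must argue that multiplicative error is preserved under division by a fixed positive integer, which is immediate, and that the target ratio is never so small that additive slack becomes an issue — but since we only ask for multiplicative guarantees and the ratio is a rational with polynomially-bounded denominator, this is fine.

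Third, for part~(3) — which I expect to be the main obstacle — the goal is a set of (non-primary-key) FDs and a CQ for which no FPRAS exists unless $\mathrm{RP}=\mathrm{NP}$. The strategy is to pick a minimal FD schema whose classical/operational repair structure encodes a hard counting-to-decision problem, e.g.~a single relation with two overlapping FDs (similar in spirit to the $\dep = \{R:A\to B,\ R:C\to B\}$ running example, but engineered so that the set of operational repairs under $M_{\dep}^{\ur}$ is in bijection with, say, the independent sets or satisfying assignments of an arbitrary structure). One then shows that an FPRAS for $\ocqa{\dep,M_{\dep}^{\ur},Q}$ could be used to decide an $\mathrm{NP}$-hard problem with bounded error: the query $Q$ is chosen so that $\probrep{M_\dep^{\ur},Q}{D,\bar c} = 0$ exactly when the encoded instance is a "no" instance and is bounded away from $0$ (by an inverse polynomial) otherwise; an FPRAS can distinguish these cases, yielding an $\mathrm{RP}$ algorithm for the $\mathrm{NP}$-hard problem. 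The delicate part is that, unlike the classical setting, the probabilities under $M_{\dep}^{\ur}$ are ratios of counts of canonical complete repairing sequences, so one must control both the structure of operational repairs (to get the combinatorial encoding) and the fact that the probability mass is uniform over them rather than over sequences; this likely mirrors the inapproximability argument behind Theorem~\ref{the:ocqa-apx} but must be carried out with the concrete generator $M_{\dep}^{\ur}$ rather than an adversarially chosen one, and must use an FD set that is genuinely outside the primary-key (indeed outside the LHS-chain) regime so that the repair count itself is $\sharp\mathrm{P}$-hard and inapproximable.
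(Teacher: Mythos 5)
There is a genuine gap, and it sits at the very foundation of your parts (1) and (2): the identification $\opr{D}{M_{\dep}^{\ur}} = \rep{D}{\dep}$ is false. The operational repairs under $M_{\dep}^{\ur}$ are the \emph{candidate} operational repairs $\copr{D}{\dep}$, i.e., all databases reachable by complete sequences of justified operations, and these are not the maximal consistent subsets. For a set of primary keys, a block $B$ with $|B|\geq 2$ can end up contributing \emph{no} fact at all (remove facts one by one and finish with a pair deletion), so each such block has $|B|+1$ outcomes rather than $|B|$; in the paper's running example the operational repairs are $\emptyset,\{f_1\},\{f_2\},\{f_3\},\{f_1,f_3\}$, and $\emptyset$ is not a classical repair. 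Consequently the target probability is
\[
\probrep{M_{\dep}^{\ur},Q}{D,\bar c}\ =\ \frac{|\{D' \in \copr{D}{\dep} \mid \bar c \in Q(D')\}|}{|\copr{D}{\dep}|},
\]
not $|\rep{D}{\dep,Q,\bar c}|/|\rep{D}{\dep}|$. This breaks your part (1) reduction from $\sharp \prob{Repairs}(\dep,Q)$ (the oracle answers a different ratio, and the paper in fact has to reduce from $\sharp H$-coloring with a three-element target graph precisely because every block offers the extra ``drop both'' option), and it breaks your part (2) argument, since the FPRAS of~\cite{CaCP19} approximates the number of \emph{classical} repairs entailing $Q$, which is not the numerator you need. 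The actual route for (2) is a direct Monte Carlo argument: a uniform sampler over $\copr{D}{\dep}$ (sample one of the $|B|+1$ outcomes per block) together with a polynomial lower bound $\orfreq{\dep,Q}{D,\bar c} \geq 1/(2\cdot||D||)^{||Q||}$ whenever it is nonzero.

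For part (3), your high-level instinct (encode a hard counting structure in the operational repairs of an FD set outside the primary-key regime) is in the right direction, but the concrete mechanism you propose is unlikely to work: you want a query whose probability is $0$ on ``no'' instances and inverse-polynomially large on ``yes'' instances, yielding an RP algorithm for an NP-hard problem. The relevant hard problems here (e.g., counting independent sets in bounded-degree graphs, which is what the paper uses via~\cite{Sly2010}) have trivial decision versions, so there is no $0$-versus-noticeable gap to exploit; the inapproximability must be transferred at the level of \emph{counting}. The paper does this in two approximation-preserving steps: first, a set of keys over a single relation of arity $\Delta+1$ such that, using a polynomial-time $(\Delta+1)$-edge-coloring, any connected bounded-degree graph $G$ is realized as the conflict graph of a database, whence $|\copr{D_G}{\dep_K}| = |\IS(G)|$ is inapproximable; second, an FD construction adding one fact $f^*$ in conflict with everything (this is where genuine non-key FDs are needed) and an atomic query satisfied only by $\{f^*\}$, so that $\orfreq{\dep_F,Q_F}{D_F,()} = 1/(|\copr{D}{\dep_K}|+1)$, which converts an FPRAS for the frequency into an FPRAS for the repair count. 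Without such an approximation-preserving reduction, your sketch for (3) does not go through.
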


Notice that the above result does not cover the case of arbitrary keys, which remains an open problem. We can extract, however, from the proof of item (3) that for keys, unless ${\rm RP} = {\rm NP}$, the problem of {\em counting} the number of operational repairs does not admit an FPRAS. We see this as an indication that item (3) holds even in the case of keys.
We now discuss how Theorem~\ref{the:uniform-repairs} is shown.

We start with the simple observation that, for a database $D$, a set $\dep$ of FDs, a CQ $Q(\bar x)$, and a tuple $\bar c \in \adom{D}^{|\bar x|}$,
\begin{eqnarray*}
\probrep{M_{\dep}^{\ur},Q}{D,\bar c} &=& \frac{|\{D' \in \copr{D}{\dep} \mid \bar c \in Q(D')\}|}{|\copr{D}{\dep}|}.
\end{eqnarray*}
\OMIT{
\begin{eqnarray*}
\probrep{M_{\dep}^{\ur},Q}{D,\bar c} &=& \frac{|\{D' \in \opr{D}{M_{\dep}^{\ur}} \mid \bar c \in Q(D')\}|}{|\opr{D}{M_{\dep}^{\ur}}|}\\
&=& \frac{|\{s(D) \mid s \in \crs{D}{\dep} \text{ and } \bar c \in Q(s(D))\}|}{|\{s(D) \mid s \in \crs{D}{\dep}\}|}
\end{eqnarray*}
}
This ratio is the percentage of candidate operational repairs of $D$ w.r.t.~$\dep$ that entail $Q(\bar c)$, which we call the {\em repair relative frequency} of $Q(\bar c)$ w.r.t.~$D$ and $\dep$, and denote $\orfreq{\dep,Q}{D,\bar c}$.
Therefore, we can conveniently restate the problem $\ocqa{\dep,M_{\dep}^{\ur},Q}$ as the problem of computing the repair relative frequency of $Q(\bar c)$ w.r.t.~$D$ and $\dep$, which does not depend on the Markov chain generator $M_{\dep}^{\ur}$:


\medskip

\begin{center}
	\fbox{\begin{tabular}{ll}
			{\small PROBLEM} : & $\rrelfreq{\dep,Q(\bar x)}$
			\\
			{\small INPUT} : & A database $D$,  and a tuple $\bar c \in \adom{D}^{|\bar x|}$.
			\\
			{\small OUTPUT} : &  $\orfreq{\dep,Q}{D,\bar c}$.
	\end{tabular}}
\end{center}

\medskip
\noindent We proceed to discuss how we establish Theorem~\ref{the:uniform-repairs} by directly considering the problem $\rrelfreq{\dep,Q}$ instead of $\ocqa{\dep,M_{\dep}^{\ur},Q}$;
further details can be found in Appendix~\ref{appsec:uniform-repairs}.

\medskip 
\noindent \paragraph{Item (1).} We show that  $\rrelfreq{\dep,Q}$ is $\sharp ${\rm P}-hard for a set $\dep$ consisting of a single key of the form $R : A \ra B$, where $R$ is a binary relation name with $(A,B)$ being its tuple of attributes, and a Boolean CQ $Q$. This is done via a polynomial-time Turing reduction from a graph-theoretic problem called $\sharp H\text{-}\mathsf{Coloring}$, where $H$ is an undirected graph, to $\rrelfreq{\dep,Q}$. The problem $\sharp H\text{-}\mathsf{Coloring}$ takes as input an undirected graph $G$, and asks for the number of homomorphisms from $G$ to $H$. The key of the proof is to carefully choose $H$ so that (i) $ \sharp H\text{-}\mathsf{Coloring}$ is $\sharp ${\rm P}-hard, and (ii) it allows us to devise the desired polynomial-time Turing reduction, i.e., for an undirected graph $G$, we can construct in polynomial time in $||G||$ a database $D_G$ such that the number of homomorphisms from $G$ to $H$ can be computed in polynomial time in $||G||$, assuming that we have access to an oracle for the problem $\rrelfreq{\dep,Q}$, which we can call to compute the number $\orfreq{\dep,Q}{D_G,()}$; we use $()$ to denote the empty tuple.
For choosing $H$, we exploit an interesting dichotomy from~\cite{Dyer00}, which characterizes when $\sharp H\text{-}\mathsf{Coloring}$ is solvable in polynomial time or is $\sharp ${\rm P}-hard, depending on the structure of $H$.

\medskip
\noindent \paragraph{Item (2).} For showing that, for a set $\dep$ of primary keys and a CQ $Q$, $\rrelfreq{\dep,Q}$ admits an FPRAS, we rely on Monte Carlo sampling. 
We first show the existence of an efficient sampler:

\begin{lemma}\label{lem:ur-sampler}
	Given a database $D$, and a set $\dep$ of primary keys, we can sample elements of $\copr{D}{\dep}$ uniformly at random in polynomial time in $||D||$.
\end{lemma}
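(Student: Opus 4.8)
The plan is to exploit the simple structure of candidate operational repairs under primary keys: they factorize across the \emph{key-blocks} of $D$, so the sampler only has to make one independent uniform choice per block.

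First I would establish the block decomposition. For a relation name $R$ carrying the primary key $R : X \ra Y$ in $\dep$, call two facts $f,g$ over $R$ \emph{key-equivalent} if $f[A] = g[A]$ for every $A \in X$, and call the resulting equivalence classes the \emph{blocks} of $D$ (treating each fact over a relation name with no key in $\dep$ as forming its own block); thus $D$ is the disjoint union of its blocks. Since $X \cup Y = \att{R}$, any two distinct key-equivalent facts over $R$ disagree on some attribute of $Y$ and hence form a $D$-violation of $R : X \ra Y$, and conversely every violation of a primary key consists of two key-equivalent facts. Consequently, for every $D' \subseteq D$, the set of facts removed by any $(D',\dep)$-justified operation is contained in a single block, so a $(D,\dep)$-repairing sequence is an arbitrary interleaving of independent ``block-local'' repairing sequences, its result is the disjoint union of the results of those sequences, and it is complete exactly when each of those results contains at most one fact (otherwise a violation survives).

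Next I would read off $\copr{D}{\dep}$ from this. A block $B$ with $|B| = 1$ contributes its unique fact to every candidate repair, since that fact lies in no violation and cannot be deleted by a justified operation. A block $B$ with $|B| \geq 2$ contributes either $\emptyset$ or a singleton $\{f\}$ with $f \in B$, and both outcomes are realizable: to leave $\{f\}$, delete the facts of $B \setminus \{f\}$ one at a time (each deletion $-g$ is justified because, until only $\{f\}$ is left, the current content of $B$ still contains both $f$ and $g$, and $\{f,g\}$ is a violation); to leave $\emptyset$, delete facts of $B$ one at a time down to two facts $\{g,g'\}$, then apply $-\{g,g'\}$. Hence the map sending a choice $C_B \in \mathcal{C}_B$ for each block $B$ of $D$ to $\bigcup_B C_B$ is a bijection onto $\copr{D}{\dep}$, where $\mathcal{C}_B = \{B\}$ when $|B| = 1$ and $\mathcal{C}_B = \{\emptyset\} \cup \{\{f\} \mid f \in B\}$ when $|B| \geq 2$ (injectivity because the blocks are pairwise disjoint).

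Finally the sampler: compute the blocks of $D$ in time linear in $||D||$ (grouping facts by relation name and key value); then, independently for each block $B$, keep its fact if $|B| = 1$, and if $|B| \geq 2$ keep no fact of $B$ with probability $\tfrac{1}{|B|+1}$ and otherwise keep a uniformly random fact of $B$, so that each element of $\mathcal{C}_B$ is picked with probability exactly $\tfrac{1}{|B|+1}$; output the union of the kept facts. By independence of the per-block choices and the bijection above, the output is distributed uniformly over $\copr{D}{\dep}$, and every step runs in time polynomial (indeed linear) in $||D||$. The only genuinely non-trivial part is the characterization of $\copr{D}{\dep}$ — in particular exhibiting the explicit justified repairing sequences that realize the outcomes $\emptyset$ and $\{f\}$ in a block of size at least two, and checking that no outcome with two or more surviving facts can be the result of a complete sequence; once the block decomposition is in hand this is routine bookkeeping, and I do not expect a serious obstacle.
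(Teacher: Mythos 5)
Your proposal is correct and follows essentially the same route as the paper's proof: decompose $D$ into key-blocks, observe that each block of size at least two contributes one of $|B|+1$ outcomes (a single surviving fact or none, both realizable by explicit justified sequences, the empty outcome via a final pair removal) while singleton blocks and keyless relations survive intact, and then sample each block's outcome independently and uniformly. The only difference is that you spell out the block-local sequences and the bijection in slightly more detail than the paper does, which is fine.
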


The above lemma tells us that there exists a randomized algorithm $\mathsf{SampleRep}$ that takes as input $D$ and $\dep$, runs in polynomial time in $||D||$, and produces a random variable $\mathsf{SampleRep}(D,\dep)$ such that $\pr(\mathsf{SampleRep}(D,\dep) = D') = \frac{1}{|\copr{D}{\dep}|}$ for every database $D' \in \copr{D}{\dep}$.
Notice, however, that the efficient sampler provided by Lemma~\ref{lem:ur-sampler} does not immediately imply the existence of an FPRAS for $\rrelfreq{\dep,Q}$ since the number of samples should be proportional to $\frac{1}{\orfreq{\dep,Q}{D,\bar c}}$~\cite{KarpLuby2000}. Hence, to obtain an FPRAS using Monte Carlo sampling, we need show that the repair relative frequency is never ``too small''. 

\begin{lemma}\label{lem:ur-lower-bound}
	Consider a set $\dep$ of primary keys, and a CQ $Q(\bar x)$. For every database $D$, and tuple $\bar c \in \adom{D}^{|\bar x|}$,
	\[
	\orfreq{\dep,Q}{D,\bar c}\ \geq\ \frac{1}{(2 \cdot ||D||)^{||Q||}}
	\] 
	whenever $\orfreq{\dep,Q}{D,\bar c} > 0$.
\end{lemma}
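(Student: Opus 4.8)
The plan is to work directly with the repair relative frequency
\[
\orfreq{\dep,Q}{D,\bar c}\ =\ \frac{|\{D' \in \copr{D}{\dep} \mid \bar c \in Q(D')\}|}{|\copr{D}{\dep}|},
\]
exploiting the simple structure that $\copr{D}{\dep}$ has when $\dep$ is a set of primary keys. For a relation name $R$ carrying the primary key $R : X \ra Y$, partition the $R$-facts of $D$ into \emph{blocks}, two facts lying in the same block iff they agree on the attributes of $X$; facts of relations with no key form singleton blocks. Since every $\dep$-violation of any subset of $D$ consists of two facts of the same block, every justified operation removes facts from a single block, and hence a $(D,\dep)$-repairing sequence decomposes into independent per-block repairing sequences whose results are taken together. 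One then argues that for a block $B$ the set $E_B$ of local outcomes (the $B' \subseteq B$ obtainable as $t(B)$ for a complete repairing sequence $t$ on $B$) is $\{\{f\} \mid f \in B\} \cup \{\emptyset\}$ when $|B| \geq 2$ and $\{B\}$ when $|B| = 1$; in either case $|E_B| \leq |B| + 1$. Because the blocks partition $D$, the map $D' \mapsto (D' \cap B)_B$ is a bijection between $\copr{D}{\dep}$ and $\prod_B E_B$, so in particular $|\copr{D}{\dep}| = \prod_B |E_B|$.

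Now suppose $\orfreq{\dep,Q}{D,\bar c} > 0$ and fix a witness $D^\star \in \copr{D}{\dep}$ with $\bar c \in Q(D^\star)$ together with a homomorphism $h$ from $Q$ to $D^\star$ such that $h(\bar x) = \bar c$. The image $h(Q)$ is a set of at most $m$ facts of $D^\star$, where $m$ is the number of atoms of $Q$, so $h(Q)$ meets a set $\mathcal{B}$ of at most $m \leq ||Q||$ blocks; since $D^\star$ is consistent w.r.t.\ the primary keys, for each $B \in \mathcal{B}$ the set $h(Q) \cap B$ is a single fact and equals $D^\star \cap B \in E_B$. Under the bijection above, the candidate repairs $D'$ with $h(Q) \subseteq D'$ correspond exactly to the tuples of $\prod_B E_B$ whose $B$-component equals $h(Q) \cap B$ for $B \in \mathcal{B}$ and is arbitrary for $B \notin \mathcal{B}$; every such $D'$ satisfies $\bar c \in Q(D')$ since $h$ remains a homomorphism into $D'$. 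Hence, using $|E_B| \leq |B| + 1 \leq ||D|| + 1 \leq 2 \cdot ||D||$ (we may assume $D \neq \emptyset$, as otherwise $\orfreq{\dep,Q}{D,\bar c} = 0$ and the claim is vacuous),
\[
\orfreq{\dep,Q}{D,\bar c}\ \geq\ \frac{|\{D' \in \copr{D}{\dep} \mid h(Q) \subseteq D'\}|}{|\copr{D}{\dep}|}\ =\ \prod_{B \in \mathcal{B}} \frac{1}{|E_B|}\ \geq\ \frac{1}{(2 \cdot ||D||)^{|\mathcal{B}|}}\ \geq\ \frac{1}{(2 \cdot ||D||)^{||Q||}}.
\]

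I expect the main obstacle to be the rigorous justification of the product structure of $\copr{D}{\dep}$: one must verify that the repairing process creates no cross-block interaction (straightforward once justified operations are seen to stay within a block) and, more delicately, that within a block of size $\geq 2$ every singleton \emph{and} the empty set are reachable as complete outcomes — this uses that one may first delete single facts and only later delete a pair, and it is precisely here that the argument relies on the conflict relation being a disjoint union of cliques. This is also what forces the restriction to primary keys in the statement, since the argument fails for general FDs. Everything else is bookkeeping: the cardinality bounds $|E_B| \leq 2 \cdot ||D||$ and $|\mathcal{B}| \leq m \leq ||Q||$, and the observation that $h(Q) \subseteq D'$ entails $\bar c \in Q(D')$.
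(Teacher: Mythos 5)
Your proof is correct, and it reaches the bound by a more direct route than the paper's. Both arguments hinge on the block decomposition induced by the primary keys, but you establish (as the paper itself does separately, in the proof of its sampling result, Lemma~\ref{lem:ur-sampler}) that $\copr{D}{\dep}$ is in bijection with the product of the per-block outcome sets $E_B$ (of size $|B|+1$ when $|B|\geq 2$, and $1$ otherwise), and then read off the fraction of candidate repairs containing $h(Q)$ exactly as $\prod_{B\in\mathcal{B}}1/|E_B|$. The paper's own proof of Lemma~\ref{lem:ur-lower-bound} deliberately avoids the full characterization of the repair space: it splits $\copr{D}{\dep}$ into the repairs that intersect every block touched by $h(Q)$ and those that miss at least one such block, maps each repair of the second kind to repairs of the first kind by re-inserting one fact per missed block, bounds the multiplicity of this map by $2^m-1$, and then invokes the symmetry of the facts within a block to conclude that a $1/(|B_1|\cdots|B_m|)$ fraction of the former repairs contain $h(Q)$, yielding $\orfreq{\dep,Q}{D,\bar c}\geq 1/(2^m\cdot|B_1|\cdots|B_m|)$ --- essentially your bound. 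Your version is shorter and gives the exact ratio $\prod_{B\in\mathcal{B}}1/(|B|+1)$, which is marginally sharper, at the price of having to verify the product structure rigorously; that is precisely the point you flag, and it does rest on blocks being cliques of the conflict graph, i.e., on $\dep$ being primary keys (including the fact that a singleton block is untouchable and that $\emptyset$ is reachable only via a final pair removal). The paper's mapping-and-multiplicity argument is a bit more robust: it is the template the authors reuse for the analogous lower bound on the sequence relative frequency (Lemma~\ref{lem:us-lower-bound}), where the space of complete repairing sequences no longer factors as a plain product over blocks.
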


Given a set $\dep$ of primary keys and a CQ $Q$, by exploiting Lemmas~\ref{lem:ur-sampler} and~\ref{lem:ur-lower-bound}, we can easily devise an FPRAS for $\rrelfreq{\dep,Q}$.

\OMIT{
In particular, given a database $D$, we first show that we can sample elements of the set $\{s(D) \mid s \in \crs{D}{\dep}\}$ uniformly at random in polynomial time in the size of $D$. Notice, however, that the existence of such an efficient sampler does not immediately imply the existence of an FPRAS for our problem since the number of samples should be proportional to $\frac{1}{\orfreq{\dep,Q}{D,\bar c}}$~\cite{KarpLuby2000}. Hence, to obtain an FPRAS using Monte Carlo sampling, we need to show that the ratio $\orfreq{\dep,Q}{D,\bar c}$ is not ``too small''. Formally, we show that for every database $D$, there exists a polynomial $\mathsf{pol}(\cdot)$ such that $\orfreq{\dep,Q}{D,\bar c} \geq \frac{1}{\mathsf{pol}(||D||)}$ whenever $\orfreq{\dep,Q}{D,\bar c} > 0$.
}

\medskip
\noindent \paragraph{Item (3).} For showing that there exist a set $\dep$ of FDs and a CQ $Q$ such that, unless ${\rm RP} = {\rm NP}$, there is no FPRAS for $\rrelfreq{\dep,Q}$, we provide a rather involved proof that proceeds in two main steps. We first give an auxiliary lemma that is needed by both steps.

An undirected graph $G$ is called {\em non-trivially connected} if it contains at least two nodes, and is connected. We write $\IS(G)$ for the set that collects all the independent sets of $G$.
Recall that the \emph{conflict graph} of a database $D$ w.r.t.~a set $\dep$ of FDs, denoted $\cg{D}{\dep}$, is an undirected graph whose node set is $D$, and it has an edge between $f$ and $g$ if $\{f,g\} \not\models \dep$.
A database $D$ is {\em non-trivially $\dep$-connected} if $\cg{D}{\dep}$ is non-trivially connected. We then show the following:


\begin{lemma}\label{lem:corepairs-independent-sets}
	Consider a non-trivially $\dep$-connected database $D$, where $\dep$ is a set of FDs. It holds that $|\copr{D}{\dep}| = |\IS(\cg{D}{\dep})|$.
\end{lemma}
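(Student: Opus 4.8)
The claim is that for a non-trivially $\dep$-connected database $D$, the number of candidate operational repairs equals the number of independent sets of the conflict graph $\cg{D}{\dep}$. The plan is to exhibit an explicit bijection between $\copr{D}{\dep}$ and $\IS(\cg{D}{\dep})$. The natural candidate map is the identity-on-sets: a candidate repair $D'$ is a subset of $D$, and I claim $D'$ is exactly an independent set of $\cg{D}{\dep}$. Recall that $D' \in \copr{D}{\dep}$ iff $D' = s(D)$ for some complete $(D,\dep)$-repairing sequence $s$. So I need to show: (a) if $s \in \crs{D}{\dep}$ then $s(D)$ is an independent set of $\cg{D}{\dep}$, and (b) every independent set $I$ of $\cg{D}{\dep}$ equals $s(D)$ for some $s \in \crs{D}{\dep}$.

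**Direction (a): every candidate repair is an independent set.** This is the easy direction and does not even need connectivity. Since $s$ is complete, $s(D) \models \dep$, which means no two facts of $s(D)$ form a $\dep$-violation; equivalently there is no edge of $\cg{D}{\dep}$ inside $s(D)$, i.e.\ $s(D) \in \IS(\cg{D}{\dep})$. (Here I use that $s(D) \subseteq D$ since operations only delete facts, and that an edge in $\cg{D}{\dep}$ is precisely a violating pair.) The subtlety worth a sentence: being an independent set is equivalent to $\models \dep$ only because the conflict graph records pairwise violations and FD-violations are always pairwise (Definition of $D$-violation involves exactly two facts) — so consistency is a purely binary condition, unlike general constraints.

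**Direction (b): every independent set is reachable — this is the crux.** Given an independent set $I \subseteq D$ of $\cg{D}{\dep}$, I must build a complete repairing sequence $s$ with $s(D) = I$. The idea is to delete the facts of $D \setminus I$ one at a time, each deletion being a justified operation. Enumerate $D \setminus I = \{g_1,\dots,g_m\}$ in an order to be chosen, and try $s = (-g_1, \dots, -g_m)$. I need each $-g_i$ to be $(D_{i-1}^s,\dep)$-justified, i.e.\ there must exist a violation $(\phi,\{f,g\}) \in \viol{D_{i-1}^s}{\dep}$ with $\{g_i\} \subseteq \{f,g\}$ — so $g_i$ must still participate in some violation in the current database $D_{i-1}^s = D \setminus \{g_1,\dots,g_{i-1}\}$. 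This is where non-trivial connectedness is essential: since $I$ is independent, every $g \in D\setminus I$ has all of its conflict-graph neighbours... no — rather, since $\cg{D}{\dep}$ is connected and has $\geq 2$ nodes, every node has degree $\geq 1$, so every $g_i$ initially participates in a violation. But after deleting some facts, $g_i$'s only neighbours might all have been deleted. The fix: choose the deletion order so that when we delete $g_i$, at least one of its conflict-graph neighbours is still present. A clean way: process $D\setminus I$ so that we never delete a fact all of whose neighbours have already been removed; since the graph is connected, one can always order $D \setminus I$ as $g_1, \ldots, g_m$ such that each $g_i$ has a neighbour in $D \setminus \{g_1,\ldots,g_{i-1}\}$ — e.g.\ reverse a BFS/DFS discovery order of the induced-and-extended structure, or more simply: repeatedly pick a $g_i \in (D\setminus I)\setminus\{g_1,\dots,g_{i-1}\}$ that has a neighbour (in $\cg{D}{\dep}$) lying in $I \cup ((D\setminus I)\setminus\{g_1,\dots,g_{i-1}\})$; such a $g_i$ exists because otherwise the remaining $D\setminus I$ facts form a union of connected components disjoint from everything else, contradicting connectedness of $\cg{D}{\dep}$ (as long as the remaining set is nonempty, connectedness forces an edge from it to the rest). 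After exhausting $D\setminus I$ we reach $I$, which is consistent, so the sequence is complete. The main obstacle is exactly this ordering argument — making precise why connectedness guarantees at each step a deletable (still-violating) fact, and handling the edge case where $D\setminus I$ could be empty (then $D = I$ is already consistent and $s = \varepsilon$ works; note $\varepsilon$ is repairing and, since $D\models\dep$ here, complete).

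**Wrapping up.** Once both directions are established, the map $D' \mapsto D'$ restricts to a bijection $\copr{D}{\dep} \to \IS(\cg{D}{\dep})$: it is well-defined by (a), surjective by (b), and injective trivially. Hence $|\copr{D}{\dep}| = |\IS(\cg{D}{\dep})|$. I would also double-check that distinct complete repairing sequences that yield the same set $D'$ are fine — the statement counts repairs (databases), not sequences, so multiplicity of sequences is irrelevant. The one place I'd be careful in the write-up is confirming that the ordering step only ever invokes connectedness of the whole graph $\cg{D}{\dep}$ (which is the hypothesis) and not connectedness of any induced subgraph, which need not hold.
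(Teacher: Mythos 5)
There is a genuine gap in your surjectivity direction: your construction deletes the facts of $D\setminus I$ one at a time using only singleton removals, but the empty set is an element of $\IS(\cg{D}{\dep})$ (the paper's $\IS$ collects \emph{all} independent sets, including $\emptyset$), and it cannot be realized this way. If $s$ consists solely of singleton removals and $s(D)=\emptyset$, then just before the last operation the current database contains exactly one fact, which satisfies every FD (an FD violation needs two distinct facts), so no operation is justified at that point. Hence no sequence of justified singleton deletions ever reaches $\emptyset$; reaching it requires ending with a pair removal $-\{f,g\}$ applied when exactly two conflicting facts remain. This is precisely how the paper handles it: its proof splits into $D'\neq\emptyset$ (strata $L_0=D'$, $L_1,L_2,\dots$ by distance, delete outermost strata first — essentially your ordering) and $D'=\emptyset$ (stratify around an arbitrary fact $f^*$ and finish with the pair removal $-\{f^1_{|L_1|},f^*\}$, which is justified because, by non-trivial connectedness, the last two facts conflict). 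The appendix's companion result for singleton-only operations, which counts $|\coprone{D}{\dep}|=|\ISZ(\cg{D}{\dep})|$ with the \emph{non-empty} independent sets, confirms that your all-singleton construction can only ever prove the weaker statement.

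A secondary issue is the justification of your greedy ordering: you argue that if no deletable fact remained, "the remaining $D\setminus I$ facts form a union of connected components disjoint from everything else, contradicting connectedness of $\cg{D}{\dep}$." This is not sound as stated, because connectedness of the whole graph can be witnessed through vertices you have already deleted, so no contradiction arises (indeed this is exactly what happens at the last step when $I=\emptyset$). For $I\neq\emptyset$ the ordering does exist, but the clean argument is the distance-layering you allude to (delete facts in order of decreasing distance to $I$; each deleted fact has a not-yet-deleted neighbour one layer closer), which is the paper's stratification. So: direction (a) is fine, your $I\neq\emptyset$ case matches the paper's Case 1 once the ordering is justified via layers rather than the faulty connectivity claim, but the $I=\emptyset$ case is missing and genuinely needs pair removals.
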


Having the above auxiliary lemma in place, we can now describe the two steps of the proof underlying Theorem~\ref{the:uniform-repairs}(3). The first step establishes the following inapproximability result about keys.


\begin{proposition}\label{pro:ur-keys-no-fpras}
	Unless ${\rm RP} = {\rm NP}$, there exists a set $\dep$ of keys over $\{R\}$ such that, given a non-trivially $\dep$-connected database $D$, the problem of computing $|\copr{D}{\dep}|$ does not admit an FPRAS.
\end{proposition}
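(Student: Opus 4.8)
The plan is to combine Lemma~\ref{lem:corepairs-independent-sets} with the known inapproximability of counting independent sets in bounded-degree graphs. By that lemma, for a non-trivially $\dep$-connected database $D$ we have $|\copr{D}{\dep}| = |\IS(\cg{D}{\dep})|$, so it suffices to exhibit a \emph{fixed} set $\dep$ of keys over a single relation name $R$ such that every connected graph with at least two vertices and maximum degree at most some constant $\Delta$ arises, up to isomorphism, as $\cg{D}{\dep}$ for a polynomial-time computable database $D$. The external fact I would invoke is: there is a constant $\Delta$ (e.g.\ $\Delta = 6$ by the result of Sly and Sun on the hard-core model at unit fugacity, or $\Delta = 25$ by the earlier argument of Dyer, Frieze and Jerrum) such that, unless ${\rm NP} = {\rm RP}$, counting the independent sets of a graph of maximum degree at most $\Delta$ admits no FPRAS. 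A routine argument shows this stays true when the input graph is additionally required to be connected with at least two vertices: for a disconnected instance, $|\IS(G)|$ is the product of $|\IS(G_j)|$ over the connected components $G_j$, so an FPRAS for connected instances, invoked on each component with a suitably rescaled error parameter (e.g.\ $\epsilon/(2n)$, where $n = |V(G)|$ bounds the number of components), yields an FPRAS for the general case.

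The core of the argument is the \emph{realization step}. Fix $k = \Delta + 1$, let $R$ have arity $k$ with attributes $(A_1,\dots,A_k)$, and set $\dep = \{R : A_i \ra \att{R} \mid i \in [k]\}$; each $R : A_i \ra \att{R}$ is a key. Given a connected graph $G=(V,E)$ with $|V| \geq 2$ and maximum degree at most $\Delta$, use the constructive form of Vizing's theorem to compute in polynomial time a partition of $E$ into matchings $M_1,\dots,M_k$. Build $D_G$ with one fact $f_v$ per vertex $v \in V$, letting $f_v[A_i]$ be a constant naming the (necessarily unique) edge of $M_i$ incident to $v$ when such an edge exists, and a fresh constant occurring nowhere else otherwise. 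Then for distinct $u,v \in V$ and each $i$, $\{f_u,f_v\} \not\models (R : A_i \ra \att{R})$ iff $f_u[A_i] = f_v[A_i]$ iff $\{u,v\}\in M_i$; hence $\{f_u,f_v\}$ is an edge of $\cg{D_G}{\dep}$ iff $\{u,v\}\in E$, so $\cg{D_G}{\dep} \cong G$. In particular the facts are pairwise distinct, $D_G$ is non-trivially $\dep$-connected, and by Lemma~\ref{lem:corepairs-independent-sets} we get $|\copr{D_G}{\dep}| = |\IS(G)|$. Since $D \mapsto \cg{D}{\dep}$ and $G \mapsto D_G$ are both polynomial-time computable, an FPRAS for $|\copr{D}{\dep}|$ on non-trivially $\dep$-connected databases would yield an FPRAS for counting independent sets on connected graphs of maximum degree at most $\Delta$, forcing ${\rm NP} = {\rm RP}$.

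The main obstacle is that $\dep$ must be \emph{fixed}, i.e.\ independent of the input: this is exactly what rules out the easier route via the (unbounded-degree) inapproximability of $\sharp\mathsf{IS}$, since a direct encoding of an arbitrary graph as a conflict graph of keys seems to require a number of keys (equivalently, an arity of $R$) growing with the maximum degree. The Vizing decomposition is what circumvents this: a graph of maximum degree $\Delta$ is an edge-disjoint union of $\Delta+1$ matchings, and a single key $R : A_i \ra \att{R}$ contributes exactly one such matching to the conflict graph. The remaining points are routine bookkeeping — verifying that the fresh constants create no spurious conflicts (so that $\cg{D_G}{\dep}$ is $G$ and not a proper supergraph), that $D_G$ is genuinely non-trivially $\dep$-connected so Lemma~\ref{lem:corepairs-independent-sets} applies, and that the component-wise error rescaling keeps the reduction polynomial-time.
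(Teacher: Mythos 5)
Your proposal is correct and follows essentially the same route as the paper: strengthen the bounded-degree $\sharp\mathsf{IS}$ inapproximability to (non-trivially) connected graphs via the component-wise product argument, then fix $\Delta+1$ keys $R:A_i\ra\att{R}$ over a single relation of arity $\Delta+1$ and use the constructive Vizing/Misra--Gries edge-coloring to realize any such graph as the conflict graph of a polynomial-time computable database, concluding via Lemma~\ref{lem:corepairs-independent-sets}. This matches the paper's construction of $\dep_K$ and $D_G$ in every essential detail.
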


The above result exploits the fact that, unless ${\rm RP} = {\rm NP}$, the problem of counting the number of independent sets of a non-trivially connected undirected graph of bounded degree does not admit an FPRAS.\footnote{This result is known for arbitrary, not necessarily non-trivially connected graphs~\cite{Sly2010}. Thus, for our purposes, we had to strengthen it to non-trivially connected graphs.} In particular, we show that there exists a set $\dep_K$ of keys over the schema $\ins{S} = \{R/ \Delta+1\}$ such that the following holds: given a non-trivially connected undirected graph $G$ of bounded degree $\Delta$, we can construct in polynomial time in $||G||$ a database $D_G$ over $\ins{S}$ such that $\cg{D_G}{\dep_K}$ is isomorphic to $G$. Thus, by Lemma~\ref{lem:corepairs-independent-sets}, $|\copr{D_G}{\dep_K}| = |\IS(G)|$.
The construction of $D_G$ exploits Vizing's Theorem, which states that a graph of degree $\Delta$ always has a $(\Delta+1)$-edge-coloring, as well as the fact that such an edge-coloring can be constructed in polynomial time as long as $\Delta$ is bounded~\cite{Misra1992}.
Hence, given a database $D$, assuming that the problem of computing the number $|\copr{D}{\dep_K}|$ admits an FPRAS, we can conclude that the problem of counting the number of independent sets of a non-trivially connected undirected graph of bounded degree admits an FPRAS, which, unless ${\rm RP} = {\rm NP}$, leads to a contradiction.
Therefore, Proposition~\ref{pro:ur-keys-no-fpras} follows with $\dep = \dep_K$.
Notice that Proposition~\ref{pro:ur-keys-no-fpras} tells us that for keys, unless ${\rm RP} = {\rm NP}$, the problem of counting the number of operational repairs does not admit an FPRAS. As said above, we see this as an indication that item (3) of Theorem~\ref{the:uniform-repairs} holds even for keys.

We then proceed to show that, unless ${\rm RP} = {\rm NP}$, the existence of an FPRAS for $\rrelfreq{\dep,Q}$, where $\dep$ is a set of FDs and $Q$ a CQ, would contradict Proposition~\ref{pro:ur-keys-no-fpras} . Let $\dep_K$ be the set of keys provided by Proposition~\ref{pro:ur-keys-no-fpras}. We show the following auxiliary result:

\begin{lemma}\label{lem:from-fds-to-keys}
	Assume that $\rrelfreq{\dep,Q}$ admits an FPRAS, for every set $\dep$ of FDs and CQ $Q$. Given a non-trivially $\dep_K$-connected database $D$, the problem of computing $|\copr{D}{\dep_K}|$ admits an FPRAS.
\end{lemma}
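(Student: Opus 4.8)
The plan is to reduce the problem of computing $|\copr{D}{\dep_K}|$, for a non-trivially $\dep_K$-connected database $D$, to the problem $\rrelfreq{\dep,Q}$ for a suitably chosen set $\dep$ of FDs and a CQ $Q$, in a way that preserves approximability. Recall that $\dep_K$ is a set of keys over a schema $\ins{S}=\{R/(\Delta+1)\}$. The key obstacle is that $\rrelfreq{\dep,Q}$ computes a \emph{ratio} $\orfreq{\dep,Q}{D',\bar c} = |\{D'' \in \copr{D'}{\dep} \mid \bar c \in Q(D'')\}| / |\copr{D'}{\dep}|$, whereas we want the raw count $|\copr{D}{\dep_K}|$; so we must engineer the reduction so that the denominator and numerator are separately controllable (or computable) and their ratio reveals $|\copr{D}{\dep_K}|$.

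First I would build, from the given database $D$ over $\ins{S}$, an augmented schema $\ins{S}'$ (adding, say, a fresh unary relation name or one extra attribute) together with a fresh set $\dep \supseteq \dep_K$ of FDs and a database $D' \supseteq D$, chosen so that the extra facts in $D' \setminus D$ form a ``gadget'' that is disjoint, in the conflict graph, from $D$ (no edges between the gadget and $D$), and whose own repairing behaviour is completely understood --- ideally the gadget has exactly one candidate repair, or a number of candidate repairs that is a fixed, easily computed constant. Because the conflict graph of $D'$ w.r.t.~$\dep$ then decomposes (up to the gadget) into the connected component $\cg{D}{\dep_K}$ plus the gadget's component, and because candidate operational repairs decompose along connected components of the conflict graph (each component is repaired independently), we get $|\copr{D'}{\dep}| = |\copr{D}{\dep_K}| \cdot c$ for the known constant $c$, where we may also invoke Lemma~\ref{lem:corepairs-independent-sets} on the non-trivially connected component. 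Next I would design the CQ $Q$ (Boolean, evaluated with $\bar c = ()$) so that it is satisfied by a candidate repair of $D'$ exactly when the gadget part of that repair is in some distinguished sub-collection --- e.g. $Q$ probes only the gadget facts --- so that the numerator $|\{D'' \in \copr{D'}{\dep} \mid D'' \models Q\}|$ equals $|\copr{D}{\dep_K}| \cdot c'$ for another known constant $c'$. Then $\orfreq{\dep,Q}{D',()} = c'/c$ would be a constant independent of $D$, which is useless; so instead I would arrange that $Q$ detects the event ``$D$'s component is repaired to a particular target'', or better, make the \emph{gadget} the only thing that can satisfy $Q$ while its size depends on $|\copr{D}{\dep_K}|$, turning the ratio into $1/(|\copr{D}{\dep_K}|+1)$ or similar. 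The cleanest version: attach a gadget contributing exactly one ``special'' candidate repair that makes $Q$ true, arranged so that $\orfreq{\dep,Q}{D',()} = 1/(|\copr{D}{\dep_K}| + 1)$; an FPRAS for $\rrelfreq{\dep,Q}$ then yields a multiplicative approximation of $|\copr{D}{\dep_K}|+1$, hence of $|\copr{D}{\dep_K}|$ (which is at least $1$), contradicting Proposition~\ref{pro:ur-keys-no-fpras} unless ${\rm RP}={\rm NP}$.

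The main technical work, and the step I expect to be the genuine obstacle, is constructing the gadget: it must be expressible with a \emph{fixed} set of FDs (the reduction is in data complexity, so $\dep$ and $Q$ cannot depend on $D$), it must not interfere with the conflict structure of $D$, its contribution to $\copr{\cdot}{\dep}$ must be exactly computable, and the query $Q$ must pin down precisely the intended candidate repairs. Controlling the denominator so the ratio is of the form $1/(N+\text{const})$ with $N = |\copr{D}{\dep_K}|$ --- rather than something like $N/(N+M)$ with $M$ hard to compute --- requires the gadget to be ``inert'' enough. I would also need to verify the approximation-preservation bookkeeping: an $(\epsilon,\delta)$-FPRAS for the ratio translates into an $(\epsilon',\delta)$-FPRAS for $|\copr{D}{\dep_K}|$ with $\epsilon'$ polynomially related to $\epsilon$, using that $|\copr{D}{\dep_K}| \geq 1$ (it is non-trivially connected, hence has at least one candidate repair) so no division-by-small-quantity blow-up occurs. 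Finally, chaining this with Proposition~\ref{pro:ur-keys-no-fpras} gives the contradiction, establishing the lemma.
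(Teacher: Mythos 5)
Your overall skeleton matches the paper's: build from $D$ a database whose candidate-repair count is $|\copr{D}{\dep_K}|+1$, pick a Boolean CQ satisfied by exactly one ``special'' repair so that the relative frequency equals $\frac{1}{|\copr{D}{\dep_K}|+1}$, then invert, subtract $1$, and do the $(\epsilon,\delta)$-bookkeeping using $|\copr{D}{\dep_K}|\geq 1$. However, you stop exactly at the step you yourself flag as ``the genuine obstacle'': you never construct the gadget, and the routes you sketch do not lead to it. A fresh unary relation name, or any gadget that is disjoint from $D$ in the conflict graph, cannot work: FD violations only arise between two facts over the \emph{same} relation name, so an inert gadget contributes \emph{multiplicatively} to $\copr{\cdot}{\cdot}$ (your $|\copr{D}{\dep_K}|\cdot c$ computation), and no choice of query over such a gadget can turn $N\cdot c$ into the additive form $N+1$ you need. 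The missing idea is that the special fact must \emph{conflict with every fact of $D$}, and arranging this is precisely where FDs (as opposed to keys) are required --- a point the lemma silently relies on and your proposal does not address.

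Concretely, the paper replaces $R/n$ by $R'/(n{+}2)$ with attributes $(A,B,A_1,\ldots,A_n)$, turns each key $R:X\ra Y$ of $\dep_K$ into the FD $R':X\ra Y$ (no longer a key because of the two extra attributes), and adds the FD $R':A\ra B$. Each fact $R(a_1,\ldots,a_n)\in D$ becomes $R'(a,b,a_1,\ldots,a_n)$ for fixed fresh constants $a,b$, and one adds $f^*=R'(a,a,\ldots,a)$; by $R':A\ra B$, $f^*$ conflicts with every other fact, so $\{f^*\}$ is the unique candidate repair containing $f^*$, while the original conflict structure on $D$ is preserved on $D_F\setminus\{f^*\}$. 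Both $D_F$ and $D_F\setminus\{f^*\}$ are non-trivially connected, so Lemma~\ref{lem:corepairs-independent-sets} yields $|\copr{D_F}{\dep_F}|=|\copr{D}{\dep_K}|+1$, and the atomic query $\textrm{Ans}()\ \text{:-}\ R'(x,x,\ldots,x)$ is satisfied only by $\{f^*\}$, giving the ratio $\frac{1}{|\copr{D}{\dep_K}|+1}$. Your final approximation-transfer argument is essentially right in spirit, though note the paper also clips the FPRAS output away from $0$ (taking a max with a threshold below $\frac{1-\epsilon'}{1+2^{|D|}}$) before inverting, and uses $\epsilon'=\frac{\epsilon}{2+\epsilon}$; without some such guard the reciprocal of an unluckily tiny (or zero) estimate is not controlled, so this detail is worth making explicit rather than waving at.
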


To establish the above result, we show that there exists a set $\dep_F$ of FDs such that, for every non-trivially $\dep_K$-connected database $D$, we can construct in polynomial time in $||D||$ a database $D_F$ such that  
$\cg{D_F}{\dep_F}$ consists of a graph $G$ that is isomorphic to $\cg{D}{\dep_K}$, and an additional node that is connected via an edge with every node of $G$. Therefore, by Lemma~\ref{lem:corepairs-independent-sets}, we get that 
\[
|\copr{D_F}{\dep_F}|\ =\ |\copr{D}{\dep_K}| + 1. 
\]
Let us clarify that this is the place where we need the power of FDs; it is unclear how we can devise a set of keys that has the same properties as $\dep_F$.
We then construct an atomic Boolean CQ $Q_F$ with
\[
\orfreq{\dep_F,Q_F}{D_F,()}\ =\ \frac{1}{|\copr{D_F}{\dep_F}|}\ =\ \frac{1}{|\copr{D}{\dep_K}| + 1}; 
\]
we use $()$ to denote the empty tuple. Now, by exploiting the above equality, the fact that $D_F$ can be constructed in polynomial time, and the FPRAS for $\rrelfreq{\dep_F,Q_F}$ (which exists by hypothesis), we can devise an FPRAS for the problem of computing $|\copr{D}{\dep_K}|$ given a non-trivially $\dep_K$-connected database $D$, as claimed.

It is now straightforward to see that from Proposition~\ref{pro:ur-keys-no-fpras} and Lemma~\ref{lem:from-fds-to-keys}, we get that, unless ${\rm RP} = {\rm NP}$, there exist a set $\dep$ of FDs and a CQ $Q$ such that there is no FPRAS for $\rrelfreq{\dep,Q}$.


\OMIT{
For showing that there exist a set $\dep$ of FDs and a CQ $Q$ such that there is no FPRAS for $\rrelfreq{\dep,Q}$ (unless ${\rm RP} = {\rm NP}$), we give a proof by contradiction, which proceeds in two main steps:

We first show that there exists a set $\dep^\star$ of keys such that, given a database $D$, the problem of computing the number $|\{s(D) \mid s \in \crs{D}{\dep^\star}\}|$ does not admit an FPRAS. This inapproximability result exploits the fact that the problem of counting the number of independent sets of a connected graph of degree six does not admit an FPRAS (unless ${\rm RP} = {\rm NP}$).\footnote{This result is known for arbitrary, not necessarily connected graphs~\cite{Sly2010}. Therefore, for the purposes of our proof, we had to strengthen it to connected graphs.} In particular, we show that that there exists a set $\dep_K$ of keys such that the following holds: given a connected graph $G$ of degree six, we can construct in polynomial time in $||G||$ a database $D_G$ with $|\{s(D_G) \mid s \in \crs{D_G}{\dep_K}\}|$ being the number of independent sets of $G$. Hence, assuming that for a database $D$, computing $|\{s(D) \mid s \in \crs{D}{\dep_K}\}|$ admits an FPRAS, we can conclude that the problem of counting the number of independent sets of a connected graph of degree six admits an FPRAS, which leads to a contradiction (unless ${\rm RP} = {\rm NP}$). Thus, the main inapproximability result of this step follows with $\dep^\star = \dep_K$.

The second step shows that, for a set $\dep$ of FDs and a CQ $Q$, the existence of an FPRAS for $\rrelfreq{\dep,Q}$ implies the existence of an FPRAS for the problem of computing $|\{s(D) \mid s \in \crs{D}{\dep^\star}\}|$ given a database $D$, which contradicts the inapproximability result established in the first step.
In particular, we show that there exists a set $\dep_F$ of FDs such that, for every database $D$, we can construct in polynomial time in $||D||$ a database $D_F$ with 
$|\{s(D_F) \mid s \in \crs{D_F}{\dep_F}\}| = |\{s(D) \mid s \in \crs{D}{\dep^\star}\}| + 1$. Note that this is the place in the proof where we need the power of FDs; it is unclear how this can be achieved using only keys.
We then construct a simple Boolean CQ $Q_F$ such that
\[
\orfreq{\dep_F,Q_F}{D_F,()}\ =\ \frac{1}{|\{s(D) \mid s \in \crs{D}{\dep^\star}\}| + 1}; 
\]
we use $()$ to denote the empty tuple. Now, by exploiting the above equality, the fact that $D_F$ can be constructed in polynomial time, and the FPRAS for $\rrelfreq{\dep,Q}$ for any set $\dep$ of FDs and CQ $Q$ (which exists by hypothesis), we can devise an FPRAS for the problem of computing $|\{s(D) \mid s \in \crs{D}{\dep^\star}\}|$ for a database $D$, which leads to a contradiction since such an FPRAS does not exist (by step 1).
}

\OMIT{
\begin{itemize}
\item We note that Theorem~\ref{the:uniform-repairs} remains exactly the same even if we use the Markov chain generator $M_{\dep}^{\ur,1}$, which considers only singe fact deletions. Items (1) and (2) are easily obtainable, but we need to argue a bit further for item (3).

\item We conclude the section by clarifying that whether there exists an FPRAS for $\mathsf{OCQA}$ in the case of keys remains open. We have, nevertheless, an inapproximability result for the problem of counting the number of operational repairs, which can be seen as an indication that the answer to the open question is rather negative. The proof of this results can be extracted from the proof of item (3) of Theorem~\ref{the:uniform-repairs}.
\end{itemize}
}
\section{Uniform Sequences}\label{sec:uniform-sequences}

We now concentrate on the Markov chain generator based on uniform sequences, and establish the following complexity result.

\begin{theorem}\label{the:uniform-sequences}
	\begin{enumerate}
		\item There exist a set $\dep$ of primary keys, and a CQ $Q$ such that $\ocqa{\dep,M_{\dep}^{\us},Q}$ is $\sharp ${\rm P}-hard.
		
		\item For a set $\dep$ of primary keys, and a CQ $Q$, $\ocqa{\dep,M_{\dep}^{\us},Q}$ admits an FPRAS.
	\end{enumerate}
\end{theorem}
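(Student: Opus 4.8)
The plan is to mirror, as closely as possible, the strategy already used for the uniform-repairs generator (Theorem~\ref{the:uniform-repairs}, items (1) and (2)), since the two settings share the same high-level obstacles, and then point to the one place where the combinatorics genuinely differ. For item~(1), I would first observe that $\ocqa{\dep,M_{\dep}^{\us},Q}$ is, by the very definition of the uniform-sequences leaf distribution established in Section~\ref{sec:uniform}, nothing but the problem of computing the {\em sequence} relative frequency
\[
\orfreqone{\dep,Q}{D,\bar c}\ =\ \frac{|\{s \in \crs{D}{\dep} \mid \bar c \in Q(s(D))\}|}{|\crs{D}{\dep}|},
\]
i.e.\ the fraction of complete repairing sequences whose result entails $Q(\bar c)$; this reformulation removes the Markov chain generator from the picture entirely, exactly as in the $\rrelfreq{\dep,Q}$ reformulation for uniform repairs. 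Hardness then follows from a polynomial-time Turing reduction from a $\sharp\mathsf{P}$-hard counting problem, again choosing a fixed single primary key $R : A \ra B$ over a binary relation and a Boolean CQ $Q$; one can either adapt the $\sharp H\text{-}\mathsf{Coloring}$ reduction used for item~(1) of Theorem~\ref{the:uniform-repairs}, or reduce directly from counting repairs entailing a query (which is $\sharp\mathsf{P}$-hard already for primary keys and CQs by~\cite{MaWi13}), taking care that under primary keys the number of complete repairing sequences of a given "block'' (a maximal set of mutually conflicting facts) is a fixed function of the block size, so that the denominator $|\crs{D}{\dep}|$ is computable in polynomial time and the oracle value can be post-processed to recover the desired count.

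For item~(2), the approach is Monte Carlo sampling, following the two-ingredient recipe of item~(2) of Theorem~\ref{the:uniform-repairs}: an efficient {\em uniform sampler} for $\crs{D}{\dep}$, plus a {\em polynomial lower bound} on $\orfreqone{\dep,Q}{D,\bar c}$ whenever it is nonzero. The sampler is the easy half: under primary keys, the conflict graph $\cg{D}{\dep}$ is a disjoint union of cliques (the key-equivalence blocks), a complete repairing sequence is obtained by interleaving, over all blocks independently, an ordered sequence of single-fact deletions that leaves exactly one survivor per block, so $|\crs{D}{\dep}|$ has a clean closed form (a multinomial coefficient times, per block of size $k$, the number $k!/1 = k \cdot (k-1)\cdots 2$ of deletion orders — more precisely $k!/1!$, i.e.\ the number of orderings of the $k-1$ deleted facts times the choice of survivor), and one can sample a survivor per block and then a uniformly random interleaving of the deletion steps in polynomial time. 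For the lower bound, fix a homomorphism $h$ from $Q$ into some candidate repair $D'$ witnessing $\bar c \in Q(D')$; the image $h(Q)$ touches at most $||Q||$ blocks, and picking, in each of those blocks, the particular survivor used by $h$ and, in every other block, an arbitrary survivor, yields a positive-probability family of complete sequences all of whose results entail $Q(\bar c)$. Counting these against the total gives $\orfreqone{\dep,Q}{D,\bar c} \ge 1/(c\cdot||D||)^{||Q||}$ for a suitable constant, an analogue of Lemma~\ref{lem:ur-lower-bound}. With the sampler and the lower bound, the standard Karp--Luby estimator~\cite{KarpLuby2000,KarpLuby2000} gives the FPRAS: draw $O(\epsilon^{-2}\,(c\cdot||D||)^{||Q||}\,\log(1/\delta))$ samples $s \in \crs{D}{\dep}$, and return the fraction for which $\bar c \in Q(s(D))$.

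The step I expect to be the main obstacle is the lower bound in item~(2), and specifically the fact that the relevant ratio is now over {\em sequences} rather than over repairs: two complete sequences may yield the same consistent database, and sequences that produce a query-satisfying result are "diluted'' among many orderings, so a naive argument that just exhibits one good repair is not enough — one must argue that the {\em number of orderings} realizing a good result is not dwarfed by the total number of orderings. The saving observation is that the dilution factor is the same (up to the contribution of the $\le ||Q||$ blocks pinned by $h$) in the numerator and the denominator: once the survivors are fixed in every block, the set of interleavings of the deletion steps is identical, so the ratio collapses to a product, over the $\le ||Q||$ pinned blocks, of $1/k$-type factors, each at least $1/||D||$. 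Making this cancellation precise — in particular handling blocks where $h$ forces a survivor versus blocks that are free, and the interaction with the multinomial interleaving count — is the technical heart of the argument; everything else is routine adaptation of the uniform-repairs proof, and the full details are deferred to the appendix.
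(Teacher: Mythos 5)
Your item~(1) is essentially the paper's route: restate the problem as computing the sequence relative frequency $\srfreq{\dep,Q}{D,\bar c}$ and reuse the $\sharp H$-$\mathsf{Coloring}$ construction; in the database $D_G$ of that reduction every candidate repair is produced by exactly $|V_G|!$ complete sequences, so the sequence frequency coincides with the repair frequency and the same Turing reduction goes through. (Minor slip: the quantity you display is $\srfreq{\dep,Q}{D,\bar c}$, not the singleton-repair frequency your notation suggests.)

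Item~(2), however, has a genuine gap: you are implicitly working with singleton deletions only, whereas the operational framework of this paper also admits justified \emph{pair} removals $-\{f,g\}$ for a violating pair. This changes the combinatorics in both of your ingredients. First, a complete repairing sequence need not leave a survivor in every block: a block of size $m\geq 2$ can also be emptied (delete down to two facts and remove the last pair), and the number of complete sequences over a block depends on how many pair removals it uses; consequently $|\crs{D}{\dep}|$ does \emph{not} have the clean ``one survivor per block times $k!$ orderings times a multinomial'' form you propose. The paper has to compute it by a dynamic program that tracks, per prefix of blocks, the number of pair removals and the number of blocks left non-empty, precisely because sequences for different blocks have different lengths and the interleaving counts depend on those lengths. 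Second, and for the same reason, your cancellation argument for the lower bound fails as stated: sequences that empty a block touched by $h(Q)$ sit in the denominator but never in the numerator, and sequences with different numbers of pair removals in the pinned blocks have different lengths, so ``the set of interleavings is identical'' is false and the ratio does not collapse to a product of $1/|B_i|$ factors. The paper repairs this by an explicit mapping: every complete sequence that empties some touched block is mapped to one that keeps a fact in each touched block (replace the last pair removal over such a block by the corresponding singleton removal), each bad sequence maps to $2^{\ell}$ good ones and at most $2^{m}-1$ bad sequences map to the same good one; combined with within-block symmetry this yields $\srfreq{\dep,Q}{D,\bar c}\geq 1/(2\cdot\|D\|)^{\|Q\|}$. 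Your plan as written proves the statement only for the singleton-operation generator $M_{\dep}^{\us,1}$ (handled separately in the appendix), not for $M_{\dep}^{\us}$; to close the gap you need the pair-removal-aware counting/sampling and a bad-to-good sequence mapping with bounded multiplicity along the above lines.
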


Notice that the above result does not cover the cases of arbitrary keys and FDs. Unfortunately, despite our efforts, we have not managed to prove or disprove the existence of an FPRAS for the problem in question. We conjecture that there is no FPRAS even for keys, i.e., unless ${\rm RP} = {\rm NP}$, there exist a set $\dep$ of keys, and a CQ $Q$ such that there is no FPRAS for $\ocqa{\dep,M_{\dep}^{\us},Q}$.
We proceed to discuss how Theorem~\ref{the:uniform-sequences} is shown.

As for Theorem~\ref{the:uniform-repairs}, we can conveniently restate the problem in question as a problem of computing a ``relative frequency'' ratio that does not depend on the Markov chain generator.
In particular, for a database $D$, a set $\dep$ of FDs, a CQ $Q(\bar x)$, and a tuple $\bar c \in \adom{D}^{|\bar x|}$,
\begin{eqnarray*}
	\probrep{M_{\dep}^{\us},Q}{D,\bar c} &=& \frac{|\{s \in \crs{D}{\dep} \mid \bar c \in Q(s(D))\}|}{|\crs{D}{\dep}|}.
\end{eqnarray*}
This ratio is the percentage of complete $(D,\dep)$-repairing sequences that lead to an operational repair that entails $Q(\bar c)$, which we call the {\em sequence relative frequency} of $Q(\bar c)$ w.r.t.~$D$ and $\dep$, and denote $\srfreq{\dep,Q}{D,\bar c}$. 
Thus, we can restate $\ocqa{\dep,M_{\dep}^{\us},Q}$ as the problem of computing the sequence relative frequency of $Q(\bar c)$ w.r.t.~$D$ and $\dep$, which is independent from the Markov chain generator $M_{\dep}^{\us}$:


\medskip

\begin{center}
	\fbox{\begin{tabular}{ll}
			{\small PROBLEM} : & $\srelfreq{\dep,Q(\bar x)}$
			\\
			{\small INPUT} : & A database $D$,  and a tuple $\bar c \in \adom{D}^{|\bar x|}$.
			\\
			{\small OUTPUT} : &  $\srfreq{\dep,Q}{D,\bar c}$.
	\end{tabular}}
\end{center}

\medskip
\noindent We now discuss how we establish Theorem~\ref{the:uniform-sequences} by directly considering the problem $\srelfreq{\dep,Q}$ instead of $\ocqa{\dep,M_{\dep}^{\us},Q}$; further details can be found in Appendix~\ref{appsec:uniform-sequences}.


\medskip
\noindent \paragraph{Item (1).} Let $\dep$ and $Q$ be the singleton set of primary keys and the Boolean CQ, respectively, for which $\rrelfreq{\dep,Q}$ is $\sharp ${\rm P}-hard; $\dep$ and $Q$ are obtained from the proof of Theorem~\ref{the:uniform-repairs}(1).
We show that also $\srelfreq{\dep,Q}$ is $\sharp ${\rm P}-hard via a polynomial-time Turing reduction from $\sharp H\text{-}\mathsf{Coloring}$. Actually, we can exploit the same construction as in the proof of item (1) of Theorem~\ref{the:uniform-repairs}.
\OMIT{
Assuming that, for an undirected graph $G$, $D_G$ is the database that the construction in the proof of item (1) of Theorem~\ref{the:uniform-repairs} builds, we show that $\orfreq{\dep,Q}{D_G,()} = \srfreq{\dep,Q}{D_G,()}$, which implies that the polynomial-time Turing reduction from $\sharp H\text{-}\mathsf{Coloring}$ to $\rrelfreq{\dep,Q}$ is also a polynomial-time Turing reduction from $\sharp H\text{-}\mathsf{Coloring}$ to $\srelfreq{\dep,Q}$.
Therefore, we get that $\srelfreq{\dep,Q}$ is $\sharp ${\rm P}-hard, as needed.
}

\medskip
\noindent \paragraph{Item (2).} 
For showing that, for a set $\dep$ of primary keys and a CQ $Q(\bar x)$, $\srelfreq{\dep,Q}$ admits an FPRAS, we rely again on Monte Carlo sampling.
%
%
We first show that an efficient sampler exists. This relies on a non-trivial technical lemma, which states that, for a database $D$, $|\crs{D}{\dep}|$ can be computed in polynomial time in $||D||$.

\begin{lemma}\label{lem:us-sampler}
	For a database $D$, and a set $\dep$ of primary keys, we can sample elements of $\crs{D}{\dep}$ uniformly at random in polynomial time in $||D||$.
\end{lemma}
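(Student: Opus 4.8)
The plan is to combine an efficient counting routine with the textbook ``count-then-sample'' walk down a tree. The substantive ingredient is the counting routine — the ``non-trivial technical lemma'' mentioned above — which I describe first, and which is where the restriction to primary keys is essential.

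\textbf{Counting $|\crs{D}{\dep}|$.} Since $\dep$ is a set of primary keys, the conflict graph $\cg{D}{\dep}$ is a disjoint union of cliques: two facts conflict exactly when they belong to the same relation and agree on that relation's (unique) key, and any two distinct such facts then disagree on some non-key attribute. Call each maximal such clique a \emph{block}, and let $B_1,\dots,B_m$ be the blocks of size at least $2$, with $n_i = |B_i|$. A justified operation removes either one fact of a block or a conflicting pair of facts of a single block; hence it affects only that one block, and whether it is justified at a given step depends only on the current contents of that block (namely, that the block still holds at least two facts, among them the fact(s) being removed). Therefore a complete $(D,\dep)$-repairing sequence is precisely an arbitrary interleaving of $m$ independent per-block complete repairing sequences. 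For a block of size $k$, let $a_{k,\ell}$ count its per-block complete repairing sequences of length $\ell$; then $a_{0,0} = a_{1,0} = 1$, $a_{k,0} = 0$ for $k \ge 2$, and $a_{k,\ell} = k\cdot a_{k-1,\ell-1} + \binom{k}{2}\cdot a_{k-2,\ell-1}$ for $\ell \ge 1$, so the whole table is computable in time polynomial in $||D||$ (all relevant $k$ and $\ell$ are at most $||D||$). Interleaving sequences of lengths $\ell_1,\dots,\ell_m$ can be done in $\binom{\ell_1 + \cdots + \ell_m}{\ell_1,\dots,\ell_m}$ ways, so
\[
|\crs{D}{\dep}| \;=\; \sum_{\ell_1,\dots,\ell_m} \Bigl(\prod_{i=1}^m a_{n_i,\ell_i}\Bigr) \binom{\ell_1 + \cdots + \ell_m}{\ell_1,\dots,\ell_m},
\]
which is evaluated in polynomial time by a dynamic program over the number of blocks already merged and the total length so far, both bounded by $||D||$; the integers involved have $O(||D||\log||D||)$ bits.

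\textbf{From counting to sampling.} Run the standard tree walk on the repairing tree whose leaves are $\crs{D}{\dep}$. Starting at $D$, at a current (necessarily inconsistent) database $D'$ the children are the justified operations applicable at $D'$: the $O(||D'||)$ removals $-f$ with $f$ in some violation, and the $O(||D'||^2)$ removals $-\{f,g\}$ with $\{f,g\}$ a violation. Because justifiedness depends only on the current fact set, the subtree below the child reached by $-F$ is the full repairing tree of $D' \setminus F$, so it has exactly $|\crs{D' \setminus F}{\dep}|$ leaves; compute this quantity for every child via the counting routine, move to child $-F$ with probability $|\crs{D' \setminus F}{\dep}| / |\crs{D'}{\dep}|$, and recurse until the database is consistent (equivalently, until the count drops to $1$). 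Output the resulting sequence of operations: its probability telescopes to $1/|\crs{D}{\dep}|$, so the output is uniform over $\crs{D}{\dep}$. Since a $(D,\dep)$-repairing sequence has length linear in $||D||$ and each step performs $O(||D||^2)$ polynomial-time counts, the sampler runs in polynomial time. (By symmetry of the facts within a block, the counts for all children removing a single fact of the same block coincide, and likewise for pair removals, so only $O(||D'||)$ distinct counts per step are actually needed, after which the specific fact(s) are chosen uniformly within the selected block; this is merely an optimisation.)

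\textbf{Main obstacle.} The delicate point is the polynomial-time evaluation of $|\crs{D}{\dep}|$: recursing directly over the multiset of current block sizes spawns exponentially many subproblems, and it is the clean split into per-block counts followed by a single multinomial interleaving step that restores tractability. This split relies on $\cg{D}{\dep}$ being a disjoint union of cliques, which is exactly what fails for general FDs (and even for arbitrary keys) — explaining why the lemma is confined to primary keys.
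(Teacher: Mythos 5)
Your proposal is correct and follows essentially the same route as the paper: a polynomial-time computation of $|\crs{D}{\dep}|$ exploiting the block structure induced by primary keys (per-block counts combined by a dynamic program over interleavings), followed by the standard count-then-sample walk that selects each justified operation $-F$ at the current database $D'$ with probability $|\crs{D'\setminus F}{\dep}|/|\crs{D'}{\dep}|$, exactly as in the paper's $\mathsf{SampleSeq}$ algorithm. The differences are only cosmetic: the paper indexes per-block counts by the number of pair removals and by whether the block ends empty, whereas you index by sequence length — just take care to apply your recurrence only for $k\ge 2$, with $a_{0,\ell}=a_{1,\ell}=0$ for $\ell\ge 1$.
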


\OMIT{
The above lemma essentially states that there exists a randomized algorithm $\mathsf{SampleSeq}$ that takes as input $D$ and $\dep$, runs in polynomial time in $||D||$, and produces a random variable $\mathsf{SampleSeq}(D,\dep)$ such that $\Pr(\mathsf{SampleSeq}(D,\dep) = s) = \frac{1}{|\crs{D}{\dep}|}$ for every sequence $s \in \crs{D}{\dep}$.
}

%
To establish that the problem in question admits an FPRAS based on Monte Carlo sampling, it remains to show the following:

\begin{lemma}\label{lem:us-lower-bound}
	Consider a set $\dep$ of primary keys, and a CQ $Q(\bar x)$. For every database $D$, and tuple $\bar c \in \adom{D}^{|\bar x|}$,
	\[
	\srfreq{\dep,Q}{D,\bar c}\ \geq\ \frac{1}{(2 \cdot ||D||)^{||Q||}}
	\] 
	whenever $\srfreq{\dep,Q}{D,\bar c} > 0$.
\end{lemma}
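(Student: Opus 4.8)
The plan is to follow the proof of Lemma~\ref{lem:ur-lower-bound} (the analogous lower bound for the repair relative frequency), adding one combinatorial ingredient needed because, unlike candidate repairs, complete repairing sequences do not factorize as a plain product over the blocks of the conflict graph.

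\paragraph{Reducing to a small consistent witness.} Since $\srfreq{\dep,Q}{D,\bar c} > 0$, there is a sequence $s^\star \in \crs{D}{\dep}$ with $\bar c \in Q(s^\star(D))$, hence a homomorphism $h$ from $Q$ to $s^\star(D)$ with $h(\bar x) = \bar c$. Let $F$ be the image of the atoms of $Q$ under $h$; then $F \subseteq s^\star(D) \subseteq D$ consists of at most $||Q||$ facts and is consistent w.r.t.~$\dep$ (as $s^\star(D) \models \dep$). For every $s \in \crs{D}{\dep}$ with $F \subseteq s(D)$, the map $h$ is still a homomorphism from $Q$ to $s(D)$, so $\bar c \in Q(s(D))$; hence $\srfreq{\dep,Q}{D,\bar c} \geq |\{s \in \crs{D}{\dep} \mid F \subseteq s(D)\}| / |\crs{D}{\dep}|$, and it remains to bound this ratio from below by $(2\cdot||D||)^{-||Q||}$.

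\paragraph{Block-and-shuffle decomposition.} Since $\dep$ is a set of primary keys, the conflict graph $\cg{D}{\dep}$ is a disjoint union of cliques (the \emph{blocks}), and every justified operation deletes one or two facts from a single block. Thus a complete $(D,\dep)$-repairing sequence is equivalently given by: for each block $B$, a ``local'' complete repairing sequence $a_B$ that turns $B$ into a relation with at most one fact, together with an interleaving of the $a_B$'s. Writing $\mathcal{S}_B$ for the set of local complete repairing sequences of $B$, this gives $|\crs{D}{\dep}| = \sum_{(a_B)_B \in \prod_B \mathcal{S}_B} \tfrac{(\sum_B |a_B|)!}{\prod_B |a_B|!}$, and, letting $B_1,\dots,B_t$ be the blocks met by $F$ (so $t \leq |F| \leq ||Q||$, and each $B_j$ contributes the single fact $f_j = F \cap B_j$ since $F$ is consistent) and $\mathcal{S}_B^{g}$ the local sequences of $B$ with result $\{g\}$, one gets $|\{s \in \crs{D}{\dep} \mid F \subseteq s(D)\}| = \sum \tfrac{(\sum_B |a_B|)!}{\prod_B |a_B|!}$, where the sum ranges over tuples $(a_B)_B$ with $a_{B_j} \in \mathcal{S}_{B_j}^{f_j}$ for $j \leq t$ and $a_B \in \mathcal{S}_B$ otherwise (blocks of size at most $1$ contribute trivially).

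\paragraph{The key per-length estimate and conclusion.} For a block $B$ and $\ell \in \mathbb{N}$, let $N_B(\ell)$, $N_B^{g}(\ell)$, $N_B^{\emptyset}(\ell)$ count the $a \in \mathcal{S}_B$ of length $\ell$ with result $B$, $\{g\}$, $\emptyset$, respectively. The claim is that $N_B^{f_j}(\ell) \geq N_B(\ell)/(2|B|)$ for all $\ell$. First, permuting the facts inside $B$ is a length-preserving bijection of $\mathcal{S}_B$, so $N_B^{g}(\ell)$ is independent of $g \in B$; call it $N_B^{\mathrm{one}}(\ell)$. Second, any local sequence with result $\emptyset$ must end with an operation $-\{g,g'\}$ applied to a two-fact block $\{g,g'\}$ (otherwise the process would already have reached a consistent state and stopped); replacing that last operation by $-g$ yields a local sequence of the same length with result $\{g'\}$, and this map is injective (the prefix determines the two-fact block, hence $g'$). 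Hence $N_B^{\emptyset}(\ell) \leq |B|\cdot N_B^{\mathrm{one}}(\ell)$, so $N_B(\ell) = |B|\cdot N_B^{\mathrm{one}}(\ell) + N_B^{\emptyset}(\ell) \leq 2|B|\cdot N_B^{\mathrm{one}}(\ell)$, and the claim follows. Plugging $N_{B_j}^{f_j}(\ell) \geq N_{B_j}(\ell)/(2|B_j|)$ into the restricted sum and factoring the constant $\prod_{j\leq t}(2|B_j|)^{-1}$ out of the term-wise nonnegative sum reduces it to $|\crs{D}{\dep}|$, giving $|\{s \in \crs{D}{\dep} \mid F \subseteq s(D)\}| \geq |\crs{D}{\dep}| / \prod_{j\leq t} 2|B_j| \geq |\crs{D}{\dep}| / (2\cdot||D||)^{||Q||}$ since $t \leq ||Q||$ and $|B_j| \leq ||D||$. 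Combined with the first step, this proves the lemma. The delicate point — the main obstacle — is establishing $N_B^{f_j}(\ell) \geq N_B(\ell)/(2|B|)$ at each length $\ell$ separately: an aggregate bound summed over all lengths would not suffice, because the multinomial weights $\tfrac{(\sum_B|a_B|)!}{\prod_B|a_B|!}$ couple the per-block lengths, so the length slices must be handled individually; the within-block symmetry together with the ``truncate the final deletion'' injection is precisely what makes this work.
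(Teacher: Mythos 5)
Your proof is correct, and it rests on the same two combinatorial ingredients as the paper's own argument: (i) the facts of a block are interchangeable, so among the local sequences of a block each surviving fact is equally frequent, and (ii) a sequence that empties a block must end, on that block, with a pair removal, and replacing that last pair removal by a singleton removal injectively produces sequences that keep a fact — yielding exactly the bound $\tfrac{1}{2^m\,|B_1|\cdots|B_m|}$ that the paper obtains. Where you differ is the bookkeeping: the paper applies both ideas directly to whole elements of $\crs{D}{\dep}$, mapping the set of complete sequences that empty some block of $h(Q)$ into the set that empties none (each target having at most $2^m-1$ preimages) and then invoking within-block symmetry globally, whereas you factorize each complete sequence into per-block local sequences plus an interleaving, count via the multinomial coefficient, and prove the estimate $N_B^{f}(\ell)\geq N_B(\ell)/(2|B|)$ length-by-length before recombining. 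Your per-length slicing is needed only inside your factorized framework (where the multinomial weight couples the block lengths); it can be avoided by letting the block permutations and the ``truncate the final pair removal'' injection act on entire global sequences, which is precisely the paper's route. The trade-off is that your version spells out rigorously the symmetry step that the paper states rather informally (``all the facts of a block are symmetric''), at the cost of somewhat heavier counting machinery; one cosmetic slip is the phrase describing $N_B(\ell)$ as counting sequences ``with result $B$'' — it should count all local sequences of length $\ell$, as your later identity $N_B(\ell)=|B|\cdot N_B^{\mathrm{one}}(\ell)+N_B^{\emptyset}(\ell)$ makes clear.
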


Given a set $\dep$ of primary keys and a CQ $Q$, by exploiting Lemmas~\ref{lem:us-sampler} and~\ref{lem:us-lower-bound}, we can easily devise an FPRAS for $\srelfreq{\dep,Q}$.

\OMIT{
\begin{itemize}
	\item We note that Theorem~\ref{the:uniform-sequences} remains exactly the same even if we use the Markov chain generator $M_{\dep}^{\us,1}$, which considers only single fact deletions.
	
	\item We conclude the section by clarifying that whether there exists an FPRAS for $\mathsf{OCQA}$ in the case of keys or FDs remains open. We conjecture that the problem does not admit an FPRAS, even in the case of keys (no matter whether we consider arbitrary or singleton deletions).
\end{itemize}
}
\section{Uniform Operations}\label{sec:uniform-operations}

We finally consider the Markov chain generator based on uniform operations, and establish the following complexity result.

\begin{theorem}\label{the:uniform-operations}
	\begin{enumerate}
		\item There exist a set $\dep$ of primary keys, and a CQ $Q$ such that $\ocqa{\dep,M_{\dep}^{\uo},Q}$ is $\sharp ${\rm P}-hard.
		
		\item For a set $\dep$ of keys, and a CQ $Q$, $\ocqa{\dep,M_{\dep}^{\uo},Q}$ admits an FPRAS.
	\end{enumerate}
\end{theorem}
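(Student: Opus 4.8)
The plan is to handle the two items separately; item (2) is the substantial one, since it is where the ``local'' nature of $M_{\dep}^{\uo}$ pays off.

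\textbf{Item (1).} As in the proofs of Theorems~\ref{the:uniform-repairs}(1) and~\ref{the:uniform-sequences}(1), I would prove $\sharp$P-hardness by a polynomial-time Turing reduction from $\sharp H\text{-}\mathsf{Coloring}$, for a graph $H$ for which this problem is $\sharp$P-hard by the dichotomy of~\cite{Dyer00}. The new difficulty is that, unlike for uniform repairs and uniform sequences, $\probrep{M_{\dep}^{\uo},Q}{D,\bar c}$ is not a plain relative frequency: it is the sum, over the complete repairing sequences $s$ with $\bar c\in Q(s(D))$, of the products $\prod_i 1/|\ops{s_{i-1}}{D}{\dep}|$ of the edge probabilities along the branch leading to $s$. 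One convenient route is to arrange that the database $D_G$ produced from the input graph $G$ has only blocks of size at most two (two facts sharing the key value); on such databases the available operations behave symmetrically across blocks, each block being resolved by exactly one operation drawn uniformly among its three justified operations, so $M_{\dep}^{\uo}$ induces precisely the uniform distribution over $\copr{D_G}{\dep}$. Hence $\probrep{M_{\dep}^{\uo},Q}{D_G,()}=\orfreq{\dep,Q}{D_G,()}$ and item (1) is inherited from Theorem~\ref{the:uniform-repairs}(1); if the construction there does not already produce databases of this shape, either it is modified (breaking larger blocks via auxiliary facts) or the probability $\probrep{M_{\dep}^{\uo},Q}{D_G,()}$ is computed directly by tracking the operation counts, which is more involved but follows the same template.

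\textbf{Item (2): sampler and Monte Carlo.} The key point is that $M_{\dep}^{\uo}(D)$ can be \emph{simulated} in polynomial time: start from $D$, and while the current database is inconsistent, enumerate its available operations --- for keys there are at most $O(||D||^2)$ of them, since a violation is a pair of facts and each pair yields at most three justified operations --- pick one uniformly at random, and apply it. Every repairing sequence has length at most $||D||$, so this runs in polynomial time and returns an operational repair $D'$ distributed exactly according to the leaf distribution, i.e.\ $\pr[\,D'=D''\,]=\probrep{D,M_{\dep}^{\uo}}{D''}$. Since $\probrep{M_{\dep}^{\uo},Q}{D,\bar c}=\pr[\bar c\in Q(D')]$, the algorithm draws $N$ independent samples and outputs the empirical frequency of the event $\bar c\in Q(D')$; by a Chernoff bound this is an $(\epsilon,\delta)$-approximation once $N=\Theta\!\big(\tfrac1{\epsilon^2}\log\tfrac1\delta\big)\cdot\tfrac1p$ with $p=\probrep{M_{\dep}^{\uo},Q}{D,\bar c}$ (see~\cite{KarpLuby2000}). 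It therefore remains to prove a polynomial lower bound on $p$: that $p\geq 1/\mathsf{pol}(||D||)$ whenever $p>0$, for a polynomial $\mathsf{pol}$ depending only on $\dep$ and $Q$. (When $p=0$ the empirical frequency is identically $0$ and the output is correct, since for $M_{\dep}^{\uo}$ every complete repairing sequence is a reachable leaf, so $p>0$ whenever some candidate operational repair entails $Q(\bar c)$.)

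\textbf{Item (2): the lower bound.} Fix $D$, keys $\dep$, a CQ $Q$ and $\bar c$ with $p>0$; pick an operational repair $D_0$ with $\bar c\in Q(D_0)$, a witnessing homomorphism $h$ from $Q$ to $D_0$, and let $W\subseteq D_0$ be its image, a subset consistent w.r.t.\ $\dep$ with $|W|\leq||Q||$. Then $p\geq\pr[W\subseteq D']$, so it suffices to bound the latter. First, the process underlying $M_{\dep}^{\uo}$ decomposes over the connected components of $\cg{D}{\dep}$: every operation touches facts of a single component, so by a standard decomposition argument the restrictions of the sampled repair $D'$ to distinct components are mutually independent, and the restriction to a component $C$ is distributed exactly as the uniform-operations process run on $C$ alone. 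Hence $\pr[W\subseteq D']=\prod_{C}\pr[(W\cap C)\subseteq D'_C]$ over the at most $||Q||$ components meeting $W$, and it is enough to show: for a component $C$ and a consistent $W_C\subseteq C$, the uniform-operations process on $C$ keeps all of $W_C$ with probability at least $(c\cdot||C||)^{-|W_C|}$ for a constant $c=c(\dep)$.

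\textbf{Item (2): the main obstacle.} This last claim is the technical heart. A na\"ive ``never pick a bad operation'' bound is hopeless --- it is exponentially small --- because the favourable event is a union of exponentially many runs whose probabilities must be added up. The reason it works out is that conflict graphs of keys are rigidly structured: for each relation and each key the corresponding facts split into blocks, each block inducing a clique, so $\cg{D}{\dep}$ is the union of a \emph{constant} number of such cluster graphs and the neighbourhood of every fact is covered by constantly many cliques. I would first settle the pure block case --- a single clique $K_n$ with $W_C=\{f\}$ --- via the recursion $a_n=\tfrac{1}{n+\binom n2}\big((n{-}1)a_{n-1}+\binom{n-1}{2}a_{n-2}\big)$ for the probability $a_n$ of ending with $f$ alive, from which $a_n\geq\tfrac1{2n}$ follows by a short induction (exactly the shape of Lemmas~\ref{lem:ur-lower-bound} and~\ref{lem:us-lower-bound}). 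The general case is obtained by an induction on $||D||$ that unfolds the one-step recursion for $\pr[W_C\subseteq D'_C]$, takes care of the components created when a deletion disconnects $C$, and uses the bounded-clique-cover structure to certify that at every step a sufficiently large fraction of the available operations avoids $W_C$, making the recursion telescope to the claimed bound. Combining the per-component bounds gives $p\geq(c\cdot||D||)^{-||Q||}$, which together with the sampler yields the FPRAS; this structural lower bound, and its interplay with disconnection, is where the bulk of the work lies.
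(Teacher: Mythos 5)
Your item (1) and the skeleton of item (2) coincide with the paper's proof: the paper also reuses the $\sharp H\text{-}\mathsf{Coloring}$ construction of Theorem~\ref{the:uniform-repairs}(1) (whose blocks indeed all have size two, so the leaf distribution of $M_{\dep}^{\uo}(D_G)$ is uniform and the hardness transfers), and item (2) is likewise proved via the obvious exact sampler plus Monte Carlo plus a polynomial lower bound on the target probability. The gap is in that lower bound, which is precisely the paper's Proposition~\ref{pro:uo-lower-bound} and the bulk of its proof. First, your per-component claim is false as stated: take a binary relation $R$ with attributes $(A,B)$ and the two keys $R:A\ra B$ and $R:B\ra A$, and the database consisting of $f=R(a,b)$, the facts $R(a,b_1),\ldots,R(a,b_m)$ and the facts $R(a_1,b),\ldots,R(a_m,b)$. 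This is a single connected component in which $f$ lies in two blocks of size $m+1$ whose other facts are pairwise non-conflicting across blocks; for $f$ to survive, each of the two block races must be won, each costing $\Theta(1/m)$, so the probability of keeping $f$ scales like $1/m^{2}$ and not like $\Omega(1/(c\,||C||))$. In general the exponent of any correct bound must grow with the number $k$ of keys per relation (the paper's bound for a single fact is roughly $||D||^{-O(k)}$), not with $|W_C|$ alone. This alone would not break the FPRAS, since any fixed polynomial suffices, but the argument you offer for the surviving weaker claim is only a sketch at the decisive point: ``certify that at every step a sufficiently large fraction of the available operations avoids $W_C$, making the recursion telescope'' is exactly the naive conditioning you rightly dismiss at the start, and it yields a polynomial rather than exponential bound only if one proves a quantitative, key-specific statement (e.g., whenever a block of a fact of $W_C$ still has $m$ facts, the fraction of justified operations hitting that fact is $O(1/m)$ while $\Omega(m)$ steps remain), together with a treatment of how such a potential argument survives deletions that disconnect the component. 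None of this is supplied, and it is the analogue of the paper's technical core, so the proposal has a genuine gap.

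For comparison, the paper's route is different and you would need an ingredient of the same strength to complete yours. The paper fixes the image $h(Q)$, splits the reachable leaves into those keeping and those removing a designated fact $f$, and defines a map $\mathsf{F}$ from the latter to the former by deleting or replacing the operation that removes $f$ and appending at most $k$ closing operations; it then proves (i) $\pi(s)\le\mathsf{pol}''(||D||)\cdot\pi(\mathsf{F}(s))$, using the key-specific fact that $f$ conflicts with at most $k$ surviving facts and a Cauchy–Schwarz estimate showing that re-inserting $f$ increases the number of available operations at any later step $j$ by at most $O(k\sqrt{N_j})$, and (ii) every image has at most $2\,||D||-1$ preimages (Lemma~\ref{lem:relate-sequences}); summing gives the polynomial bound, which is then generalized from atomic to arbitrary CQs. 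Your component-decomposition observation is true (it can be justified, e.g., by an independent-exponential-clocks coupling, but it is not something to invoke as ``standard'' without proof) and could simplify bookkeeping, yet it does not substitute for the quantitative step: to salvage your route you would still need either a lemma of the strength of Lemma~\ref{lem:relate-sequences} or a worked-out potential-function induction with the explicit $k$-dependent rates indicated above.
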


Notice that the above result does not cover the case of FDs, which remains an open problem. However, as we explain below, for FDs we can establish an approximability result under the assumption that only operations that remove a single fact (not a pair of facts) are considered. But let us first discuss the proof of Theorem~\ref{the:uniform-operations}. 

Unlike Theorems~\ref{the:uniform-repairs} and~\ref{the:uniform-sequences} presented above, there is no obvious way to conveniently restate the problem of interest as a problem of computing a ``relative frequency'' ratio.
Thus, the proof of Theorem~\ref{the:uniform-operations}, which we discuss next, has to deal with $\ocqa{\dep,M_{\dep}^{\uo},Q}$ for a set $\dep$ of FDs and a CQ $Q$; details are in Appendix~\ref{appsec:uniform-operations}.

\medskip
\noindent \paragraph{Item (1).} As we did for item (1) of Theorem~\ref{the:uniform-sequences}, we reuse the construction underlying the proof of item (1) of Theorem~\ref{the:uniform-repairs}.
\OMIT{
In particular, assuming that $\dep$ and $Q$ are the singleton set of primary keys and the Boolean CQ, respectively, for which $\rrelfreq{\dep,Q}$ is $\sharp ${\rm P}-hard (which are extracted from the proof of Theorem~\ref{the:uniform-repairs}(1)), we show that $\ocqa{\dep,M_{\dep}^{\uo},Q}$ is $\sharp ${\rm P}-hard via a polynomial-time Turing reduction from $\sharp H\text{-}\mathsf{Coloring}$ by reusing the construction in the proof of item (1) of Theorem~\ref{the:uniform-repairs}.
Assuming that, for an undirected graph $G$, $D_G$ is the database that the construction in the proof of item (1) of Theorem~\ref{the:uniform-repairs} builds, we show that $\orfreq{\dep,Q}{D_G,()} = \probrep{M_{\dep}^{\uo},Q}{D_G,()}$, which implies that the polynomial-time Turing reduction from $\sharp H\text{-}\mathsf{Coloring}$ to $\rrelfreq{\dep,Q}$ is also a polynomial-time Turing reduction from $\sharp H\text{-}\mathsf{Coloring}$ to $\ocqa{\dep,M_{\dep}^{\uo},Q}$.
Therefore, we conclude that $\ocqa{\dep,M_{\dep}^{\uo},Q}$ is $\sharp ${\rm P}-hard, as needed.
}

\medskip
\noindent \paragraph{Item (2).} We show that $\ocqa{\dep,M_{\dep}^{\uo},Q}$, where $\dep$ is a set of keys and $Q$ a CQ, admits an FPRAS by relying once again on Monte Carlo sampling. The existence of an efficient sampler follows easily from the definition of the Markov chain generator $M_{\dep}^{\uo}$. In particular:

\begin{lemma}\label{lem:uo-sampler}
	Given a database $D$, and a set $\dep$ of keys, we can sample elements of $\abs{M_{\dep}^{\uo}(D)}$ according to the leaf distribution of $M_{\dep}^{\uo}(D)$ in polynomial time in $||D||$.
\end{lemma}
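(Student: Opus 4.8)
The plan is to exploit the ``local'' nature of $M_{\dep}^{\uo}$: unlike the samplers behind Lemmas~\ref{lem:ur-sampler} and~\ref{lem:us-sampler}, no global counting is needed here, and a single top-down random walk on the repairing Markov chain suffices. Concretely, I would describe a randomized algorithm that maintains a $(D,\dep)$-repairing sequence $s$, initialized to $\varepsilon$, together with the current database $D_{|s|}^{s}$, initialized to $D$, and repeats the following: compute the set $\ops{s}{D}{\dep}$ of one-step extensions of $s$; if it is empty, halt and output $s$; otherwise pick $s' \in \ops{s}{D}{\dep}$ uniformly at random, set $s := s'$ (updating $D_{|s|}^{s}$ incrementally), and iterate.

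For correctness I would first spell out the enumeration step. By Definitions~\ref{def:justified} and~\ref{def:rep-sequence}, $\ops{s}{D}{\dep}$ consists exactly of the sequences $s \cdot (-F)$ such that, for some $(\phi,\{f,g\}) \in \viol{D_{|s|}^{s}}{\dep}$, we have $\emptyset \neq F \subseteq \{f,g\}$; hence it is obtained by enumerating every $\phi \in \dep$ and every pair $\{f,g\} \subseteq D_{|s|}^{s}$ with $\{f,g\} \not\models \phi$, and collecting the distinct operations $-\{f\}$, $-\{g\}$, $-\{f,g\}$. Since $|\dep|$ is a constant in data complexity and $D$ has $O(||D||)$ facts, this takes polynomial time and yields a set of size $O(||D||^{2})$. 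Moreover $\ops{s}{D}{\dep} = \emptyset$ iff $\viol{D_{|s|}^{s}}{\dep} = \emptyset$ iff $D_{|s|}^{s} \models \dep$, i.e., iff $s$ is complete; so the walk halts precisely at the leaves of $M_{\dep}^{\uo}(D)$, which by Definition~\ref{def:repaiting-mc} are the elements of $\crs{D}{\dep}$. Since each applied operation deletes at least one fact, the walk takes at most $|D|$ steps, so the algorithm runs in polynomial time in $||D||$.

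It remains to identify the output distribution. By construction the algorithm is the random walk on the tree $M_{\dep}^{\uo}(D) = (V,E,\insP)$ that starts at the root $\varepsilon$, at each non-leaf $s$ moves to a child $s' \in \ops{s}{D}{\dep}$ with probability $\frac{1}{|\ops{s}{D}{\dep}|} = \insP(s,s')$, and stops at leaves. Hence, for a leaf $s$ with unique root-to-leaf path $\varepsilon = s_{0}, s_{1}, \ldots, s_{n} = s$, the probability of outputting $s$ is $\prod_{i=1}^{n}\insP(s_{i-1},s_{i}) = \pi(s)$, the value assigned to $s$ by the leaf distribution $\pi$ of $M_{\dep}^{\uo}(D)$. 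As every factor $\insP(s_{i-1},s_{i})$ is strictly positive, every leaf is reachable, so the support of the output equals $\abs{M_{\dep}^{\uo}(D)} = \crs{D}{\dep}$, and the algorithm samples $\abs{M_{\dep}^{\uo}(D)}$ according to $\pi$, as required. I do not expect a genuine obstacle: the content of the lemma is precisely that the locality of $M_{\dep}^{\uo}$ trivializes sampling (contrast Lemma~\ref{lem:us-sampler}, which needed a non-trivial polynomial-time computation of $|\crs{D}{\dep}|$), and the only points needing routine care — that $\ops{s}{D}{\dep}$ is polynomial-time computable and that the walk never gets stuck before reaching a consistent database — are immediate from the definitions. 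Note, finally, that this argument never uses that $\dep$ consists of keys; the restriction in the statement is inherited from the FPRAS that relies on this sampler.
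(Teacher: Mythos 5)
Your proposal is correct and follows essentially the same route as the paper: the paper's proof also runs the top-down random walk that, at a current sequence $s$, selects one of the $(s(D),\dep)$-justified operations with probability $\frac{1}{|\ops{s}{D}{\dep}|}$ (mirroring Algorithm~\ref{alg:ssample}), so that the output probability of a leaf telescopes to its value under the leaf distribution. Your closing remark that keys are never used matches the paper as well, which explicitly reuses this sampler for arbitrary FDs.
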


\OMIT{
In other words, Lemma~\ref{lem:uo-sampler} tells us that there exists a randomized algorithm $\mathsf{SampleOp}$ that takes as input $D$ and $\dep$, runs in polynomial time in $||D||$, and produces a random variable $\mathsf{SampleOp}(D,\dep)$ such that $\Pr(\mathsf{SampleOp}(D,\dep) = s) = \pi(s)$ for every sequence $s \in \abs{M_{\dep}^{\uo}(D)}$, where $\pi$ is the leaf distribution of $M_{\dep}^{\uo}(D)$.
}

The interesting task towards an FPRAS for the problem in question is to show that the target probability is never ``too small''.

\begin{proposition}\label{pro:uo-lower-bound}
	Consider a set $\dep$ of keys, and a CQ $Q(\bar x)$. There is a polynomial $\mathsf{pol}$ such that, for every database $D$, and $\bar c \in \adom{D}^{|\bar x|}$,
	\[
	\probrep{M_{\dep}^{\uo},Q}{D,\bar c}\ \geq\ \frac{1}{\mathsf{pol}(||D||)}
	\] 
	whenever $\probrep{M_{\dep}^{\uo},Q}{D,\bar c} > 0$.
\end{proposition}

We proceed to discuss the main ideas underlying the proof of the above result. For the sake of clarity, we focus on atomic queries, i.e., CQs with only one atom. The generalization to arbitrary CQs can be found in the appendix.
In the sequel, let $\dep$ be a set of keys, $Q(\bar x)$ an atomic query, $D$ a database, and $\bar c$ a tuple of $\adom{D}^{|\bar x|}$.

Clearly, if there is no homomorphism $h$ from $Q$ to $D$ with $h(\bar x) = \bar c$, then $\probrep{M_{\dep}^{\uo},Q}{D,\bar c} = 0$. Assume now that such a homomorphism $h$ exists, and let $f$ be the fact of $D$ obtained after applying $h$ to the single atom of $Q$. It is not difficult to see that
\[
\probrep{M_{\dep}^{\uo},Q}{D,\bar c}\ \geq\ \underbrace{\sum\limits_{D' \in \opr{D}{M_{\dep}^{\uo}} \text{ and } f \in D'} \probrep{D,M_{\dep}^{\uo}}{D'}}_{\Lambda}.
\]
Thus, it suffices to show that there exists a polynomial $\mathsf{pol}$ such that $\Lambda \geq \frac{1}{\mathsf{pol}(||D||)}$.
Let $S_f$ and $S_{\neg f}$ be the sets of sequences of $\abs{M_{\dep}^{\uo}(D)}$ that keep $f$ and remove $f$, respectively, i.e.,
\begin{eqnarray*}
	S_f &=& \{s \in \abs{M_{\dep}^{\uo}(D)} \mid f \in s(D)\}\\
	S_{\neg f} &=& \{s \in \abs{M_{\dep}^{\uo}(D)} \mid f \not\in s(D)\}.
\end{eqnarray*}
With $\pi$ being the leaf distribution of $M_{\dep}^{\uo}(D)$, $\Lambda = \frac{\Lambda_f}{\Lambda_f+\Lambda_{\neg f}}$, where
\[
\Lambda_f\ =\ \sum_{s \in S_f}\pi(s) \qquad \text{and} \qquad \Lambda_{\neg f}\ =\ \sum_{s \in S_{\neg f}}\pi(s).
\]
\OMIT{
and
\[
\Lambda_2\ =\ \sum_{{s\in\abs{M_{\dep}^{\uo}(D)} \text{ and } f\not\in s(D)}}\pi(s).
\]
\[
\frac{\sum_{{s\in\abs{M_{\dep}^{\uo}(D)} \text{ and } f\in s(D)}}\pi(s)}{\underbrace{\sum_{{s\in\abs{M_{\dep}^{\uo}(D)} \text{ and } f\in s(D)}}\pi(s)}_{\Lambda_1}+\underbrace{\sum_{{s\in\abs{M_{\dep}^{\uo}(D)} \text{ and } f\not\in s(D)}}\pi(s)}_{\Lambda_2}}.
\]
}
Therefore, to establish the desired lower bound $\frac{1}{\mathsf{pol}(||D||)}$ for $\Lambda$, it suffices to show that there exists a polynomial $\mathsf{pol}'$ such that $\Lambda_{\neg f} \leq \mathsf{pol}'(||D||) \cdot \Lambda_f$. Indeed, in this case we can conclude that
\[
\Lambda\ =\ \frac{\Lambda_f}{\Lambda_f+\Lambda_{\neg f}}\ \geq\ \frac{\Lambda_f}{\Lambda_f + \mathsf{pol}'(||D||) \cdot \Lambda_f}\ =\ \frac{1}{1 + \mathsf{pol}'(||D||)},
\]
and the claim follows with $\mathsf{pol}(||D||) = 1 + \mathsf{pol}'(||D||)$. The rest of the proof is devoted to showing that a polynomial $\mathsf{pol}'$ such that $\Lambda_{\neg f} \leq \mathsf{pol}'(||D||) \cdot \Lambda_f$ exists.

To get this inequality, we establish a rather involved technical lemma that relates the sequences of $S_{\neg f}$ with the sequences of $S_f$; as usual, we write $\pi$ for the leaf distribution of $M_{\dep}^{\uo}(D)$:

\begin{lemma}\label{lem:relate-sequences}
	There exists a function $\mathsf{F} : S_{\neg f} \ra S_{f}$ such that:
	\begin{enumerate}
		\item There exists a polynomial $\mathsf{pol}''$ such that, for every $s \in S_{\neg f}$, $\pi(s) \leq \mathsf{pol}''(||D||) \cdot \pi(\mathsf{F}(s))$.
		\item For every $s' \in S_{f}$, $|\{s \in S_{\neg f} \mid \mathsf{F}(s)=s'\}| \leq 2 \cdot ||D|| - 1$.
	\end{enumerate}
\end{lemma}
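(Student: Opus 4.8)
The plan is to derive the statement from properties~(1) and~(2) of a yet‑to‑be‑constructed function $\mathsf{F}$, and then to build $\mathsf{F}$ by a local surgery on complete repairing sequences. For the first part, let $\mathsf{pol}''$ be the polynomial from item~(1); using $\mathsf{F}(S_{\neg f})\subseteq S_f$,
\[
\Lambda_{\neg f}\ =\ \sum_{s\in S_{\neg f}}\pi(s)\ \leq\ \mathsf{pol}''(||D||)\cdot\!\sum_{s\in S_{\neg f}}\!\pi(\mathsf{F}(s))\ =\ \mathsf{pol}''(||D||)\cdot\!\sum_{s'\in S_f}\!|\mathsf{F}^{-1}(s')|\cdot\pi(s'),
\]
and since $|\mathsf{F}^{-1}(s')|\leq 2||D||-1$ for every $s'$ by item~(2), the right‑hand side is at most $(2||D||-1)\cdot\mathsf{pol}''(||D||)\cdot\Lambda_f$. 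Hence the desired inequality holds with $\mathsf{pol}'(||D||)=(2||D||-1)\cdot\mathsf{pol}''(||D||)$, and it only remains to construct $\mathsf{F}$.

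\smallskip\noindent\textbf{Construction of $\mathsf{F}$.} Fix $s=(\op_j)_{1\leq j\leq n}\in S_{\neg f}$. Since facts are only deleted, there is a unique step $i$ with $f\in D_{i-1}^s$ and $f\notin D_i^s$; write $\op_i=-F$ with $f\in F$, and let $(\phi,\{f,g\})\in\viol{D_{i-1}^s}{\dep}$ be the violation (fixed by a canonical order on $\viol{D_{i-1}^s}{\dep}$) witnessing that $\op_i$ is justified, so that $F\subseteq\{f,g\}$ and $g\in D_{i-1}^s$. Let $h_1,\ldots,h_m$ be the facts of $s(D)$ that conflict with $f$ w.r.t.\ $\dep$, listed in a fixed order, and set
\[
\mathsf{F}(s)\ =\ \op_1,\ \ldots,\ \op_{i-1},\ -\{g\},\ \op_{i+1},\ \ldots,\ \op_n,\ -\{h_1\},\ \ldots,\ -\{h_m\}.
\]
Replacing $\op_i$ by $-\{g\}$ is legitimate, as $\{g\}\subseteq\{f,g\}$; afterwards $f$ is present in every database reached by $\mathsf{F}(s)$, and since $\op_{i+1},\ldots,\op_n$ never involve $f$ in the run of $s$ (where $f$ was already gone), each of them stays justified along $\mathsf{F}(s)$, whose visited databases are those of $s$ together with $\{f\}$ (and, in the subcase $F=\{f\}$, minus $\{g\}$ until $s$ deletes $g$, if ever). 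Hence after $\op_n$ the database equals $s(D)$ up to the extra fact $f$; the appended deletions are then justified one after the other, each $h_\ell$ conflicting with the still‑present $f$, and as $s(D)$ is consistent they leave a consistent database still containing $f$. Therefore $\mathsf{F}(s)\in\crs{D}{\dep}=\abs{M_{\dep}^{\uo}(D)}$ with $f\in\mathsf{F}(s)(D)$, i.e.\ $\mathsf{F}(s)\in S_f$. Crucially, because $\dep$ consists of \emph{keys}, two distinct facts of $s(D)$ conflicting with $f$ must agree with $f$ on the attributes of two \emph{distinct} keys (agreeing on the same key would force the two facts to conflict with each other, contradicting the consistency of $s(D)$), so $m\leq|\dep|$ is bounded by a constant in data complexity --- this is what keeps the appended block short.

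\smallskip\noindent\textbf{Property~(2).} Given $s'\in S_f$, its preimages are reconstructed as follows. The appended block is recognizable from $s'$ alone: it is the maximal suffix of $s'$ along whose run \emph{every} violation of the current database involves $f$ (this holds throughout the block since $s(D)$ is consistent, and fails at the immediately preceding step, which by justification fixes a violation not involving $f$, because $f$ was absent at the corresponding point of $s$). Stripping the block off leaves a sequence that is $s$ with a single operation $-\{g\}$ substituted for the original $\op_i=-F$. Inverting therefore amounts to choosing the position of this operation (at most $||D||$ choices, as repairing sequences have length at most $||D||$) and whether the original $\op_i$ was $-\{f\}$ or $-\{f,g\}$ (two choices); a short case analysis discarding the combinations that do not yield a repairing sequence mapped to $s'$ bounds the number of preimages by $2||D||-1$.

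\smallskip\noindent\textbf{Property~(1) --- the main obstacle.} This is the technically hard part. The sequences $s$ and $\mathsf{F}(s)$ share the prefix $\op_1,\ldots,\op_{i-1}$ (identical factors), and differ afterwards only in that (a) $\mathsf{F}(s)$ carries the extra fact $f$ in every visited database, and (b) $\mathsf{F}(s)$ performs $m\leq|\dep|$ extra trailing deletions; the latter contributes a factor $\geq 1/\mathsf{pol}(||D||)$ to $\pi(\mathsf{F}(s))$, since each visited database affords only polynomially many justified operations and $m$ is constant. The delicate point is (a): adding $f$ to a database $D'$ creates at most $1+2\,|N_{D'}(f)|$ new justified operations, where $N_{D'}(f)$ is the set of facts of $D'$ conflicting with $f$, so the branching degree $|\ops{s}{D}{\dep}|$ could in principle grow multiplicatively over the (up to $||D||$) remaining steps. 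Here keys enter again: $N_{D'}(f)$ is covered by at most $|\dep|$ cliques of $\cg{D'}{\dep}$ (the facts agreeing with $f$ on a fixed key's attributes pairwise conflict), so if $|N_{D'}(f)|=\mu$ then $D'$ already affords $\Omega(\mu^2/|\dep|^2)$ justified operations, making the relative increase caused by $f$ only $O(|\dep|^2/\mu)$ --- small precisely when it would otherwise hurt. Turning this into a polynomial bound on the product of the per‑step ratios, by amortizing over the monotone shrinking of $f$'s clique sizes along the run of $s$, is the crux of the proof; it breaks for general FDs, whose conflicts need not induce cliques, which is why the approximability result of Theorem~\ref{the:uniform-operations}(2) is confined to keys.
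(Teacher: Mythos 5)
Your construction of $\mathsf{F}$ is the same as the paper's (delete $-f$ or replace $-\{f,g\}$ by $-g$, then append singleton deletions of the at most $k$ facts of $s(D)$ conflicting with $f$, $k$ being the number of keys on $f$'s relation), and your counting argument for item~(2) — at most $||D||$ positions for an inserted $-f$ plus at most $||D||-1$ ways to turn a singleton removal into a pair removal — is exactly the paper's. The problem is item~(1): you identify the right structural fact (under keys, the facts conflicting with $f$ split into at most $k$ cliques of the conflict graph, so a database with $\mu$ such facts already admits $\Omega(\mu^2/k)$ justified operations, which tames the relative increase $r_j/N_j$ caused by re-inserting $f$), but you then stop, explicitly declaring that turning this into a polynomial bound on the product of the per-step ratios ``is the crux of the proof.'' That crux is precisely the content of item~(1), so as written the proposal does not prove the lemma.

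For comparison, the paper closes this gap as follows. Writing $N_j$ for the number of justified operations at step $j$ of $s$ and $N_j+r_j$ for the corresponding number in $\mathsf{F}(s)$, it uses the clique structure together with the Cauchy--Schwarz inequality to get $r_j\le 5k\sqrt{N_j}$, so the per-step ratio is at most $1+5k/\sqrt{N_j}$. The amortization is then \emph{not} over the shrinking of $f$'s clique sizes (your suggested route), but over the trivial bound $N_j\ge n-j+1$ (at step $j$ there are still $n-j+1$ operations to perform, each justified), which makes the product
\[
\prod_{j>i}\Bigl(1+\tfrac{5k}{\sqrt{N_j}}\Bigr)\ \le\ \prod_{t=1}^{n-i}\Bigl(1+\tfrac{5k}{\lfloor\sqrt{t}\rfloor}\Bigr)\ \le\ \binom{\lfloor\sqrt{n-i}\rfloor+5k}{5k}\ \le\ \Bigl(\tfrac{e}{5k}\Bigr)^{5k}\bigl(\sqrt{||D||}+5k\bigr)^{5k},
\]
a polynomial in $||D||$ since $k$ is a constant; the constant factor $(2k+1)!$ absorbs the probabilities of the at most $k$ appended operations. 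Without an argument of this kind (or some substitute), your per-step ratio can be a constant bounded away from $1$ at steps where $\mu$ is small, and nothing in your sketch rules out that this happens at $\Omega(||D||)$ steps, so the claimed polynomial bound does not yet follow.
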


For showing item (1) of Lemma~\ref{lem:relate-sequences}, we transform each sequence $s \in S_{\neg f}$ into a sequence $s' \in S_{f}$, and let $\mathsf{F}(s) = s'$. This is done by first deleting or replacing the operation $\op$ in $s$ that removes $f$. In particular, if $\op = -f$, then we simply delete it; otherwise, if $\op = -\{f,g\}$, then we replace it with the operation $-g$. Notice, however, that there is no guarantee that the sequence $\hat{s}$, obtained after removing $\op$ from $s$, is a complete sequence of $\crs{D}{\dep}$. This is because $s(D)$ might contain facts that are in a conflict with $f$, and thus, by keeping $f$, there is no guarantee that $\hat{s}(D) \models \dep$. We then convert $\hat{s}$ into a complete sequence $s'$ by simply adding at the end of $\hat{s}$ additional operations (in some arbitrary order) that resolve all the conflicts.
Now, to show that $\pi(s) \leq \mathsf{pol}''(||D||) \cdot \pi(s')$, for some polynomial $\mathsf{pol}''$, we rely  on the following two crucial facts:
(1) Although the probabilities of the operations in $s'$ coming after the operation in $s$ that removes $f$ might decrease, we can show that they do not decrease ``too much''.
%
(2) The number of operations that we need to add at the end of $\hat{s}$ in order to get $s'$ depends only on $\dep$ (not on $||D||$). More precisely, by exploiting the fact that $\dep$ consists of keys, we can show that $f$ can be in a conflict with {\em at most} $k \geq 0$ facts of $s(D)$, where $k$ is the number of keys in $\dep$ over the relation name of $f$. This implies that we do not need to add more than $k$ operations at the end of $\hat{s}$.
%
Note that the above facts do {\em not} hold for FDs.
To establish that $\pi(s) \leq \mathsf{pol}''(||D||) \cdot \pi(s')$ using the above facts, we rely on the Cauchy–Schwarz inequality for $n$-dimensional Euclidean spaces.
Finally, once we have $\mathsf{F}$ in place, it is then not difficult to show item (2) via a combinatorial argument.

It is now easy to establish the existence of the polynomial $\mathsf{pol}'$ such that $\Lambda_{\neg f} \leq \mathsf{pol}'(||D||) \cdot \Lambda_f$. Indeed, with $\mathsf{F}$ and $\mathsf{pol}''$ being the function and the polynomial, respectively, provided by Lemma~\ref{lem:relate-sequences},
\begin{eqnarray*}
\Lambda_{\neg f}\ =\ \sum_{s \in S_{\neg f}}\pi(s)\ &\leq&  \sum_{s \in S_{\neg f}} \mathsf{pol}''(||D||) \cdot \pi(\mathsf{F}(s))\\
&\leq& \mathsf{pol}''(||D||) \cdot (2 \cdot ||D|| - 1) \cdot \sum_{s \in S_f} \pi(s)\\
&=& \mathsf{pol}''(||D||) \cdot (2 \cdot ||D|| - 1) \cdot \Lambda_f,
\end{eqnarray*}
and the claim follows with $\mathsf{pol}'(||D||) = \mathsf{pol}''(||D||) \cdot (2 \cdot ||D|| - 1)$.

\medskip
\noindent \paragraph{An FPRAS for FDs.} Recall that Theorem~\ref{the:uniform-operations} does not cover the case of FDs, which remains an open problem. At this point, one may wonder whether Monte Carlo sampling can be used for devising an FPRAS in the case of FDs. Indeed, the efficient sampler provided by Lemma~\ref{lem:uo-sampler} holds even for FDs since the proof of that lemma does not exploit keys in any way, but only the ``local'' nature of the Markov chain generator. However, we do not have a result analogous to Proposition~\ref{pro:uo-lower-bound}, which states that the target probability is never ``too small''.
In fact, there exist a set $\dep$ of FDs, a Boolean atomic query $Q$, and a family of databases $\{D_n\}_{n>0}$ with $|D_n| = n$, such that $0<\probrep{M_{\dep}^{\uo},Q}{D_n,()}\le \frac{1}{2^{n-1}}$; the proof is in the appendix.
Hence, for devising an FPRAS in the case of FDs (if it exists), we need a more sophisticated machinery than the one based on Monte Carlo sampling.
On the other hand, we can establish a result analogous to  Proposition~\ref{pro:uo-lower-bound} for FDs, assuming that only operations that remove a single fact (not a pair of facts) are considered. Given a set $\dep$ of FDs, let $M_{\dep}^{\uo,1}$ be the Markov chain generator defined as $M_{\dep}^{\uo}$, with the difference that only sequences consisting of operations that remove a single fact are considered. We then get the following:

\begin{theorem}\label{the:uniform-operations-singleton}
	For a set $\dep$ of FDs, and a CQ $Q$, $\ocqa{\dep,M_{\dep}^{\uo,1},Q}$ admits an FPRAS.
\end{theorem}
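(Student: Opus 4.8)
The plan is to follow the same Monte Carlo template used for Theorem~\ref{the:uniform-operations}(2), now applied to the generator $M_{\dep}^{\uo,1}$, but to replace the key-specific lower-bound argument (Proposition~\ref{pro:uo-lower-bound}) with one tailored to singleton removals. First I would observe that the efficient sampler of Lemma~\ref{lem:uo-sampler} carries over verbatim to $M_{\dep}^{\uo,1}$: the proof of that lemma only uses the ``local'' nature of the generator — that the probability of each extension $s\cdot\op$ is $1/|\ops{s}{D}{\dep}|$ (here restricted to singleton operations), which is computable in polynomial time from $s$ by inspecting the violations of $D_{i-1}^s$. So a random walk down the tree produces a leaf distributed according to the leaf distribution $\pi$ of $M_{\dep}^{\uo,1}(D)$ in polynomial time. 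Given such a sampler, standard Monte Carlo estimation of $\probrep{M_{\dep}^{\uo,1},Q}{D,\bar c} = \sum_{s\in\abs{M_{\dep}^{\uo,1}(D)},\,\bar c\in Q(s(D))}\pi(s)$ yields an FPRAS provided the target probability is at least $1/\mathsf{pol}(\|D\|)$ whenever it is positive, since the required number of samples scales with the reciprocal of this quantity~\cite{KarpLuby2000}.

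The heart of the proof is therefore the analogue of Proposition~\ref{pro:uo-lower-bound} for $M_{\dep}^{\uo,1}$ and arbitrary FDs: there is a polynomial $\mathsf{pol}$ with $\probrep{M_{\dep}^{\uo,1},Q}{D,\bar c}\ge 1/\mathsf{pol}(\|D\|)$ whenever the probability is positive. I would reduce to atomic queries exactly as in the sketch for Theorem~\ref{the:uniform-operations}(2): fix a homomorphism $h$ from $Q$ to $D$ with $h(\bar x)=\bar c$, let $f$ be the image fact, and bound the target below by $\Lambda = \Lambda_f/(\Lambda_f+\Lambda_{\neg f})$, where $\Lambda_f,\Lambda_{\neg f}$ are the total $\pi$-mass of reachable leaves keeping, resp.\ removing, $f$. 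It then suffices to build a function $\mathsf{F}:S_{\neg f}\to S_f$ satisfying the two properties of Lemma~\ref{lem:relate-sequences} (a polynomial bound $\pi(s)\le\mathsf{pol}''(\|D\|)\cdot\pi(\mathsf{F}(s))$, and bounded preimages). The construction of $\mathsf{F}$ is where the restriction to singleton operations is essential: given $s\in S_{\neg f}$, the operation in $s$ that first removes $f$ is necessarily of the form $-f$ (there are no pair-removals), so we simply delete it, obtaining $\hat s$; then we append, in some fixed order, singleton operations $-g$ for each fact $g$ of $s(D)$ still in conflict with $f$, turning $\hat s$ into a complete sequence $s'=\mathsf{F}(s)\in S_f$. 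Crucially, when $\dep$ consists of keys the number of such conflicting facts is bounded by a constant, but for general FDs it may be linear in $\|D\|$; however, each appended operation contributes a factor at least $1/(2\|D\|)$ to $\pi(s')$ (since at any step there are at most $2\|D\|$ available singleton operations, as each fact can be removed at most once and each removal is one of the two facts of some violation), so the total loss from the appended block is at most $(2\|D\|)^{\|D\|}$ — wait, that is not polynomial. The correct accounting, which I would carry out carefully, is that we must instead show the probabilities of the \emph{suffix} of $s$ after the $-f$ step, when transplanted into $s'$, change by only a polynomial factor, and that the appended ``cleanup'' block in $s'$ has $\pi$-mass polynomially related — this is where the Cauchy–Schwarz argument of Lemma~\ref{lem:relate-sequences} is invoked, now using that with singleton operations the branching factor $|\ops{t}{D}{\dep}|$ at every node $t$ is between $1$ and $2\|D\|$, independently of whether $\dep$ has keys or FDs.

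The main obstacle, and the reason Theorem~\ref{the:uniform-operations} does not already cover FDs, is exactly this quantitative control over $\mathsf{F}$: for unrestricted operations the pair-removal $-\{f,g\}$ and the fact that under FDs (as opposed to keys) a single fact can participate in unboundedly many violations combine to make the ``number of cleanup operations'' potentially linear in $\|D\|$ \emph{and} the per-operation probability loss non-negligible, which the appendix shows genuinely breaks the bound (the family $\{D_n\}$ with probability $\le 2^{-(n-1)}$). Restricting to $M_{\dep}^{\uo,1}$ removes the pair-removal pathology and, more importantly, makes the branching factor at every step uniformly $O(\|D\|)$, so the Cauchy–Schwarz estimate from Lemma~\ref{lem:relate-sequences} goes through with a polynomial $\mathsf{pol}''$ even without the key assumption; the bounded-preimage property (item~(2) of that lemma) follows from the same combinatorial argument as before, since $\mathsf{F}(s)$ together with the identity of the deleted $-f$ step and the (at most $2\|D\|$) possible positions determines $s$. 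I would then assemble the FPRAS from the sampler and this lower bound in the standard way, and finally lift from atomic to arbitrary CQs as in the appendix for Theorem~\ref{the:uniform-operations}, replacing the single fact $f$ by the (constant-size) set of facts witnessing a homomorphism and arguing that keeping all of them has polynomially bounded probability via a product of the atomic bounds.
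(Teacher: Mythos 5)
Your sampler step and the overall Monte Carlo template are fine, but the core of your argument --- transplanting Lemma~\ref{lem:relate-sequences} to $M_{\dep}^{\uo,1}$ with arbitrary FDs --- does not go through, and the spot where you wrote ``wait, that is not polynomial'' is exactly where the proof breaks. The per-sequence comparison $\pi(s)\le \mathsf{pol}''(||D||)\cdot\pi(\mathsf{F}(s))$ is false for FDs, for \emph{any} choice of $\mathsf{F}$. Take the family from Proposition~\ref{prop:no_bound_FDs}: $D_n$ consists of $f=R(0,0,0)$ together with $n-1$ facts $R(0,1,i)$, each conflicting with $f$ but pairwise consistent. The one-step sequence $s=-f$ is a reachable leaf with $\pi(s)=\tfrac{1}{n}$, whereas every leaf $s'$ keeping $f$ must remove all $n-1$ facts $R(0,1,i)$ one by one, so $\pi(s')=\tfrac{1}{n}\cdot\tfrac{1}{n-1}\cdots\tfrac{1}{2}=\tfrac{1}{n!}$; hence $\pi(s)/\pi(s')=(n-1)!$, which is super-polynomial. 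The Cauchy--Schwarz bound $r_j\le 5k\sqrt{N_j}$ in the paper's proof is not a generic statement about branching factors: it exploits the key property that facts violating the \emph{same} key with $f$ also conflict with one another, so a large increase $r_j$ forces a quadratically large $N_j$. Under general FDs the facts conflicting with $f$ can be mutually consistent (as above), so reinstating $f$ can inflate the number of available operations from $O(1)$ to $\Theta(||D||)$ over $\Theta(||D||)$ consecutive steps, and no per-leaf polynomial distortion survives. (The aggregate inequality $\Lambda_{\neg f}\le\mathsf{pol}(||D||)\cdot\Lambda_f$ does hold in this example, but it cannot be certified by a map with polynomially bounded per-sequence ratio and polynomially bounded preimages, which is what your plan requires.)

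The paper's actual proof of Theorem~\ref{the:uniform-operations-singleton} abandons the mapping argument altogether and establishes the lower bound by a direct induction on the number $n$ of facts of $D\setminus h(Q)$ involved in violations: with singleton operations the number of justified operations at any point is exactly the number of facts currently involved in violations, hence at most $n+m$ with $m=|h(Q)|$, and every leaf keeping $h(Q)$ must at each step delete one of those $n$ outside facts, each chosen with probability at least $\tfrac{1}{n+m}$. Summing over the $n$ possible first deletions and invoking the inductive hypothesis gives
\[
\probrep{M_{\dep}^{\uo,1},Q}{D,\bar c}\ \ge\ \frac{1}{\binom{n+m}{m}}\ \ge\ \frac{1}{\left(e\cdot||D||\right)^{||Q||}},
\]
which handles arbitrary CQs (not only atomic ones) in one stroke, rather than via a product of atomic bounds. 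To repair your write-up, keep the sampler paragraph and the Monte Carlo assembly, but replace your Step 2 by this induction.
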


Note that singleton operations do not alter the data complexity of exact operational CQA; we can show that item (1) of Theorem~\ref{the:uniform-operations} continues to hold.
Let us also clarify that focusing on singleton operations does not affect Theorem~\ref{the:uniform-repairs} and Theorem~\ref{the:uniform-sequences}; all the details about these results can be found in Appendix~\ref{appsec:singleton-operations}.

\OMIT{
\begin{lemma}
	There exists a polynomial $\mathsf{pol}$ such that, for every sequence $s \in \abs{M_{\dep}^{\uo}(D)}$ with $f \not\in s(D)$, there exists a sequence $s' \in \abs{M_{\dep}^{\uo}(D)}$ with $f \in s'(D)$ and $\pi(s) \leq \mathsf{pol}(||D||) \cdot \pi(s')$.
\end{lemma}

The above lemma induces a non-injective function $\mathsf{F}$ from the set of sequences of $\abs{M_{\dep}^{\uo}(D)}$ that delete $f$ to the set of sequences of $\abs{M_{\dep}^{\uo}(D)}$ the keep $f$. We can show the following:

\begin{lemma}
	For every sequence $s' \in \abs{M_{\dep}^{\uo}(D)}$ with $f \in s'(D)$, $|\{s \in \abs{M_{\dep}^{\uo}(D)} \mid \mathsf{F}(s)=s'\}| \leq 2\cdot |D| - 1$.
\end{lemma}
}

\OMIT{
\begin{itemize}
	\item We note that item (1) of Theorem~\ref{the:uniform-operations} remains the same even if we use the Markov chain generator $M_{\dep}^{\uo,1}$, which considers only singe fact deletions. On the other hand, item (2) can be stated for arbitrary FDs (not only keys). This is shown by providing a polynomial bound, and exploiting the obvious efficient sampler.
	
	\item Whether there exists an FPRAS in the case of FDs and $M_{\dep}^{\uo}$ remains open. We can show, however, that in this case there is no polynomial bound, and thus, a more sophisticated machinery is needed towards an FPRAS.
\end{itemize}
}
\section{Future Work}\label{sec:conclusion}
%

%
Although we understand pretty well uniform operational CQA, there are still interesting open problems on approximability: (i) the case of keys and uniform repairs (we only have a negative result for the problem of counting repairs), (ii) the case of keys/FDs and uniform sequences, and (iii) the case of FDs and uniform operations (we only have a positive result assuming singleton operations).

\bibliographystyle{ACM-Reference-Format}

\bibliography{references}


\newpage
\appendix
\section{Uniform Operational CQA}\label{appsec:uniform-generators}

We provide the formal definitions of the ``uniform'' Markov chain generators discussed in Section~\ref{sec:uniform}, and show that they indeed capture our intention. In what follows, for a database $D$, a set $\dep$ of FDs, and a sequence $s =\op_1,\ldots,\op_n \in \rs{D}{\dep}$, we write $s_0$ for the empty sequence $\varepsilon$, and $s_i$ for the sequence $\op_1,\ldots,\op_i$, for $i \in [n]$.

\subsection{Uniform Repairs}

We start with the Markov chain generator based on the uniform probability distribution over the set of candidate operational repairs. As discussed in the main body, since multiple complete repairing sequences can lead to the same consistent database, we focus on canonical complete sequences. Recall that, for a database $D$, and a set $\dep$ of FDs, we say that a $(D,\dep)$-repairing sequence $s \in \crs{D}{\dep}$ is canonical if there is no $s' \in  \crs{D}{\dep}$ such that $s(D) = s'(D)$ and $s' \prec s$ for some arbitrary ordering $\prec$ over the set $\rs{D}{\dep}$, and we write $\cancrs{D}{\dep}$ for the set of all sequences of $\crs{D}{\dep}$ that are canonical. Furthermore, for a sequence $s \in \rs{D}{\dep}$, we write $\cancrss{D}{\dep}{s}$ for the set of all sequences $s'$ of $\cancrs{D}{\dep}$ that have $s$ as a prefix, i.e., $s' = s \cdot s''$ for some (possibly empty) sequence $s''$.
We are now ready to define the desired Markov chain generator.

\begin{definition}(\textbf{Uniform Repairs})\label{def:uniform-repairs}
	Consider a set $\dep$ of FDs. Let $M_{\dep}^{\ur}$ be the function assigning to a database $D$ the $(D,\dep)$-repairing Markov chain $(V,E,\ins{P})$, where, for each $(s,s') \in E$,
	\begin{eqnarray*}
		\insP(s,s')\
		=\ \left\{
		\begin{array}{ll}
			\frac{|\cancrss{D}{\dep}{s'}|}{|\cancrss{D}{\dep}{s}|} & 	\text{if } \cancrss{D}{\dep}{s} \neq \emptyset \\
			&\\
			\frac{1}{|\ops{s}{D}{\dep}|} & 	\text{otherwise.} \hspace{28mm}\hfill\markfull
		\end{array} \right. 
	\end{eqnarray*}
\end{definition}

Note that the above Markov chain generator is well-defined since, for each $s \in \rs{D}{\dep}$ that is not complete,
\[
|\cancrss{D}{\dep}{s}|\ =\ \sum_{s' \in \ops{s}{D}{\dep}} |\cancrss{D}{\dep}{s'}|,
\]
and thus, for a non-leaf node $s \in V$, $\sum_{t \in \{s' \mid (s,s') \in E\}} \ins{P}(s,t) = 1$.
We now show that the above definition captures our intention.

\begin{proposition}\label{pro:uniform-repairs}
	Consider a set $\dep$ of FDs. For every database $D$:
	\begin{enumerate}
		\item $\opr{D}{M_{\dep}^{\ur}} = \copr{D}{\dep}$.
		\item For every $D' \in \opr{D}{M_{\dep}^{\ur}}$, $\probrep{D,M_{\dep}^{\ur}}{D'} = \frac{1}{|\opr{D}{M_{\dep}^{\ur}}|}$.
	\end{enumerate}
\end{proposition}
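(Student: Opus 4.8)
The plan is to compute the leaf distribution $\pi$ of the tree $M_{\dep}^{\ur}(D) = (V,E,\ins{P})$ explicitly, and then derive both items from it. Recall that $V = \rs{D}{\dep}$, that the leaves of the tree are exactly the sequences in $\crs{D}{\dep}$, and that a leaf $s = s_n$ lies at the end of a unique root-to-leaf path $\varepsilon = s_0, s_1, \dots, s_n = s$, so $\pi(s) = \prod_{i=1}^{n} \ins{P}(s_{i-1},s_i)$. First I would split into two cases. If $s \in \cancrs{D}{\dep}$, then for every $i$ the sequence $s$ has $s_i$ as a prefix, so $s \in \cancrss{D}{\dep}{s_i}$ and in particular $\cancrss{D}{\dep}{s_i} \neq \emptyset$; hence no edge on the path falls into the ``otherwise'' branch of Definition~\ref{def:uniform-repairs}, and the product telescopes,
\[
\pi(s)\ =\ \prod_{i=1}^{n} \frac{|\cancrss{D}{\dep}{s_i}|}{|\cancrss{D}{\dep}{s_{i-1}}|}\ =\ \frac{|\cancrss{D}{\dep}{s_n}|}{|\cancrss{D}{\dep}{s_0}|}\ =\ \frac{1}{|\cancrs{D}{\dep}|},
\]
using $\cancrss{D}{\dep}{s_0} = \cancrs{D}{\dep}$ and $\cancrss{D}{\dep}{s_n} = \{s_n\}$ (a complete sequence has no proper extension in $\rs{D}{\dep}$, so it is the only sequence having it as a prefix). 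If instead $s$ is a leaf with $s \notin \cancrs{D}{\dep}$, then $\cancrss{D}{\dep}{s_n} = \emptyset$, while $\cancrss{D}{\dep}{s_0} = \cancrs{D}{\dep} \neq \emptyset$ since $\crs{D}{\dep} \neq \emptyset$ and each operational repair is reached by a $\prec$-minimal, hence canonical, complete sequence; letting $j$ be the largest index with $\cancrss{D}{\dep}{s_j} \neq \emptyset$, the edge $(s_j,s_{j+1})$ lies in the ratio branch and receives probability $|\cancrss{D}{\dep}{s_{j+1}}| / |\cancrss{D}{\dep}{s_j}| = 0$, so $\pi(s) = 0$. Thus $\pi$ is supported exactly on $\cancrs{D}{\dep}$, taking the constant value $1/|\cancrs{D}{\dep}|$ there, and in particular $\abs{M_{\dep}^{\ur}(D)} = \cancrs{D}{\dep}$.

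Next I would establish that the map $\Phi \colon \cancrs{D}{\dep} \to \copr{D}{\dep}$ defined by $\Phi(s) = s(D)$ is a bijection. It is well defined because $\cancrs{D}{\dep} \subseteq \crs{D}{\dep}$. For surjectivity: given $D' \in \copr{D}{\dep}$, the set $\{s \in \crs{D}{\dep} \mid s(D) = D'\}$ is finite and non-empty, hence has a $\prec$-minimum $s^{*}$, and $s^{*}$ is canonical by the definition of canonicity, so $\Phi(s^{*}) = D'$. For injectivity: if $s_1 \neq s_2$ were both canonical with $s_1(D) = s_2(D)$, then, as $\prec$ is a linear order on the finite set $\rs{D}{\dep}$, one of them, say $s_1$, would satisfy $s_1 \prec s_2$, contradicting the canonicity of $s_2$. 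Combining this bijection with $\abs{M_{\dep}^{\ur}(D)} = \cancrs{D}{\dep}$ gives item (1): $\opr{D}{M_{\dep}^{\ur}} = \{s(D) \mid s \in \abs{M_{\dep}^{\ur}(D)}\} = \{s(D) \mid s \in \cancrs{D}{\dep}\} = \copr{D}{\dep}$.

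For item (2), fix $D' \in \opr{D}{M_{\dep}^{\ur}}$. By Definition~\ref{def:operational-semantics} together with the support computation above,
\[
\probrep{D,M_{\dep}^{\ur}}{D'}\ =\ \sum_{s \in \cancrs{D}{\dep},\ s(D) = D'} \pi(s).
\]
Since $\Phi$ is injective this sum contains exactly one term, equal to $1/|\cancrs{D}{\dep}|$, and since $\Phi$ is also surjective we have $|\cancrs{D}{\dep}| = |\copr{D}{\dep}| = |\opr{D}{M_{\dep}^{\ur}}|$ by item (1); hence $\probrep{D,M_{\dep}^{\ur}}{D'} = 1/|\opr{D}{M_{\dep}^{\ur}}|$, as required.

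I expect the main obstacle to be the second case of the leaf-distribution computation: one must argue that any root-to-leaf path ending at a non-canonical leaf necessarily crosses an edge whose numerator $|\cancrss{D}{\dep}{s_{j+1}}|$ is zero, and that this edge lies in the ratio branch of Definition~\ref{def:uniform-repairs} rather than the ``otherwise'' branch — which holds precisely because $j$ was chosen to be the largest index with $\cancrss{D}{\dep}{s_j} \neq \emptyset$. The rest is telescoping together with the fact that $\prec$ is a linear order, and is essentially bookkeeping.
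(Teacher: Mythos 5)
Your proposal is correct and follows essentially the same route as the paper's proof: you identify the reachable leaves with $\cancrs{D}{\dep}$ (zero probability on any path whose canonical-leaf count vanishes, telescoping product $\frac{1}{|\cancrs{D}{\dep}|}$ on canonical leaves) and then pass to repairs. The only difference is presentational: you make explicit the bijection $s \mapsto s(D)$ between $\cancrs{D}{\dep}$ and $\copr{D}{\dep}$ (via $\prec$-minimality for surjectivity), which the paper leaves implicit when it reduces item (1) to $\abs{M_{\dep}^{\ur}(D)} = \cancrs{D}{\dep}$ and asserts $|\opr{D}{M_{\dep}^{\ur}}| = |\cancrs{D}{\dep}|$.
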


\begin{proof}
	{\bf Item (1).} It suffices to prove that $\abs{M_{\dep}^{\ur}(D)} = \cancrs{D}{\dep}$. Let $M_\dep^{\ur}(D) = (V,E,\ins{P})$, and assume that $\pi$ is its leaf distribution. Recall that for a sequence $s = \op_1,\ldots,\op_n \in \crs{D}{\dep}$, $\pi(s) = \ins{P}(s_0,s_1) \cdots \ins{P}(s_{n-1},s_n)$.
	
	($\supseteq$) Assume first that $s = \op_1,\ldots,\op_n \in \cancrs{D}{\dep}$. This implies that $\cancrss{D}{\dep}{s_i} \neq \emptyset$, for each $i \in \{0,1,\ldots,n\}$. Therefore, $\ins{P}(s_i,s_{i+1}) > 0$, for $i \in \{0,1,\ldots,n\}$, and thus $\pi(s) > 0$. The latter implies that $s \in \abs{M_\dep^{\ur}(D)}$, which in turn shows that $\abs{M_\dep^{\ur}(D)} \supseteq \cancrs{D}{\dep}$, as needed.
	
	($\subseteq$) Assume now that $s = \op_1,\ldots,\op_n \in \abs{M_\dep^{\ur}(D)}$. By contradiction, assume that $s \not \in \cancrs{D}{\dep}$. Since $s \in \abs{M_\dep^{\ur}(D)}$, $s$ must be complete. The fact that $s$ is complete but not canonical implies that there exists $i \in \{0,\ldots,n\}$ such that $\cancrss{D}{\dep}{s_i} = \emptyset$. In particular, let $\ell$ be the smallest integer in $\{0,1,\ldots,n\}$ such that $\cancrss{D}{\dep}{s_{\ell}} = \emptyset$. Clearly, $\ell > 0$, since $\cancrss{D}{\dep}{\epsilon}$ is always non-empty. Thus, by the first rule of the expression defining $\ins{P}$ in Definition~\ref{def:uniform-repairs}, we have that $\ins{P}(s_{\ell-1},s_{\ell}) = 0$. Hence, $\pi(s) = 0$, and thus, $s \not \in \abs{M_\dep^{\ur}(D)}$, which contradicts our hypothesis.
	
	\medskip
	{\bf Item (2).} By the proof of item (1), $\abs{M_\dep^{\ur}(D)} = \cancrs{D}{\dep}$. Hence, we conclude that
	\[
	|\opr{D}{M_\dep^{\ur}}|\ =\ |\cancrs{D}{\dep}|\ =\ |\abs{M_\dep^{\ur}(D)}|.
	\] 
	Therefore, it suffices to show that, for $s \in \cancrs{D}{\dep}$, $\pi(s) = \frac{1}{|\cancrs{D}{\dep}|}$.
	Let $s = \op_1,\ldots,\op_n \in \cancrs{D}{\dep}$. Since $s \in \abs{M_\dep^{\ur}(D)}$, $\pi(s)$ is equal to
	\[
	\frac{|\cancrss{D}{\dep}{s_1}|}{|\cancrss{D}{\dep}{s_0}|} \cdots \frac{|\cancrss{D}{\dep}{s_n}|}{|\cancrss{D}{\dep}{s_{n-1}}|} = 
	\frac{|\cancrss{D}{\dep}{s_n}|}{|\cancrss{D}{\dep}{s_0}|}.
	\]
	Since $\cancrss{D}{\dep}{s_0} = \cancrss{D}{\dep}{\epsilon} = \cancrs{D}{\dep}$, and $\cancrss{D}{\dep}{s_n} = \{s_n\}$, then
	$\pi(s) = \frac{1}{|\cancrs{D}{\dep}|}$, as needed.
\end{proof}

\subsection{Uniform Sequences}
We now proceed to define the Markov chain generator based on the uniform probability distribution over the set of complete repairing sequences.
It is defined similarly to the Markov chain generator above 
with the difference that we consider arbitrary, not necessarily canonical, complete sequences.

\begin{definition}(\textbf{Uniform Sequences})\label{def:uniform-seq}
	Consider a set $\dep$ of FDs. Let $M_{\dep}^{\us}$ be the function assigning to a database $D$ the $(D,\dep)$-repairing Markov chain $(V,E,\ins{P})$, where, for each $(s,s') \in E$,
	\begin{flalign*}
	&& \insP(s,s') = \frac{|\crss{D}{\dep}{s'}|}{|\crss{D}{\dep}{s}|} && \markfull
	\end{flalign*}
\end{definition}

Observe that the above Markov chain generator is well-defined since, for each $s \in \rs{D}{\dep}$ that is not complete,
\[
|\crss{D}{\dep}{s}|\ =\ \sum_{s' \in \ops{s}{D}{\dep}} |\crss{D}{\dep}{s'}|,
\]
and thus, for a non-leaf node $s \in V$, $\sum_{t \in \{s' \mid (s,s') \in E\}} \ins{P}(s,t) = 1$.
We can easily show that $M_{\dep}^{\us}$ captures our intention:

\begin{proposition}\label{pro:uniform-seq}
	Consider a set $\dep$ of FDs. For every database $D$:
	\begin{enumerate}
		\item $\abs{M_{\dep}^{\us}(D)} = \crs{D}{\dep}$.
		\item For every $s \in \crs{D}{\dep}$, assuming that $\pi$ is the leaf distribution of $M_{\dep}^{\us}(D)$, $\pi(s) = \frac{1}{|\crs{D}{\dep}|}$ .
	\end{enumerate}
\end{proposition}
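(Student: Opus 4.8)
The plan is to follow the proof of Proposition~\ref{pro:uniform-repairs} almost verbatim, replacing the set of canonical complete sequences by the set of all complete sequences; in fact the argument simplifies, because for uniform sequences the relevant quantity $\crss{D}{\dep}{s}$ is never empty, so the ``otherwise'' branch that complicated the uniform-repairs argument never applies. Before the two items I would isolate the following auxiliary fact: for every $s \in \rs{D}{\dep}$, $\crss{D}{\dep}{s} \neq \emptyset$, and moreover $\crss{D}{\dep}{s} = \{s\}$ whenever $s$ is complete. Non-emptiness holds because $\rs{D}{\dep}$ is finite and any non-complete $s$ can be extended: since $s(D) \not\models \dep$ there is some $(\phi,\{f,g\}) \in \viol{s(D)}{\dep}$, so $-f$ is $(s(D),\dep)$-justified and $s \cdot (-f) \in \rs{D}{\dep}$; iterating yields a complete extension of $s$. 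Dually, if $s$ is complete then $\viol{s(D)}{\dep} = \emptyset$, so no operation is $(s(D),\dep)$-justified and $s$ admits no proper repairing extension, i.e., $\crss{D}{\dep}{s} = \{s\}$. Together with the identity $|\crss{D}{\dep}{s}| = \sum_{s' \in \ops{s}{D}{\dep}} |\crss{D}{\dep}{s'}|$ for non-complete $s$ (already recorded, and which makes $M_{\dep}^{\us}$ well-defined as a $(D,\dep)$-repairing Markov chain), this is all the combinatorics the proof needs.

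For item~(1), I would write $M_{\dep}^{\us}(D) = (V,E,\ins{P})$ and let $\pi$ be its leaf distribution. Since $M_{\dep}^{\us}(D)$ is a $(D,\dep)$-repairing Markov chain, condition~(4) of Definition~\ref{def:repaiting-mc} gives that its leaf set equals $\crs{D}{\dep}$, so $\abs{M_{\dep}^{\us}(D)} \subseteq \crs{D}{\dep}$ is immediate. For the reverse inclusion, take $s = \op_1,\ldots,\op_n \in \crs{D}{\dep}$ with prefixes $s_0 = \varepsilon,\ldots,s_n = s$. Each $s_i$ satisfies $s \in \crss{D}{\dep}{s_i}$, hence $\crss{D}{\dep}{s_i} \neq \emptyset$, so $\ins{P}(s_{i-1},s_i) = |\crss{D}{\dep}{s_i}| / |\crss{D}{\dep}{s_{i-1}}| > 0$ for every $i \in [n]$; therefore $\pi(s) = \prod_{i=1}^{n} \ins{P}(s_{i-1},s_i) > 0$, so $s \in \abs{M_{\dep}^{\us}(D)}$, which establishes the equality.

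For item~(2), I would fix $s = \op_1,\ldots,\op_n \in \crs{D}{\dep}$ with prefixes as above and observe that the product defining $\pi(s)$ telescopes:
\[
\pi(s)\ =\ \prod_{i=1}^{n} \frac{|\crss{D}{\dep}{s_i}|}{|\crss{D}{\dep}{s_{i-1}}|}\ =\ \frac{|\crss{D}{\dep}{s_n}|}{|\crss{D}{\dep}{s_0}|}.
\]
Since $s_0 = \varepsilon$ we have $\crss{D}{\dep}{s_0} = \crs{D}{\dep}$, and since $s_n = s$ is complete the auxiliary fact gives $\crss{D}{\dep}{s_n} = \{s\}$; hence $\pi(s) = 1/|\crs{D}{\dep}|$ (the degenerate case $n = 0$, where $D \models \dep$ and $\crs{D}{\dep} = \{\varepsilon\}$, being the empty product $1 = 1/|\{\varepsilon\}|$). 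I do not anticipate any real obstacle: the only points needing care are the well-definedness of $M_{\dep}^{\us}$ and the non-emptiness of $\crss{D}{\dep}{s}$, both routine, and the proof is strictly easier than that of Proposition~\ref{pro:uniform-repairs}.
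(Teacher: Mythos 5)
Your proposal is correct and follows essentially the same route as the paper: item (1) via the fact that leaves are exactly the complete sequences and that $\crss{D}{\dep}{s_i} \neq \emptyset$ along any complete sequence forces $\pi(s) > 0$, and item (2) via the telescoping product together with $\crss{D}{\dep}{\varepsilon} = \crs{D}{\dep}$ and $\crss{D}{\dep}{s} = \{s\}$ for complete $s$. The extra details you spell out (the extension argument for non-emptiness and the degenerate case $n=0$) are fine and consistent with what the paper leaves implicit.
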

\begin{proof}
	{\bf Item (1).} This item follows from the fact that each $s \in \abs{M_\dep^{\us}(D)}$ is complete by definition, and each $s = \op_1,\ldots,\op_n \in \crs{D}{\dep}$ is such that $\crss{D}{\dep}{s_i} \neq \emptyset$, for $i \in \{0,\ldots,n\}$, and thus $\pi(s) > 0$, where $\pi$ is the leaf distribution of $M_\dep^{\us}(D)$.
	
	{\bf Item (2).} It is shown via a proof similar to the one used above for item (2) of Proposition~\ref{pro:uniform-repairs}.
\end{proof}

\subsection{Uniform Operations}
We finally define the Markov chain generator based on the uniform probability distribution over the set of available operations at a single step of the repairing process.

\begin{definition}(\textbf{Uniform Operations})\label{def:uniform-ops}
	Consider a set $\dep$ of FDs. Let $M_{\dep}^{\uo}$ be the function assigning to a database $D$ the $(D,\dep)$-repairing Markov chain $(V,E,\ins{P})$, where, for each $(s,s') \in E$,
	\begin{flalign*}
		&& \insP(s,s') = \frac{1}{|\ops{s}{D}{\dep}|} && \markfull
	\end{flalign*}
\end{definition}

It is straightforward to see that the function $M_{\dep}^{\uo}$ captures our intention; in fact, the following holds by definition:

\begin{proposition}\label{pro:uniform-ops}
	Consider a set $\dep$ of FDs. For every database $D$:
	\begin{enumerate}
		\item $\abs{M_{\dep}^{\uo}(D)} = \crs{D}{\dep}$.
		\item Assuming that $M_{\dep}^{\uo}(D) = (V,E,\ins{P})$, $(s,s') \in E$ implies $\insP(s,s') = \frac{1}{|\ops{s}{D}{\dep}|}$.
	\end{enumerate}
\end{proposition}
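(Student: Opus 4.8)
The plan is to establish that $M_{\dep}^{\uo}$ is a well-defined repairing Markov chain generator and then read off both items essentially from Definition~\ref{def:uniform-ops}. First I would check the four conditions of Definition~\ref{def:repaiting-mc} for $M_{\dep}^{\uo}(D) = (V,E,\ins{P})$. Conditions (1), (2) and (4) are inherited from the underlying rooted tree: $V = \rs{D}{\dep}$, the children of a non-leaf $s$ are exactly its one-step justified extensions $\ops{s}{D}{\dep}$, and the leaves are $\crs{D}{\dep}$. The only thing to verify is condition (3): for a non-leaf $s$, the database $s(D)$ is not yet consistent, so $\ops{s}{D}{\dep}$ is a nonempty finite set, and $\sum_{t \,:\, (s,t)\in E} \ins{P}(s,t) = |\ops{s}{D}{\dep}| \cdot \tfrac{1}{|\ops{s}{D}{\dep}|} = 1$.

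For item (1), one inclusion is trivial: $\abs{M_{\dep}^{\uo}(D)}$ is by definition a subset of the leaves of $M_{\dep}^{\uo}(D)$, which by condition (4) of Definition~\ref{def:repaiting-mc} equals $\crs{D}{\dep}$. For the converse inclusion, I would fix an arbitrary $s = \op_1,\dots,\op_n \in \crs{D}{\dep}$ and look at the unique root-to-$s$ path $s_0 = \varepsilon, s_1, \dots, s_n = s$ in the tree. The one small point that needs care is that every $s_i$ with $i < n$ is a non-leaf: if some such $s_i$ were complete, then $s_i(D) \models \dep$, so no $(s_i(D),\dep)$-justified operation exists, i.e.\ $\ops{s_i}{D}{\dep} = \emptyset$, contradicting the fact that $s_i$ is a proper prefix of the repairing sequence $s$. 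Hence each edge $(s_{i-1}, s_i)$ carries probability $\tfrac{1}{|\ops{s_{i-1}}{D}{\dep}|} > 0$, and therefore the leaf distribution gives $\pi(s) = \prod_{i=1}^{n} \tfrac{1}{|\ops{s_{i-1}}{D}{\dep}|} > 0$, so $s$ is a reachable leaf. This shows $\crs{D}{\dep} \subseteq \abs{M_{\dep}^{\uo}(D)}$, completing item (1). Item (2) requires nothing beyond unfolding Definition~\ref{def:uniform-ops}: by construction $\ins{P}(s,s') = \tfrac{1}{|\ops{s}{D}{\dep}|}$ for every $(s,s') \in E$.

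I do not expect a genuine obstacle here, since the statement is essentially bookkeeping, as the text itself signals (\emph{``the following holds by definition''}). The only mild subtlety is the claim that every leaf is reachable; this relies on two facts that are immediate but worth spelling out, namely that all transition probabilities of $M_{\dep}^{\uo}$ are strictly positive, and that a consistent intermediate database admits no further justified operation, so it cannot occur strictly before the last step of a complete repairing sequence.
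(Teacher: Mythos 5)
Your proposal is correct and matches the paper, which offers no argument beyond observing that the proposition ``holds by definition'' of $M_{\dep}^{\uo}$; your write-up simply spells out the bookkeeping (strictly positive transition probabilities, hence every leaf of the tree is reachable, plus the fact that proper prefixes of a complete repairing sequence are non-leaves). No gaps.
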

\section{Proofs of Section~\ref{sec:uniform-repairs}}\label{appsec:uniform-repairs}
In this section, we prove the main result of Section~\ref{sec:uniform-repairs}, which we recall here for the sake of readability:

\begin{manualtheorem}{\ref{the:uniform-repairs}}
\begin{enumerate}
	\item There exist a set $\dep$ of primary keys, and a CQ $Q$ such that $\ocqa{\dep,M_{\dep}^{\ur},Q}$ is $\sharp ${\rm P}-hard.
	
	\item For a set $\dep$ of primary keys, and a CQ $Q$, $\ocqa{\dep,M_{\dep}^{\ur},Q}$ admits an FPRAS.
	
	\item Unless ${\rm RP} = {\rm NP}$, there exist a set $\dep$ of FDs, and a CQ $Q$ such that there is no FPRAS for $\ocqa{\dep,M_{\dep}^{\ur},Q}$.
\end{enumerate}
\end{manualtheorem}

As discussed in Section~\ref{sec:uniform-repairs}, we actually need to prove the above result for the problem $\rrelfreq{\dep,Q}$.

\subsection{Proof of Item~(1) of Theorem~\ref{the:uniform-repairs}}

Consider the undirected graph $H = (V_H,E_H)$, where $V_H = \{0,1,?\}$ and $E_H = \{\{u,v\} \mid (u,v) \in (V_H \times V_H) \setminus \{(1,1)\}\}$, i.e., the graph:

\medskip

\centerline{\includegraphics{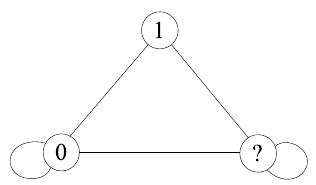}}

\medskip

\noindent Given an undirected graph $G$, a homomorphim from $G$ to $H$ is a mapping $h : V_G \rightarrow V_H$ such that $\{u,v\} \in E_G$ implies $\{h(u),h(v)\} \in E_H$. We write $\mathsf{hom}(G,H)$ for the set of homomorphisms from $G$ to $H$.
The problem $\#H\text{-}\mathsf{Coloring}$ is defined as follows:

\medskip

\begin{center}
	\fbox{\begin{tabular}{ll}
			{\small PROBLEM} : & $\sharp H\text{-}\mathsf{Coloring}$\\
			{\small INPUT} : & An undirected graph $G$.\\
			{\small OUTPUT} : &  The number $|\mathsf{hom}(G,H)|$.
	\end{tabular}}
\end{center}

\medskip

\noindent It is implicit in~\cite{Dyer00} that $\sharp H\text{-}\mathsf{Coloring}$ is $\sharp ${\rm P}-hard. In fact,~\cite{Dyer00} establishes the following dichotomy result: $\sharp \hat{H}\text{-}\mathsf{Coloring}$ is $\sharp ${\rm P}-hard if $\hat{H}$ has a connected component which is neither an isolated node without a loop, nor a complete graph with all loops present, nor a complete bipartite graph without loops; otherwise, it is solvable in polynomial time. Since our fixed graph $H$ above consists of a single connected component which is neither a single node, nor a complete graph with all loops present (the loop $(1,1)$ is missing), nor a bipartite graph, we get that $\#H\text{-}\mathsf{Coloring}$ is indeed $\sharp ${\rm P}-hard.

We proceed to show via a polynomial-time Turing reduction from $\sharp H\text{-}\mathsf{Coloring}$ that $\rrelfreq{\dep,Q}$ is $\sharp ${\rm P}-hard, where $\dep$ and $Q$ are as follows. Let $\ins{S}$ be the schema $\{V/2, E/2, T/1\}$, and let $(A,B)$ be the tuple of attributes of $V$.
The set $\dep$ consists of the single key
\[
V: A \ra B
\]
and the (constant-free) Boolean CQ $Q$ is
\[
\textrm{Ans}()\ \text{:-}\ E(x,y), V(x,z), V(y,z), T(z).
\]

Given an undirected graph $G = (V_G,E_G)$, we define the following database over $\ins{S}$ encoding $G$:
\[
D_G\ =\ \{V(u,0),V(u,1) \mid u \in V_G\}\ \cup\ \{E(u,v) \mid \{u,v\} \in E_G\}\ \cup\ \{T(1)\}.
\]
We then define the algorithm $\mathsf{HOM}$, which accepts as input an undirected graph $G = (V_G,E_G)$, as follows:
\begin{enumerate}
	\item Construct the database $D_G$.
	\item Compute the number $r = \orfreq{\dep,Q}{D_G,()}$.
	\item Output the number $3^{|V_G|} \cdot (1- r)$.
\end{enumerate}
It is clear that $\mathsf{HOM}(G)$ runs in polynomial time in $||G||$ assuming access to an oracle for the problem $\rrelfreq{\dep,Q}$. It remains to show that $|\mathsf{hom}(G,H)| = \mathsf{HOM}(G)$.
Recall that
$$\orfreq{\dep,Q}{D_G,()} = \frac{|\copr{D_G}{\dep,Q}|}{|\copr{D_G}{\dep}|},$$
where $\copr{D_G}{\dep,Q}$ is the set of candidate repairs $D$ of $D_G$ w.r.t.~$\dep$ such that $D \models Q$.
Observe that there are $3^{|V_G|}$ candidate repairs of $D_G$ w.r.t.~$\dep$, i.e., in each such a repair $D$, for each node $u \in V$ of $G$, either $V(u,0) \in D$ and $V(u,1) \not \in D$, or $V(u,0) \not \in D$ and $V(u,1) \in D$, or $V(u,0),V(u,1) \not \in D$.
%
%
Therefore,
\[
\orfreq{\dep,Q}{D_G,()}\ =\ \dfrac{|\copr{D_G}{\dep,Q}|}{3^{|V_G|}}.
\]
Thus, $\mathsf{HOM}(G)$ coincides with
\[
3^{|V_G|} \cdot \left(1 - \dfrac{|\copr{D_G}{\dep,Q}|}{3^{|V_G|}}\right)\ =\ 3^{|V_G|} - |\copr{D_G}{\dep,Q}|.
\]
Since $D_G$ has $3^{|V_G|}$ candidate repairs w.r.t.~$\dep$, we can conclude that $\mathsf{HOM}(G)$ is precisely the cardinality of the set $\copr{D_G}{\dep,\neg Q}$, which collects the candidate repairs $D$ of $D_G$ w.r.t.~$\dep$ such that $D \not \models Q$. We proceed to show that:

\begin{lemma}\label{lem:keys-aux}
	$|\mathsf{hom}(G,H)|\ =\ |\copr{D_G}{\dep,\neg Q}|$.
\end{lemma}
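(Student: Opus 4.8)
The plan is to establish a bijection between $\mathsf{hom}(G,H)$ and $\copr{D_G}{\dep,\neg Q}$, the set of candidate repairs of $D_G$ w.r.t.~$\dep$ that do \emph{not} satisfy $Q$. The key observation is that candidate repairs of $D_G$ w.r.t.~$\dep$ are in natural correspondence with functions $g : V_G \to \{0, 1, \star\}$, where $\star$ means ``neither $V(u,0)$ nor $V(u,1)$ is kept'', $0$ means ``$V(u,0)$ is kept'', and $1$ means ``$V(u,1)$ is kept''. (The facts $E(u,v)$ and $T(1)$ are never involved in a violation of the single key $V : A \to B$, so they survive in every repair; and since $D_G$ is built so that for each $u$ the pair $\{V(u,0), V(u,1)\}$ is exactly one conflict class, each repair makes an independent choice among the three options for each $u$. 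This is the same counting argument already used in the text to get $3^{|V_G|}$.) Under the evident identification of $\{0,1,\star\}$ with $V_H = \{0,1,?\}$, a repair $D$ corresponds to a map $g_D : V_G \to V_H$, and the map $D \mapsto g_D$ is a bijection from $\copr{D_G}{\dep}$ onto the set of \emph{all} functions $V_G \to V_H$.

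\textbf{The core step} is then to show that, for a repair $D$, we have $D \not\models Q$ if and only if $g_D$ is a homomorphism from $G$ to $H$. For the ``if'' direction: suppose $D \models Q$, so there is a homomorphism $h$ from $Q$ to $D$, giving an edge $\{u,v\} \in E_G$ (from $E(x,y)$) and a value $z$ with $V(u,z), V(v,z), T(z) \in D$. Since $T(1)$ is the only $T$-fact, $z = 1$; hence $g_D(u) = g_D(v) = 1$, i.e.~$\{g_D(u), g_D(v)\} = \{1\}$, which is \emph{not} an edge of $H$ because the loop $(1,1)$ is missing. So $g_D$ fails to be a homomorphism. For the ``only if'' direction: suppose $g_D$ is not a homomorphism, so there is an edge $\{u,v\} \in E_G$ with $\{g_D(u), g_D(v)\} \notin E_H$. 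By the definition of $E_H$ (all pairs except $\{1,1\}$, and note there is no loop-condition issue at $0$ or $?$ since those loops \emph{are} present), the only way this happens is $g_D(u) = g_D(v) = 1$, meaning $V(u,1), V(v,1) \in D$; together with $E(u,v) \in D$ (it survives) and $T(1) \in D$, this is exactly a homomorphic image of $Q$, so $D \models Q$. Combining, $D \in \copr{D_G}{\dep,\neg Q}$ iff $g_D \in \mathsf{hom}(G,H)$, and since $D \mapsto g_D$ is a bijection onto all functions $V_G \to V_H$, it restricts to a bijection between $\copr{D_G}{\dep,\neg Q}$ and $\mathsf{hom}(G,H)$. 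This gives $|\mathsf{hom}(G,H)| = |\copr{D_G}{\dep,\neg Q}|$, as required.

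\textbf{The main subtlety} to be careful about is the handling of the $\star$ / ``$?$'' option and the loop structure of $H$: one must verify that when $g_D(u) = \star$ (i.e.~both $V(u,0)$ and $V(u,1)$ are dropped), the node $u$ can never participate in a satisfying assignment for $Q$ (true, since $Q$ requires a $V(u,z)$ fact), and that the vertices $0, ? \in V_H$ carry loops in $H$ (they do, since $(0,0), (?,?) \neq (1,1)$) so that an edge $\{u,v\}$ with, say, $g_D(u) = g_D(v) = 0$ does not break the homomorphism — consistent with the fact that $V(u,0), V(v,0), T(0)$ is \emph{not} a match for $Q$ because $T(0) \notin D_G$. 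I would state these checks explicitly but they are routine once the dictionary between repairs and colourings is set up; no real obstacle is expected beyond getting the loop bookkeeping of $H$ exactly right.
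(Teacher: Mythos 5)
Your proof is correct and takes essentially the same route as the paper: the same dictionary between candidate repairs of $D_G$ and colourings $V_G \to \{0,1,?\}$, combined with the observation that a repair entails $Q$ exactly when both endpoints of some edge receive colour $1$ (the missing loop of $H$). The only point you delegate to the earlier counting remark---that every colouring, in particular one assigning $\star$ to some nodes, is realized by an actual complete repairing sequence (a singleton removal for the $0/1$ choices and the pair removal $-\{V(u,0),V(u,1)\}$ for $?$)---is precisely what the paper's ``correctness'' step makes explicit via the sequence $s_h$, and it is indeed routine.
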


\begin{proof}
	It suffices to show that there exists a bijection from $\mathsf{hom}(G,H)$ to $\copr{D_G}{\dep,\neg Q}$. To this end, we define the mapping $\mu : \mathsf{hom}(G,H) \ra \PS(D_G)$ as follows: for each $h \in \mathsf{hom}(G,H)$,
	\begin{multline*}
		\mu(h)\ =\ \{V(u,\star) \mid u \in V_G \text{ and } h(u) = \star \in \{0,1\}\}\ \cup \\
		\{E(u,v) \mid \{u,v\} \in E_G\}\ \cup\ \{T(1)\}.
	\end{multline*}
	We proceed to show the following three statements:
	\begin{enumerate}
		\item $\mu$ is correct, that is, it is indeed a function from $\mathsf{hom}(G,H)$ to $\copr{D_G}{\dep,\neg Q}$.
		\item $\mu$ is injective.
		\item $\mu$ is surjective.
	\end{enumerate}
	
	\medskip 
	
	\noindent
	\paragraph{The mapping $\mu$ is correct.} Consider an arbitrary homomorphism $h \in \mathsf{hom}(G,H)$. We need to show that there exists a $(D_G,\dep)$-repairing sequence $s_h$ such that $\mu(h) = s_h(D_G)$, $s_h(D_G) \models \dep$ (i.e., $s_h$ is complete), and $Q(s_h(D_G)) = \emptyset$. Let $V_G = \{u_1,\ldots,u_n\}$. Consider the sequence $s_h = \op_1,\ldots,\op_n$ such that, for every $i \in [n]$:
	\[
	\op_{i}\ =\
	\begin{cases}
		-V(u_i,1) & \text{if } h(u_i) = 0 \\
		-V(u_i,0) & \text{if } h(u_i) = 1 \\
		-\{V(u_i,0),V(u_i,1)\} & \text{if } h(u_i) =\ ? \\
	\end{cases}
	\]
	In simple words, the homomorphism $h$ guides the repairing process, i.e., $h(u_i) = 0$ (resp., $h(u_i)=1$) implies $V(u_i,0)$ (resp., $V(u_i,1)$) should be kept, while $h(u_i) =\ ?$ implies none of the atoms $V(u_i,0),V(u_i,1)$ should be kept. It is easy to verify that $s_h$ is indeed a $(D_G,\dep)$-repairing sequence $s_h$ such that $\mu(h) = s_h(D_G)$ and $s_h(D_G) \models \dep$.
	The fact that $Q(s_h(D_G)) = \emptyset$ follows from the fact that, for every edge $\{u,v\} \in E_G$, $\{h(u),h(v)\} \in E_H$ cannot be the self-loop on node 1, since it is not in $H$. This implies that for every $\{u,v\} \in E_G$, it is not possible that the atoms $V(u,1),V(v,1)$ coexist in $s_h(D_G)$, which in turn implies that $Q(s_h(D_G)) = \emptyset$, as needed.

	\medskip

	\noindent
	\paragraph{The mapping $\mu$ is injective.} Assume that there are two distinct homomorphisms $h,h' \in \mathsf{hom}(G,H)$ such that $\mu(h) = \mu(h')$. By the definition of $\mu$, we get that $h(u) = h'(u)$, for every node $u \in V_G$. But this contradicts the fact that $h$ and $h'$ are different homomorphisms of $\mathsf{hom}(G,H)$. Therefore, for every two distinct homomorphisms $h,h' \in \mathsf{hom}(G,H)$, $\mu(h) \neq \mu(h')$, as needed.

	\medskip

	\noindent
	\paragraph{The mapping $\mu$ is surjective.} Consider an arbitrary candidate repair $D \in \copr{D_G}{\dep,\neg Q}$. We need to show that there exists $h \in \mathsf{hom}(G,H)$ such that $\mu(h) = D$. We define the mapping $h_D : V_G \ra V_H$ as follows: for every $u \in V_G$:
	\[
	h_D(u)\ =\
	\begin{cases}
		1 & \text{if } V(u,1) \in D \text{ and } V(u,0) \not\in D \\
		0 & \text{if } V(u,1) \not\in D \text{ and } V(u,0) \in D \\
		? & \text{if } V(u,1) \not\in D \text{ and } V(u,0) \not\in D \\
	\end{cases}
	\]
	It is clear that $h_D$ is well-defined: for every $u \in V_G$, $h_D(u) = x$ and $h_D(u) = y$ implies $x=y$. It is also clear that $\mu(h_D) = D$. It remains to show that $h_D \in \mathsf{hom}(G,H)$. Consider an arbitrary edge $\{u,v\} \in E_G$. By contradiction, assume that $\{h_D(u),h_D(v)\} \not\in E_H$. This implies that $h_D(u) = 1$ and $h_D(v) = 1$. Therefore, $D$ contains both atoms $V(u,1)$ and $V(v,1)$, which in turn implies that $Q(D) \neq \emptyset$, which contradicts the fact that $D \in \copr{D_G}{\dep,\neg Q}$. 
\end{proof}

Since $\mathsf{HOM}(G) = |\copr{D_G}{\dep,\neg Q}|$, Lemma~\ref{lem:keys-aux} implies
\[
\mathsf{HOM}(G)\ =\ |\mathsf{hom}(G,H)|,
\]
which shows that indeed $\mathsf{HOM}$ is a polynomial-time Turing reduction from $\#H\text{-}\mathsf{Coloring}$ to $\rrelfreq{\dep,Q}$.

\subsection{Proof of Item~(2) of Theorem~\ref{the:uniform-repairs}}

We prove that, for a set $\dep$ of primary keys, and a CQ $Q$, the problem $\rrelfreq{\dep,Q}$ admits an FPRAS. Our proof consists of two main steps, which we briefly explain before going into the detailed proofs.

The first step is to show that, given a database $D$, we can sample elements of $\copr{D}{\dep}$ uniformly at random in polynomial time in $||D||$. The existence of such an efficient sampler implies that we can employ Monte Carlo Sampling to obtain a \emph{polynomial-time randomized approximation with additive (or absolute) error} for $\rrelfreq{\dep,Q(\bar x)}$, that is, a randomized algorithm $\mathsf{A}$ that takes a input a database $D$, a tuple $\bar c \in \adom{D}^{|\bar x|}$, $\epsilon>0$, and $0<\delta<1$ runs in polynomial time in $||D||$, $||\bar c||$, $1/\epsilon$ and $\log(1/\delta)$, and produces a random variable $\mathsf{A}(D,\bar c,\epsilon,\delta)$ such that
\[
\pr\left(|\mathsf{A}(D,\bar c,\epsilon,\delta) - \orfreq{\dep,Q}{D,\bar c}|\ \leq\ \epsilon\right)\ \geq\ 1-\delta.
\]
More precisely, $\mathsf{A}(D,\bar c,\epsilon,\delta)$ samples $N = O\left(\frac{\log(\frac{1}{\delta})}{\epsilon^2}\right)$ elements of the set $\copr{D}{\dep}$, and returns the number
$
\frac{S}{N} \cdot |\copr{D}{\dep}|,
$
where $S$ is the number of sampled repairs $D'$ such that $\bar c \in Q(D')$.

\OMIT{
That is, given a database $D$, $\epsilon>0$, and $0<\delta<1$, we can sample $O(\frac{\log(\frac{1}{\delta})}{\epsilon^2})$ elements of $\copr{D}{\dep}$, and return the number:
\[A(D,\bar c,\epsilon,\delta)=\frac{S}{N}\times |\copr{D}{\dep}|\]
where $S$ is the number of sampled repairs $E$ that satisfy $\bar c\in Q(E)$. It is well known that:
\[
\text{\rm Pr}\left(|\mathsf{A}(D,\bar c,\epsilon,\delta) - \orfreq{\dep,Q}{D,\bar c}|\ \leq\ \epsilon\right)\ \geq\ 1-\delta.
\]
}

However, in general, the existence of an efficient sampler does not guarantee the existence of an FPRAS, which bounds the \emph{multiplicative (or relative)} error. 
In order to obtain an FPRAS via Monte Carlo Sampling, the number of samples should be proportional to $\frac{1}{\orfreq{\dep,Q}{D,\bar c}}$~\cite{KarpLuby2000}. 
This brings us to the second step of our proof, where we show that 
the ratio $\orfreq{\dep,Q}{D,\bar c}$ is never ``too small''. Formally, we show that, for every database $D$, it either holds that $\orfreq{\dep,Q}{D,\bar c}=0$ or  $\orfreq{\dep,Q}{D,\bar c}\ge \frac{1}{\mathsf{pol}(||D||)}$ for some polynomial $\mathsf{pol}$. In this case, we can use Monte Carlo Sampling (with a different, yet polynomial, number of samples) to obtain an FPRAS.

\def\nempt{\mathsf{ne}}
\def\empt{\mathsf{e}}

We now proceed to formally show the existence of an efficient sample, and the fact that target ratio is  never ``too small''


\subsubsection*{Step 1: Efficient Sampler}
The formal statement, already given in the main body of the paper, and its proof follow:

\begin{manuallemma}{\ref{lem:ur-sampler}}
Given a database $D$, and a set $\dep$ of primary keys, we can sample elements of $\copr{D}{\dep}$ uniformly at random in polynomial time in $||D||$.
\end{manuallemma}
\begin{proof}
For every relation name $R$ of the underlying schema with a primary key $R: X\rightarrow Y$ in $\dep$, we partition the set of facts of $D$ over $R$ into blocks of facts that agree on the values of all the attributes of $X$. Clearly, two facts that belong to the same block, always jointly violate the key of the corresponding relation; hence, an operational repair will contain, for every block $B$ with $|B|>1$, either a single fact of $B$ or none of the facts of $B$ (hence, there are $|B|+1$ possible options). An operational repair of the first type can be obtained, for example, via a sequence that removes the facts of $B$ one by one until there is one fact left. An operational repair of the second type can be obtained, for example, by removing the facts of $B$ one by one until there are two facts left, and then removing the last two facts together.
For a block $B$ such that $|B|=1$, there is no justified operation that removes the single fact of $B$; hence, this fact will appear in every operational repair. 

We denote all the blocks of $D$ (over all the relations of the schema) that have at least two facts by $B_1,\dots,B_n$.
To sample a repair of $\copr{D}{\dep}$, we select, for each block $B_i$, one of its $|B_{i}|+1$ possible outcomes, with probability $\frac{1}{|B_{i}|+1}$. Then, an operational repair is obtained by taking the union of all the selections, as well as all the facts of $D$ over every relation $R$ of the schema that has no primary key in $\dep$, and all the facts that belong to blocks consisting of a single fact. It is rather straightforward that the probability of obtaining each operational repair is $\frac{1}{|\copr{D}{\dep}|}$, as the number of operational repairs is
\[
|\copr{D}{\dep}|=(|B_1|+1)\times\dots\times (|B_n|+1),
\]
and the claim follows.
\end{proof}

We give a simple example that illustrates the proof of Lemma~\ref{lem:ur-sampler}.

\begin{figure}
    \centering
    \begin{tabular}{c||c|c}
         & $A_1$ & $A_2$ \\\hline\hline
        $f_{1,1}$ & $a_1$ & $b_1$ \\\hline
        $f_{1,2}$ & $a_1$ & $b_2$ \\\hline
        $f_{1,3}$ & $a_1$ & $b_3$ \\\hline
        $f_{2,1}$ & $a_2$ & $b_1$ \\\hline
        $f_{3,1}$ & $a_3$ & $b_1$ \\\hline
        $f_{3,2}$ & $a_3$ & $b_2$ \\
    \end{tabular}
    \caption{A database over $\{R/2\}$  that is inconsistent w.r.t.~the primary key $R:A_1\rightarrow A_2$.}
    \label{fig:example_pkeys}
\end{figure}

\begin{example}\label{example:primary_keys}
Consider the database depicted in Figure~\ref{fig:example_pkeys} over the schema $\{R/2\}$, with $(A_1,A_2)$ being the tuples of attributes of $R$, and the set $\dep=\{R:A_1\rightarrow A_2\}$ consisting of a single key. We write $f_{i,j}$ for $R(a_i,b_j)$. Clearly, for $j \neq k$,
$\{f_{i,j},f_{i,k}\}\not\models\dep$. The database consists of three blocks w.r.t.~$R:A_1\rightarrow A_2$: 
\[\{f_{1,1},f_{1,2},f_{1,3}\} \,\,\,\,\,\,\,\, \{f_{2,1}\} \,\,\,\,\,\,\,\, \{f_{3,1},f_{3,2}\}\]
Since the fact $f_{2,1}$ is not involved in any violations of the constraints, it will appear in every operational repair; however, every operational repair will contain at most one fact of the first block and at most one fact of the third block. The number of operational repairs according to the formula in the proof of Lemma~\ref{lem:ur-sampler} is then $(3+1)\times (2+1)=12$.
(Note that the blocks of size one are not considered in the computation.) Indeed, there are twelve repairs:
\begin{align*}
    &\{f_{2,1}\} \,\,\,\,\,\,\,\, \{f_{1,1},f_{2,1}\} \,\,\,\,\,\,\,\, \{f_{1,2},f_{2,1}\} \,\,\,\,\,\,\,\, \{f_{1,3},f_{2,1}\}\\
    &\{f_{2,1},f_{3,1}\} \,\,\,\,\,\,\,\, \{f_{1,1},f_{2,1},f_{3,1}\} \,\,\,\,\,\,\,\, \{f_{1,2},f_{2,1},f_{3,1}\} \,\,\,\,\,\,\,\, \{f_{1,3},f_{2,1},f_{3,1}\}\\
    &\{f_{2,1},f_{3,2}\} \,\,\,\,\,\,\,\, \{f_{1,1},f_{2,1},f_{3,2}\} \,\,\,\,\,\,\,\, \{f_{1,2},f_{2,1},f_{3,2}\} \,\,\,\,\,\,\,\, \{f_{1,3},f_{2,1},f_{3,2}\}
\end{align*}
The repair $\{f_{1,1},f_{2,1},f_{3,1}\}$, for example, is obtained by keeping the fact $f_{1,1}$ of the first block with probability $\frac{1}{4}$ (as there are three facts in the block, there are four possible options: \emph{(1)} keep $f_{1,1}$, \emph{(2)} keep $f_{1,2}$, \emph{(3)} keep $f_{1,3}$, or \emph{(4)} remove all the facts of the block), and  the fact $f_{3,1}$ of the third block with probability $\frac{1}{3}$. Hence, the probability of selecting this operational repair is $\frac{1}{4}\times \frac{1}{3}=\frac{1}{12}$, and the same holds for any other operational repair. \hfill\markfull
\end{example}

\subsubsection*{Step 2: Polynomial Lower Bound}
Now that we have an efficient sampler for the operational repairs, we proceed to show that there is a polynomial lower bound on $\orfreq{\dep,Q}{D,\bar c}$.
The formal statement, already given in the main body of the paper, and its proof follow:

\begin{manuallemma}{\ref{lem:ur-lower-bound}}
Consider a set $\dep$ of primary keys, and a CQ $Q(\bar x)$. For every database $D$, and tuple $\bar c \in \adom{D}^{|\bar x|}$,
	\[
	\orfreq{\dep,Q}{D,\bar c}\ \geq\ \frac{1}{(2 \cdot ||D||)^{||Q||}}
	\] 
	whenever $\orfreq{\dep,Q}{D,\bar c} > 0$.
\end{manuallemma}
\begin{proof}
By abuse of notation, we treat the CQ $Q$ as the set of atoms on the right-hand side of $\text{:-}$ (hence, $|Q|$ is the number of atoms occurring in $Q$). 
Consider a database $D$, and a tuple $\bar c \in \adom{D}^{|x|}$. 
If there is no homomorphism $h$ from $Q$ to $D$ such that $h(Q)\models\dep$ and $h(\bar x)=\bar c$, then 
it clearly holds that $\orfreq{\dep,Q}{D,\bar c}=0$.

Consider now the case that such a homomorphism $h$ exists. Assuming that $Q=\{R_i(\bar y_i)\mid i \in [n]\}$, let $h(Q) = \{R_i(h(\bar y_i)) \mid i \in [n]\}$. Assume that $|h(Q)|=m$ for some $m\le |Q|$. For every relation name $R$ of the schema with a key $R:X\rightarrow Y$ in $\dep$, we partition the set of facts of $D$ over $R$ into blocks of facts that agree on the values of all the attributes of $X$. For a relation name $R$ with no key in $\dep$, we assume that every fact is a separate block.
Let $B_1,\dots,B_n$ be the blocks of $D$ w.r.t.~$\dep$ (over all the relation names of the schema). We assume, without loss of generality, that the facts of $h(Q)$ belong to the blocks $B_1,\dots,B_m$. Clearly, no two facts of $h(Q)$ belong to the same block; otherwise,  $h(Q) \not\models \dep$, which is a contradiction.
 
 Let $R_{D,\dep,h(Q)}^\nempt$ be the set of repairs $E\in \copr{D}{\dep}$ such that $E\cap B_j\neq\emptyset$ for every $j\in[m]$. Let $R_{D,\dep,h(Q)}^\empt$ be the set of repairs $E\in \copr{D}{\dep}$ such that $E\cap B_j=\emptyset$ for some $j \in [m]$. Clearly,
 \[
 \left|\copr{D}{\dep}\right|\ =\ \left|R_{D,\dep,h(Q)}^\nempt\right| + \left|R_{D,\dep,h(Q)}^\empt\right|.
 \]
 Now, consider a repair $E$ of $R_{D,\dep,h(Q)}^\empt$, and assume that $E$ is disjoint with
 precisely $\ell$ blocks of $\{B_1,\dots,B_m\}$. Assume, without loss of generality, that these are the blocks $B_1,\dots,B_\ell$. We can transform the repair $E$ into a repair $E'\in R_{D,\dep,h(Q)}^\nempt$ by bringing back an arbitrary fact of each block $B_j$ for $j\in[\ell]$. Therefore, the repair $E$ is mapped to $|B_1|\times\dots\times |B_\ell|$ distinct repairs of $R_{D,\dep,h(Q)}^\nempt$.
 
Observe that at most $2^m-1$ repairs $E\in R_{D,\dep,h(Q)}^\empt$ are mapped to the same repair $E'\in R_{D,\dep,h(Q)}^\nempt$. This holds since the repair $E'$ determines, for every block $B_j$ that is not one of $B_1,\dots,B_m$, whether we keep a fact of $B_j$ in the repair and which fact of $B_j$ we keep. For the blocks $B_1,\dots,B_m$, a repair $E$ that is mapped to $E'$ can either contain the same fact as $E'$ contains from this block, or none of the facts of the block. Hence, there are two possibilities for each block of $\{B_1,\dots,B_m\}$ and the total number of possibilities is $2^m$. However, we have to disregard one of these possibilities, as it represents $E'$ itself (where for every block of $\{B_1,\dots,B_m\}$ we keep the same fact as $E'$). We conclude that
 \[
 \left|R_{D,\dep,h(Q)}^\empt\right| \le (2^m-1)\times \left|R_{D,\dep,h(Q)}^\nempt\right|
 \]
 and
 \begin{align*}
     |\copr{D}{\dep}|&= \left|R_{D,\dep,h(Q)}^\empt\right|+  \left|R_{D,\dep,h(Q)}^\nempt\right|\\
     &\le (2^m-1)\times\left|R_{D,\dep,h(Q)}^\nempt\right|+ \left|R_{D,\dep,h(Q)}^\nempt\right|\\
     &=2^m\times\left|R_{D,\dep,h(Q)}^\nempt\right|.
 \end{align*}
 Note that $(2^m-1)\times \left|R_{D,\dep,h(Q)}^\nempt\right|$ is just an upper bound on $\left|R_{D,\dep,h(Q)}^\empt\right|$ because, as said above, each repair $E$ of $R_{D,\dep,h(Q)}^\empt$ is mapped to several distinct repairs of $R_{D,\dep,h(Q)}^\nempt$.
 
 Finally, each repair of $R_{D,\dep,h(Q)}^\nempt$ keeps a fact of every block in $\{B_1,\dots,B_m\}$. Here, we are interested in the repairs that keep all the facts of $h(Q)$, as these repairs $E$ satisfy $\bar c\in Q(E)$. Clearly,
 \[|\{E\in \copr{D}{\dep}\mid h(Q)\subseteq E\}|=\frac{1}{|B_1|\times\dots\times |B_m|} \times \left|R_{D,\dep,h(Q)}^\nempt\right|
 \]
 as all the facts of a block are symmetric.
 Hence, we conclude that:
\begin{align*}
   \frac{|\{E\in \copr{D}{\dep}\mid h(Q)\subseteq E\}|}{|\copr{D}{\dep}|}&\ge \frac{\frac{1}{|B_1|\times\dots\times |B_m|} \times \left|R_{D,\dep,h(Q)}^\nempt\right|}{2^{m}\times \left|R_{D,\dep,h(Q)}^\nempt\right|}\\
   &=\frac{1}{|B_1|\times\dots\times |B_m|\times 2^{m}}\\
   &\ge \frac{1}{|D|^m\times 2^{m}}\ge \frac{1}{|D|^{|Q|}\times 2^{|Q|}}\\
   &=\frac{1}{(2|D|)^{|Q|}}\ge \frac{1}{(2||D||)^{||Q||}}
\end{align*}
and this is clearly a lower bound on $\orfreq{\dep,Q}{D,\bar c}$, as needed.
\end{proof}

Here is a simple example that illustrates the argument given in the proof of Lemma~\ref{lem:ur-lower-bound}.

\begin{example}\label{example:pkeys_ur}
Consider again the database $D$ depicted in Figure~\ref{fig:example_pkeys}, and the set $\dep=\{R:A_1\rightarrow A_2\}$ consisting of a single key.
Let $Q$ be the CQ $\textrm{Ans}(x)\ \text{:-}\ R(a_1,x)$. 
A homomorphism $h$ from $Q$ to $D$ with $h(Q)\models\dep$ and $h(x)=b_1$ is such that $h(Q)=\{R(a_1,b_1)\}$. The fact $R(a_1,b_1)$ belongs to the block $\{f_{1,1},f_{1,2},f_{1,3}\}$. Hence, the set $R_{D,\dep,h(Q)}^\nempt$ consists of the repairs:
\begin{align*}
    &\{f_{1,1},f_{2,1}\} \,\,\,\,\,\,\,\, \{f_{1,2},f_{2,1}\} \,\,\,\,\,\,\,\, \{f_{1,3},f_{2,1}\}\\
    & \{f_{1,1},f_{2,1},f_{3,1}\} \,\,\,\,\,\,\,\, \{f_{1,2},f_{2,1},f_{3,1}\} \,\,\,\,\,\,\,\, \{f_{1,3},f_{2,1},f_{3,1}\}\\
    & \{f_{1,1},f_{2,1},f_{3,2}\} \,\,\,\,\,\,\,\, \{f_{1,2},f_{2,1},f_{3,2}\} \,\,\,\,\,\,\,\, \{f_{1,3},f_{2,1},f_{3,2}\}
\end{align*}
and the set $R_{D,\dep,h(Q)}^\empt$ consists of the repairs:
\begin{align*}
    &\{f_{2,1}\} \,\,\,\,\,\,\,\, \{f_{2,1},f_{3,1}\} \,\,\,\,\,\,\,\, \{f_{2,1},f_{3,2}\}
\end{align*}

According to the mapping defined in the proof of Lemma~\ref{lem:ur-lower-bound}, the repair $\{f_{2,1}\}$ is mapped to the repairs in \[\{\{f_{1,1},f_{2,1}\},\{f_{1,2},f_{2,1}\},\{f_{1,3},f_{2,1}\}\}\] 
that have one additional fact from the block of $R(a_1,b_1)$. Similarly, the repair $\{f_{2,1},f_{3,1}\}$ is mapped to the repairs in 
\[\{\{f_{1,1},f_{2,1},f_{3,1}\},\{f_{1,2},f_{2,1},f_{3,1}\},\{f_{1,3},f_{2,1},f_{3,1}\}\}\] and the repair $\{f_{2,1},f_{3,2}\}$ is mapped to the repairs in \[\{\{f_{1,1},f_{2,1},f_{3,2}\},\{f_{1,2},f_{2,1},f_{3,2}\},\{f_{1,3},f_{2,1},f_{3,2}\}\}.\]
Hence, each repair of $R_{D,\dep,h(Q)}^\empt$ is mapped to precisely three repairs of $R_{D,\dep,h(Q)}^\nempt$, since three is the size of the block of $R(a_1,b_1)$. Moreover, in this case, $2^m-1=1$, and a single repair of $R_{D,\dep,h(Q)}^\empt$ is mapped to every repair of $R_{D,\dep,h(Q)}^\nempt$.

Since all the facts of a block are symmetric with each other, precisely $\frac{1}{3}$ of the repairs in $R_{D,\dep,h(Q)}^\nempt$ contain the fact $R(a_1,b_1)$---three repairs. Thus, it holds that
\[
|\{E\in \copr{D}{\dep}\mid h(Q)\subseteq E\}|\ =\ \frac{1}{3}\times 9=3\]
and
\[|\copr{D}{\dep}|\ =\ 12.\]
We conclude that
\[\frac{|\{E\in \copr{D}{\dep}\mid h(Q)\subseteq E\}|}{|\copr{D}{\dep}|}\ =\ \frac{3}{12}\ =\ \frac{1}{4}.\]
Note that
\[\frac{1}{(2|D|)^{|Q|}}\ =\ \frac{1}{12}\]
is indeed a lower bound on that value, and it is also a lower bound on the ratio $\orfreq{\dep,Q}{D,(b_1)}$ that, in this case, equals $\frac{1}{4}$. \hfill\markfull
\end{example}

\subsection{Proof of Item~(3) of Theorem~\ref{the:uniform-repairs}}
As discussed in the main body of the paper, the proof of item (3) of Theorem~\ref{the:uniform-repairs} proceeds in two main steps, which correspond to Proposition~\ref{pro:ur-keys-no-fpras} and Lemma~\ref{lem:from-fds-to-keys}.
Before giving the formal proofs, we first need to introduce some auxiliary notions and results. In the sequel, we concentrate on undirected graphs without self-loops.

\subsubsection*{Auxiliary Notions and Results}
Consider an undirected graph $G = (V, E)$ and an integer $\Delta \ge 0$. We say that $G$ has {\em degree} $\Delta$ if each node of $V$ participates in at most $\Delta$ edges. Moreover, $G$ is connected if there is a path between every two nodes of $G$. We call $G$ {\em trivially connected} if $|V| \le 1$; otherwise, it is {\em non-trivially connected}. Finally, $\IS(G)$ denotes the set of \emph{all} independent sets of $G$.

For a database $D$ and a set $\dep$ of FDs, the \emph{conflict graph of $D$ w.r.t.\ $\dep$} is the undirected graph $\cg{D}{\dep} = (V,E)$, where $V=D$, and $\{f,g\} \in E$ if $\{f,g\} \not \models \dep$. We call $D$ non-trivially (resp., trivially) $\dep$-connected if $\cg{D}{\dep}$ is non-trivially (resp., trivially) connected.

In order to prove the desired claims, we establish an auxiliary result that relates the number of candidate repairs of an inconsistent database that is non-trivially connected with the number of independent sets of the underlying conflict graph; this is Lemma~\ref{lem:corepairs-independent-sets} in the main body, which we recall and prove here:

\begin{manuallemma}{\ref{lem:corepairs-independent-sets}}
	Consider a non-trivially $\dep$-connected database $D$, where $\dep$ is a set of FDs. It holds that $|\copr{D}{\dep}| = |\IS(\cg{D}{\dep})|$.
\end{manuallemma}
\begin{proof}	
	$(\subseteq)$ Consider a candidate repair $D' \in \copr{D}{\dep}$. By definition, $D'$ is consistent, i.e., there are no two facts $f,g \in D'$ such that $\{f,g\} \not\models \dep$. By definition of the conflict graph of $D$ w.r.t.~$\dep$, we conclude that no two facts $f,g \in D'$ are connected via an edge in $\cg{D}{\dep}$. Hence, $D'$ is an independent set of $\cg{D}{\dep}$.
	
	$(\supseteq)$ Consider now an independent set $D' \in \IS(\cg{D}{\dep})$. Since there are no two facts $f,g \in D'$ that are connected via an edge of $\cg{D}{\dep}$, $D'$ is consistent w.r.t.\ $\dep$. It remains to show that there exists a sequence $s \in \crs{D}{\dep}$ such that $s(D) = D'$; we distinguish the two cases where either $D'\neq \emptyset$ or $D'=\emptyset$.
	
	\medskip
	\noindent \textbf{Case 1.}
	Let us first concentrate on the case where $D' \neq \emptyset$. In order to define the repairing sequence $s \in \crs{D}{\dep}$ such that $s(D)=D'$, we first define a convenient stratification of the facts of $D$. We inductively define the strata $L_0,L_1,\ldots$ as follows: 
	\begin{itemize}
		\item $L_0 = D'$. 
		\item For each $i \geq 1$,
	\begin{align*}
		L_i = \{f\in D\ |\ &f\not\in L_0\cup\ldots\cup L_{i-1}
		\textrm{ and }\\ 
		&\textrm{there is } f'\in L_{i-1}\textrm{ with }\{f,f'\} \not \models \dep\}.
	\end{align*}
	\end{itemize}
	Observe that, since $\cg{D}{\dep}$ is connected, each fact $f \in D$ occurs in some $L_i$, i.e., if $n$ is the smallest integer such that $L_\ell = \emptyset$, for $\ell > n$, we have that $D = \bigcup^n_{i = 0} L_i$.
	
	Let $L_i = \left\{f^i_1,\ldots,f^i_{|L_i|}\right\}$, for each $i \in [n]$. We now construct the desired sequence $s$ as follows. We let the first $|L_n|$ operations be $-f^n_1,\ldots,-f^n_{|L_n|}$. To see that $-f^n_j$ is a $(D^{s}_{j-1},\dep)$-justified operation for every $1\leq j\leq |L_n|$, observe that, by definition of $L_n$, each $f^n_j$ is in a violation with some fact in $L_{n-1}$, which has not been removed yet. The operations $\mathit{op}_{|L_n|+1},\ldots,\mathit{op}_{|L_n|+|L_{n-1}|}$ will be $-f^{n-1}_1,\ldots,-f^{n-1}_{|L_{n-1}|}$, which are all justified because there exists a violation for each fact with some fact from $L_{n-2}$ that has not been removed yet. The next operations of $s$ are defined in the same way for the remaining strata, until the last $|L_1|$ operations, which will be $-f^1_1,\ldots,-f^1_{|L_1|}$. Again, these operations are justified since, by definition, each of the facts $f^1_1,\ldots,f^1_{|L_1|}$ is in conflict with some fact from $L_0=D'$. 
	Summing up, the sequence $s$ is 
	\[ -f^n_1,\ldots,-f^n_{|L_n|},-f^{n-1}_1,\ldots,-f^{n-1}_{|L_{n-1}|},\ldots,-f^1_1,\ldots,-f^1_{|L_1|}.
	\]
	We have that $s \in \rs{D}{\dep}$ and that $s(D) = D' \models \dep$. Hence, $s \in \crs{D}{\dep}$, which implies
	$D'\in\copr{D}{\dep}$, as needed. 
	
	\medskip
	\noindent \textbf{Case 2.}
	The case where $D'=\emptyset$ is treated similarly. We only need to slightly adjust the last operation of the sequence $s$. Fix some fact $f^*\in D$. We stratify the facts of $D$ as in the first case, but we let $L_0=\{f^*\}$. We then define $s$ in the same way as above. Let $L_n$ be again the last non-empty stratum. Since $\cg{D}{\dep}$ is non-trivially connected, $D \not \models \dep$, and thus, we have that $n>0$, i.e., at least stratum $L_1$ is non-empty. 
	Then, we have that the first $|L_n|$ operations are $-f^n_1,\ldots,-f^n_{|L_n|}$. We continue with the remaining strata $L_{n-1}$ to $L_2$ as before. The last $|L_1|$ operations are defined as $-f^1_1,\ldots,-f^1_{|L_1|-1},-\left\{f^1_{|L_1|},f^*\right\}$. Note that, by definition of $L_1$, every fact $f^1_1,\ldots,f^1_{|L_1|-1}$ is in a violation with $f^*$, and thus, their removal is a justified operation. Now, there are only two facts left, $f^1_{|L_1|}$ and $f^*$, which together violate $\dep$ (recall that $L_1$ is non-empty). Therefore, $-\left\{f^1_{|L_1|},f^*\right\}$ is a justified operation, and we have that $s \in \rs{D}{\dep}$ and $s(D) = D' = \emptyset \models \dep$. Hence, $s \in \crs{D}{\dep}$, which in turn implies that $D'\in\copr{D}{\dep}$, as needed.
\end{proof}

We are now ready to proceed with the two main steps of the proof of item (3) of Theorem~\ref{the:uniform-repairs}, which correspond to Proposition~\ref{pro:ur-keys-no-fpras} and Lemma~\ref{lem:from-fds-to-keys}, respectively. Note that both results are essentially dealing with the following counting problem for a set $\dep$ of FDs:

\medskip

\begin{center}
	\fbox{\begin{tabular}{ll}
			{\small PROBLEM} : & $\sharp \mathsf{CORep}^{\mathsf{con}}(\dep)$\\
			{\small INPUT} : & A non-trivially $\dep$-connected database $D$.\\
			{\small OUTPUT} : &  The number $|\copr{D}{\dep}|$.
	\end{tabular}}
\end{center}

\medskip

\subsubsection*{Step 1: An Inapproximability Result About Keys.} The formal statement, already given in the main body, and its proof follow. Note that the statement of Proposition~\ref{pro:ur-keys-no-fpras} given below is more compact than the one given in the main body of the paper since we explicitly use the name of the problem $\sharp \mathsf{CORep}^{\mathsf{con}}(\dep)$.

\OMIT{
\begin{manualproposition}{\ref{pro:ur-keys-no-fpras}}
		Unless ${\rm RP} = {\rm NP}$, there exists a set $\dep$ of keys over $\{R\}$ such that, given a non-trivially $\dep$-connected database $D$, the problem of computing $|\copr{D}{\dep}|$ does not admit an FPRAS.
\end{manualproposition}

\noindent Proposition~\ref{pro:ur-keys-no-fpras} essentially states the following: unless ${\rm RP} = {\rm NP}$, there exists a set $\dep$ of keys over $\{R\}$ such that $\sharp \mathsf{CORep}^{\mathsf{con}}(\dep)$ does not admit an FPRAS. 
To this end, we prove that, unless RP = NP, there is a set $\dep_K$ of keys such that $\sharp \mathsf{CORep}^{\mathsf{con}}(\dep_K)$ does not admit an FPRAS by showing that such an FPRAS could be converted into an FPRAS for counting the independent sets of non-trivially connected graphs of bounded degree. The proof will exploit Lemma~\ref{lem:corepairs-independent-sets}, and another auxiliary lemma stating that, unless ${\rm RP} = {\rm NP}$, counting the independent sets of non-trivially connected graphs with degree $\Delta \ge 6$ does not admit an FPRAS.

\begin{enumerate}
\item We prove that, unless RP = NP, there is a set $\dep_K$ of keys such that $\sharp \mathsf{CORep}^{\mathsf{con}}(\dep_K)$ does not admit an FPRAS by showing that such an FPRAS could be converted into an FPRAS for counting the independent sets of non-trivially connected graphs of bounded degree. The proof will exploit Lemma~\ref{lem:corepairs-independent-sets}, and another auxiliary lemma stating that, unless ${\rm RP} = {\rm NP}$, counting the independent sets of non-trivially connected graphs with degree $\Delta \ge 6$ does not admit an FPRAS.

\item We then prove that, unless RP = NP, there exists a set $\dep_F$ of FDs and a CQ $Q_F$ such that $\rrelfreq{\dep_F,Q_F}$ does not admit an FPRAS by showing that such an FPRAS could be converted into an FPRAS for $\sharp \mathsf{CORep}^{\mathsf{con}}(\dep_K)$, which would contradict This proof also exploits Lemma~\ref{lem:corepairs-independent-sets}.
\end{enumerate}

We start with the first part of the proof, which is Proposition~\ref{pro:ur-keys-no-fpras} from the main body; here we restate it using the notation we introduced for the problem $\#\mathsf{CORep}^{\mathsf{con}}(\dep)$.

}

\begin{manualproposition}{\ref{pro:ur-keys-no-fpras}}
Unless ${\rm RP} = {\rm NP}$, there exists a set $\dep$ of keys over $\{R\}$ such that $\sharp \mathsf{CORep}^{\mathsf{con}}(\dep)$ does not admit an FPRAS.
\end{manualproposition}


Before giving the proof of Proposition~\ref{pro:ur-keys-no-fpras}, we need an auxiliary result about the problem of counting the number of independent sets of undirected graphs.
For an integer $\Delta \geq 0$, we define

\medskip

\begin{center}
	\fbox{\begin{tabular}{ll}
			{\small PROBLEM} : & $\sharp \IS_\Delta$\\
			{\small INPUT} : & An undirected graph $G$ of degree $\Delta$.\\
			{\small OUTPUT} : &  The number $|\IS(G)|$.
	\end{tabular}}
\end{center}

\medskip

We know from~\cite{Sly2010} that the following holds:

\begin{proposition}\label{pro:count-is}
	For every $\Delta \geq 6$, unless ${\rm RP} = {\rm NP}$, $\#\IS_\Delta$ does not admit an FPRAS.
\end{proposition}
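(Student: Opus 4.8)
The plan is to derive this from the statistical-physics picture of the hard-core (independent-set) model; indeed, Proposition~\ref{pro:count-is} is exactly Sly's theorem, and I would follow that route. Introduce the fugacity-weighted partition function $Z_G(\lambda) = \sum_{I \in \IS(G)} \lambda^{|I|}$, so that $|\IS(G)| = Z_G(1)$, and recall the uniqueness threshold of the hard-core model on the infinite $\Delta$-regular tree, $\lambda_c(\Delta) = \frac{(\Delta-1)^{\Delta-1}}{(\Delta-2)^{\Delta}}$. An elementary calculation (using that $\lambda_c(\Delta)$ is decreasing in $\Delta$, with $\lambda_c(5) = 256/243 > 1$ and $\lambda_c(6) = 3125/4096 < 1$) shows $\lambda_c(\Delta) < 1$ if and only if $\Delta \geq 6$. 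Hence counting independent sets in graphs of maximum degree $\Delta \geq 6$ corresponds to the \emph{non-uniqueness} regime, in which correlation decay fails; the goal is to turn this failure into an NP-hardness reduction. (For $\lambda < \lambda_c(\Delta)$, Weitz's correlation-decay algorithm gives an FPTAS, so the bound $\Delta \geq 6$ is sharp — a useful sanity check that one is reducing in the right regime.)

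The reduction uses gadgets built from random $\Delta$-regular bipartite graphs. The key structural fact is that, in the non-uniqueness regime, the hard-core Gibbs distribution on such a random graph is, with probability bounded below by an inverse polynomial (hence amplifiable), concentrated on two exponentially well-separated ``phases'': independent sets whose occupancy is biased toward one side of the bipartition or toward the other. A variable gadget therefore behaves like a Boolean variable, with a few low-degree ``port'' vertices left free. One then glues variable gadgets along ``edge gadgets'' that impose a soft preference for the two endpoints being in opposite phases. Given an instance $H$ of an NP-hard bounded-degree optimization problem — say a gap version of \textsc{Max-Cut} on cubic graphs — one builds a graph $G_H$ of maximum degree $\Delta$ by replacing each vertex of $H$ with a variable gadget and each edge with an edge gadget. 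Then $Z_{G_H}(1) = |\IS(G_H)|$ equals, up to a $(1 \pm o(1))$ multiplicative factor, a sum $\sum_{\sigma} W(\sigma)$ over phase assignments $\sigma : V(H) \to \{0,1\}$, where $\log W(\sigma)$ is, to leading order, an affine function of $\mathrm{cut}_H(\sigma)$. Consequently $Z_{G_H}(1)$ is governed by $\max_{\sigma} \mathrm{cut}_H(\sigma)$, so an FPRAS for $\sharp\IS_\Delta$ estimating $Z_{G_H}(1)$ to within, say, a factor $2$ would pin down this maximum up to an additive constant, solving a gap-\textsc{Max-Cut} that is NP-hard; standard amplification then yields $\mathrm{RP} = \mathrm{NP}$.

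The main obstacle is establishing the bimodal (``symmetry-broken'') structure of the Gibbs measure on a random $\Delta$-regular bipartite graph, together with concentration of $Z_G(1)$ around its mean. This is a second-moment argument: one shows $\mathbb{E}[Z_G(1)^2] = O\big(\mathbb{E}[Z_G(1)]^2\big)$, so that $Z_G(1) \geq c\,\mathbb{E}[Z_G(1)]$ with non-negligible probability, and — the delicate part — that the dominant contribution to $\mathbb{E}[Z_G(1)]$ comes from independent sets whose densities on the two sides lie at one of two symmetric maximizers of an explicit two-variable free-energy functional tied to the tree recursion $x \mapsto \lambda/(1+x)$; verifying that the symmetric stationary point stops being a maximizer exactly at $\lambda_c(\Delta)$ is a careful calculus exercise, and this is precisely where the threshold $\Delta \ge 6$ enters. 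The remaining work is to design the edge gadget so that its soft constraint couples to these phases with the correct sign and strength without destroying concentration, and to control the accumulation of the $(1 \pm o(1))$ errors across the polynomially many gadgets making up $G_H$. All of this is carried out in~\cite{Sly2010} (and refined in subsequent work), which is why we simply invoke it here.
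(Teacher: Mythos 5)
Your proposal is correct and, in the end, rests on exactly the same ground as the paper: Proposition~\ref{pro:count-is} is not proved here but is imported directly from~\cite{Sly2010}, which is the reference you also invoke after outlining its argument. Your sketch of Sly's route (the threshold computation showing $\lambda_c(\Delta)<1$ precisely for $\Delta\geq 6$, the second-moment/phase analysis of random $\Delta$-regular bipartite gadgets, and the reduction from \textsc{Max-Cut}) is an accurate description of the cited proof, so nothing further is needed.
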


Note that the above result states the inapproximability of $\sharp \IS_\Delta$ for arbitrary, not necessarily non-trivially connected graphs. However, for showing Proposition~\ref{pro:ur-keys-no-fpras}, we need the stronger version of Proposition~\ref{pro:count-is} that establishes the inapproximability of $\sharp \IS_\Delta$ even for non-trivially connected graphs.
Let $\sharp \ISC_\Delta$ be the problem defined as $\sharp \IS_\Delta$ with the difference that the input is a non-trivially connected undirected graph. We proceed to show the following:

\begin{lemma}\label{lem:is-con-inapprox}
	For every $\Delta \geq 6$, unless ${\rm RP} = {\rm NP}$, $\#\IS^{\mathsf{con}}_\Delta$ does not admit an FPRAS.
\end{lemma}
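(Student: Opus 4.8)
The plan is to reduce $\#\IS_\Delta$ to $\#\IS^{\mathsf{con}}_{\Delta'}$ for a slightly larger constant degree bound, which by Proposition~\ref{pro:count-is} suffices to rule out an FPRAS for the connected version (under ${\rm RP}\neq{\rm NP}$). The only obstacle is disconnectedness of the input: an arbitrary bounded-degree graph $G$ splits into connected components $G_1,\dots,G_k$, and $|\IS(G)| = \prod_{i=1}^k |\IS(G_i)|$. So first I would observe that it is enough to compute $|\IS(G_i)|$ for each component and multiply; each $G_i$ is already connected, but it may be trivial (a single isolated node), in which case $|\IS(G_i)|=2$ and we handle it directly without the oracle. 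Thus we may assume each $G_i$ has at least two nodes.

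The remaining task is to compute $|\IS(G')|$ for a connected graph $G'$ of degree at most $\Delta$ using only an oracle for $\#\IS^{\mathsf{con}}_{\Delta'}$ on non-trivially connected graphs. Here I would use a padding/gadget trick: attach to $G'$ a small connected gadget $K$ (disjoint from $G'$) via a single new edge $\{u,w\}$, where $u\in V(G')$ is a vertex of minimum degree and $w$ is a designated vertex of $K$; choose $K$ and $w$ so that adding the edge keeps the degree bound at some fixed $\Delta' = \Delta+O(1)$, and so that $K$ has two conveniently different ``boundary-conditioned'' independent-set counts. Concretely, if $a$ is the number of independent sets of $K$ that exclude $w$ and $b$ the number that include $w$, then for the augmented (still connected, non-trivial) graph $G' \cup K \cup \{\{u,w\}\}$ one gets $|\IS| = a\cdot|\IS(G')| + b\cdot|\IS(G'\text{ with }u\text{ deleted})|$ — a linear combination with known, nonzero coefficients. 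By running the oracle on two such augmentations with different gadgets (different $(a,b)$ ratios), or by also querying a graph that forces $u$ out (attach the gadget so that $w$'s inclusion is forbidden), we obtain a small invertible linear system in the unknowns $|\IS(G')|$ and $|\IS(G'-u)|$ and solve for $|\IS(G')|$. A cleaner variant: attach a pendant path of length two to $u$, i.e. a fresh path $u - x - y$; this keeps the graph connected, raises $u$'s degree by only $1$ (so $\Delta' = \Delta+1$ suffices if we pick $u$ of degree $<\Delta$, and $\Delta'=\Delta+1$ in general after first noting a minimum-degree vertex exists), and a direct transfer-matrix computation gives $|\IS(\text{augmented})| = 3|\IS(G')| + |\IS(G'-u)|$ and, attaching instead a single pendant edge $u-x$, gives $2|\IS(G')| + |\IS(G'-u)|$; subtracting yields $|\IS(G')|$ exactly.

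The reduction is plainly polynomial-time and makes $O(k)$ oracle calls on graphs of size linear in $||G||$ and of degree bounded by a constant depending only on $\Delta$. An FPRAS for $\#\IS^{\mathsf{con}}_{\Delta'}$ would therefore yield, via standard error-propagation through the polynomially-many multiplications and the fixed-size linear solves, an FPRAS for $\#\IS_{\Delta'}$ — one must only check the approximation errors compose correctly, which is routine since all coefficients are absolute constants and the product has polynomially many factors, so relative errors add up to at most a polynomial factor and can be absorbed by running each oracle call with accuracy $\epsilon/\mathrm{poly}$ and confidence boosted by the union bound. This contradicts Proposition~\ref{pro:count-is} for $\Delta'\geq 6$; since $\Delta'$ can be taken to be any value $\geq \max(6,\Delta+1)$, and the statement we want is for all $\Delta\geq 6$, we simply instantiate the argument with the reduction landing in degree $\Delta$ itself whenever the minimum-degree vertex chosen already has slack, and otherwise observe that the degree-$6$ hard instances of Proposition~\ref{pro:count-is} can be assumed (again by a trivial component-merging pre-processing) to leave room for a pendant edge, so the reduction stays within degree $\Delta$. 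The main obstacle, then, is purely the bookkeeping of keeping the degree bound intact while forcing connectivity; the counting identities themselves are elementary transfer-matrix facts.
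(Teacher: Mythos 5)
Your outer frame (split $G$ into connected components, account for each isolated vertex by a factor of $2$, estimate each remaining component separately, multiply, and run each estimate at accuracy $\epsilon/\mathrm{poly}$ with boosted confidence) is exactly the argument the paper uses: assuming an FPRAS $\mathsf{A}$ for $\sharp\ISC_\Delta$, one sets $\mathsf{A}'(G,\epsilon,\delta)=2^{\ell}\prod_{i}\mathsf{A}(\CC_i,\frac{\epsilon}{2n},\frac{\delta}{2n})$ and verifies the error composition via $(1-\frac{x}{2m})^m\geq 1-x$ and $(1+\frac{x}{2m})^m\leq 1+x$, contradicting Proposition~\ref{pro:count-is}. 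The problem is your middle section. Once a component has at least two vertices it is, by definition, a non-trivially connected graph of degree at most $\Delta$, hence already a legal input to the assumed FPRAS for $\sharp\ISC_\Delta$; there is no ``remaining task,'' and the pendant-edge/pendant-path gadgets and the $2\times 2$ linear solve are solving a problem that does not exist.

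The detour is not merely redundant: as you present it, it breaks the statement at the boundary. Attaching a pendant edge or path can raise the maximum degree to $\Delta+1$, so from inapproximability of $\sharp\IS_\Delta$ you would only conclude inapproximability of $\sharp\IS^{\mathsf{con}}_{\Delta+1}$; to get the lemma at $\Delta=6$ you would then need hardness of $\sharp\IS_5$, which is not available (counting independent sets in graphs of degree at most $5$ is efficiently approximable, which is precisely why the threshold in Proposition~\ref{pro:count-is} is $6$). Your proposed fix --- that the hard instances of Proposition~\ref{pro:count-is} ``leave room for a pendant edge'' --- is unjustified: the proposition guarantees nothing about slack at any vertex, and the hard instances may be $\Delta$-regular. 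A secondary point: recovering $|\IS(G')|$ by subtracting two FPRAS outputs is not automatically ``routine,'' since relative error is not preserved under subtraction; it happens to be safe here only because $|\IS(G'-u)|\leq|\IS(G')|$ makes the difference a constant fraction of both counts, and you would need to say so. All of these issues vanish if you simply delete the gadget step and feed the non-trivial components to the oracle directly, which is the paper's proof.
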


\OMIT{
\begin{proof}

	The problem $\#\IS_\Delta$, for some integer $\Delta \ge 0$, takes as input an undirected graph $G$ of degree $\Delta$, and asks for the number $|\IS(G)|$. It is well-known that, unless ${\rm RP} = {\rm NP}$, $\#\IS_\Delta$ does not admit an FPRAS, for all $\Delta \ge 6$.~\cite{Sly2010}
	To prove the claim, we rely on a reduction from a restricted version of $\#\IS_\Delta$.
	In particular, we first show that for each $\Delta \ge 6$, $\#\IS_\Delta$ remains inapproximable  even for non-trivially connected graphs. We denote this restriction of $\#\IS_\Delta$ as $\#\ISC_\Delta$. Although this result can be considered folklore, we could not find an explicit reference for it, and thus we prove it explicitly.
	
	\begin{lemma}\label{lem:is-con-inapprox}
		Unless ${\rm RP} = {\rm NP}$, $\#\IS^{\mathsf{con}}_\Delta$ has no FPRAS, for all $\Delta \ge 6$.
	\end{lemma}
}

	\begin{proof}
		By contradiction, assume that $\#\IS^{\mathsf{con}}_\Delta$ admits an FPRAS, for some $\Delta \ge 6$, i.e., there is a randomized algorithm $\mathsf{A}$ that takes as input a non-trivially connected graph $G = (V, E)$ of degree $\Delta$, $\epsilon > 0$, and $0 < \delta < 1$, runs in polynomial time in $||G||$, $1/\epsilon$, and $\log(1/\delta)$, and produces a random variable $\mathsf{A}(G,\epsilon,\delta)$ such that
		\[\pr\left((1-\epsilon)\cdot |\IS(G)| \leq \mathsf{A}(G,\epsilon,\delta) \leq (1+\epsilon) \cdot |\IS(G)|\right)\ \geq\ 1-\delta.\]
		From this, we can construct an FPRAS $\mathsf{A}'$ for $\#\IS_\Delta$ as follows. Given a graph $G=(V,E)$ of degree $\Delta$, let the connected components, i.e., the maximal connected subgraphs, of $G$ be $(\CC_i)_{1\leq i\leq n}$ with $\CC_i = (V_i,E_i)$.
		Furthermore, assume, w.l.o.g., that $\CC_1,\ldots,\CC_\ell$, are all the trivially connected components of $G$, for some $\ell \le n$.
		Given $G$, $\epsilon > 0$, and $0 < \delta < 1$, $\mathsf{A}'$ is defined as
		\[
		\mathsf{A}'(G,\epsilon,\delta)\ =\  2^\ell \cdot \prod\limits^n_{i=\ell+1} \mathsf{A}\left(\CC_i,\frac{\epsilon}{2n}, \frac{\delta}{2n}\right).
		\]
		Note that a run of $\mathsf{A}$ does not depend on any other run, and thus, the random variables $\mathsf{A}(\CC_i, \frac{\epsilon}{2n}, \frac{\delta}{2n})$ are independent from each other. It is also easy to see that 
		\[
		|\IS(G)|\ =\ 2^\ell \cdot \prod_{i=\ell+1}^{n}|\IS(\CC_i)|. 
		\]
		Therefore, since $\mathsf{A}$ is an FPRAS for $\#\ISC_\Delta$, we have that
		\begin{multline*}
			\pr\left(\left(1-\frac{\epsilon}{2n}\right)^n\cdot |\IS(G)| \leq \mathsf{A'}(G,\epsilon,\delta) \leq \right.\\\left. \left(1+\frac{\epsilon}{2n}\right)^n \cdot |\IS(G)|\right) \ge \left(1-\frac{\delta}{2n}\right)^n.
		\end{multline*}
		Finally, we know (see, e.g.,~\cite{Jerrum1986}) that
		the following inequalities hold: for $0\leq x\leq 1$ and $m\geq 1$,
		\[
		1 - x \leq \left(1-\frac{x}{2m}\right)^m \quad \text{and} \quad \left(1+\frac{x}{2m}\right)^m \leq 1+x.
		\]
		Consequently, 
		\[
			\pr\left((1-\epsilon)\cdot |\IS(G)| \leq \mathsf{A'}(G,\epsilon,\delta) \leq (1+\epsilon) \cdot |\IS(G)|\right) \ge 1 - \delta.
		\]
		Hence, $\mathsf{A}'$ fulfils the probabilistic guarantees required for an FPRAS. To confirm the desired running time of $\mathsf{A'}$, note that there are at most $n=|V|$ connected components of $G$, which can be computed in polynomial time via any textbook algorithm. Thus, since $\mathsf{A}$ is an FPRAS for $\#\ISC_\Delta$, for each each $i \in [n]$, the random variable $\mathsf{A}(\CC_i,\frac{\epsilon}{2n},\frac{\delta}{2n})$ can be computed in polynomial time w.r.t.\ $\CC_i$, $\frac{|V|}{\epsilon}$, and $\frac{|V|}{\delta}$. Since $\mathsf{A}'$ multiplies at most $|V|$ such random variables, $\mathsf{A'}$ is an FPRAS for $\sharp \IS_\Delta$, which contradicts Proposition~\ref{pro:count-is}.
	\end{proof}

	With Lemma~\ref{lem:is-con-inapprox} in place, we can now prove Proposition~\ref{pro:ur-keys-no-fpras}.
	
	\medskip

	\textsc{Proof of Proposition~\ref{pro:ur-keys-no-fpras}.}
	Consider a non-trivially connected undirected graph $G=(V,E)$ with degree $\Delta=6$. Let $\ins{S}$ be the schema consisting of the single relation name $\{R/\Delta+1\}$ with $(A_1,\ldots,A_{\Delta+1})$ being the tuple of attributes of $R$, and $\dep_K = \{\phi_1,\ldots,\phi_{\Delta+1}\}$ a set of keys over $\ins{S}$, where $\phi_i = R : A_i \ra \att{R}$ for each $i \in [\Delta +1]$.
	We show that, unless ${\rm RP} = {\rm NP}$, 
	we can construct a non-trivially $\dep_K$-connected database $D_G$ in polynomial time in $||G$|| such that $|\IS(G)| = |\copr{D_G}{\dep_K}|$. Hence, the existence of an FPRAS for $\sharp \mathsf{CORep}^{\mathsf{con}}(\dep_K)$ would imply the existence of an FPRAS for $\sharp \ISC_\Delta$, which in turn contradicts Lemma~\ref{lem:is-con-inapprox}.

	The key property that $D_G$ should enjoy is the following: there exists a bijection $\mu : V \ra D_G$ from the set of nodes of $G$ to the facts of $D_G$ such that $(u,v) \in E$ iff $\{\mu(u),\mu(v)\} \not \models \dep_K$. The latter immediately implies that $|\IS(G)| = |\IS(\cg{D_G}{\dep_K})|$, which, together with the fact that $G$ is non-trivially connected, and hence, $D_G$ is non-trivially $\dep_K$-connected, implies that $|\IS(G)| = |\copr{D_G}{\dep_K}|$ by Lemma~\ref{lem:corepairs-independent-sets}. The formal construction of $D_G$ follows.
	
	\medskip
	\noindent \paragraph{The Database $D_G$.}
	%
	It is known that the edges of $G$ are $(\Delta+1)$-colourable, and such a coloring can be constructed in polynomial time in $||G||$~\cite{Misra1992}. Therefore, we are able to efficiently assign the colours $C=\{c_1,\ldots, c_{\Delta+1}\}$ to the edges of $G$ in such a way that none of the nodes belongs to two distinct edges of the same colour. Let $M : E\rightarrow C$ be such a coloring.
	%
	The database $D_G$ is such that $\adom{D_G} = E \cup F$,
	where $F$ is a finite a set of constants with $E \cap F = \emptyset$, and has the following facts:
	\begin{enumerate}
		\item for each node $v \in V$, we add to $D_G$ a fact of the form $R(a^v_1,\ldots,a^v_{\Delta+1})$, and
		\item the constants of such facts are defined as follows:
		\begin{enumerate}
			\item for every edge $e=\{u,v\}\in E$, assuming that $M(e)=c_i$, we let $a^{u}_i=a^{v}_i=e$, and \label{secstep}
			\item for every $a^v_i$ not defined in the above step, we let $a^v_i=f$ for some constant $f\in F$ only to be used once.
		\end{enumerate}
	\end{enumerate}
	We use $R(\bar{a}^v)$ to denote $R(a^v_1,\ldots,a^v_{\Delta+1})$, for short. Note that every $a^v_i$ is well-defined since $v$ has at most one edge of colour $c_i$, and thus, $a^v_i$ is uniquely defined by either the edge with colour $c_i$, or in case there is no such edge, by a fresh constant $f\in F$. Let us also stress that we can build $D_G$ in polynomial time in $||G||$.
	We can now prove the following crucial property of $D_G$:
	
	\begin{lemma}\label{lem:conflict-equiv}
		Consider two nodes $u,v$ of $G$. Then, $\{u,v\}$ is an edge in $G$ iff $\{R(\bar{a}^u),R(\bar{a}^v)\} \not \models \dep_K$.
	\end{lemma}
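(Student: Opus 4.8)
The plan is to first translate ``$\{R(\bar{a}^u),R(\bar{a}^v)\} \not\models \dep_K$'' into a statement about the coordinates of the two tuples, and then read off each direction directly from the construction of $D_G$. Since each $\phi_i$ is the key $R : A_i \ra \att{R}$, a set $\{R(\bar{a}^u),R(\bar{a}^v)\}$ fails $\phi_i$ exactly when $a^u_i = a^v_i$ while $R(\bar{a}^u) \neq R(\bar{a}^v)$ as tuples. Hence $\{R(\bar{a}^u),R(\bar{a}^v)\} \not\models \dep_K$ iff $R(\bar{a}^u) \neq R(\bar{a}^v)$ and $a^u_i = a^v_i$ for some $i \in [\Delta+1]$. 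I would also record here an auxiliary fact that is needed anyway for the bijection $\mu$ used just before the lemma, namely that $v \mapsto R(\bar{a}^v)$ is injective: if $u \neq v$, then for any coordinate $j$ the values $a^u_j$ and $a^v_j$ cannot coincide, since such a common value is not a constant of $F$ (those are used exactly once in $D_G$), so it would have to be an edge $e$ with $M(e) = c_j$ incident to both $u$ and $v$; but $G$ is simple, so there is at most one edge joining $u$ and $v$, and it carries a single colour, which cannot equal every $c_j$ because $\Delta + 1 \geq 2$.

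For the forward direction I would take $e = \{u,v\} \in E$ and set $M(e) = c_i$. By step~(\ref{secstep}) of the construction, $a^u_i = a^v_i = e$, so the two facts agree on attribute $A_i$; since $G$ has no self-loops we have $u \neq v$, so by the injectivity just noted the two facts are distinct. Therefore $\{R(\bar{a}^u),R(\bar{a}^v)\}$ violates $\phi_i$, and hence $\dep_K$.

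For the converse, suppose $\{R(\bar{a}^u),R(\bar{a}^v)\} \not\models \dep_K$. By the reformulation above, the two facts are distinct (so in particular $u \neq v$) and there is an index $i$ with $a^u_i = a^v_i$. This common value is not a fresh constant of $F$, so by the construction it must be an edge $e$ with $M(e) = c_i$ such that $u$ is an endpoint of $e$ and $v$ is an endpoint of $e$; since $u \neq v$ this forces $e = \{u,v\}$, i.e.\ $\{u,v\} \in E$, as desired.

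The whole argument is short once the right reformulation of ``two facts violate a key'' is available. The one point that needs a little care — and which I would flag as the (mild) crux — is excluding accidental coincidences between coordinates when $u \neq v$: that $a^u_j = a^v_j$ cannot come from a shared fresh constant (ruled out because each constant of $F$ occurs in $D_G$ only once) nor from a shared edge other than $\{u,v\}$ (ruled out because a proper edge colouring assigns each node at most one incident edge of colour $c_j$, so in the $j$-th coordinate a node ``sees'' at most the unique edge of that colour through it). Both points follow immediately from the two defining features of the construction, so no real obstacle remains.
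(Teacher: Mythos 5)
Your proof is correct and follows essentially the same route as the paper's: the forward direction reads the shared coordinate off the colour of the edge $\{u,v\}$, and the converse uses that a repeated constant at position $i$ can only be an edge of colour $c_i$ incident to both nodes, hence the edge $\{u,v\}$ itself. One phrasing slip worth fixing: in your injectivity remark, ``for any coordinate $j$ the values $a^u_j$ and $a^v_j$ cannot coincide'' is literally false when $\{u,v\}\in E$ (they do coincide at the coordinate of the edge's colour); what your justification actually shows, and all you use, is that they cannot coincide at \emph{every} coordinate, i.e.\ $R(\bar a^u)\neq R(\bar a^v)$.
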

	\begin{proof}
		Consider an edge $e=\{u,v\}$ in $G$ with $M(e)=c_i$, for some $i \in [\Delta+1]$. By construction of $D_G$, it is clear that there will be exactly two facts in $D_G$ such that the constant $e$ appears at position $i$ of those facts. These two facts will be precisely $R(a^u_1,\ldots,a^u_{\Delta+1})$ and $R(a^v_1,\ldots,a^v_{\Delta+1})$, having $a^u_i = a^v_i = e$. As there are no multiple edges between two vertices, the constants at the other positions will be pairwise different, i.e., $a^u_j \neq a^v_j$, for all $j \neq i$. Hence, 
		the two facts together violate $\phi_i$, and thus, $\{R(\bar{a}^u),R(\bar{a}^v)\} \not \models \dep_K$.
		
		Consider now two facts $R(a_1^{u},\ldots,a_{\Delta+1}^u)$ and $R(a_1^{v},\ldots,a_{\Delta+1}^v)$ that together violate $\phi_i = R: A_i \ra \att{R}$ for some $i \in [\Delta+1]$. Thus, the same constant appears at position $i$ in both facts, i.e., $a_i^u = a_i^v$. By construction of $D_G$, the only reason why $a^u_i = a^v_i$ is because $\{u,v\}$ is an edge of $G$, and the claim follows.
	\end{proof}
	
By Lemma~\ref{lem:conflict-equiv}, we get that $|\IS(G)| = |\IS(\cg{D_G}{\dep_K})|$, and that $D_G$ is non-trivially $\dep$-connected.
	Hence, by Lemma~\ref{lem:corepairs-independent-sets}, $|\IS(G)| = |\IS(\cg{D_G}{\dep_K})| = |\copr{D_G}{\dep_K}|$, and the claim follows. \qed

\subsubsection*{Step 2: Transferring FPRAS}
We proceed with the second and last step of the proof of item (3) of Theorem~\ref{the:uniform-repairs}, which corresponds to Lemma~\ref{lem:from-fds-to-keys}. The formal statement, already given in the main body, and its proof follow. Note that the statement of Lemma~\ref{lem:from-fds-to-keys} given below is more compact than the one given in the main body since we explicitly use the name of the problem $\sharp \mathsf{CORep}^{\mathsf{con}}(\dep_K)$, where $\dep_K$ is the set of keys provided by Proposition~\ref{pro:ur-keys-no-fpras}.


\begin{manuallemma}{\ref{lem:from-fds-to-keys}}
Assume that $\rrelfreq{\dep,Q}$ admits an FPRAS, for every set $\dep$ of FDs and CQ $Q$. Then, $\sharp \mathsf{CORep}^{\mathsf{con}}(\dep_K)$ admits an FPRAS.
\end{manuallemma}
\begin{proof}	
	By Proposition~\ref{pro:ur-keys-no-fpras}, unless RP = NP, $\dep_K$ is a set of keys over a schema $\{R/n\}$ such that $\sharp \mathsf{CORep}^{\mathsf{con}}(\dep_K)$ does not admit an FPRAS. 
	Let $\ins{S} = \{R'/m\}$, where $m = n+2$, and, assuming that $(A_1,\ldots,A_n)$ is the tuple of attributes of $R$, let $(A,B,A_1,\ldots,A_n)$ be the tuple of attributes of $R'$.
	We first show that there exist a set $\dep_F$ of FDs over $\ins{S}$, and a Boolean CQ $Q_F$ over $\ins{S}$ such that, for every non-trivially $\dep_K$-connected database $D$ over $\{R\}$, we can construct in polynomial time in $||D||$ a database $D_F$ over $\ins{S}$ such that 
	\begin{equation}\label{propA'}
	\orfreq{\dep_F,Q_F}{D_F,()}\ =\ \frac{1}{|\copr{D}{\dep_K}|+1}.\tag{$*$}
	\end{equation}
	By exploiting the above equation, the fact that $D_F$ can be constructed in polynomial time, and the FPRAS for $\rrelfreq{\dep_F,Q_F}$ (which exists by hypothesis), we will then explain how to devise an FPRAS for the problem $\sharp \mathsf{CORep}^{\mathsf{con}}(\dep_K)$.
	
	We start by explaining how $\dep_F$ and $D_F$ are defined in a way that	
	\[
	|\copr{D_F}{\dep_F}|\ =\ |\copr{D}{\dep_K}|+1.
	\] 
	We define the set $\dep_F$ of FDs over $\ins{S}$ as follows:
	\[
	\{R' : X \ra Y \mid R : X \ra Y \in \dep_K\} \cup \{R' : A \ra B\}.
	\] 
	Note that each key $\phi$ of $\dep_K$ over $R$ becomes an FD $\phi'$ over $R'$; indeed, $\phi'$ is not a key since $R'$ has two additional attributes.
	Now, given a non-trivially $\dep_K$-connected database $D$ over $\{R\}$, we define the database $D_F$ as follows with $a,b$ being constants not in $\adom{D}$:
	\[
	\{R'(a,b,a_1,\ldots,a_n) \mid R(a_1,\ldots,a_n) \in D\} \cup \{R'(a,a,\ldots,a)\}.
	\]
	For brevity, we will write $f^*$ for the fact $R'(a,a,\ldots,a)$.
	It is not difficult to verify that the number $|\copr{D_F}{\dep_F}|$ is the sum 
	\begin{multline*}
	|\{D' \in \copr{D_F}{\dep_F} \mid f^* \in D'\}|\ + \\
	|\{D' \in \copr{D_F}{\dep_F} \mid f^* \not \in D'\}|,
	\end{multline*}
	that is, the sum of the number of candidate repairs containing $f^*$ and the number of candidate repairs not containing $f^*$.
	It is not difficult to see that $\{f^*\}$ is the only candidate repair containing $f^*$. This is because $f^*$ is in a conflict with every other fact of $D_F$ due to the FD $R' : A \ra B$. Moreover, one can easily devise a sequence $s \in \crs{D_F}{\dep_F}$ removing all facts in $D_F \setminus \{f^*\}$ in an arbitrary order, and therefore obtaining $\{f^*\}$.

	Regarding the number of candidate repairs not containing $f^*$, observe that since $D$ is non-trivially $\dep_K$-connected, and since $f^*$ is in a conflict with every other fact of $D_F$, then $D_F$ is non-trivially $\dep_F$-connected. Therefore, by Lemma~\ref{lem:corepairs-independent-sets}, $\copr{D_F}{\dep_F} = \IS(\cg{D_F}{\dep_F})$.
	Since $\{f^*\}$ is the only candidate repair of $D_F$ containing $f^*$, and thus, the only independent set of $\cg{D_F}{\dep_F}$ containing $f^*$, the set of candidate repairs without $f^*$, i.e., $\{D' \in \copr{D_F}{\dep_F} \mid f^* \not \in D'\}$ coincides with $\IS(\cg{D_F \setminus \{f^*\}}{\dep_F})$. 
	Note that, by construction of $D_F$ and $\dep_F$, since $D$ is non-trivially $\dep_K$-connected, $D_F \setminus \{f^*\}$ is non-trivially $\dep_F$-connected, and thus, by Lemma~\ref{lem:corepairs-independent-sets}, $\IS(\cg{D_F \setminus \{f^*\}}{\dep_F}) = \copr{D_F \setminus \{f^*\}}{\dep_F}$.
	
	Finally, by construction of $D_F$ and $\dep_F$, we have that $|\copr{D_F\setminus \{f^*\}}{\dep_F}|=|\copr{D}{\dep_K}|$. In fact, it suffices to observe that two facts $R(a_1,\ldots,a_n), R(b_1,\ldots,b_n)\in D$ violate $\dep_K$ iff the corresponding facts $R'(a,b,a_1,\ldots,a_n), R'(a,b,b_1,\ldots,b_n)\in D_F\setminus \{f^*\}$ violate $\dep_F$.
	%
	%
	Hence, we conclude that
	\[
	|\copr{D_F}{\dep_F}|\ =\ |\copr{D}{\dep_K}| + 1.
	\]

	Let us now define the Boolean CQ $Q_F$ in such a way that the equation (\ref{propA'}) holds.
	We define $Q_F$ as the Boolean CQ 
	\[
	\textrm{Ans}()\ \text{:-}\ R'(x,x,\ldots,x).
	\]
	In simple words, $Q_F$ asks whether there exists a fact such that all the attributes have the same value. Clearly, the only candidate repair of $\copr{D_F}{\dep_F}$ that satisfies the query $Q_F$ is $\{f^*\}$, i.e.,
	\[
	\orfreq{\dep_F,Q_F}{D_F,()}\ =\ \frac{1}{|\copr{D_F}{\dep_F}|}.
	\]
	Since, as shown above, $|\copr{D_F}{\dep_F}| = |\copr{D}{\dep_K}| + 1$, we get that the equation (\ref{propA'}) holds, as needed.
	
	\OMIT{
	Thus, from the fact that $|\copr{D_F}{\dep_F}| = 1 + |\copr{D}{\dep_K}|$, the expression above follows.
	
	
	Our goal is to show that there exists a set $\dep_F$ of FDs over a schema $\ins{S} = \{R'/n\}$, where $n = m+2$, such that, for every non-trivially $\dep_K$-connected database $D$ over $\{R\}$, we can construct in polynomial time in $||D||$ a database $D_F$ over $\ins{S}$ such that 
	\[
	|\copr{D_F}{\dep_F}|\ =\ |\copr{D}{\dep_K}|+1.
	\] 
	
	Assume, w.l.o.g., that the tuple of attributes of $R$ is $(A_1,\ldots,A_n)$, and let $\ins{S} = \{R'/n+2\}$, where the tuple of attributes of $R'$ is $(A,B,A_1,\ldots,A_n)$.
	
	Consider now a non-trivially $\dep_K$-connected database $D$ over $\{R\}$, and let $a,b$ be two constants not occurring in $D$.
	
	The database $D_F$ is obtained from $D$ by replacing each fact of the form $R(a_1,\ldots,a_n)$ in $D$ with a fact of the form $R'(a,b,a_1,\ldots,a_n)$, and by adding the fact $f^*=R'(a,a,\ldots,a)$.
	
	Finally, we let $\dep_F = \{R' : \phi \mid R: \phi \in \dep_K\} \cup \{R': A \ra B\}$. 
	Note that each key $\dep_K$ of $R$ becomes an FD of $R'$ and this FD is not a key for $R'$.
	Then, number $|\copr{D_F}{\dep_F}|$ can be seen as the sum 
	\begin{multline*}
		|\{D' \in \copr{D_F}{\dep_F} \mid f^* \in D'\}| + \\
		|\{D' \in \copr{D_F}{\dep_F} \mid f^* \not \in D'\}|.
	\end{multline*}
	That is, the sum of the number of candidate repairs containing $f^*$ and the number of candidate repairs not containing $f^*$.
	
	It is easy to see that $\{f^*\}$ is the only candidate repair containing $f^*$, since $f^*$ is in a conflict with every other fact of $D_F$, because of $R' : A \ra B$. Moreover, one can easily devise a sequence $s \in \crs{D_F}{\dep_F}$ removing all facts in $D_F \setminus \{f^*\}$ in an arbitrary order, and thus obtain $\{f^*\}$.
	
	Regarding the number of candidate repairs not containing $f^*$, observe that since $D$ is non-trivially $\dep_K$-connected, and since $f^*$ is in a conflict with every other fact of $D_F$, then $D_F$ is non-trivially $\dep_F$-connected. Then, by Lemma~\ref{lem:corepairs-independent-sets}, $\copr{D_F}{\dep_F} = \IS(\cg{D_F}{\dep_F})$.
	Since $\{f^*\}$ is the only candidate repair of $D_F$ containing $f^*$, and thus the only independent set of $\cg{D_F}{\dep_F}$ containing $f^*$, the set of candidate repairs without $f^*$, i.e., $\{D' \in \copr{D_F}{\dep_F} \mid f^* \not \in D'\}$ coincides with $\IS(\cg{D_F \setminus \{f^*\}}{\dep_F})$. Note that by construction of $D_F$ and $\dep_F$, since $D$ is non-trivially $\dep_K$-connected, $D_F \setminus \{f^*\}$ is non-trivially $\dep_F$-connected, and thus, by Lemma~\ref{lem:corepairs-independent-sets}, $\IS(\cg{D_F \setminus \{f^*\}}{\dep_F}) = \copr{D_F \setminus \{f^*\}}{\dep_F}$.
	
	Finally, by construction of $D_F$ and $\dep_F$, we have that $|\copr{D_F\setminus \{f^*\}}{\dep_F}|=|\copr{D}{\dep_K}|$. In fact, it suffices to observe that two facts $R(a_1,\ldots,a_n), R(b_1,\ldots,b_n)\in D$ violate $\dep_K$ if and only if the corresponding facts $R'(a,b,a_1,\ldots,a_n), R'(a,b,b_1,\ldots,b_n)\in D_F\setminus \{f^*\}$ violate $\dep_F$.
	%
	%
	Hence, we conclude
	$$ |\copr{D_F}{\dep_F}| = 1 + |\copr{D}{\dep_K}|.$$

	We are now only left to devise the desired query. That is, we target a Boolean CQ $Q_F$ such that
	\begin{equation}\label{propA'}
		\orfreq{\dep_F,Q_F}{D_F,()} =\frac{1}{1+|\copr{D}{\dep_K}|}.\tag{$*$}
	\end{equation}
	Let $Q_F$ be the query $\textrm{Ans}()\ \text{:-}\ R'(x,x,\ldots,x)$, i.e. $Q_F$ asks whether there exists a fact such that all the attributes coincide. Clearly, the only candidate repair of $\copr{D_F}{\dep_F}$ which satisfies the query $Q_F$ is $\{f^*\}$. Thus, from the fact that $|\copr{D_F}{\dep_F}| = 1 + |\copr{D}{\dep_K}|$, the expression above follows.
}

	\medskip
	\noindent \paragraph{Building the FPRAS.} We proceed to devise an FPRAS for the problem $\sharp \mathsf{CORep}^{\mathsf{con}}(\dep_K)$ by exploiting the equation (\ref{propA'}), the fact that $D_F$ can be constructed in polynomial time, and the FPRAS $\mathsf{A}'$ for $\rrelfreq{\dep_F,Q_F}$ (which exists by hypothesis).


	Given a non-trivially $\dep_K$-connected database $D$, $\epsilon > 0$, and $0 < \delta < 1$, we define $\mathsf{A}$ as the following randomized procedure:
	\begin{enumerate}
		\item Compute $D_F$ from $D$;
		\item Let $\epsilon' = \frac{\epsilon}{2+\epsilon}$;
		\item Let $r = \max\left\{\frac{1-\epsilon'}{2\cdot(1+2^{|D|})},\mathsf{A}'(D_F,(),\epsilon',\delta)\right\}$;
		\item Output $\frac{1}{r} - 1$.
	\end{enumerate}
	
	We proceed to show that $\mathsf{A}$ is an FPRAS for $\sharp \mathsf{CORep}^{\mathsf{con}}(\dep_K)$. Since $D_F$ can be constructed in polynomial time in $||D||$, $\mathsf{A}(D,\epsilon,\delta)$ runs in polynomial time in $||D||$, $1/\epsilon$ and $\log(1/\delta)$ by definition.
	We now discuss the probabilistic guarantees.
	By assumption,
	\begin{multline*}
		\pr\left((1-\epsilon')\cdot\orfreq{\dep_F,Q_F}{D_F,()} \leq \mathsf{A}'(D_F,(),\epsilon',\delta) \right.\\ 
		\left.\leq (1+\epsilon') \cdot \orfreq{\dep_F,Q_F}{D_F,()}\right) \ge 1 - \delta.
	\end{multline*}
	Thus, it suffices to show that the left-hand side of the above inequality is bounded from above by
	\begin{multline*}
		\pr\left((1-\epsilon)\cdot|\copr{D}{\dep_K}|\leq \mathsf{A}(D,\epsilon,\delta) \leq\right.\\ 
		\left. (1+\epsilon) \cdot |\copr{D}{\dep_K}|\right).
	\end{multline*} 
	To this end, by equation (\ref{propA'}), we get that
	\begin{multline*}
		\pr\left((1-\epsilon')\cdot\orfreq{\dep_F,Q_F}{D_F,()} \leq \mathsf{A'}(D_F,(),\epsilon',\delta) \right.\\ 
		\left.\leq (1+\epsilon') \cdot \orfreq{\dep_F,Q_F}{D_F,()}\right) = \pr(E),
	\end{multline*}
	where $E$ is the event
	\begin{multline*}
		\frac{1-\epsilon'}{1+|\copr{D}{\dep_K}|} \leq \mathsf{A'}(D_F,(),\epsilon',\delta) \leq \frac{1+\epsilon'}{1+|\copr{D}{\dep_K}|}.
	\end{multline*}
	Note that $|\copr{D}{\dep_K}| \le 2^{|D|}$, i.e., $|\copr{D}{\dep_K}| $ is at most the the number of all possible subsets of $D$. Hence,
	\[
	\frac{1-\epsilon'}{1 + |\copr{D}{\dep_K}|} \ge \frac{1-\epsilon'}{1+2^{|D|}}.
	\]
	Thus, for $E$ to hold is necessary that the output of $A'(D_F,(),\epsilon',\delta)$ is no smaller than $\frac{1-\epsilon'}{1+2^{|D|}}$. Hence, for any number $p < \frac{1-\epsilon'}{1+2^{|D|}}$, $E$ coincides with the event
	\begin{multline*}
		\frac{1-\epsilon'}{1+|\copr{D}{\dep_K}|} \leq \max\left\{p, \mathsf{A'}(D_F,(),\epsilon',\delta)\right\}
		\leq\\
		\frac{1+\epsilon'}{1+|\copr{D}{\dep_K}|}.
	\end{multline*}
	Hence, with $p=\frac{1-\epsilon'}{2 \cdot (1+2^{|D|})} < \frac{1-\epsilon'}{1 + 2^{|D|}}$, we conclude that
	\begin{multline*}
		\pr(E)\ =\ \pr\left(\frac{1-\epsilon'}{1+|\copr{D}{\dep_K}|} \leq \max\left\{p, \mathsf{A'}(D_F,(),\epsilon',\delta)\right\}
		\leq \right.\\\left. \frac{1+\epsilon'}{1+|\copr{D}{\dep_K}|}\right).
	\end{multline*}
	Since the random variable $\max\{p,\mathsf{A}'(D_F,(),\epsilon',\delta)\}$ always outputs a rational strictly larger than $0$, the latter probability coincides with
	\begin{multline*}
		\pr\left(\frac{1+|\copr{D}{\dep_K}|}{1+\epsilon'} \leq \frac{1}{\max\left\{p, \mathsf{A'}(D_F,(),\epsilon',\delta)\right\}}
		\leq \right.\\\left. \frac{1+|\copr{D}{\dep_K}|}{1-\epsilon'}\right).
	\end{multline*}
	For short, let $X$ be the random variable $\frac{1}{\max\{p,\mathsf{A}'(D_F,(),\epsilon',\delta)\}}$. Since $\frac{1}{1-\epsilon'} = 1 + \frac{\epsilon'}{1-\epsilon'}$ and $\frac{1}{1+\epsilon} = 1 - \frac{\epsilon'}{1+ \epsilon'} \ge 1 - \frac{\epsilon'}{1-\epsilon'}$, the probability above is less or equal than
	\begin{multline*}
		\pr\left(\left(1-\frac{\epsilon'}{1-\epsilon'}\right) \cdot (1+|\copr{D}{\dep_K}|) \leq X 
		\leq \right.\\\left. \left(1+\frac{\epsilon'}{1-\epsilon'}\right) \cdot (1+|\copr{D}{\dep_K}|)\right).
	\end{multline*}
	If we subtract $1$ from all sides of the inequality, then the above probability coincides with
	\begin{multline*}
		\pr\left(\left(1-\frac{\epsilon'}{1-\epsilon'}\right) \cdot (1+|\copr{D}{\dep_K}|) - 1\leq X - 1 \leq  \right. \\ \left. \left(1+\frac{\epsilon'}{1-\epsilon'}\right) \cdot (1+|\copr{D}{\dep_K}|) - 1\right).
	\end{multline*}
	By expanding the products in the above expression, we obtain
	\begin{multline*}
		\pr\left(|\copr{D}{\dep_K}| - \frac{\epsilon'}{1-\epsilon'} -  \frac{\epsilon'}{1-\epsilon'} \cdot |\copr{D}{\dep_K}| \leq \right.\\
		X - 1 \leq \\ \left. |\copr{D}{\dep_K}| + \frac{\epsilon'}{1-\epsilon'} + \frac{\epsilon'}{1-\epsilon'} \cdot |\copr{D}{\dep_K}|\right).
	\end{multline*}
	Finally, since $|\copr{D}{\dep_K}| \ge 1$, we have that
	\[
	\frac{\epsilon'}{1-\epsilon'}\ \le\ \frac{\epsilon'}{1-\epsilon'} \cdot |\copr{D}{\dep_K}|.
	\] 
	Thus, the above probability is less or equal than
	\begin{multline*}
		\pr\left(|\copr{D}{\dep_K}| - 2 \cdot\frac{\epsilon'}{1-\epsilon'} \cdot |\copr{D}{\dep_K}| \leq \right.\\\left.
		X - 1 \leq  |\copr{D}{\dep_K}| + 2 \cdot \frac{\epsilon'}{1-\epsilon'} \cdot |\copr{D}{\dep_K}|\right),
	\end{multline*}
	which coincides with
	\begin{multline*}
		\pr\left( \left(1-2 \cdot\frac{\epsilon'}{1-\epsilon'}\right) \cdot |\copr{D}{\dep_K}| \leq \right.\\\left.
		X - 1 \leq  \left(1+ 2 \cdot \frac{\epsilon'}{1-\epsilon'}\right) \cdot |\copr{D}{\dep_K}|\right).
	\end{multline*}
	Recalling that $\epsilon' = \frac{\epsilon}{2+\epsilon}$, one can verify that $2 \cdot \frac{\epsilon'}{1-\epsilon'} = \epsilon$. Moreover, $X-1$ is $\mathsf{A}(D,\epsilon,\delta)$. Hence, the above probability coincides with
	\begin{multline*}
		\pr\left( (1-\epsilon) \cdot |\copr{D}{\dep_K}| \leq \mathsf{A}(D,\epsilon,\delta) \leq \right.\\\left.
		(1+ \epsilon) \cdot |\copr{D}{\dep_K}|\right).
	\end{multline*}
	Consequently, $\mathsf{A}$ is an FPRAS for $\#\mathsf{CORep}^{\mathsf{con}}(\dep_K)$, as needed.
\end{proof}

It is now straightforward to see that from Proposition~\ref{pro:ur-keys-no-fpras} and Lemma~\ref{lem:from-fds-to-keys}, we can conclude item~(3) of Theorem~\ref{the:uniform-repairs}.
\section{Proofs of Section~\ref{sec:uniform-sequences}}\label{appsec:uniform-sequences}

In this section, we prove the main result of Section~\ref{sec:uniform-sequences}, which we recall here for the sake of readability:

\begin{manualtheorem}{\ref{the:uniform-sequences}}
        \begin{enumerate}
            \item There exist a set $\dep$ of primary keys, and a CQ $Q$ such that $\ocqa{\dep,M_{\dep}^{\us},Q}$ is $\sharp ${\rm P}-hard.
            
            \item For a set $\dep$ of primary keys, and a CQ $Q$, $\ocqa{\dep,M_{\dep}^{\us},Q}$ admits an FPRAS.
        \end{enumerate}
\end{manualtheorem}

As discussed in Section~\ref{sec:uniform-sequences}, we actually need to prove the above result for the problem $\srelfreq{\dep,Q}$.

\subsection{Proof of Item~(1) of Theorem~\ref{the:uniform-sequences}}
Let $\dep$ and $Q$ be the singleton set of primary keys and the Boolean CQ, respectively, for which $\rrelfreq{\dep,Q}$ is $\sharp ${\rm P}-hard; $\dep$ and $Q$ are obtained from the proof of item (1) of  Theorem~\ref{the:uniform-repairs}.
We show that also $\srelfreq{\dep,Q}$ is $\sharp ${\rm P}-hard via a polynomial-time Turing reduction from $\sharp H\text{-}\mathsf{Coloring}$, where $H$ is the graph employed in the proof of item (1) of  Theorem~\ref{the:uniform-repairs}.
Actually, we can exploit the same construction as in the proof of item (1) of Theorem~\ref{the:uniform-repairs}.
Assuming that, for an undirected graph $G$, $D_G$ is the database that the construction in the proof of item (1) of Theorem~\ref{the:uniform-repairs} builds, we show that
\[
\orfreq{\dep,Q}{D_G,()}\ =\ \srfreq{\dep,Q}{D_G,()},
\] 
which implies that the polynomial-time Turing reduction from $\sharp H\text{-}\mathsf{Coloring}$ to $\rrelfreq{\dep,Q}$ is also a polynomial-time Turing reduction from $\sharp H\text{-}\mathsf{Coloring}$ to $\srelfreq{\dep,Q}$.
Recall that
\[
\srfreq{\dep,Q}{D_G,()}\ =\ \frac{|\{s \in \crs{D_G}{\dep} \mid s(D) \models Q\}|}{|\crs{D_G}{\dep}|}.
\]
By construction of $D_G$, each candidate repair $D \in \copr{D_G}{\dep}$ can be obtained via $|V_G|!$ different complete sequences of $\crs{D_G}{\dep}$. Therefore, $\srfreq{\dep,Q}{D_G,()}$ is
\[
\frac{|\copr{D_G}{\dep,Q}| \cdot |V_G|!}{|\copr{D_G}{\dep}| \cdot |V_G|!}\ =\ \frac{|\copr{D_G}{\dep,Q}|}{|\opr{D_G}{\dep}|}.
\]
The latter expression is precisely $\orfreq{\dep,Q}{D_G,()}$, as needed.

\OMIT{
For this proof we employ the same reduction employed for the proof of Item~(1) of Theorem~\ref{the:uniform-repairs}. In particular, that proof shows that given an undirected graph $G=(V_G,E_G)$, a database $D_G$ can be computed in polynomial-time w.r.t.\ $G$ and $D_G$ is such that $|\mathsf{hom}(G,H)|$ can be computed in polynomial-time w.r.t.\ $G$, having an oracle for computing the number $\orfreq{\dep,Q}{D_G,()}$, for some fixed set $\dep$ of FDs and Boolean CQ $Q$.

\medskip
We can actually prove that
$\orfreq{\dep,Q}{D_G,()} = \srfreq{\dep,Q}{D_G,()}$, and our claim will follow.
Recall that
$$ \srfreq{\dep,Q}{D_G,()} = \frac{|\{s \in \crs{D_G}{\dep} \mid s(D) \models Q\}|}{|\crs{D_G}{\dep}|}.$$

By construction of $D_G$, each candidate repair $D \in \copr{D_G}{\dep}$ can be obtained via $|V_G|!$ different complete sequences in $\crs{D_G}{\dep}$, and thus $\srfreq{\dep,Q}{D_G,()}$ is

$$ \frac{|\copr{D_G}{\dep,Q}| \cdot |V_G|!}{|\copr{D_G}{\dep}| \cdot |V_G|!} = \frac{|\copr{D_G}{\dep,Q}|}{|\opr{D_G}{\dep}|}.$$

The latter expression is precisely $\orfreq{\dep,Q}{D_G,()}$, as needed.
}

\subsection{Proof of Item~(2) of Theorem~\ref{the:uniform-sequences}}

We prove that, for a set $\dep$ of primary keys, and a CQ $Q$, the problem $\srelfreq{\dep,Q}$ admits an FPRAS. As for item (2) of Theorem~\ref{the:uniform-repairs}, the proof consists of two steps: (1) existence of an efficient sampler, and (2) provide a polynomial lower bound for $\srfreq{\dep,Q}{D,\bar c}$.

\subsubsection*{Step 1: Efficient Sampler}
To establish that we can efficiently sample elements of $\crs{D}{\dep}$ uniformly at random, we first need to show that the number of complete repairing sequences can be computed
in polynomial time in the case of primary keys. 

\begin{lemma}\label{lemma:count_seq}
Consider a set $\dep$ of primary keys. For every database $D$, 
$|\crs{D}{\dep}|$ 
can be computed in polynomial time in $||D||$.
\end{lemma}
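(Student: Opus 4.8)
The plan is to exploit the block structure of databases under primary keys, already identified in the proof of Lemma~\ref{lem:ur-sampler}. Recall that for each relation name $R$ with a primary key in $\dep$, the facts of $D$ over $R$ split into blocks $B_1,\dots,B_n$ of mutually conflicting facts, and that facts in singleton blocks (or in relations without a key) never participate in a justified operation. A complete repairing sequence must, for each block $B_i$ with $|B_i| > 1$, either delete all but one of its facts (via $|B_i|-1$ singleton removals of facts of $B_i$), or delete all of its facts (via $|B_i|-2$ singleton removals followed by one pair-removal of the last two facts of $B_i$, or via some other justified pattern). Crucially, under primary keys the conflict graph is a disjoint union of cliques (one per non-trivial block), so operations on different blocks are independent: no operation is justified by a violation spanning two blocks, and the justification status of an operation on $B_i$ depends only on which facts of $B_i$ remain. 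Hence a complete repairing sequence is exactly an \emph{interleaving} of per-block ``local'' complete repairing sequences.

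The first key step is therefore to count, for a single block $B$ of size $m = |B|$, the number $L(m)$ of local complete repairing sequences and, simultaneously, refine this count by the \emph{length} $\ell$ of the sequence (number of operations), giving a quantity $L(m,\ell)$. This is a small, self-contained combinatorial computation: a local sequence either ends with $m-1$ singleton deletions in some order (there are $m\cdot(m-1)\cdots 2 = m!/1$ such orders, of length $m-1$, each leaving a distinct surviving fact so actually $m \cdot (m-1)!$ counted with the choice of survivor — one must be careful here), or deletes the whole block, where the last step may be a pair-removal of two remaining facts or a singleton removal reducing from two facts to one-then-we-are-not-done, so the enumeration branches on the multiset of removals; in all cases the number of orderings is a product of factorial-like terms, computable in time polynomial in $m \le \|D\|$. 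The second key step is to combine the per-block counts by counting interleavings: if block $B_i$ contributes a local sequence of length $\ell_i$, the number of ways to interleave them into one global sequence of length $\ell_1 + \cdots + \ell_n$ is the multinomial coefficient $\binom{\ell_1+\cdots+\ell_n}{\ell_1,\dots,\ell_n}$. Thus
\[
|\crs{D}{\dep}|\ =\ \sum_{\ell_1,\dots,\ell_n} \binom{\ell_1+\cdots+\ell_n}{\ell_1,\dots,\ell_n} \prod_{i=1}^{n} L(|B_i|,\ell_i).
\]
This sum has exponentially many terms if written naively, so the third step is to evaluate it by dynamic programming over the blocks: maintain a polynomial-size table indexed by ``number of blocks processed so far'' and ``total length of operations used so far'' (the latter is at most $\|D\|$), updating by convolving in one block at a time using the standard multinomial/binomial recurrence $\binom{a+b}{a} = \sum$-style absorption. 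Since there are at most $\|D\|$ blocks and the length parameter ranges over $O(\|D\|)$ values, and each block's length range is $O(\|D\|)$, the whole computation runs in time polynomial in $\|D\|$.

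The main obstacle I anticipate is getting the single-block count $L(m,\ell)$ exactly right, since the definition of justified operation allows pair-removals $-\{f,g\}$ as well as singleton removals $-f$, and one must carefully enumerate which interleavings of singleton and pair removals constitute valid \emph{local} repairing sequences leading to each possible local outcome (a singleton $\{f\}\subseteq B$, or $\emptyset$) — being careful that after the block is reduced to a single fact no further operation on it is justified, and that a pair-removal is justified only when exactly those two facts still conflict, i.e.\ both still present. A secondary subtlety is verifying rigorously that global complete sequences are in bijection with (tuple of local complete sequences, interleaving), which needs the clique-union structure of the conflict graph under primary keys: one must check that prefixes of a valid interleaving are themselves $(D,\dep)$-repairing, which follows because justification is block-local. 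Once these two points are nailed down, the dynamic program and the polynomial running time are routine.
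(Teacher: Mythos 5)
Your proposal is correct and follows essentially the same route as the paper's proof: partition $D$ into blocks of mutually conflicting facts, count the complete local sequences of a single block in closed form (the paper refines by number of pair removals and whether the block ends empty, which jointly determine your length parameter $\ell$), and combine blocks by counting interleavings with multinomial coefficients inside a dynamic program over blocks. The only difference is cosmetic indexing of the DP table (total length so far versus the paper's pair of counters for pair removals and blocks retaining a fact), which encodes the same information.
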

\begin{proof}
Consider a database $D$. As in the proof of Lemma~\ref{lem:ur-sampler}, let 
 $B_1,\dots,B_n$ be the blocks of $D$ w.r.t.~$\dep$ that contain at least two facts. For a block $B$ of size $m\ge 2$ and for $0\le i\le \floor*{\frac{m}{2}}$, we denote by $S_{m}^{\nempt,i}$ the number of sequences $s\in \crs{B}{\dep}$ such that $s(B)\neq\emptyset$ (hence, $s(B)$ contains a single fact) and precisely $i$ of the operations of $s$ are pair removals. 
In the case where $m$ is an even number, we cannot obtain a non-empty repair with $\frac{m}{2}$ pair removals; hence, we will have that $S_{m}^{\nempt,\frac{m}{2}}=0$.
In any other case, we have that:
\begin{eqnarray*}
  S_{m}^{\nempt,i} &=& m\times \left[{m-1
  \choose {2i}}\times \frac{(2i)!}{2^i\cdot i!}\times (m-i-1)!\right]\\
  &=& m \times \frac{(m-1)!}{(2i)!\cdot (m-2i-1)!}\times \frac{(2i)!}{2^i\cdot i!}\times (m-i-1)!\\
  &=& \frac{m!\cdot (m-i-1)!}{2^i\cdot i!\cdot (m-2i-1)!}
\end{eqnarray*}
where:
\begin{itemize}
    \item $m$ is the number of ways to select the single fact of $s(B)$,
    \item ${m-1
  \choose {2i}}$ is the number of ways to select $2i$ facts out of the remaining $m-1$ facts (these are the facts removed in pairs),
  \item $\frac{(2i)!}{2^i\cdot i!}$ is the number of ways to split $2i$ facts into $i$ pairs, and
  \item $(m-i-1)!$ is the number of permutations of the $m-i-1$ operations in the sequence ($m-2i-1$ singleton removals, and $i$ pair removals).
\end{itemize}

Similarly, we denote by $S_{m}^{\empt,i}$ the number of sequences $s\in \crs{B}{\dep}$ such that $s(B)=\emptyset$ and $s$ has precisely $i$ pair removals. As we cannot obtain an empty repair without pair removals, it holds that $S_{m}^{\empt,0}=0$.
For $i\ge 1$, the following holds:
\begin{eqnarray*}
  S_{m}^{\empt,i} &=& {{m}\choose 2}\times \left({m-2
  \choose {2i-2}}\times \frac{(2i-2)!}{2^{i-1}\cdot (i-1)!}\times (m-i-1)!\right)\\
  &=&\frac{m!}{2!\cdot (m-2)!}\times \frac{(m-2)!}{(2i-2)!\cdot (m-2i)!}\\
  && \times \frac{(2i-2)!}{2^{i-1}\cdot (i-1)!}\times (m-i-1)!\\
  &=& \frac{m!\cdot (m-i-1)!}{2^i\cdot (i-1)!\cdot (m-2i)!}
\end{eqnarray*}
where:
\begin{itemize}
    \item ${{m}\choose 2}$ is the number of ways to select the last pair that will be removed in the sequence,
    \item ${m-2
  \choose {2i-2}}$ is the number of ways to select $2(i-1)$ facts out of the remaining $m-2$ facts (these are the facts removed in pairs),
  \item $\frac{(2i-2)!}{2^{i-1}\cdot (i-1)!}$ is the number of ways to split $2(i-1)$ facts into $i-1$ pairs, and
  \item $(m-i-1)!$ is the number of permutations of the $m-i-1$ operations in the sequence excluding the last pair removal ($m-2i$ singleton removals, and $i-1$ pair removals).
\end{itemize}

Since there are no conflicts among facts from different blocks, the repairing sequences for different blocks are independent (in the sense that an operation over the facts of one block has no impact on the justified operations over the facts of another block). Hence, every complete repairing sequence $s\in\crs{D}{\dep}$ is obtained by interleaving sequences for the individual blocks. We can compute this number of sequences in polynomial time using dynamic programming. We denote by $P_j^{k,i}$ the number of sequences $s\in \crs{B_1\cup\dots\cup B_j}{\dep}$ with precisely $i$ pair removals such that $s(D)\cap B_\ell\neq\emptyset$ for $k$ of the blocks of $B_1,\dots,B_j$ (hence, $0\le k\le j$). For $k<0$ or $k>j$, we have $P_j^{k,i}=0$. Then, it holds that:
\[
\crs{D}{\dep}\ =\ \sum_{k=0}^{n}\sum_{i=0}^{\floor*{\frac{|B_1|}{2}}+\dots+\floor*{\frac{|B_n|}{2}}} P_n^{k,i}.
\]
Clearly, for every $i\in \left\{0,\dots,\floor*{\frac{|B_1|}{2}}\right\}$, we have that:
\begin{eqnarray*}
	P_1^{0,i} &=& S_{|B_1|}^{\empt,i}\\
	P_1^{1,i} &=& S_{|B_1|}^{\nempt,i}.
\end{eqnarray*}
For $j>1$, it holds that:
\begin{align*}
    P_j^{k,i}=&
    \sum_{\substack{0\le i_1\le \floor*{\frac{|B_1|}{2}}+\dots+\floor*{\frac{|B_{j-1}|}{2}}\\ 0\le i_2\le\floor*{\frac{|B_j|}{2}}\\i_1+i_2=i}} \Big[ P_{j-1}^{k,i_1}\times S_{|B_j|}^{\empt,{i_2}}\times\\ &{\frac{(|B_1\cup\dots\cup B_j|-i_1-i_2-k)!}{(|B_1\cup\dots\cup B_{j-1}|-i_1-k)!\times (|B_j|-i_2)!}}+\\
    &P_{j-1}^{k-1,i_1}\times S_{|B_j|}^{\nempt,{i_2}}\times\\
    &{\frac{(|B_1\cup\dots\cup B_j|-i_1-i_2-k)!}{(|B_1\cup\dots\cup B_{j-1}|-i_1-k+1)!\times (|B_j|-i_2-1)!}}\Big],
    \end{align*}
where the last expression is the number of ways to interleave a sequence of $\crs{B_1\cup\dots\cup B_{j-1}}{\dep}$ that has $i_1$ pair removals with a sequence of $\crs{B_j}{\dep}$ that has $i_2$ pair removals. Note that if for a block $B_\ell$, the sequence $s$ has $i$ pair removals over the facts of $B_\ell$ and it holds that $s(D)\cap B_\ell\neq\emptyset$, then $s$ contains $|B_\ell|-i-1$ operations over the facts of $B_\ell$ (as we keep one of its facts in the repair). If $s(D)\cap B_\ell=\emptyset$, then $s$ contains $|B_\ell|-i$ operations over the facts of $B_\ell$. Hence,
$|B_1\cup\dots\cup B_j|-i_1-i_2-k$ is the total number of operations in the combined sequence, $|B_1\cup\dots\cup B_{j-1}|-i_1-k$ (or $|B_1\cup\dots\cup B_{j-1}|-i_1-k+1$) is the number of operations over the facts of the first $j-1$ blocks, and $|B_j|-i_2$ (or $|B_j|-i_2-1$) is the number of operations over the facts of the $j$th block.
\end{proof}

We give an example that illustrates the algorithm described in the proof of Lemma~\ref{lemma:count_seq}.

\begin{example}\label{example:pkeys_us}
Consider again the database $D$ depicted in Figure~\ref{fig:example_pkeys}, and the set $\dep = \{R : A_1 \ra A_2\}$ consisting of a single key. The complete repairing sequences over the facts of the first block (that consists of the facts $f_{1,1},f_{1,2},f_{1,3}$) are:
\begin{align*}
    &-f_{1,1},-f_{1,2} \,\,\,\,\,\,\,\, -f_{1,1},-f_{1,3} \,\,\,\,\,\,\,\, -f_{1,1},-\{f_{1,2},f_{1,3}\} \\ 
    &-f_{1,2},-f_{1,1} \,\,\,\,\,\,\,\, -f_{1,2},-f_{1,3} \,\,\,\,\,\,\,\, -f_{1,2},-\{f_{1,1},f_{1,3}\} \\
    &-f_{1,3},-f_{1,1} \,\,\,\,\,\,\,\, -f_{1,3},-f_{1,2} \,\,\,\,\,\,\,\, -f_{1,3},-\{f_{1,1},f_{1,2}\}\\
    &-\{f_{1,1},f_{1,2}\} \,\,\,\,\,\,\,\, -\{f_{1,1},f_{1,3}\} \,\,\,\,\,\,\,\, -\{f_{1,2},f_{1,3}\}
\end{align*}
There are no repairing sequences over the facts of the second block, as it only contains a single fact $f_{2,1}$. The complete repairing sequences over the facts of the third block (with facts $f_{3,1},f_{3,2}$) are:
\begin{align*}
    &-f_{3,1} \,\,\,\,\,\,\,\, -f_{3,2} \,\,\,\,\,\,\,\, -\{f_{3,1},f_{3,2}\}
\end{align*}
Every complete repairing sequence over $D$ is obtained by interleaving the complete repairing sequences over the different blocks. For example, the following is one possible complete repairing sequence:
\[-f_{1,2},-\{f_{3,1},f_{3,2}\},-f_{1,1}\]
and it has one pair removal.

In this case, we have that:
\begin{align*}
  &S_{3}^{\nempt,0}=\frac{3!\cdot (3-0-1)!}{2^0\cdot 0!\cdot (3-2\times 0-1)!}=\frac{12}{2}=6\\
  &S_{3}^{\nempt,1}=\frac{3!\cdot (3-1-1)!}{2^1\cdot 1!\cdot (3-2\times 1-1)!}=\frac{6}{2}=3\\
  &S_{3}^{\empt,0}=0\\
  &S_{3}^{\empt,1}=\frac{3!\cdot (3-1-1)!}{2^1\cdot (1-1)!\cdot (3-2\times 1)!}=\frac{6}{2}=3
\end{align*}
Indeed, there are 12 repairing sequences over the facts of the first block that contains three facts---six of them have no pair removals, three have a single pair removal and result in a non-empty repair, and three have a single pair removal and result in an empty repair.

For the third block, that has two facts, it holds that:
\begin{align*}
  &S_{2}^{\nempt,0}=\frac{2!\cdot (2-0-1)!}{2^0\cdot 0!\cdot (2-2\times 0-1)!}=\frac{2}{1}=2\\
  &S_{2}^{\nempt,1}=0\\
  &S_{2}^{\empt,0}=0\\
  &S_{2}^{\empt,1}=\frac{2!\cdot (2-1-1)!}{2^1\cdot (1-1)!\cdot (2-2\times 1)!}=\frac{1}{2}=1
\end{align*}
Indeed, there are two sequences with no pair removals (that result in non-empty repairs) and a single sequence with one pair removal (that results in an empty repair).

Finally, we denote the block with three facts by $B_1$, and the block with two facts by $B_2$. We have that:
\[P_1^{0,0}=S_3^{\empt,0}=0 \,\,\,\,\,\,\,\, P_1^{1,0}=S_3^{\nempt,0}=6\]
\[P_1^{0,1}=S_3^{\empt,1}=3 \,\,\,\,\,\,\,\, P_1^{1,1}=S_3^{\nempt,1}=3\]
Next,
\begin{align*}
   P_2^{0,0}&=P_1^{0,0}\times S_2^{\empt,0}\times \frac{(5-0)!}{(3-0)!\times (2-0)!}
   =0\times 0\times 10=0
\end{align*}
Indeed, if $s$ has zero pair removals, then $s(D)\cap B_1\neq\emptyset$ and $s(D)\cap B_2\neq\emptyset$; hence, $P_2^{0,k}>0$ only for $k=2$. Thus,
\begin{align*}
   P_2^{1,0}&=P_1^{0,0}\times S_2^{\nempt,0}\times \frac{(5-1)!}{(3-0)!\times (2-1)!}\\
   &+P_1^{1,0}\times S_2^{\empt,0}\times \frac{(5-1)!}{(3-1)!\times (2-0)!}\\
   &=0\times 0\times 4+6\times 0\times 6=0
\end{align*}
And:
\begin{align*}
   P_2^{2,0}&=P_1^{1,0}\times S_2^{\nempt,0}\times \frac{(5-2)!}{(3-1)!\times (2-1)!}=6\times 2\times 3=36
\end{align*}

Similarly, we compute:
\begin{align*}
   P_2^{0,1}&=P_1^{0,0}\times S_2^{\empt,1}\times \frac{(5-1)!}{(3-0)!\times (2-1)!}\\
   &+P_1^{0,1}\times S_2^{\empt,0}\times \frac{(5-1)!}{(3-1)!\times (2-0)!}\\
   &=0\times 1\times 4+3\times 0\times 6=0
\end{align*}
\begin{align*}
   P_2^{1,1}&=P_1^{1,1}\times S_2^{\empt,0}\times \frac{(5-2)!}{(3-2)!\times (2-0)!}\\
   &+P_1^{0,1}\times S_2^{\nempt,0}\times \frac{(5-2)!}{(3-1)!\times (2-1)!}\\
   &+P_1^{1,0}\times S_2^{\empt,1}\times \frac{(5-2)!}{(3-1)!\times (2-1)!}\\
   &+P_1^{0,0}\times S_2^{\nempt,1}\times \frac{(5-2)!}{(3-0)!\times (2-2)!}\\
   &=3\times 0\times 3+3\times 2\times 3+6\times 1\times 3+0\times 0\times 1=36
\end{align*}
\begin{align*}
   P_2^{2,1}&=P_1^{1,0}\times S_2^{\nempt,1}\times \frac{(5-3)!}{(3-1)!\times (2-2)!}\\
   &+P_1^{1,1}\times S_2^{\nempt,0}\times \frac{(5-3)!}{(3-2)!\times (2-1)!}\\
   &=6\times 0\times 1+3\times 2\times 2=12
\end{align*}

And, finally:
\begin{align*}
   P_2^{0,2}&=P_1^{0,1}\times S_2^{\empt,1}\times \frac{(5-2)!}{(3-1)!\times (2-1)!}
   =3\times 1\times 3=9
\end{align*}
\begin{align*}
   P_2^{1,2}&=P_1^{1,1}\times S_2^{\empt,1}\times \frac{(5-3)!}{(3-2)!\times (2-1)!}\\
   &+P_1^{0,1}\times S_2^{\nempt,1}\times \frac{(5-3)!}{(3-1)!\times (2-2)!}\\
   &=3\times 1\times 2+3\times 0\times 1=6
\end{align*}
\begin{align*}
   P_2^{2,2}&=P_1^{1,1}\times S_2^{\nempt,1}\times \frac{(5-4)!}{(3-2)!\times (2-2)!}=3\times 0\times 1=0
\end{align*}

We conclude that:
\[\crs{D}{\dep}=0+0+36+0+36+12+9+6+0=99\]
That is, there are $99$ complete repairing sequences of $D$ w.r.t.~$\dep$. \hfill\markfull
\end{example}

\begin{algorithm}[t]
    \LinesNumbered
    \KwIn{A database $D$ and a set $\dep$ of primary keys over a schema $\ins{S}$.}
    \KwOut{$s \in \crs{D}{\dep}$ with probability $\frac{1}{|\crs{D}{\dep}|}$.}
    \vspace{2mm}
    \SetKwProg{Fn}{Function}{:}{}
    
    \Return $\mathsf{Sample}(D,\dep,\varepsilon)$
    
    \vspace{2mm}
    \Fn{$\mathsf{Sample}(D, \dep, s)$}{
        \eIf{$s(D)\models \dep$}{
            \textbf{return} $s$\;
        }{
            Select a $(s(D),\dep)$-justified operation $\op$ with probability $\frac{|\crs{\op(s(D))}{\dep}|}{|\crs{s(D)}{\dep}|}$\\
            \textbf{return} $\mathsf{Sample}(D,\dep,s\cdot\op)$
        }
    }
    
    \vspace{1em}

    \caption{An algorithm $\mathsf{SampleSeq}$ for sampling elements of $\crs{D}{\dep}$ uniformly at random.}\label{alg:ssample}
\end{algorithm}

Having Lemma~\ref{lemma:count_seq} in place, we can establish the existence of an efficient sampler.
The formal statement, already given in the main body of the paper, and its proof follow:

\begin{manuallemma}{\ref{lem:us-sampler}}
    For a database $D$, and a set $\dep$ of primary keys, we can sample elements of $\crs{D}{\dep}$ uniformly at random in polynomial time in $||D||$.
\end{manuallemma}
\begin{proof}
The algorithm $\mathsf{SampleSeq}$, depicted in Algorithm~\ref{alg:ssample}, is a recursive algorithm that returns a sequence $s\in\crs{D}{\dep}$ with probability $\frac{1}{|\crs{D}{\dep}|}$. The algorithm starts with the empty sequence $\varepsilon$, and, at each step, extends the sequence by selecting one of the justified operations at that point. That is, if the current sequence is $s$, then we select one of the $(s(D),\dep)$-justified operations. 
The probability of selecting an operation $\op$ is:
\[\frac{|\crs{op(s(D))}{\dep}|}{|\crs{s(D)}{\dep}|}.\]
%
%
Hence, the probability of returning a sequence $s=\op_1,\dots,\op_n$ of $\crs{D}{\dep}$ is:
\begin{align*}
    &\frac{|\crs{\op_1(D)}{\dep}|}{|\crs{\varepsilon(D)}{\dep}|}\times \frac{|\crs{\op_2(\op_1(D))}{\dep}|}{|\crs{\op_1(D)}{\dep}|}\times\dots\\
    &\times \frac{|\crs{\op_n(\dots D\dots )}{\dep}|}{|\crs{\op_{n-1}(\dots D\dots )}{\dep}|}\\
    &=\frac{|\crs{\op_n(\dots D\dots )}{\dep}|}{|\crs{\varepsilon(D)}{\dep}|}=\frac{1}{|\crs{D}{\dep}|}
\end{align*}
Most of the terms in the product cancel each other, and
\[|\crs{\op_n(\dots D\dots )}{\dep}|=|\crs{s(D)}{\dep}|=1\]
since $s(D)\models\dep$; hence, there is a single complete repairing sequence for $s(D)$ w.r.t.~$\dep$---the empty sequence.

Since the length of a sequence is bounded by $|D|-1$,  the number of justified operations at each step is polynomial in $||D||$ (as this is the number of facts involved in violations of the constraints plus the number of conflicting pairs of facts), and, by Lemma~\ref{lemma:count_seq}, for a set $\dep$ of primary keys, we can compute $|\crs{D}{\dep}|$ in polynomial time in $||D||$ for any database $D$, we get that the total running time of the algorithm is also polynomial in $||D||$, as needed.
\end{proof}

\subsubsection*{Step 2: Polynomial Lower Bound}
Now that we have an efficient sampler for the complete repairing sequences, we show that there is a polynomial lower bound on $\srfreq{\dep,Q}{D,\bar c}$.
The formal statement, already given in the main body of the paper, and its proof follow:

\begin{manuallemma}{\ref{lem:us-lower-bound}}
Consider a set $\dep$ of primary keys, and a CQ $Q(\bar x)$. For every database $D$, and tuple $\bar c \in \adom{D}^{|\bar x|}$,
    \[
    \srfreq{\dep,Q}{D,\bar c}\ \geq\ \frac{1}{(2 \cdot ||D||)^{||Q||}}
    \] 
    whenever $\srfreq{\dep,Q}{D,\bar c} > 0$.
\end{manuallemma}
\begin{proof}
The proof is very similar to the proof of Lemma~\ref{lem:ur-lower-bound}, except that here we reason about sequences rather than repairs. Recall that we treat a query $Q$ as the set $\{R_i(\bar y_i)\mid i \in [n]\}$ of atoms on the right-hand side of $\text{:-}$, and, for a database $D$ and a homomorphism $h$ from $Q$ to $D$, we denote by $h(Q)$ the set $\{R_i(h(\bar y_i))\mid i \in [n]\}$.
Here, we denote by $S_{D,\dep,h(Q)}^\empt$ the set of sequences $s\in\crs{D}{\dep}$ such that $s(D)\cap B_j=\emptyset$ for at least one block of $\{B_1,\dots,B_m\}$ (recall that these are the blocks that contains the facts of $h(Q)$), and by $S_{D,\dep,h(Q)}^\nempt$ the set of sequences $s\in\crs{D}{\dep}$ such that $s(D)\cap B_j\neq\emptyset$ for every block of $\{B_1,\dots,B_m\}$.

Now, for every sequence $s\in S_{D,\dep,h(Q)}^\empt$, and for every block $B_j$ such that $s(D)\cap B_j=\emptyset$, the last operation of $s$ over the facts of $B_j$ must remove a pair $\{f,g\}$ of facts. We map each sequence $s\in S_{D,\dep,h(Q)}^\empt$ to a sequence $s'\in S_{D,\dep,h(Q)}^\nempt$ by replacing the last operation of $s$ over each such $B_j\in\{B_1,\dots,B_m\}$ with an operation that removes only one of the facts of the pair---either $f$ or $g$. Hence, if $s(D)\cap B_j=\emptyset$ for precisely $\ell$ of the blocks of $\{B_1,\dots,B_m\}$, the sequence $s$ is mapped to $2^\ell$ distinct sequences of $S_{D,\dep,h(Q)}^\nempt$.

Similarly to the proof of Lemma~\ref{lem:ur-lower-bound}, for every sequence $s'\in S_{D,\dep,h(Q)}^\nempt$, there are $2^m-1$ sequences $s\in S_{D,\dep,h(Q)}^\empt$ that are mapped to it. This is because the sequence $s'$ determines all the operations over the blocks outside $\{B_1,\dots,B_m\}$, and for each block $B_j\in \{B_1,\dots,B_m\}$, it determines all the operations over $B_j$ except for the last one. If $s'(D)\cap B_j=\{f\}$ and the last operation of $s'$ over $B_j$ removes the fact $g$, then the last operation of $s$ over $B_j$ either also removes $g$ or removes the pair $\{f,g\}$. If the last operation of $s'$ over $B_j$ removes a pair $\{g,h\}$ of facts, then the last operation of $s$ over $B_j$ must also remove the same pair of facts. Hence, there are at most two possible cases for each  block of $\{B_1,\dots,B_m\}$ and $2^m$ possibilities in total. And, again, we have to disregard the possibility that is equivalent to $s'$ itself.

Therefore, we have that:
\[
 \left|S_{D,\dep,h(Q)}^\empt\right|\ \le\ (2^m-1)\times \left|S_{D,\dep,h(Q)}^\nempt\right|
\]
 and
 \begin{align*}
     |\crs{D}{\dep}|&= \left|S_{D,\dep,h(Q)}^\empt\right|+ \left|S_{D,\dep,h(Q)}^\nempt\right|\\
     &\le (2^m-1)\times \left|S_{D,\dep,h(Q)}^\nempt\right|+ \left|S_{D,\dep,h(Q)}^\nempt\right|\\
     &=2^m\times \left|S_{D,\dep,h(Q)}^\nempt\right|.
 \end{align*}
 As said above, each sequence $s$ of $S_{D,\dep,h(Q)}^\empt$ can be mapped to $2^\ell$ distinct sequences of $S_{D,\dep,h(Q)}^\nempt$, where $\ell$ is the number of blocks in $\{B_1,\dots,B_m\}$ for which $E\cap B_j=\emptyset$. Moreover, there are sequences $s'\in S_{D,\dep,h(Q)}^\nempt$ such that no sequence $s\in S_{D,\dep,h(Q)}^\empt$ is mapped to $s'$. These are the sequences $s'$ where the last operation of $s'$ over every block of $\{B_1,\dots,B_m\}$ is a pair removal (but $s'$ keeps some fact of each $B_j$). Hence, $(2^m-1)\times |S_{D,\dep,h(Q)}^\nempt|$ is only an upper bound on $|S_{D,\dep,h(Q)}^\empt|$.
Since all the facts of a single block are symmetric,
\[
\left|\{s\in\crs{D}{\dep}\mid h(Q) \subseteq s(D)\}\right| = \frac{1}{|B_1|\times\dots\times |B_m|} \times \left|S_{D,\dep,h(Q)}^\nempt\right|
\]
and, we conclude that
\begin{align*}
   \frac{|\{s\in\crs{D}{\dep}\mid h(Q)\subseteq s(D)\}|}{|\crs{D}{\dep}|}&\ge \frac{\frac{1}{|B_1|\times\dots\times |B_m|} \times \left|S_{D,\dep,h(Q)}^\nempt\right|}{2^{m}\times \left|S_{D,\dep,h(Q)}^\nempt\right|}\\
   &=\frac{1}{|B_1|\times\dots\times |B_m|\times 2^{m}}\\
   &\ge \frac{1}{|D|^m\times 2^{m}}\\
   &\ge \frac{1}{(2|D|)^{|Q|}}\ge  \frac{1}{(2||D||)^{||Q||}}
\end{align*}
Since all the sequences of $\{s\in\crs{D}{\dep}\mid h(Q)\subseteq s(D)\}$ are such that $h(Q)\subseteq s(D)$ and so $\bar c\in s(D)$, this concludes our proof.
\end{proof}

We give an example that illustrates the argument given in the proof of Lemma~\ref{lem:us-lower-bound}.

\begin{example}
Consider the database $D$, the set $\dep$ of keys, the query $Q$, and the homomorphism $h$ from Example~\ref{example:pkeys_ur}. Recall that $h(Q)=\{R(a_1,b_1)\}$. The set $S_{D,\dep,h(Q)}^\empt$ contains, for example,
\[-f_{1,2},-f_{3,1},-\{f_{1,1},f_{1,3}\}\]
as the resulting database $s(D)$ contains no fact from the block of $R(a_1,b_1)$. According to the mapping defined in the proof of Lemma~\ref{lem:us-lower-bound}, this sequence is mapped to the following two sequences:
\[-f_{1,2},-f_{3,1},-f_{1,1}\]
\[-f_{1,2},-f_{3,1},-f_{1,3}\]
that replace the last pair removal over the block of $R(a_1,b_1)$ with a singleton removal.
In this case, we have that
\[
|\{s\in\crs{D}{\dep}\mid h(Q)\subseteq s(D)\}|\ =\ 24.
\]
These are all the sequences obtained by interleaving the following operations over the facts of the first block (that do not remove $f_{1,1}$), with any of the three operations over the facts of the third block:
\begin{align*}
    -f_{1,2},-f_{1,3} \,\,\,\,\,\,\,\, -f_{1,3},-f_{1,2} \,\,\,\,\,\,\,\, -\{f_{1,2},f_{1,3}\}
\end{align*}
Moreover, as we have seen in Example~\ref{example:pkeys_us}, we have that
\[|\crs{D}{\dep}|=99\]
Indeed, it holds that
\[\frac{24}{99}\ge \frac{1}{12}=\frac{1}{(2|D|)^{|Q|}}\]
as claimed. \hfill\markfull
\end{example}
\section{Proofs of Section~\ref{sec:uniform-operations}}\label{appsec:uniform-operations}

In this section, we prove the main result of Section~\ref{sec:uniform-operations}, which we recall here for the sake of readability:

\begin{manualtheorem}{\ref{the:uniform-operations}}
    \begin{enumerate}
        \item There exist a set $\dep$ of primary keys, and a CQ $Q$ such that $\ocqa{\dep,M_{\dep}^{\uo},Q}$ is $\sharp ${\rm P}-hard.
        
        \item For a set $\dep$ of keys, and a CQ $Q$, $\ocqa{\dep,M_{\dep}^{\uo},Q}$ admits an FPRAS.
    \end{enumerate}
\end{manualtheorem}

\subsection{Proof of Item~(1) of Theorem~\ref{the:uniform-operations}}
As we did for item (1) of Theorem~\ref{the:uniform-sequences}, we reuse the construction underlying the proof of item (1) of Theorem~\ref{the:uniform-repairs}.
In particular, assuming that $\dep$ and $Q$ are the singleton set of primary keys and the Boolean CQ, respectively, for which $\rrelfreq{\dep,Q}$ is $\sharp ${\rm P}-hard ($\dep$ and $Q$ are extracted from the proof of item (1) of Theorem~\ref{the:uniform-repairs}), we show that $\ocqa{\dep,M_{\dep}^{\uo},Q}$ is $\sharp ${\rm P}-hard via a polynomial-time Turing reduction from $\sharp H\text{-}\mathsf{Coloring}$ by reusing the construction in the proof of item (1) of Theorem~\ref{the:uniform-repairs}; $H$ is the same undirected graph employed in that proof.
Assuming that, for an undirected graph $G$, $D_G$ is the database that the construction in the proof of item (1) of Theorem~\ref{the:uniform-repairs} builds, we show that 
\[
\orfreq{\dep,Q}{D_G,()}\ =\ \probrep{M_{\dep}^{\uo},Q}{D_G,()}, 
\]
which implies that the polynomial-time Turing reduction from $\sharp H\text{-}\mathsf{Coloring}$ to $\rrelfreq{\dep,Q}$ is also a polynomial-time Turing reduction from $\sharp H\text{-}\mathsf{Coloring}$ to $\ocqa{\dep,M_{\dep}^{\uo},Q}$.

\OMIT{
Also in this case, the same reduction used in the proof of Item~(1) of Theorem~\ref{the:uniform-repairs} suffices. Similarly to what we did in the proof of Item~(1) of Theorem~\ref{the:uniform-sequences}, we need to prove that given an undirected graph $G=(V_G,E_G)$,
$\probrep{M_\dep^{\uo},Q}{D_G,()} = \orfreq{\dep,Q}{D_G,()}$, where $D_G$, $\dep$ and $Q$ are defined as in the proof of Item~(1) of Theorem~\ref{the:uniform-repairs}.
Since in the proof of Item~(1) of Theorem~\ref{the:uniform-sequences} we have already shown that $\srfreq{\dep,Q}{D_G,()} = \orfreq{\dep,Q}{D_G,()}$, it suffices to prove that $\probrep{M_\dep^{\uo},Q}{D_G,()} = \srfreq{\dep,Q}{D_G,()}$.
}

In the proof of item~(1) of Theorem~\ref{the:uniform-sequences}, we have shown that $\orfreq{\dep,Q}{D_G,()} = \srfreq{\dep,Q}{D_G,()}$. Thus, it suffices to show 
\[
\srfreq{\dep,Q}{D_G,()}\ =\ \probrep{M_\dep^{\uo},Q}{D_G,()}.
\]
Let $M_\dep^{\uo}(D_G) = (V,E,\ins{P})$. Note that each node $u$ of $G$ induces a violation $\{V(u,0),V(u,1)\}$ in $D_G$ that can be resolved using one of the following three operations: remove the first, the second, or both facts. Hence, every complete sequence in $\crs{D_G}{\dep}$ is of length precisely $|V_G|$, and for every non-leaf node $s \in V$, $|\ops{s}{D_G}{\dep}| = 3 \cdot (|V_G| - |s|)$. 
Hence, by Definition~\ref{def:uniform-ops}, for each $(s,s') \in E$,
\[
\ins{P}(s,s')\ =\ \frac{1}{|\ops{s}{D_G}{\dep}|}\ =\ \frac{1}{3 \cdot (|V_G| - |s|)}.
\]
We conclude that, with $\pi$ being the leaf distribution of $M_\dep^{\uo}$, for each $s = \op_1,\ldots,\op_n \in \crs{D_G}{\dep}$,
\[
\pi(s)\ =\ \ins{P}(s_0,s_1) \cdots \ins{P}(s_{n-1},s_n)\ =\ \frac{1}{3^{|V_G|} \cdot |V_G|!}.
\]
Since each sequence $s \in \crs{D_G}{\dep}$ is assigned the same non-zero probability, $\pi$ is the uniform distribution over $\crs{D_G}{\dep}$. The latter implies that  $\srfreq{\dep,Q}{D_G,()}= \probrep{M_\dep^{\uo},Q}{D_G,()}$, as needed.

\subsection{Proof of Item~(2) of Theorem~\ref{the:uniform-operations}}

We prove that, for a set $\dep$ of keys, and a CQ $Q$, $\ocqa{\dep,M_{\dep}^{\uo},Q}$ admits an FPRAS. As for item (2) of Theorems~\ref{the:uniform-repairs} and~\ref{the:uniform-sequences}, the proof consists of the usual two steps: (1) existence of an efficient sampler, and (2) provide a polynomial lower bound for $\probrep{M_{\dep}^{\uo},Q}{D,\bar c}$.

\subsubsection*{Step 1: Efficient Sampler}
Given a database $D$, the definition of $M_{\dep}^{\uo}$ immediately implies the existence of an efficient sampler that returns a sequence $s \in \abs{\dep,M_{\dep}^{\uo}(D)}$ with probability $\pi(s)$, where $\pi$ is the leaf distribution of $M_{\dep}^{\uo}(D)$. The algorithm is very similar to Algorithm~\ref{alg:ssample}, except that if the current sequence is $s$, the probability to select a $(s(D),\dep)$-justified operation is
\[
\frac{1}{|\ops{s}{D}{\dep}|}.
\]
Hence, we immediately obtain the following result, already given in the main body of the paper:

\begin{manuallemma}{\ref{lem:uo-sampler}}
    Given a database $D$, and a set $\dep$ of keys, we can sample elements of $\abs{M_{\dep}^{\uo}(D)}$ according to the leaf distribution of $M_{\dep}^{\uo}(D)$ in polynomial time in $||D||$.
\end{manuallemma}

\subsubsection*{Step 2: Polynomial Lower Bound}
The rest of the section is devoted to showing that there is a polynomial lower bound on $\probrep{M_{\dep}^{\uo},Q}{D,\bar c}$.

\begin{manualproposition}{\ref{pro:uo-lower-bound}}
Consider a set $\dep$ of keys, and a CQ $Q(\bar x)$. There is a polynomial $\mathsf{pol}$ such that, for every database $D$, and $\bar c \in \adom{D}^{|\bar x|}$,
    \[
    \probrep{M_{\dep}^{\uo},Q}{D,\bar c}\ \geq\ \frac{1}{\mathsf{pol}(||D||)}
    \] 
    whenever $\probrep{M_{\dep}^{\uo},Q}{D,\bar c} > 0$.
\end{manualproposition}

As usual, we treat the CQ $Q$ as the set $\{R_i(\bar y_i)\mid i \in [n]\}$ of atoms occurring on the right-hand side of $\text{:-}$. Moreover, for a database $D$ and a homomorphism $h$ from $Q$ to $D$, we write $h(Q)$ for the set $\{R_i(h(\bar y_i))\mid i \in [n]\}$.
Clearly, if there is no homomorphism $h$ from $Q$ to $D$ with $h(Q)\models\dep$ and $h(\bar x)=\bar c$, then $\probrep{M_{\dep}^{\uo},Q}{D,\bar c}=0$. 
Assume now that such a homomorphism $h$ exists. We first prove the claim for the case where $|h(Q)|=1$, and then generalize it to the case where $|h(Q)|=m$ for some $m \in [|Q|]$.



\medskip
\noindent \underline{\textbf{The Case $|h(Q)|=1$}}
\smallskip

\noindent Let $f$ be the single fact of $h(Q)$, and
\[
\probhom{D,M_\dep^{\uo},Q}{h}\ =\ \sum\limits_{D' \in \opr{D}{M_{\dep}^{\uo}} \textrm{~and~} h(Q)\subseteq D'} \probrep{D,M_\dep^{\uo}}{D'}.
\]
Note that since $h(\bar x)=\bar c$, it holds that
\[
\probrep{M_{\dep}^{\uo},Q}{D,\bar c}\ \ge\ \probhom{D,M_\dep^{\uo},Q}{h}.
\]
Hence, it suffices to show that there is a polynomial $\mathsf{pol}$ such that $\probhom{D,M_\dep^{\uo},Q}{h} \geq \frac{1}{\mathsf{pol}(||D||)}$.
Let $S_f$ and $S_{\neg f}$ be the sets of sequences of $\abs{M_{\dep}^{\uo}(D)}$ that keep $f$ and remove $f$, respectively, i.e.,
\begin{eqnarray*}
	S_f &=& \{s \in \abs{M_{\dep}^{\uo}(D)} \mid f \in s(D)\}\\
	S_{\neg f} &=& \{s \in \abs{M_{\dep}^{\uo}(D)} \mid f \not\in s(D)\}.
\end{eqnarray*}
With $\pi$ being the leaf distribution of $M_{\dep}^{\uo}(D)$, 
\[
\probhom{D,M_\dep^{\uo},Q}{h}\ =\ \frac{\Lambda_f}{\Lambda_f+\Lambda_{\neg f}}, 
\]
where
\[
\Lambda_f\ =\ \sum_{s \in S_f}\pi(s) \qquad \text{and} \qquad \Lambda_{\neg f}\ =\ \sum_{s \in S_{\neg f}}\pi(s).
\]
Therefore, to get the desired lower bound $\frac{1}{\mathsf{pol}(||D||)}$ for $\probhom{D,M_\dep^{\uo},Q}{h}$, it suffices to show that there exists a polynomial $\mathsf{pol}'$ such that $\Lambda_{\neg f} \leq \mathsf{pol}'(||D||) \cdot \Lambda_f$. Indeed, in this case we can conclude that
\begin{eqnarray*}
\probhom{D,M_\dep^{\uo},Q}{h} &=& \frac{\Lambda_f}{\Lambda_f+\Lambda_{\neg f}}\\
&\geq& \frac{\Lambda_f}{\Lambda_f + \mathsf{pol}'(||D||) \cdot \Lambda_f}\\
&=& \frac{1}{1 + \mathsf{pol}'(||D||)},
\end{eqnarray*}
and the claim follows with $\mathsf{pol}(||D||) = 1 + \mathsf{pol}'(||D||)$. 

We proceed to show that a polynomial $\mathsf{pol}'$ such that $\Lambda_{\neg f} \leq \mathsf{pol}'(||D||) \cdot \Lambda_f$ exists. To this end, we establish an involved technical lemma that relates the sequences of $S_{\neg f}$ with the sequences of $S_f$; as usual, we write $\pi$ for the leaf distribution of $M_{\dep}^{\uo}(D)$:

\begin{lemma}\label{lemma:relate-sequences}
	There exists a function $\mathsf{F} : S_{\neg f} \ra S_{f}$ such that:
	\begin{enumerate}
		\item There exists a polynomial $\mathsf{pol}''$ such that, for every $s \in S_{\neg f}$, $\pi(s) \leq \mathsf{pol}''(||D||) \cdot \pi(\mathsf{F}(s))$.
		\item For every $s' \in S_{f}$, $|\{s \in S_{\neg f} \mid \mathsf{F}(s)=s'\}| \leq 2 \cdot ||D|| - 1$.
	\end{enumerate}
\end{lemma}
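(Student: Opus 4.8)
The plan is to define $\mathsf{F}$ by a local surgery on the (unique) operation of $s$ at which $f$ disappears, followed by a short canonical clean-up tail, and then to verify the two properties separately.

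\emph{Construction of $\mathsf{F}$.} Fix a total order on the facts of $D$. Given $s=\op_1,\dots,\op_n\in S_{\neg f}$, let $\op_j$ be the unique operation at which $f$ is deleted, so $\op_j=-f$ or $\op_j=-\{f,g\}$, and write $E_i=D_i^s$. I would form $\hat s$ from $s$ by deleting $\op_j$ (if $\op_j=-f$) or replacing it by $-g$ (if $\op_j=-\{f,g\}$). A routine induction gives $\hat s\in\rs{D}{\dep}$: the prefix is untouched, and every later operation stays justified because in $\hat s$ it is applied to the corresponding $E_i$ with $f$ re-added — a superset of $E_i$ in which the relevant violation still occurs — and $\hat s(D)=s(D)\cup\{f\}$. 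Since $s(D)$ is consistent and $\dep$ consists of keys over one relation name, $f$ is in a violation with at most $k$ facts $g_1,\dots,g_\ell$ of $s(D)$, where $k$ is the number of keys of $\dep$ over the relation name of $f$ (at most one witness per key, because a consistent database has at most one fact per key value). Listing these in increasing order, I set $\mathsf{F}(s)=\hat s\cdot(-g_1)\cdots(-g_\ell)$; each $-g_r$ is justified, the result is consistent (every violation of $s(D)\cup\{f\}$ involves $f$) and contains $f$, so $\mathsf{F}(s)\in\crs{D}{\dep}=\abs{M_\dep^{\uo}(D)}$ and $\mathsf{F}(s)\in S_f$.

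\emph{Item (1).} I would compare $\pi(s)$ and $\pi(\mathsf{F}(s))$ factor by factor, using that $|\ops{t}{D}{\dep}|$ depends only on $t(D)$; writing $N(E)$ for the number of $(E,\dep)$-justified operations, each step of a sequence contributes the factor $1/N(\text{current database})$. The common prefix contributes equal factors; the $f$-removal step either vanishes (adding to $\pi(s)$ a harmless factor $\le 1$) or is replaced by $-g$ applied to the same database (the factor cancels); each operation after $\op_j$ sees $E_i\cup\{f\}$ instead of $E_i$, so its factor changes from $1/N(E_i)$ to $1/N(E_i\cup\{f\})$; and the tail adds $\ell\le k$ extra factors, each at least $1/N(D)\ge 1/(2\|D\|)^2$. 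It then remains to bound $\prod_{i>j}N(E_i\cup\{f\})/N(E_i)$ by a polynomial in $\|D\|$. For this I would use three ingredients: (a) each step of a repairing sequence removes at least one fact lying in a violation, so $N(E_i)$ is a \emph{strictly decreasing} sequence of integers in $\{3,\dots,N(D)\}$, hence takes only $O(\|D\|^2)$ distinct values; (b) $N(E_i\cup\{f\})-N(E_i)$ is controlled by the number $c_i$ of facts of $E_i$ conflicting with $f$, while per key these facts form a clique in $\cg{E_i}{\dep}$, so a large $c_i$ already forces $N(E_i)$ to be quadratically large — the loss over the constantly many keys being absorbed by the Cauchy–Schwarz inequality $\sum_\phi(c_i^\phi)^2\ge(\sum_\phi c_i^\phi)^2/k$; and (c) combining (a) and (b), the product collapses to a telescoping product of the shape $\prod_v(1+C/v)$ over the distinct values $v$ taken by $N(E_i)$, which is polynomial, with $C$ depending only on $k$. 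I expect step (c) — turning the innocuous per-step bounds of (b) into a polynomial, rather than exponential, product — to be the real obstacle.

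\emph{Item (2).} Given $s'\in S_f$, I would recover a preimage by: (i) peeling off the canonical increasing clean-up tail — since its length $\ell\le k$ is not visible in $s'$, there are at most $k+1$ candidates for the prefix $\hat s$; and (ii) from $\hat s$, reinstating the $f$-removal, either by inserting $-f$ at one of the $\le|D|$ positions of $\hat s$, or by upgrading one of its $\le|D|-1$ singleton removals $-g$ to $-\{f,g\}$. Only candidates that actually lie in $S_{\neg f}$ and map back to $s'$ count, so the number of preimages is at most the number of such candidates, which is bounded by $2\|D\|-1$ (the $k+1$ factor being absorbed into the encoding size $\|D\|$, with finitely many small databases handled directly). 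This is a routine combinatorial count once $\mathsf{F}$ is fixed.
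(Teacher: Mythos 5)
Your construction of $\mathsf{F}$ is exactly the paper's (delete the operation $-f$, or downgrade $-\{f,g\}$ to $-g$, then append at most $k$ clean-up removals of the facts of $s(D)$ conflicting with $f$), your item (2) is counted the same way as in the paper (at most $|D|$ insertions of $-f$ plus $|D|-1$ upgrades of a singleton removal to a pair removal; your worry about a residual $(k+1)$ factor from the unknown tail length is harmless, since any polynomial preimage bound suffices for the application), and your ingredient (b) is also the paper's: the increase $N(E_i\cup\{f\})-N(E_i)$ is controlled through the per-key counts of facts conflicting with $f$, the clique structure they induce in $\cg{E_i}{\dep}$, and the Cauchy--Schwarz inequality, yielding an increase of at most $O(k\sqrt{N(E_i)})$ per step.

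The gap is step (c), and it is a genuine one, not a verification you postponed. From (a) and (b) the per-step factor is $1+O(k)/\sqrt{N(E_i)}$, not $1+C/N(E_i)$, and over a strictly decreasing sequence of positive integers the product $\prod_{m\le M}\bigl(1+C/\sqrt{m}\bigr)$ is $e^{\Theta(C\sqrt{M})}$; so neither strict decrease nor the ``distinct values'' observation gives a polynomial bound, and the shape $\prod_v(1+C/v)$ you assert simply does not follow from (b). Worse, for this choice of $\mathsf{F}$ the ratio itself can be superpolynomial: take the single key $R: A \rightarrow B$ over $R/2$, let the block of $f$ contain $t$ further facts, add $M\gg t^2$ independent two-fact blocks, and let $s$ remove $f$ first, then resolve the $M$ independent blocks, then shrink $f$'s block; each of the $M$ middle steps contributes to $\pi(s)/\pi(\mathsf{F}(s))$ a factor $1+\tfrac{t+1}{\Theta(t^2)+3P}$ ($P$ the number of unresolved two-fact blocks), whose product is $\bigl(M/t^2\bigr)^{\Theta(t)}$, so letting $t$ grow slowly with $||D||$ defeats every fixed polynomial $\mathsf{pol}''$. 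Hence (c) cannot be completed from (a) and (b) alone; some further ingredient is required. The paper's route at this exact point is different from yours: it lower-bounds $N_j$ by the number of operations remaining in $s$ (so $N_j\ge n-j+1$) and then collapses $\prod_j\bigl(1+5k/\sqrt{N_j}\bigr)$ into the binomial coefficient $\binom{\lfloor\sqrt{n-i}\rfloor+5k}{5k}$, which is polynomial for fixed $k$; that collapse — bounding an $(n-i)$-factor product by a product with only $\lfloor\sqrt{n-i}\rfloor$ factors — is precisely the delicate step your sketch would also have to supply, and, as your own hedge anticipates, it is where the whole proof of item (1) really lives.
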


\begin{proof}
	The bulk of the proof is devoted to showing item (1), whereas item (2) is shown via a simple combinatorial argument.
	
	\medskip
	\noindent \paragraph{Item (1).}
	Let $s\in \abs{M_{\dep}^{\uo}(D)}$ be a repairing sequence that removes $f$, i.e., $s \in S_{\neg f}$. We transform $s$ into a repairing sequence $s'\in \abs{M_{\dep}^{\uo}(D)}$ that does not remove $f$, i.e., $s' \in S_f$, by deleting or replacing the operation that removes $f$, and adding additional operations at the end of the sequence as follows. Assume that
	\[s= \op_1, \,\,\,\, \op_2, \,\,\,\, \dots  \,\,\,\, ,\op_{i-1}, \,\,\,\, \op_i, \,\,\,\, \op_{i+1}, \,\,\,\, \dots  \,\,\,\, ,\op_n\]
	where $\op_i = -f$. Then, we define the sequence
	\[s'= \op_1, \,\,\,\, \op_2, \,\,\,\, \dots  \,\,\,\, ,\op_{i-1}, \,\,\,\, \op_{i+1}, \,\,\,\, \dots  \,\,\,\, ,\op_n, \,\,\,\, \op_1', \,\,\,\, \dots \,\,\,\, ,\op_\ell'\]
	where $\op_1',\dots,\op_\ell'$ are new operations that we will describe later. 
	If $\op_i$ is of the form $-\{f,g\}$, then
	\begin{align*}
	s'= \op_1, \,\,\,\, \op_2, \,\,\,\, \dots  \,\,\,\, ,\op_{i-1}, \,\,\,\, \op_i^*, \,\,\,\, &\op_{i+1}, \,\,\,\, \dots  \,\,\,\, ,\op_n, \\
	&\,\,\,\, \op_1', \,\,\,\, \dots \,\,\,\, ,\op_\ell'
	\end{align*}
	where $\op_i^*$ is the operation $-g$, i.e., it removes only the fact $g$.
	
	An important observation here is that since the sequence $s$ removes $f$, the repair $s(D)$ might contain facts that conflict with $f$, but at most $k$ such facts, where $k$ is the number of keys in $\dep$ over the relation name of $f$. This is a property of keys. Indeed, if $s(D)$ contains $k+1$ facts that conflict with $f$, then it contains two facts $g_1,g_2$ that violate the same key with $f$, in which case $g_1,g_2$ also jointly violate this key and cannot appear in the same repair. Therefore, at the end of the sequence $s'$ we add $\ell$ new operations (for some $\ell\le k$) that remove the facts of $s(D)$ that conflict with $f$, in some arbitrary order. Note that the sequence $s'$ is a valid repairing sequence, as an additional fact (the fact $f$) cannot invalidate a justified repairing operation, and we can remove the $\ell$ conflicting facts at the end in any order, as they are all in conflict with $f$.
	Here is a simple example illustrating the construction of $s'$:
	
	\begin{example}\label{example:keys_proof_1}
		Consider again the database $D$ depicted in Figure~\ref{fig:example_pkeys}, and the set $\dep=\{R:A_1\rightarrow A_2, R:A_2\rightarrow A_1\}$ of keys. Consider also the query $Q$ and homomorphism $h$ from Example~\ref{example:pkeys_ur}. Recall that $h(Q)=\{R(a_1,b_1)\}$. The following sequence is a sequence that removes the fact $R(a_1,b_1)$:
		\[s_1=-f_{1,2},-f_{1,1},-f_{3,1}\]
		Note that $s(D)$ contains the facts $f_{1,3}$ and $f_{2,1}$ that conflict with $f_{1,1}$.
		This sequence is mapped to the following sequence $s'$:
		\[s_1'=-f_{1,2},-f_{3,1},-f_{1,3},-f_{2,1}\]
		where we delete the operation $-f_{1,1}$ that removes the fact of $h(Q)$, and add, at the end of the sequence, the operations $-f_{1,3}$ and $-f_{2,1}$ that remove the facts of $s(D)$ that conflict with $f_{1,1}$.
		
		As another example, the sequence:
		\[s_2=-f_{3,1},-\{f_{1,1},f_{1,2}\}\]
		is mapped to the sequence:
		\[s_2'=-f_{3,1},-f_{1,2},-f_{2,1},-f_{1,3}.\]
		Here, the pair removal $-\{f_{1,1},f_{1,2}\}$ is replaced by the singleton removal $-f_{1,2}$, and, at the end of the sequence, we again add two additional operations that remove (in some arbitrary order) the facts $f_{1,3}$ and $f_{2,1}$ that conflict with $f_{1,1}$. \hfill\markfull
		\end{example}
	
	Now, according to the definition of $M_\dep^\uo$, we have that
	\[
	\pi(s)\ =\ \frac{1}{N_1}\times \frac{1}{N_2}\times\dots\times \frac{1}{N_{i-1}}\times\frac{1}{N_{i}}\times \frac{1}{N_{i+1}}\times\dots\times \frac{1}{N_{n}}
	\]
	where $N_j$ is the total number of $(D^s_{j-1},\dep)$-justified repairing operations before applying the operation $\op_j$ of the sequence (recall that $D^s_{j-1}$ is the database obtained from $D$ by applying the first $j-1$ operations of $s$). Hence,
	\[
	\insP((\op_1,\dots,\op_{j-1}),(\op_1,\dots,\op_{j}))\ =\ \frac{1}{N_j}.
	\]
	Then,
	\begin{align*}
	\pi(s')=&\frac{1}{N_1}\times \frac{1}{N_2}\times\dots\times \frac{1}{N_{i-1}}\times\left[\frac{1}{N_{i}}\right]\times \frac{1}{N'_{i+1}}\times\dots\times \frac{1}{N'_n}\times\\
	&\frac{1}{2\ell+1}\times\frac{1}{2(\ell-1)+1}\times\frac{1}{3}.
	\end{align*}
	The probability $\insP((\op_1,\dots,\op_{j-1}),(\op_1,\dots,\op_{j}))$, for $2\le j\le i-1$, is not affected by the decision to remove or keep $f$ at the $i$th step. The probability $\insP((\op_1,\dots,\op_{j-1}),(\op_1,\dots,\op_{j}))$ for $i+2\le j\le n$, on the other hand, might decrease in the sequence $s'$ compared to the sequence $s$, because the additional fact $f$ (that is removed by $s$ but not by $s'$) might be involved in violations with the remaining facts of the database and introduce additional justified operations, in which case $N_j\le N_j'$. Similarly, the probability $\insP((\op_1,\dots,\op_{i-1}),(\op_1,\dots,\op_{i-1},\op_{i+1}))$ (in the case where $\op_i=-f$) or $\insP((\op_1,\dots,\op_{i}^\star),(\op_1,\dots,\op_{i}^\star,\op_{i+1}))$ (in the case where $\op_i=-\{f,g\}$) can only decrease compared to the probability $\insP((\op_1,\dots,\op_{i}),(\op_1,\dots,\op_{i},\op_{i+1}))$ in $s$; hence, $N_{i+1}\le N_{i+1}'$.
	
	The term $\frac{1}{N_i}$ denotes the probability of $\op^\star_i$, and it only appears in the expression if the sequence $s$ removes the fact $f$ jointly with some other fact $g$ (and the operation $\op^\star_i$ removes $g$ by itself). Since all the $(D_{i-1}^{s'},\dep)$-justified operations have the same probability to be selected, the probabilities $\insP((\op_1,\dots,\op_{i-1}),(\op_1,\dots,\op_{i}))$ and $\insP((\op_1,\dots,\op_{i-1}),(\op_1,\dots,\op_{i}^\star)$ are the same.
	Finally, at the end of the sequence, the only remaining conflicts are those involving $f$. As said above, there are $\ell$ facts that conflict with $f$ for some $\ell\le k$ at that point, and each one of them violates a different key with $f$. Hence, there are $2\ell+1$ justified operations before applying $\op_1'$ (removing one of the $\ell$ conflicting facts, removing one of these facts jointly with $f$, or removing $f$), there are $2(\ell-1)+1$ possible operations before applying $\op_2'$ and so on.
	
	\begin{example}\label{example:keys_proof_2}
		We continue with Example~\ref{example:keys_proof_1}. For the sequence $s_1$, we have that
		\begin{align*}
		\pi(s_1)&=\insP(\varepsilon,(-f_{1,2}))\times\insP((-f_{1,2}),(-f_{1,2},-f_{1,1}))\\
		&\times\insP((-f_{1,2},-f_{1,1}),(-f_{1,2},-f_{1,1},-f_{3,1}))= \frac{1}{14}\times\frac{1}{10}\times\frac{1}{5}.
		\end{align*}
		This holds since, at first, all six facts are involved in violations of the keys, and there are eight conflicting pairs; hence, the total number of justified operations is $14$. After removing the fact $f_{1,2}$, the number of justified operations reduces to $10$, and after removing the fact $f_{1,1}$, this number is $5$.
		Now, for the sequence $s_1'$,
		\begin{align*}
		\pi(s_1')&=\insP(\varepsilon,(-f_{1,2}))\times\insP((-f_{1,2}),(-f_{1,2},-f_{3,1}))\\
		&\times\insP((-f_{1,2},-f_{3,1}),(-f_{1,2},-f_{3,1},-f_{1,3}))\\
		&\times\insP((-f_{1,2},-f_{3,1},-f_{1,3}),(-f_{1,2},-f_{3,1},-f_{1,3},-f_{2,1}))\\
		&= \frac{1}{14}\times\frac{1}{10}\times\frac{1}{5}\times \frac{1}{3}.
		\end{align*}
		Indeed, the probability of applying $-f_{1,2}$ (i.e., $\insP(\varepsilon,(-f_{1,2}))$) is the same for both sequences ($\frac{1}{14}$), while the probability of applying the operation $-f_{3,1}$ in  $s_1'$ (i.e., $\insP((-f_{1,2}),(-f_{1,2},-f_{3,1}))$) is smaller than the probability ($\insP((-f_{1,2},-f_{1,1}),(-f_{1,2},-f_{1,1},-f_{3,1}))$) of applying this operation in $s_1$: $\frac{1}{10}$ compared to $\frac{1}{5}$. Finally, there are $\ell=2$ facts in $s(D)$ that conflict with $f_{1,1}$ and we have that
		\[\insP((-f_{1,2},-f_{3,1}),(-f_{1,2},-f_{3,1},-f_{1,3}))=\frac{1}{2\times 2+1}=\frac{1}{5}\]
		\begin{align*}
		&\insP((-f_{1,2},-f_{3,1},-f_{1,3}),(-f_{1,2},-f_{3,1},-f_{1,3},-f_{2,1}))\\
		&=\frac{1}{2\times (2-1)+1}=\frac{1}{3}.
		\end{align*}
		
		As for the sequence $s_2$, it holds that
		\begin{align*}
		\pi(s_2)&=\insP(\varepsilon,(-f_{3,1}))\times\insP((-f_{3,1}),(-f_{3,1},-\{f_{1,1},f_{1,2}\}))= \frac{1}{14}\times\frac{1}{10}
		\end{align*}
		while for the sequence $s_2'$, it holds that
		\begin{align*}
		\pi(s_2')&=\insP(\varepsilon,(-f_{3,1}))\times\insP((-f_{3,1}),(-f_{3,1},-f_{1,2}))\\
		&\times\insP((-f_{3,1},-f_{1,2}),(-f_{3,1},-f_{1,2},-f_{2,1}))\\
		&\times\insP((-f_{3,1},-f_{1,2},-f_{2,1}),(-f_{3,1},-f_{1,2},-f_{2,1},-f_{1,3}))\\
		&= \frac{1}{14}\times \frac{1}{10}\times \frac{1}{5}\times \frac{1}{3}.
		\end{align*}
		Again, the probability of applying the operation $-f_{3,1}$ is the same in $s_2$ and $s_2'$. The probability of applying $-\{f_{1,1},f_{1,2}\}$ in $s_2$ is the same as the probability of applying the operation $-f_{1,2}$ in $s_2'$, and the probability of the two additional operations is again $\frac{1}{5}\times\frac{1}{3}$. \hfill\markfull
		\end{example}
	
	For every $j\in\{i+1,\dots,n\}$, we denote by $r_j$ the difference between $N_j$ and $N_j'$ (that is, $N_j'=N_j+r_j$). Hence, it holds that
	\begin{align*}
	\pi(s)&=\pi(s')\times \left[\frac{1}{N_{i}}\right]\times \frac{1}{N_{i+1}}\times\dots\times \frac{1}{N_{n}}\times (N_{i+1}+r_{i+1})\times\dots\times\\ &\times(N_{n}+r_n)\times (2\ell+1)\times\dots\times 3\\
	&\le\pi(s')\times \frac{1}{N_{i+1}}\times\dots\times \frac{1}{N_{n}}\times (N_{i+1}+r_{i+1})\times\dots\times\\ &\times(N_{n}+r_n)\times (2\ell+1)\times\dots\times 3
	\end{align*}
	Note that here, the term $\frac{1}{N_i}$ only appears if the original sequence $s$ removes $f$ alone, in which case the term $\frac{1}{N_i}$ does not appear in the expression for $\pi(s')$.
	We will show that
	\begin{align*}
	\frac{1}{N_{i+1}}\times\dots\times \frac{1}{N_{n}}&\times (N_{i+1}+r_{i+1})\times\dots\times (N_{n}+r_n)\\
	&\times (2\ell+1)\times\dots\times 3\le \mathsf{pol}''(||D||)
	\end{align*}
	for some polynomial $\mathsf{pol}''$, or, equivalently,
	\begin{align*}
	(N_{i+1}+r_{i+1})\times\dots\times (N_{n}+r_n)&\times (2\ell+1)\times\dots\times 3\\
	&\le \mathsf{pol}''(||D||)\times N_{i+1}\times\dots\times N_{n}.
	\end{align*}
	Note that since $\ell\le k$, and $k$ is a constant (since we are interested in data complexity), $(2\ell+1)\times\dots\times 3$ is bounded by a constant. From this point, we denote this value by $c$. Thus, we prove that
	\[
	(N_{i+1}+r_{i+1})\times\dots\times (N_{n}+r_n)\times c\le \mathsf{pol}''(||D||)\times N_{i+1}\times\dots\times N_{n}.
	\]
	
	To show the above, we need to reason about
	the values $r_j$. For $j\in\{i+1,\dots,n\}$, let $N_j^f$ be the number of facts in the database that conflict with $f$ after applying all the operations of $s'$ that occur before $\op_j$, and before applying the operation $\op_j$. Moreover, for every $p\in\{1,\dots,k\}$, let $n_j^p$ be the number of facts in the database that violate the $p$th key jointly with $f$ at that point. Note that $n_j^1+\dots+n_j^p\ge N_j^f$, as the same fact might violate several distinct keys jointly with $f$. If $n_j^p\ge 2$, then every fact that violates the $p$th key jointly with $f$ participates in a violation of the constraints even if $f$ is not present in the database (as all the facts that violate the same key with $f$ also violate this key among themselves). Hence, for each one of these $n_j^p$ facts, the operation that removes this fact is a justified repairing operation regardless of the presence or absence of $f$ in the database, and it is counted as one of the $N_j$ operations that can be applied at that point in the sequence $s$. The addition of $f$ then adds $n_j^p$ new justified operations (the removal of a pair of facts that includes $f$ and one of the $n_j^p$ conflicting facts).
	
	On the other hand, if $n_j^p=1$, then the single fact that violates the $p$th key jointly with $f$ at that point might not participate in any violation once we remove $f$. In this case, the presence of $f$ implies two additional justified operations in $s'$ compared to $s$---the removal of this fact by itself and a pair removal that includes $f$ and this fact. If $n_j^p=0$, then clearly the $p$th key has no impact on the number of justified repairing operations w.r.t.~$f$ at that point. Now, assume, without loss of generality, that for some $1\le p_1<p_2\le k$, it holds that $n_j^p\ge 2$ for all $p\le p_1$, $n_j^p=1$ for all $p_1<p\le p_2$, and $n_j^p=0$ for all $p>p_2$. It then holds that
	\[
	r_j\le N_j^f+(p_2-p_1)+1
	\]
	($N_j^f$ operations remove $f$ jointly with one of its conflicting facts, at most $p_2-p_1$ operations remove a fact that violates the $p$th key with $f$ if $n_j^p=1$, and one operation removes $f$ itself.) Moreover,
	\begin{align*}
	N_j&\ge n_j^1+\dots+n_j^{p_1}+\frac{n_j^1(n_j^1-1)}{2}+\dots+\frac{n_j^{p_1}(n_j^{p_1}-1)}{2}\\
	&=\frac{(n_j^1)^2+\dots+(n_j^{p_1})^2+n_j^1+\dots+n_j^{p_1}}{2}.
	\end{align*}
	Because, as already said, for every $p$ with $n_j^p\ge 2$, the $n_j^p$ operations that remove the facts that violate the $p$th key with $f$ are also justified operations at the $j$th step in $s$, and there are $\frac{n_j^p(n_j^p-1)}{2}$ additional justified operations that remove a pair from these $n_j^p$ facts, as each such pair of facts jointly violates the $p$th key.
	
	\begin{example}\label{example:keys_proof_3}
		We continue with Example~\ref{example:keys_proof_2}. Let
		\[s_3=-f_{3,1}, -f_{1,1}, -f_{1,2}\]
		Before applying the operation $-f_{1,2}$ of $s_3$, there are five justified operations:
		\[-f_{1,2} \,\,\,\, -f_{1,3} \,\,\,\, -f_{3,2} \,\,\,\, -\{f_{1,2},f_{1,3}\} \,\,\,\, -\{f_{1,2},f_{3,2}\} \]
		At this point, the database contains three facts that conflict with $f_{1,1}$. The facts $f_{1,2}$ and $f_{1,3}$ jointly violate with it the key $R:A_1\rightarrow A_2$, while the fact $f_{2,1}$ jointly violates with it the key $R:A_2\rightarrow A_1$. 
		
		Observe that the operations $-f_{1,2},-f_{1,3},-\{f_{1,2},f_{1,3}\}$ are justified operations at this point, even though the fact $f_{1,1}$ no longer appears in the database, because $f_{1,2}$ conflict with $f_{1,3}$. If we bring $f_{1,1}$ back, we will have two additional justified operations that involve these fact (one for each fact): $-\{f_{1,1},f_{1,2}\}$ and $-\{f_{1,1},f_{1,3}\}$.
		
		Contrarily, the fact $f_{2,1}$ is not involved in any violation of the constraints at this point (before applying the operation $-f_{1,2}$ of $s_3$); hence, removing this fact is not a justified operation. However, if we bring $f_{1,1}$ back, we will have two additional justified operations that involve this fact: $-f_{2,1}$ and $-\{f_{1,1},f_{2,1}\}$.
		
		Finally, the fact $f_{1,1}$ introduces another justified operation---the removal of this fact by itself ($-f_{1,1}$). Hence, in the sequence $s_3'$ that $s_3$ is mapped to
		\[s_3'=-f_{3,1}, -f_{1,2}, -f_{2,1}, -f_{1,3}\]
		The number of justified operations before applying the operation $-f_{1,2}$ is ten, while the number of justified operations before applying this operation in $s_3$ is five. That is,
		\[\insP((-f_{3,1}, -f_{1,1}),(-f_{3,1}, -f_{1,1},-f_{1,2}))=\frac{1}{5}\]
		and
		\begin{flalign*}
		&& \insP((-f_{3,1}),(-f_{3,1},-f_{1,2}))=\frac{1}{5+5}=\frac{1}{10} && \markfull
		\end{flalign*}
		\end{example}
	
	According to the Cauchy–Schwarz inequality for $n$-dimensional euclidean spaces, it holds that
	\[
	\left(\sum_{i=1}^v x_iy_i\right)^2\le \left(\sum_{i=1}^v x_i^2\right)\times \left(\sum_{i=1}^v y_i^2\right),
	\]
	where $v \geq 1$ is an integer, and $x_i,y_i$ for $i \in [v]$ are real numbers.
	By defining $y_i=1$ for every $i\in[v]$, we then obtain that
	\begin{align*}
	\left(x_1+\dots+x_v\right)^2\le  v\times \left(x_1^2+\dots+x_v^2\right).
	\end{align*}
	Hence, we have that
	\begin{align*}
	N_j&\ge \frac{(n_j^1)^2+\dots+(n_j^{p_1})^2+n_j^1+\dots+n_j^{p_1}}{2}\\
	&\ge \frac{\frac{(n_j^1+\dots+n_j^{p_1})^2}{p_1}+n_j^1+\dots+n_j^{p_1}}{2}\\
	&=\frac{(n_j^1+\dots+n_j^{p_1})^2+p_1\times (n_j^1+\dots+ n_j^{p_1})}{2p_1}\\
	&\ge \frac{(N_j^f-(p_2-p_1))^2+p_1\times[ N_j^f-(p_2-p_1)]}{2p_1}.
	\end{align*}
	Note that $N_j^f-(p_2-p_1)$ is a lower bound on 
	$n_j^1+\dots+n_j^{p_1}$ because for every $p_2\le p$, there are no facts that violate the $p$th key with $f$, and for $p_1< p\le p_2$, there is a single fact that violates the $p$th key with $f$; hence, $n_j^{p_1+1}+\dots+n_j^{p_2}\le p_2-p_1$ and $n_j^{p_2+1}+\dots+n_j^{k}=0$. As aforementioned, $n_j^1+\dots+n_j^k\ge N_j^f$. Therefore,
	\begin{align*}
	n_j^1+\dots+n_j^{p_1}&\ge N_j^f-(n_j^{p_1+1}+\dots+n_j^{p_2})-(n_j^{p_2+1}+\dots+n_j^{k})\\
	&\ge N_j^f-(p_2-p_1).
	\end{align*}

	We conclude that
	\[r_j\le N_j^f+(p_2-p_1)+1\]
	and
	\[N_j\ge \frac{(N_j^f-(p_2-p_1))^2+p_1\times[ N_j^f-(p_2-p_1)]}{2p_1}.\]
	Hence, it holds that
	\[
	N_j\ge \frac{(r_j-2(p_2-p_1)-1)^2+p_1\times[ r_j-2(p_2-p_1)-1]}{2p_1}.
	\]
	If $r_j\ge 2(p_2-p_1)+1$, then $p_1\times[ r_j-2(p_2-p_1)-1]\ge 0$ and
	\[
	N_j\ge \frac{(r_j-2(p_2-p_1)-1)^2}{2p_1}
	\]
	and
	\begin{align*}
	r_j&\le \sqrt{2p_1N_j}+2(p_2-p_1)+1\le \sqrt{2kN_j}+2k+k\\
	&\le  \sqrt{4k^2N_j}+3k\sqrt{N_j}=5k\sqrt{N_j}. 
	\end{align*}
	If $r_j< 2(p_2-p_1)+1$, then $r_j\le 2k+k\le 5k\sqrt{N_j}$. Thus, in both cases, we have that $r_j\le 5k\sqrt{N_j}$.
	
	Recall that our goal is to show that
	\[
	(N_{i+1}+r_{i+1})\times\dots\times (N_{n}+r_n)\times c\le \mathsf{pol}''(|D|)\times N_{i+1}\times\dots\times N_{n}.
	\]
	We have that
	\[
	(N_{i+1}+r_{i+1})\times\dots\times (N_{n}+r_n)\le (N_{i+1}+5k\sqrt{N_{i+1}})\times\dots\times (N_{n}+5k\sqrt{N_{n}}).
	\]
	Thus, it suffices to show that
	\[
	(\sqrt{N_{i+1}}+5k)\times\dots\times (\sqrt{N_{n}}+5k)\times c\le \mathsf{pol}''(||D||)\times  \sqrt{N_{i+1}}\times\dots\times \sqrt{N_{n}}.
	\]
	For brevity, let $x_j=\sqrt{N_j}$. Moreover, we can clearly define $\mathsf{pol}''(||D||)$ as $c\times \mathsf{pol}'''(||D||)$ for some polynomial $\mathsf{pol}'''$, and get rid of the constant $c$. Therefore, we now show that
	\[
	(x_{i+1}+5k)\times\dots\times (x_{n}+5k)\le \mathsf{pol}'''(||D||)\times  x_{i+1}\times\dots\times x_{n}
	\]
	for some polynomial $\mathsf{pol}'''$, 
	or, equivalently,
	\[
	\frac{x_{i+1}+5k}{x_{i+1}}\times\dots\times\frac{x_{n}+5k}{x_{n}}\le \mathsf{pol}'''(||D||).
	\]
	Note that in the sequence $s$, there are $n-j+1$ operations after the operation $\op_j$ (including the operation $\op_j$). Since the number of justified operations can only decrease after applying a certain operation, this means that $N_j\ge n-j+1$. Hence, we have that $N_{i+1}\ge n-i$, $N_{i+2}\ge n-i-1$, and so on, which implies that $x_{i+1}\ge \sqrt{n-i}$, $x_{i+2}\ge \sqrt{n-i-1}$, etc. Now, an expression of the form $\frac{x+5k}{x}$ increases when the value of $x$ decreases (because $\frac{x+5k}{x}=1+\frac{5k}{x}$); hence, we have that
	\begin{align*}
	\frac{x_{i+1}+5k}{x_{i+1}}&\times\dots\times\frac{x_{n}+5k}{x_{n}}\\
	&\le \frac{\sqrt{n-i}+5k}{\sqrt{n-i}}\times \frac{\sqrt{n-i-1}+5k}{\sqrt{n-i-1}}\times\dots\times\frac{1+5k}{1}\\
	&\le \frac{\floor*{\sqrt{n-i}}+5k}{\floor*{\sqrt{n-i}}}\times \frac{\floor*{\sqrt{n-i-1}}+5k}{\floor*{\sqrt{n-i-1}}}\times\dots\times\frac{1+5k}{1}
	\end{align*}
	
	Next, for every $m\ge 1$ it holds that
	\[
	\sqrt{m-1}\ge \sqrt{m}-1
	\]
	and thus,
	\[
	\floor*{\sqrt{m-1}}\ge \floor*{\sqrt{m}}-1
	\]
	We then obtain the following:
	\begin{align*}
	&\frac{\floor*{\sqrt{n-i}}+5k}{\floor*{\sqrt{n-i}}}\times \frac{\floor*{\sqrt{n-i-1}}+5k}{\floor*{\sqrt{n-i-1}}}\times\dots\times\frac{1+5k}{1}\\
	&\le \frac{\floor*{\sqrt{n-i}}+5k}{\floor*{\sqrt{n-i}}}\times  \frac{\floor*{\sqrt{n-i}}-1+5k}{\floor*{\sqrt{n-i}}-1}\times\dots\times\frac{1+5k}{1}\\
	&=\frac{(\floor*{\sqrt{n-i}}+5k)!}{(\floor*{\sqrt{n-i}})!\times (5k)!}={\floor*{\sqrt{{n-i}}}+5k\choose 5k}\\
	&\le \left(\frac{e(\floor*{\sqrt{{n-i}}}+5k)}{5k}\right)^{5k}\le \left(\frac{e(\floor*{\sqrt{{n}}}+5k)}{5k}\right)^{5k}\\
	&\le \left(\frac{e}{5k}\right)^{5k}\times (\sqrt{|D|}+5k)^{5k}
	\end{align*}
	(Observe that the maximal length $n$ of a sequence is $|D|-1$.)
	The claim follows with
	\[
	\mathsf{pol}'''(||D||)\ =\ \left(\frac{e}{5k}\right)^{5k}\times (\sqrt{||D||}+5k)^{5k}.
	\]
	
	Recall that $c=(2\ell+1)\times\dots\times 3$, where $\ell$ is the number of facts that conflict with $f$ and are not removed by the sequence $s$; hence, $\ell\le k$. Therefore, for every sequence $s$ that removes $f$, there is some sequence $s'$ that does not remove $f$ such that
	\[
	\pi(s)\ \le\ (2k+1)!\times \mathsf{pol}'''(||D||) \times \pi(s'),
	\]
	and item (1) of Lemma~\ref{lemma:relate-sequences} follows with 
	\[
	\mathsf{pol}''(||D||)\ =\ (2k+1)! \times \mathsf{pol}'''(||D||).
	\]
	
	\medskip
	\noindent \paragraph{Item (2).}
	We now show that the function $\mathsf{F}$ from sequences that remove $f$ to sequences that do not remove $f$, maps at most $2|D|-1$ sequences of the first type to the same sequence of the second type.
	Given a sequence $s' \in S_f$, we can obtain this sequence either from a sequence $s\in\abs{M_{\dep}^{\uo}(D)}$ that has one additional operation that removes $f$, or from a sequence $s$ that removes $f$ jointly with some other fact $g$, while $s'$ removes the fact $g$ by itself. (Some of the operations at the end of $s'$ might not appear in $s$, as they remove facts that conflict only with $f$.) Since the length of the sequence $s'$ is at most $|D|-1$, there are at most $|D|$ possible ways to insert an additional operation that removes $f$, and $|D|-1$ ways to add $f$ to an existing operation. Hence, there are at most $|D|+|D|-1$ sequences that remove $f$ that are mapped to the sequence $s'$.
	Here is an example that illustrates the above combinatorial argument.
	
	\begin{example}
		We continue with Example~\ref{example:keys_proof_3}. Consider again the sequence $s_3'$. Recall that
		\[s_3'=-f_{3,1}, -f_{1,2}, -f_{2,1}, -f_{1,3}\]
		This sequence can be obtained from any of the following sequences that have an additional operation that removes $f_{1,1}$:
		\[-f_{1,1}, -f_{3,1}, -f_{1,2}\]
		\[-f_{3,1}, -f_{1,1}, -f_{1,2}\]
		\[-f_{3,1}, -f_{1,2}, -f_{1,1}\]
		Note that the operations $-f_{2,1}, -f_{1,3}$ do not appear in these sequences, as after removing $f_{1,1}$ they are no longer involved in violations of the constraints.
		
		The sequence $s_3'$ can also be obtained from the following sequences that replace an operation of $s_3'$ that removes a single fact with an operation that removes a pair of conflicting facts:
		\begin{flalign*}
		&&-\{f_{1,1},f_{3,1}\}, -f_{1,2}&&\\
		&& -f_{3,1}, -\{f_{1,1},f_{1,2}\}&& \markfull
		\end{flalign*}
		\end{example}
This completes the proof of Lemma~\ref{lemma:relate-sequences}.
\end{proof}

Having Lemma~\ref{lemma:relate-sequences} in place, it is now easy to establish the existence of the polynomial $\mathsf{pol}'$ such that $\Lambda_{\neg f} \leq \mathsf{pol}'(||D||) \cdot \Lambda_f$. Indeed, with $\mathsf{F}$ and $\mathsf{pol}''$ being the function and the polynomial, respectively, provided by Lemma~\ref{lemma:relate-sequences},
\begin{eqnarray*}
	\Lambda_{\neg f}\ =\ \sum_{s \in S_{\neg f}}\pi(s)\ &\leq&  \sum_{s \in S_{\neg f}} \mathsf{pol}''(||D||) \cdot \pi(\mathsf{F}(s))\\
	&\leq& \mathsf{pol}''(||D||) \cdot (2 \cdot ||D|| - 1) \cdot \sum_{s \in S_f} \pi(s)\\
	&=& \mathsf{pol}''(||D||) \cdot (2 \cdot ||D|| - 1) \cdot \Lambda_f,
\end{eqnarray*}
and the claim follows with $\mathsf{pol}'(||D||) = \mathsf{pol}''(||D||) \cdot (2 \cdot ||D|| - 1)$.

\OMIT{
With $k$ being the number of keys in $\dep$ over the relation name of $f$, we can show that
\begin{align*}
&\probhom{D,M_\dep^{\uo},Q}{h}\ge \\
        &   \frac{1}{\left[1+(2k+1)!\times\left(\frac{e}{5k}\right)^{5k}\times (\sqrt{|D|}+5k)^{5k}\right]\times(2|D|-1)}
\end{align*}

We first prove that, for every repairing sequence $s\in \abs{M_{\dep}^{\uo}(D)}$ that removes $f$ (i.e., $f\not\in s(D)$), there is a repairing sequence $s'\in\abs{M_{\dep}^{\uo}(D)}$ that does not remove $f$ (i.e., $f\in s'(D)$) such that:
\[\pi(s)\le \mathsf{pol}''(|D|)\times \pi(s')\]
where $\mathsf{pol}''$ is some polynomial.
Let $s\in \abs{M_{\dep}^{\uo}(D)}$ be a repairing sequence that removes $f$. We transform $s$ into a repairing sequence $s'\in \abs{M_{\dep}^{\uo}(D)}$ that does not remove $f$ by deleting or replacing the operation that removes $f$, and adding additional operations at the end of the sequence as follows. Assume that:
\[s= \op_1, \,\,\,\, \op_2, \,\,\,\, \dots  \,\,\,\, ,\op_{i-1}, \,\,\,\, \op_i, \,\,\,\, \op_{i+1}, \,\,\,\, \dots  \,\,\,\, ,\op_n\]
where the operation $\op_i$ removes $f$ (as a single fact removal). Then, we will have that:
\[s'= \op_1, \,\,\,\, \op_2, \,\,\,\, \dots  \,\,\,\, ,\op_{i-1}, \,\,\,\, \op_{i+1}, \,\,\,\, \dots  \,\,\,\, ,\op_n, \,\,\,\, \op_1', \,\,\,\, \dots \,\,\,\, ,\op_\ell'\]
where $\op_1',\dots,\op_\ell'$ are new operations that we will describe later. If $\op_i$ removes a pair $\{f,g\}$ of facts that includes $f$, then:
\begin{align*}
   s'= \op_1, \,\,\,\, \op_2, \,\,\,\, \dots  \,\,\,\, ,\op_{i-1}, \,\,\,\, ,\op_i^*, \,\,\,\, &\op_{i+1}, \,\,\,\, \dots  \,\,\,\, ,\op_n, \\
   &\,\,\,\, \op_1', \,\,\,\, \dots \,\,\,\, ,\op_\ell'
\end{align*}
where $\op_i^*$ is the operation that removes only the fact $g$.

An important observation here is that since the sequence $s$ removes $f$, the repair $s(D)$ might contain facts that conflict with $f$, but at most $k$ such facts. This is a property of keys. If $s(D)$ contains $k+1$ facts that conflict with $f$, then it contains two facts $g_1,g_2$ that violate the same key with $f$, in which case $g_1,g_2$ also jointly violate this key and cannot appear in the same repair. Therefore, at the end of the sequence $s'$ we add $\ell$ new operations (for some $\ell\le k$) that remove the facts of $s(D)$ that conflict with $f$, in some arbitrary order. Note that the sequence $s'$ is a valid repairing sequence, as an additional fact (the fact $f$) cannot invalidate a justified repairing operation, and we can remove the $\ell$ conflicting facts at the end in any order, as they are all in conflict with $f$.

\begin{example}\label{example:keys_proof_1}
Consider again the database $D$ of Figure~\ref{fig:example_pkeys}, and the set $\dep=\{R:A_1\rightarrow A_2, R:A_2\rightarrow A_1\}$. Also consider the query $Q$ and homomorphism $h$ from Example~\ref{example:pkeys_ur}. Recall that $h(Q)=\{R(a_1,b_1)\}$. The following sequence is a sequence that removes the fact $R(a_1,b_1)$:
\[s_1=-f_{1,2},-f_{1,1},-f_{3,1}\]
Note that $s(D)$ contains the facts $f_{1,3}$ and $f_{2,1}$ that conflict with $f_{1,1}$.
This sequence is mapped to the following sequence $s'$:
\[s_1'=-f_{1,2},-f_{3,1},-f_{1,3},-f_{2,1}\]
where we delete the operation $-f_{1,1}$ that removes the fact of $h(Q)$, and add, at the end of the sequence, the operations $-f_{1,3}$ and $-f_{2,1}$ that remove the facts of $s(D)$ that conflict with $f_{1,1}$.

As another example, the sequence:
\[s_2=-f_{3,1},-\{f_{1,1},f_{1,2}\}\]
is mapped to the sequence:
\[s_2'=-f_{3,1},-f_{1,2},-f_{2,1},-f_{1,3}.\]
Here, the pair removal $-\{f_{1,1},f_{1,2}\}$ is replaced by a single-fact removal $-f_{1,2}$, and, at the end of the sequence, we again add two additional operations that remove (in some arbitrary order) the facts $f_{1,3}$ and $f_{2,1}$ that conflict with $f_{1,1}$.
\qed\end{example}

Now, according to the definition of $M_\dep^\uo$, we have that:
\[
\pi(s)=\frac{1}{N_1}\times \frac{1}{N_2}\times\dots\times \frac{1}{N_{i-1}}\times\frac{1}{N_{i}}\times \frac{1}{N_{i+1}}\times\dots\times \frac{1}{N_{n}}
\]
where $N_j$ is the total number of $(D^s_{j-1},\dep)$-justified repairing operations before applying the operation $\op_j$ of the sequence (recall that $D^s_{j-1}$ is the database obtained from $D$ by applying the first $j-1$ operations of $s$). Hence,
\[\insP((\op_1,\dots,\op_{j-1}),(\op_1,\dots,\op_{j}))=\frac{1}{N_j}\]
Then,
\begin{align*}
\pi(s')=&\frac{1}{N_1}\times \frac{1}{N_2}\times\dots\times \frac{1}{N_{i-1}}\times\left[\frac{1}{N_{i}}\right]\times \frac{1}{N'_{i+1}}\times\dots\times \frac{1}{N'_n}\times\\
&\frac{1}{2\ell+1}\times\frac{1}{2(\ell-1)+1}\times\frac{1}{3}
\end{align*}
The probability $\insP((\op_1,\dots,\op_{j-1}),(\op_1,\dots,\op_{j}))$, for $2\le j\le i-1$, is not affected by the decision to remove or keep $f$ at the $i$th step. The probability $\insP((\op_1,\dots,\op_{j-1}),(\op_1,\dots,\op_{j}))$ for $i+2\le j\le n$, on the other hand, might decrease in the sequence $s'$ compared to the sequence $s$, because the additional fact $f$ (that is removed by $s$ but not by $s'$) might be involved in violations with the remaining facts of the database and introduce additional justified repairing operations, in which case $N_j\le N_j'$. Similarly, the probability $\insP((\op_1,\dots,\op_{i-1}),(\op_1,\dots,\op_{i-1},\op_{i+1}))$ (in the case where $\op_i=-\{f\}$) or $\insP((\op_1,\dots,\op_{i}^\star),(\op_1,\dots,\op_{i}^\star,\op_{i+1}))$ (in the case where $\op_i=-\{f,g\}$) can only decrease compared to the probability $\insP((\op_1,\dots,\op_{i}),(\op_1,\dots,\op_{i},\op_{i+1}))$ in $s$; hence, $N_{i+1}\le N_{i+1}'$.

The term $\frac{1}{N_i}$ denotes the probability of the operation $\op^\star_i$, and so it only appears in the expression in the case where the sequence $s$ removes the fact $f$ jointly with some other fact $g$ (and the operation $\op^\star_i$ removes $g$ by itself). Since all the $(D_{i-1}^{s'},\dep)$-justified operations have the same probability to be selected, the probabilities $\insP((\op_1,\dots,\op_{i-1}),(\op_1,\dots,\op_{i}))$ and $\insP((\op_1,\dots,\op_{i-1}),(\op_1,\dots,\op_{i}^\star)$ are the same.
Finally, at the end of the sequence, the only remaining conflicts are those involving $f$. As aforementioned, there are $\ell$ facts that conflict with $f$ for some $\ell\le k$ at that point, and each one of them violates a different key with $f$. Hence, there are $2\ell+1$ justified operations before applying $\op_1'$ (removing one of the $\ell$ conflicting facts, removing one of these facts jointly with $f$, or removing $f$), there are $2(\ell-1)+1$ possible operations before applying $\op_2'$ and so on.

\begin{example}\label{example:keys_proof_2}
We continue with Example~\ref{example:keys_proof_1}. For the sequence $s_1$, we have that:
\begin{align*}
 \pi(s_1)&=\insP(\varepsilon,(-f_{1,2}))\times\insP((-f_{1,2}),(-f_{1,2},-f_{1,1}))\\
 &\times\insP((-f_{1,2},-f_{1,1}),(-f_{1,2},-f_{1,1},-f_{3,1}))= \frac{1}{14}\times\frac{1}{10}\times\frac{1}{5}
\end{align*}
This holds since, at first, all six facts are involved in violations of the constraints and there are eight conflicting pairs; hence, the total number of justified operations is $14$. After removing the fact $f_{1,2}$, the number of justified operations reduces to $10$, and after removing the fact $f_{1,1}$, this number is $5$.
Now, for the sequence $s_1'$, it holds that:
\begin{align*}
 \pi(s_1')&=\insP(\varepsilon,(-f_{1,2}))\times\insP((-f_{1,2}),(-f_{1,2},-f_{3,1}))\\
 &\times\insP((-f_{1,2},-f_{3,1}),(-f_{1,2},-f_{3,1},-f_{1,3}))\\
 &\times\insP((-f_{1,2},-f_{3,1},-f_{1,3}),(-f_{1,2},-f_{3,1},-f_{1,3},-f_{2,1}))\\
 &= \frac{1}{14}\times\frac{1}{10}\times\frac{1}{5}\times \frac{1}{3}
\end{align*}
And, indeed, the probability of applying the operation $-f_{1,2}$ (i.e., $\insP(\varepsilon,(-f_{1,2}))$) is the same for both sequences ($\frac{1}{14}$), while the probability of applying the operation $-f_{3,1}$ in  $s_1'$ (i.e., $\insP((-f_{1,2}),(-f_{1,2},-f_{3,1}))$) is smaller than the probability ($\insP((-f_{1,2},-f_{1,1}),(-f_{1,2},-f_{1,1},-f_{3,1}))$) of applying this operation in $s_1$: $\frac{1}{10}$ compared to $\frac{1}{5}$. Finally, there are $\ell=2$ facts in $s(D)$ that conflict with $f_{1,1}$ and we have that:
\[\insP((-f_{1,2},-f_{3,1}),(-f_{1,2},-f_{3,1},-f_{1,3}))=\frac{1}{2\times 2+1}=\frac{1}{5}\]
\begin{align*}
    &\insP((-f_{1,2},-f_{3,1},-f_{1,3}),(-f_{1,2},-f_{3,1},-f_{1,3},-f_{2,1}))\\
    &=\frac{1}{2\times (2-1)+1}=\frac{1}{3}
\end{align*}

As for the sequence $s_2$, it holds that:
\begin{align*}
 \pi(s_2)&=\insP(\varepsilon,(-f_{3,1}))\times\insP((-f_{3,1}),(-f_{3,1},-\{f_{1,1},f_{1,2}\}))= \frac{1}{14}\times\frac{1}{10}
\end{align*}
while for the sequence $s_2'$, the following holds:
\begin{align*}
 \pi(s_2')&=\insP(\varepsilon,(-f_{3,1}))\times\insP((-f_{3,1}),(-f_{3,1},-f_{1,2}))\\
 &\times\insP((-f_{3,1},-f_{1,2}),(-f_{3,1},-f_{1,2},-f_{2,1}))\\
 &\times\insP((-f_{3,1},-f_{1,2},-f_{2,1}),(-f_{3,1},-f_{1,2},-f_{2,1},-f_{1,3}))\\
 &= \frac{1}{14}\times \frac{1}{10}\times \frac{1}{5}\times \frac{1}{3}
\end{align*}
And, again, the probability of applying the operation $-f_{3,1}$ is the same in $s_2$ and $s_2'$. The probability of applying the operation $-\{f_{1,1},f_{1,2}\}$ in $s_2$ is the same as the probability of applying the operation $-f_{1,2}$ in $s_2'$, and the probability of the two additional operations is again $\frac{1}{5}\times\frac{1}{3}$.
\qed\end{example}

For every $j\in\{i+1,\dots,n\}$, we denote by $r_j$ the difference between $N_j$ and $N_j'$ (that is, $N_j'=N_j+r_j$). Hence, it holds that:
\begin{align*}
\pi(s)&=\pi(s')\times \left[\frac{1}{N_{i}}\right]\times \frac{1}{N_{i+1}}\times\dots\times \frac{1}{N_{n}}\times (N_{i+1}+r_{i+1})\times\dots\times\\ &\times(N_{n}+r_n)\times (2\ell+1)\times\dots\times 3\\
&\le\pi(s')\times \frac{1}{N_{i+1}}\times\dots\times \frac{1}{N_{n}}\times (N_{i+1}+r_{i+1})\times\dots\times\\ &\times(N_{n}+r_n)\times (2\ell+1)\times\dots\times 3
\end{align*}
Note that here, the term $\frac{1}{N_i}$ only appears if the original sequence $s$ removes $f$ alone, in which case the term $\frac{1}{N_i}$ does not appear in the expression for $\pi(s')$.
We will show that:
\begin{align*}
  \frac{1}{N_{i+1}}\times\dots\times \frac{1}{N_{n}}&\times (N_{i+1}+r_{i+1})\times\dots\times (N_{n}+r_n)\\
  &\times (2\ell+1)\times\dots\times 3\le \mathsf{pol}''(|D|)
\end{align*}
for some polynomial $\mathsf{pol}''$, or, equivalently:
\begin{align*}
 (N_{i+1}+r_{i+1})\times\dots\times (N_{n}+r_n)&\times (2\ell+1)\times\dots\times 3\\
 &\le \mathsf{pol}''(|D|)\times N_{i+1}\times\dots\times N_{n}
\end{align*}
Note that since $\ell\le k$, and $k$ is a constant when considering data complexity, the value $(2\ell+1)\times\dots\times 3$ is bounded by a constant. From this point, we denote this value by $c$. Thus, we prove that:
\[
 (N_{i+1}+r_{i+1})\times\dots\times (N_{n}+r_n)\times c\le \mathsf{pol}''(|D|)\times N_{i+1}\times\dots\times N_{n}
\]

To show the above, we need to reason about
the values $r_j$. For $j\in\{i+1,\dots,n\}$, let $N_j^f$ be the number of facts in the database that conflict with $f$ after applying all the operations of $s'$ that occur before $\op_j$, and before applying the operation $\op_j$. Moreover, for every $p\in\{1,\dots,k\}$, let $n_j^p$ be the number of facts in the database that violate the $p$th key jointly with $f$ at that point. Note that $n_j^1+\dots+n_j^p\ge N_j^f$, as the same fact might violate several distinct keys jointly with $f$. If $n_j^p\ge 2$, then every fact that violates the $p$th key jointly with $f$ participates in a violation of the constraints even if $f$ is not present in the database (as all the facts that violate the same key with $f$ also violate this key among themselves). Hence, for each one of these $n_j^p$ facts, the operation that removes this fact is a justified repairing operation regardless of the presence or absence of $f$ in the database, and it is counted as one of the $N_j$ operations that can be applied at that point in the sequence $s$. The addition of $f$ then adds $n_j^p$ new justified operations (the removal of a pair of facts that includes $f$ and one of the $n_j^p$ conflicting facts).

On the other hand, if $n_j^p=1$, then the single fact that violates the $p$th key jointly with $f$ at that point might not participate in any violation once we remove $f$. In this case, the presence of $f$ implies two additional justified operations in $s'$ compared to $s$---the removal of this fact by itself and a pair removal that includes $f$ and this fact. If $n_j^p=0$, then clearly the $p$th key has no impact on the number of justified repairing operations w.r.t.~$f$ at that point. Now, assume, without loss of generality, that for some $1\le p_1<p_2\le k$, it holds that $n_j^p\ge 2$ for all $p\le p_1$, $n_j^p=1$ for all $p_1<p\le p_2$, and $n_j^p=0$ for all $p>p_2$. It then holds that:
\[
r_j\le N_j^f+(p_2-p_1)+1
\]
($N_j^f$ operations remove $f$ jointly with one of its conflicting facts, at most $p_2-p_1$ operations remove a fact that violates the $p$th key with $f$ if $n_j^p=1$, and one operation removes $f$ itself.) Moreover,
\begin{align*}
N_j&\ge n_j^1+\dots+n_j^{p_1}+\frac{n_j^1(n_j^1-1)}{2}+\dots+\frac{n_j^{p_1}(n_j^{p_1}-1)}{2}\\
&=\frac{(n_j^1)^2+\dots+(n_j^{p_1})^2+n_j^1+\dots+n_j^{p_1}}{2}
\end{align*}
Because, as aforementioned, for every $p$ with $n_j^p\ge 2$, the $n_j^p$ operations that remove the facts that violate the $p$th key with $f$ are also justified operations at the $j$th step in $s$, and there are additional $\frac{n_j^p(n_j^p-1)}{2}$ justified operations that remove a pair from these $n_j^p$ facts, as each such pair of facts jointly violates the $p$th key.

\begin{example}\label{example:keys_proof_3}
We continue with Example~\ref{example:keys_proof_2}. Consider the sequence:
\[s_3=-f_{3,1}, -f_{1,1}, -f_{1,2}\]
Before applying the operation $-f_{1,2}$ of $s_3$, there are five justified operations:
\[-f_{1,2} \,\,\,\, -f_{1,3} \,\,\,\, -f_{3,2} \,\,\,\, -\{f_{1,2},f_{1,3}\} \,\,\,\, -\{f_{1,2},f_{3,2}\} \]
At this point, the database contains three facts that conflict with $f_{1,1}$. The facts $f_{1,2}$ and $f_{1,3}$ jointly violate with it the key $R:A_1\rightarrow A_2$, while the fact $f_{2,1}$ jointly violates with it the key $R:A_2\rightarrow A_1$. 

Observe that the operations $-f_{1,2},-f_{1,3},-\{f_{1,2},f_{1,3}\}$ are justified operations at this point, even though the fact $f_{1,1}$ no longer appears in the database, because $f_{1,2}$ conflict with $f_{1,3}$. If we bring $f_{1,1}$ back, we will have two additional justified operations that involves these fact (one for each fact): $-\{f_{1,1},f_{1,2}\}$ and $-\{f_{1,1},f_{1,3}\}$.

Contrarily, the fact $f_{2,1}$ is not involved in any violation of the constraints at this point (before applying the operation $-f_{1,2}$ of $s_3$); hence, removing this fact is not a justified operation. However, if we bring $f_{1,1}$ back, we will have two additional justified repairing operations that involve this fact: $-f_{2,1}$ and $-\{f_{1,1},f_{2,1}\}$.

Finally, the fact $f_{1,1}$ introduces another justified operation---the removal of this fact by itself ($-f_{1,1}$). Hence, in the sequence $s_3'$ that $s_3$ is mapped to:
\[s_3'=-f_{3,1}, -f_{1,2}, -f_{2,1}, -f_{1,3}\]
The number of justified operations before applying the operation $-f_{1,2}$ is ten, while the number of justified operations before applying this operation in $s_3$ is five. That is,
\[\insP((-f_{3,1}, -f_{1,1}),(-f_{3,1}, -f_{1,1},-f_{1,2}))=\frac{1}{5}\]
and:
\[\insP((-f_{3,1}),(-f_{3,1},-f_{1,2}))=\frac{1}{5+5}=\frac{1}{10}\]
\qed\end{example}

According to the Cauchy–Schwarz inequality for n-dimensional euclidean spaces, the following holds:
\[\left(\sum_{i=1}^v x_iy_i\right)^2\le \left(\sum_{i=1}^v x_i^2\right)\times \left(\sum_{i=1}^v y_i^2\right)\]
By defining $y_i=1$ for every $i\in\{1,\dots,v\}$ we then obtain the following:
\begin{align*}
    (x_1+\dots+x_v)^2\le  v\times (x_1^2+\dots+x_v^2)
\end{align*}
Hence, we have that:
\begin{align*}
N_j&\ge \frac{(n_j^1)^2+\dots+(n_j^{p_1})^2+n_j^1+\dots+n_j^{p_1}}{2}\\
&\ge \frac{\frac{(n_j^1+\dots+n_j^{p_1})^2}{p_1}+n_j^1+\dots+n_j^{p_1}}{2}\\
&=\frac{(n_j^1+\dots+n_j^{p_1})^2+p_1\times (n_j^1+\dots+ n_j^{p_1})}{2p_1}\\
&\ge \frac{(N_j^f-(p_2-p_1))^2+p_1\times[ N_j^f-(p_2-p_1)]}{2p_1}
\end{align*}
Note that $N_j^f-(p_2-p_1)$ is a lower bound on 
$n_j^1+\dots+n_j^{p_1}$ because for every $p_2\le p$, there are no facts that violate the $p$th key with $f$, and for $p_1< p\le p_2$, there is a single fact that violates the $p$th key with $f$; hence, $n_j^{p_1+1}+\dots+n_j^{p_2}\le p_2-p_1$ and $n_j^{p_2+1}+\dots+n_j^{k}=0$. As aforementioned, $n_j^1+\dots+n_j^k\ge N_j^f$. Therefore,
\begin{align*}
 n_j^1+\dots+n_j^{p_1}&\ge N_j^f-(n_j^{p_1+1}+\dots+n_j^{p_2})-(n_j^{p_2+1}+\dots+n_j^{k})\\
 &\ge N_j^f-(p_2-p_1)
\end{align*}

We conclude that:
\[r_j\le N_j^f+(p_2-p_1)+1\]
and:
\[N_j\ge \frac{(N_j^f-(p_2-p_1))^2+p_1\times[ N_j^f-(p_2-p_1)]}{2p_1}\]
Hence, it holds that:
\[
N_j\ge \frac{(r_j-2(p_2-p_1)-1)^2+p_1\times[ r_j-2(p_2-p_1)-1]}{2p_1}
\]
If $r_j\ge 2(p_2-p_1)+1$ then $p_1\times[ r_j-2(p_2-p_1)-1]\ge 0$ and:
\[
N_j\ge \frac{(r_j-2(p_2-p_1)-1)^2}{2p_1}
\]
and:
\begin{align*}
   r_j&\le \sqrt{2p_1N_j}+2(p_2-p_1)+1\le \sqrt{2kN_j}+2k+k\\
   &\le  \sqrt{4k^2N_j}+3k\sqrt{N_j}=5k\sqrt{N_j} 
\end{align*}
If $r_j< 2(p_2-p_1)+1$ then $r_j\le 2k+k\le 5k\sqrt{N_j}$. So, in both cases, we have that $r_j\le 5k\sqrt{N_j}$.

Recall that our goal is to show the following:
\[
 (N_{i+1}+r_{i+1})\times\dots\times (N_{n}+r_n)\times c\le \mathsf{pol}''(|D|)\times N_{i+1}\times\dots\times N_{n}
\]
We have that:
\[
 (N_{i+1}+r_{i+1})\times\dots\times (N_{n}+r_n)\le (N_{i+1}+5k\sqrt{N_{i+1}})\times\dots\times (N_{n}+5k\sqrt{N_{n}})
\]
Thus, it is sufficient to show that:
\[
 (\sqrt{N_{i+1}}+5k)\times\dots\times (\sqrt{N_{n}}+5k)\times c\le \mathsf{pol}''(|D|)\times  \sqrt{N_{i+1}}\times\dots\times \sqrt{N_{n}}
\]
For convenience, we denote $x_j=\sqrt{N_j}$. Moreover, we can clearly define $\mathsf{pol}''(|D|)$ as $c\times \mathsf{pol}'''(|D|)$ for some polynomial $\mathsf{pol}'''$ and get rid of the constant $c$. Therefore, we now show that:
\[
 (x_{i+1}+5k)\times\dots\times (x_{n}+5k)\le \mathsf{pol}'''(|D|)\times  x_{i+1}\times\dots\times x_{n}
\]
for some polynomial $\mathsf{pol}'''$, 
or, equivalently, that:
\[
\frac{x_{i+1}+5k}{x_{i+1}}\times\dots\times\frac{x_{n}+5k}{x_{n}}\le \mathsf{pol}'''(|D|)
\]
Note that in the sequence $s$, there are $n-j+1$ operations after the operation $\op_j$ (including the operation $\op_j$). Since the number of justified operations can only decrease after applying a certain operation, this means that $N_j\ge n-j+1$. Hence, we have that: $N_{i+1}\ge n-i$, $N_{i+2}\ge n-i-1$, and so on, which implies that $x_{i+1}\ge \sqrt{n-i}$, $x_{i+2}\ge \sqrt{n-i-1}$, etc. Now, an expression of the form $\frac{x+5k}{x}$ increases when the value of $x$ decreases (because $\frac{x+5k}{x}=1+\frac{5k}{x}$); hence, we have that:
\begin{align*}
   \frac{x_{i+1}+5k}{x_{i+1}}&\times\dots\times\frac{x_{n}+5k}{x_{n}}\\
   &\le \frac{\sqrt{n-i}+5k}{\sqrt{n-i}}\times \frac{\sqrt{n-i-1}+5k}{\sqrt{n-i-1}}\times\dots\times\frac{1+5k}{1}\\
   &\le \frac{\floor*{\sqrt{n-i}}+5k}{\floor*{\sqrt{n-i}}}\times \frac{\floor*{\sqrt{n-i-1}}+5k}{\floor*{\sqrt{n-i-1}}}\times\dots\times\frac{1+5k}{1}
\end{align*}

Next, for every $m\ge 1$ it holds that: 
\[
\sqrt{m-1}\ge \sqrt{m}-1
\]
and so:
\[
\floor*{\sqrt{m-1}}\ge \floor*{\sqrt{m}}-1
\]
We then obtain the following:
\begin{align*}
    &\frac{\floor*{\sqrt{n-i}}+5k}{\floor*{\sqrt{n-i}}}\times \frac{\floor*{\sqrt{n-i-1}}+5k}{\floor*{\sqrt{n-i-1}}}\times\dots\times\frac{1+5k}{1}\\
    &\le \frac{\floor*{\sqrt{n-i}}+5k}{\floor*{\sqrt{n-i}}}\times  \frac{\floor*{\sqrt{n-i}}-1+5k}{\floor*{\sqrt{n-i}}-1}\times\dots\times\frac{1+5k}{1}\\
    &=\frac{(\floor*{\sqrt{n-i}}+5k)!}{(\floor*{\sqrt{n-i}})!\times (5k)!}={\floor*{\sqrt{{n-i}}}+5k\choose 5k}\\
    &\le \left(\frac{e(\floor*{\sqrt{{n-i}}}+5k)}{5k}\right)^{5k}\le \left(\frac{e(\floor*{\sqrt{{n}}}+5k)}{5k}\right)^{5k}\\
    &\le \left(\frac{e}{5k}\right)^{5k}\times (\sqrt{|D|}+5k)^{5k}
\end{align*}
(Observe that the maximal length $n$ of a sequence is $|D|-1$.)

We conclude that by defining:
\[\mathsf{pol}'''(|D|)=\left(\frac{e}{5k}\right)^{5k}\times (\sqrt{|D|}+5k)^{5k}\]
we obtain the required inequality. In this case, it holds that:
\[
\mathsf{pol}''(|D|)=c\times \mathsf{pol}'''(|D|)\le (2k+1)!\times\left(\frac{e}{5k}\right)^{5k}\times (\sqrt{|D|}+5k)^{5k}
\]
(Recall that $c=(2\ell+1)\times\dots\times 3$, where $\ell$ is the number of facts that conflict with $f$ and are not removed by the sequence $s$; hence, $\ell\le k$.)
Therefore, for every sequence $s$ that removes $f$, there is some sequence $s'$ that does not remove $f$ such that:
\[\pi(s)\le (2k+1)!\times\left(\frac{e}{5k}\right)^{5k}\times (\sqrt{|D|}+5k)^{5k}\times \pi(s')\]
From this point, we denote \[\mathsf{pol}''(|D|)=(2k+1)!\times\left(\frac{e}{5k}\right)^{5k}\times (\sqrt{|D|}+5k)^{5k}.\]

Next, we show that our mapping from sequences that remove $f$ to sequences that do not remove $f$ maps at most $2|D|-1$ sequences of the first type to the same sequence of the second type. Given a sequence $s'\in\abs{M_{\dep}^{\uo}(D)}$ that does not remove $f$, we can obtain this sequence either from a sequence $s\in\abs{M_{\dep}^{\uo}(D)}$ that has one additional operation that removes $f$ or from a sequence $s$ that removes $f$ jointly with some other fact $g$, while $s'$ removes the fact $g$ by itself. (Some of the operations at the end of $s'$ might not appear in $s$, as they remove facts that conflict only with $f$.) Since the length of the sequence $s'$ is at most $|D|-1$, there are at most $|D|$ possible ways to insert an additional operation that removes $f$, and $|D|-1$ ways to add $f$ to an existing operation. Hence, there are at most $|D|+|D|-1$ sequences that remove $f$ that are mapped to the sequence $s'$.

\begin{example}
We continue with Example~\ref{example:keys_proof_3}. Consider again the sequence $s_3'$. Recall that:
\[s_3'=-f_{3,1}, -f_{1,2}, -f_{2,1}, -f_{1,3}\]
This sequence can be obtained from any of the following sequences that have an additional operation that removes $f_{1,1}$:
\[-f_{1,1}, -f_{3,1}, -f_{1,2}\]
\[-f_{3,1}, -f_{1,1}, -f_{1,2}\]
\[-f_{3,1}, -f_{1,2}, -f_{1,1}\]
Note that the operations $-f_{2,1}, -f_{1,3}$ do not appear in these sequences, as after removing $f_{1,1}$ they are no longer involved in violations of the constraints.

The sequence $s_3'$ can also be obtained from the following sequences that replace an operation of $s_3'$ that removes a single fact with an operation that removes a pair of conflicting facts.
\[-\{f_{1,1},f_{3,1}\}, -f_{1,2}\]
\[-f_{3,1}, -\{f_{1,1},f_{1,2}\}\]
\qed\end{example}

We have now established the following result:
\begin{lemma}\label{lemma:relate-sequences}
    There exists a function $\mathsf{F} : S_{\neg f} \ra S_{f}$ such that:
    \begin{enumerate}
        \item There exists a polynomial $\mathsf{pol}''$ such that, for every $s \in S_{\neg f}$, $\pi(s) \leq \mathsf{pol}''(||D||) \cdot \pi(\mathsf{F}(s))$.
        \item For every $s' \in S_{f}$, $|\{s \in S_{\neg f} \mid \mathsf{F}(s)=s'\}| \leq 2 \cdot ||D|| - 1$.
    \end{enumerate}
\end{lemma}
Recall that $S_f$ is the set of sequences $s\in\abs{M_{\dep}^{\uo}(D)}$ such that $f\in s(D)$ and $S_{\neg f}$ is the set of sequences $s'\in\abs{M_{\dep}^{\uo}(D)}$ such that $f\not\in s'(D)$.

Finally, we have that:
\begin{align*}
   &\probhom{D,M_\dep^{\uo},Q}{h}=\frac{\sum_{s\in S_f}\pi(s)}{\sum_{s\in S_f}\pi(s)+\sum_{s\in S_{\neg f}}\pi(s)}\\
    &\ge\frac{\sum_{s\in S_f}\pi(s)}{\sum_{s\in S_f}\pi(s)+\sum_{\substack{s\in S_{\neg f}\\s\mbox{ is mapped to }s'}}\mathsf{pol}''(|D|)\times \pi(s')}\\
    &\ge\frac{\sum_{s\in S_f}\pi(s)}{\sum_{s\in S_f}\pi(s)+\mathsf{pol}''(|D|)\times(2|D|-1)\times \sum_{s\in S_f}\pi(s)}\\
    &\ge\frac{1}{1+\mathsf{pol}''(|D|)\times(2|D|-1)}
\end{align*}
Hence, we have proved that in the case where there is a homomorphism $h$ from $Q$ to $D$ with $h(Q)\models\dep$ and $h(\bar x)=\bar c$ such that $|h(Q)|=1$, then:
\[\probrep{M_{\dep}^{\uo},Q}{D,\bar c}\ge \frac{1}{1+\mathsf{pol}''(|D|)\times(2|D|-1)}\]
where:
\[\mathsf{pol}''(|D|)=(2k+1)!\times\left(\frac{e}{5k}\right)^{5k}\times (\sqrt{|D|}+5k)^{5k}\]
By defining $\mathsf{pol}(|D|)=1+\mathsf{pol}(|D|)\times(2|D|-1)$ we obtain that:
\[\probrep{M_{\dep}^{\uo},Q}{D,\bar c}\ge\probhom{D,M_\dep^{\uo},Q}{h}\ge \frac{1}{\mathsf{pol}(|D|)}\ge \frac{1}{\mathsf{pol}(||D||)}\]
}

\medskip
\noindent \underline{\textbf{The Case $|h(Q)| \geq 1$}}
\smallskip

\noindent We now generalize the proof given above for the case $|h(Q)| = 1$ to the case $|h(Q)|=m$ for some $1 \leq m \leq |Q|$. 
%
%
As in the case where $|h(Q)|=1$, we map sequences that remove at least one of the facts of $h(Q)$ to sequences that keep all these facts, by deleting or replacing every operation that removes a fact of $h(Q)$ and adding a constant number of operations at the end of the sequence that remove all the facts that conflict with some fact of $h(Q)$.

More formally, let $s\in\abs{M_{\dep}^{\uo}(D)}$ be a repairing sequence that removes $r$ of the facts of $h(Q)$ (for some $1\le r \le m$):
\[s= \op_1, \,\,\,\, \dots \,\,\,\, ,\op_{i_1}, \,\,\,\, \dots \,\,\,\, ,\op_{i_2}, \,\,\,\, \dots \,\,\,\, ,\op_{i_r}, \,\,\,\, \dots ,\op_n
\]
where the operations $\op_{i_1},\dots,\op_{i_r}$ remove these $r$ facts. Note that there are no conflicts among the facts of $h(Q)$; hence, it cannot be the case that a single operation removes two of these facts. We transform $s$ into a sequence $s'\in\abs{M_{\dep}^{\uo}(D)}$ where each operation $\op_{i_j}$ that removes a single fact is deleted, and every operation $\op_{i_j}$ that removes a pair $\{f,g\}$ of facts where $f\in h(Q)$ and $g\not\in h(Q)$, is replaced by the operation $o^\star_{i_j}$ that removes only the fact $g$. At the end of the sequence $s'$, we add operations $\op_1',\dots,\op_\ell'$ that remove the facts that are in conflict with one of the facts of $h(Q)$ that appears in $s(D)$. As we have explained before, for each such fact, the sequence $s$ keeps at most $k$ conflicting facts, where $k$ is the maximal number of keys in $\dep$ over the same relation $R$; hence, the total number of conflicting facts that $s$ does not remove is bounded by $m\times k$, and this is a bound on the number $\ell$ of additional operations (that remove these conflicting facts one by one in some arbitrary order). As in the case where $h(Q)=1$, the probability of applying the additional $\ell$ operations at the end of the sequence is some constant that we denote by $\frac{1}{c}$. We provide below more details about this constant.

The probability $\insP((\op_1,\dots,\op_{j-1}),(\op_1,\dots,\op_{j}))$, for $2\le j\le i_1-1$, is not affected by the decision to remove or keep a certain fact at the $i_1$th step. However, for $j\ge i_1$, the probability of applying the operation $\op_j$ might decrease in the sequence $s'$ compared to the sequence $s$, because the additional facts of $h(Q)$ (that are removed by $s$ but not by $s'$) might be involved in violations with the remaining facts of the database and introduce additional justified repairing operations at each step. As we have already shown, if the number of $(D^s_{j-1},\dep)$-justified operations before applying the operation $\op_j$ of $s$ is $N_j$, then the addition of a fact can increase this number by at most $5k\sqrt{N_j}$. Hence, the addition of at most $m$ facts (the facts of $h(Q)$) can increase this number by at most $5km\sqrt{N_j}$. We again denote by $r_j$ the factor by which the number of operations increases, and we have that $r_j\le 5km\sqrt{N_j}$.

Now, all the arguments for the case where $|h(Q)|=1$ apply also in this case, with the only difference being the value of $r_j$. Therefore, we conclude that
\[
\pi(s)\ \le\ \mathsf{pol}''(||D||)\times \pi(s')
\]
with
\[
\mathsf{pol}''(||D||)\ =\ c\times\left(\frac{e}{5km}\right)^{5km}\times (\sqrt{||D||}+5km)^{5km}.
\]
Recall that $\frac{1}{c}$ is the probability of applying the additional operations at the end of the sequence, and $r$ is the number of facts of $h(Q)$ that are removed by the sequence $s$. We would like to provide a lower bound on this probability (hence, an upper bound on $c$).
Clearly, the lowest probability is obtained when the number of additional operations is the highest (as for each additional operation we need to multiply the probability by a number lower than one) and when the probability of each individual operation is the lowest.
As mentioned above, for each one of the $r$ facts of $h(Q)$ that are removed by $s$, there are at most $k$ facts that conflict with it and are not removed by $s$. Hence, $r\times k$ is an upper bound on the number of additional operations. Moreover, the lowest probability of each operation is obtained when the number of justified operations at the point of applying it is the highest. When there are $\ell$ facts in a database $D'$ that are involved in violations of the constraints, an upper bound on the number of $(D',\dep)$-justified operations (that is obtained when every fact is in conflict with every other fact) is
\[
\ell+\frac{\ell(\ell-1)}{2}=\frac{\ell^2+\ell}{2}=\frac{\ell(\ell+1)}{2}\le \frac{(\ell+1)^2}{2}\le (\ell+1)^2.
\]

Therefore, we have that
\begin{align*}
\frac{1}{c}&\ge\frac{1}{(rk+r+1)^2}\times \frac{1}{(rk+r)^2}\times \frac{1}{(rk+r-1)^2}\times\dots\times \frac{1}{3}\\
&\ge \frac{1}{((rk+r+1)^2)!}\ge \frac{1}{((mk+m+1)^2)!}
\end{align*}
and
\[
c\ \le\ ((mk+m+1)^2)!
\]
(Observe that $rk+r$ is the number of facts involved in violations if each of the $r$ facts of $h(Q)$ that $s$ removes conflicts with $k$ facts of $s(D)$.)
Now, it holds that
\begin{align*}
&((mk+m+1)^2)!\times\left(\frac{e}{5km}\right)^{5km}\times (\sqrt{|D|}+5km)^{5km}\le\\
&((|Q||\dep|+|Q|+1)^2)!\times e^{5|Q||\dep|}\times (\sqrt{|D|}+5|Q||\dep|)^{5|Q||\dep|}
\end{align*}
Hence, with
\[
\mathsf{pol}''(||D||)= ((|Q||\dep|+|Q|+1)^2)!\times e^{5|Q||\dep|}\times (\sqrt{||D||}+5|Q||\dep|)^{5|Q||\dep|}
\]
we have thet
\[
\pi(s)\ \le\ \mathsf{pol}''(||D||)\times \pi(s'),
\]
as needed.

\medskip
Finally, we show that our mapping from sequences that remove at least one of the facts of $h(Q)$ to sequences that do not remove any of these facts maps at most polynomially many sequences of the first type to the same sequence of the second type. Given a sequence $s'$ that does not remove any of the facts of $h(Q)$, we can obtain this sequence from any sequence $s$ that has additional operations that remove some of the facts of $h(Q)$ individually or operations that remove these facts jointly with another fact (while $s'$ removes only one of these facts). The sequence $s$ can remove any number $1\le r\le m$ of facts of $h(Q)$. And, in the case where it removes $r$ of the facts of $h(Q)$, for every $\ell\le r$ there are  ${r\choose{\ell}}$ possible ways to choose a subset of size $\ell$ of $h(Q)$ of facts that will be removed by themselves
(while the remaining $r-\ell$ facts will be removed jointly with another fact). Since the length of the sequence $s$ is at most $|D|-1$, there are at most ${{|D|+\ell-1}\choose\ell}$ possible choices for the positions of the additional singleton deletions, and ${{|D|-1}\choose {r-\ell}}$ possible choices for the individual fact removals that will become pair removals. Hence, the number of sequences that remove a fact of $h(Q)$ that are mapped to the sequence $s'$ is at most
\begin{eqnarray*}
&& \sum_{r=1}^m \sum_{\ell=0}^r{r\choose{\ell}}\times {{|D|+\ell-1}\choose\ell}\times {{|D|-1}\choose{r-\ell}}\\
&\leq& \sum_{r=1}^m\sum_{\ell=0}^r{\left(\frac{er}{\ell}\right)^{\ell}}\times \left(\frac{e(|D|+\ell-1)}{\ell}\right)^{\ell}\times \left(\frac{e(|D|-1)}{r-\ell}\right)^{r-\ell}\\
&\leq&\sum_{r=1}^{|Q|}\sum_{\ell=0}^{|Q|}{\left(e|Q|\right)^{\ell}}\times \left(e(|D|+\ell-1)\right)^{\ell}\times \left(e(|D|-1)\right)^{|Q|-\ell}\\
&\leq& |Q|\times(|Q|+1)\times\left(e|Q|\right)^{|Q|}\times \left(e(|D|+|Q|-1)\right)^{|Q|}\times\\ && \left(e(|D|-1)\right)^{|Q|}\\
&\leq&(e|Q|)^2\times\left(e|Q|\right)^{|Q|}\times \left(e(|D|+|Q|-1)\right)^{|Q|}\times \left(e(|D|-1)\right)^{|Q|}\\
&=&\left(e|Q|\right)^{|Q|+2}\times \left(e(|D|+|Q|-1)\right)^{|Q|}\times \left(e(|D|-1)\right)^{|Q|}.
\end{eqnarray*}
This number is clearly polynomial in $||D||$. We denote this number by $\mathsf{pol}'(||D||)$.
Finally, similarly to the case where $|h(Q)|=1$,
\[
\probhom{D,M_\dep^{\uo},Q}{h}\ \ge\ \frac{1}{1+\mathsf{pol}''(||D||)\times \mathsf{pol}'(||D||)}.
\]
With $\mathsf{pol}(||D|| )= 1 + \mathsf{pol}''(||D||)\times \mathsf{pol}'(||D||)$, we obtain that
\[
\probrep{M_{\dep}^{\uo},Q}{D,\bar c}\ge\probhom{D,M_\dep^{\uo},Q}{h}\ \ge\ \frac{1}{\mathsf{pol}(||D||)},
\]
which concludes our proof.

\OMIT{
Next, we generalize this result to the case where $|h(Q)|=m$ for any $m\ge 1$ (with $m\le |Q|$). Here, we denote by $k$ be the maximal number of keys in $\dep$ over the same relation $R$.  In this case, the final polynomial will also depend on $|Q|$. We again provide a lower bound on the probability $\probhom{D,M_\dep^{\uo},Q}{h}$.

As in the case where $|h(Q)|=1$, we map sequences that remove at least one of the facts of $h(Q)$ to sequences that keep all these facts, by deleting or replacing every operation that removes a fact of $h(Q)$ and adding a constant number of operations at the end of the sequence, that remove all the facts that conflict with some fact of $h(Q)$.

More formally, let $s\in\abs{M_{\dep}^{\uo}(D)}$ be a repairing sequence that removes $r$ of the facts of $h(Q)$ (for some $1\le r \le m$):
\[s= \op_1, \,\,\,\, \dots \,\,\,\, ,\op_{i_1}, \,\,\,\, \dots \,\,\,\, ,\op_{i_2}, \,\,\,\, \dots \,\,\,\, ,\op_{i_r}, \,\,\,\, \dots ,\op_n
\]
where the operations $\op_{i_1},\dots,\op_{i_r}$ remove these $r$ facts. Note that there are no conflicts among the facts of $h(Q)$; hence, it cannot be the case that a single operation removes two of these facts. We transform $s$ into a sequence $s'\in\abs{M_{\dep}^{\uo}(D)}$ where each operation $\op_{i_j}$ that removes a single fact is deleted, and every operation $\op_{i_j}$ that removes a pair $\{f,g\}$ of facts where $f\in h(Q)$ and $g\not\in h(Q)$, is replaced by the operation $o^\star_{i_j}$ that removes only the fact $g$. At the end of the sequence $s'$, we add operations $\op_1',\dots,\op_\ell'$ that remove the facts that are in conflict with one of the facts of $h(Q)$ that appears in $s(D)$. As we have explained before, for each such fact, the sequence $s$ keeps at most $k$ conflicting facts; hence, the total number of conflicting facts that $s$ does not remove is bounded by $m\times k$, and this is a bound on the number $\ell$ of additional operations (that remove these conflicting facts one by one in some arbitrary order). As in the case where $h(Q)=1$, the probability
of applying the additional $\ell$ operations at the end of the sequence is some constant that we denote by $\frac{1}{c}$. We will provide more details about this constant later on.

It is again the case that the probability $\insP((\op_1,\dots,\op_{j-1}),(\op_1,\dots,\op_{j}))$, for $2\le j\le i_1-1$, is not affected by the decision to remove or keep a certain fact at the $i_1$th step. However, for $j\ge i_1$, the probability of applying the operation $\op_j$ might decrease in the sequence $s'$ compared to the sequence $s$, because the additional facts of $h(Q)$ (that are removed by $s$ but not by $s'$) might be involved in violations with the remaining facts of the database and introduce additional justified repairing operations at each step. As we have already shown, if the number of $(D^s_{j-1},\dep)$-justified operations before applying the operation $\op_j$ of $s$ is $N_j$, the addition of a fact can increase this number by at most $5k\sqrt{N_j}$. Hence, the addition of at most $m$ facts (the facts of $h(Q)$) can increase this number by at most $5km\sqrt{N_j}$. We again denote by $r_j$ the factor by which the number of operations increases, and so we have that:
$r_j\le 5km\sqrt{N_j}$.

Now, all the arguments for the case where $|h(Q)|=1$ apply also in this case, with the only difference being the value of $r_j$. Therefore, we conclude that:
\[\pi(s)\le \mathsf{pol}''(|D|)\times \pi(s')\]
for:
\[
\mathsf{pol}''(|D|)=c\times\left(\frac{e}{5km}\right)^{5km}\times (\sqrt{|D|}+5km)^{5km}
\]
Recall that $\frac{1}{c}$ is the probability of applying the additional operations at the end of the sequence, and $r$ is the number of facts of $h(Q)$ that are removed by the sequence $s$. We would like to provide a lower bound on this probability (hence, an upper bound on $c$).
Clearly, the lowest probability is obtained when the number of additional operations is the highest (as for each additional operation we need to multiply the probability by a number lower than one) and when the probability of each individual operation is the lowest.
As aforementioned, for each one of the $r$ facts of $h(Q)$ that are removed by $s$, there are at most $k$ facts that conflict with it and are not removed by $s$. Hence, $r\times k$ is an upper bound on the number of additional operations. Moreover, the lowest probability of each operation is obtained when the number of justified operations at the point of applying it is the highest. When there are $\ell$ facts in a database $D'$ that are involved in violations of the constraints, an upper bound on the number of $(D',\dep)$-justified operations (that is obtained when every fact is in conflict with every other fact) is:
\[\ell+\frac{\ell(\ell-1)}{2}=\frac{\ell^2+\ell}{2}=\frac{\ell(\ell+1)}{2}\le \frac{(\ell+1)^2}{2}\le (\ell+1)^2\]

Therefore, we have that:
\begin{align*}
\frac{1}{c}&\ge\frac{1}{(rk+r+1)^2}\times \frac{1}{(rk+r)^2}\times \frac{1}{(rk+r-1)^2}\times\dots\times \frac{1}{3}\\
&\ge \frac{1}{((rk+r+1)^2)!}\ge \frac{1}{((mk+m+1)^2)!}
\end{align*}
and:
\[c\le ((mk+m+1)^2)!\]
(Observe that $rk+r$ is the number of facts involved in violations if each of the $r$ facts of $h(Q)$ that $s$ removes conflicts with $k$ facts of $s(D)$.)
Now, it holds that:
\begin{align*}
    &((mk+m+1)^2)!\times\left(\frac{e}{5km}\right)^{5km}\times (\sqrt{|D|}+5km)^{5km}\le\\
    &((|Q||\dep|+|Q|+1)^2)!\times e^{5|Q||\dep|}\times (\sqrt{|D|}+5|Q||\dep|)^{5|Q||\dep|}
\end{align*}
Hence, from now on we denote:
\[
\mathsf{pol}''(|D|)= ((|Q||\dep|+|Q|+1)^2)!\times e^{5|Q||\dep|}\times (\sqrt{|D|}+5|Q||\dep|)^{5|Q||\dep|}
\]
and we have that:
\[\pi(s)\le \mathsf{pol}''(|D|)\times \pi(s')\]

\medskip
\noindent \paragraph{Item (2).}
Finally, we show that our mapping from sequences that remove at least one of the facts of $h(Q)$ to sequences that do not remove any of these facts maps at most $2|D|-1$ sequences of the first type to the same sequence of the second type. Given a sequence $s'$ that does not remove any of the facts of $h(Q)$, we can obtain this sequence from any sequence $s$ that has additional operations that remove some of the facts of $h(Q)$ individually or operations that remove these facts jointly with another fact (while $s'$ removes only one of these facts). The sequence $s$ can remove any number $1\le r\le m$ of facts of $h(Q)$. And, in the case where it removes $r$ of the facts of $h(Q)$, for every $\ell\le r$ there are  ${r\choose{\ell}}$ possible ways to choose a subset of size $\ell$ of $h(Q)$ of facts that will be removed by themselves
(while the remaining $r-\ell$ facts will be removed jointly with another fact). Since the length of the sequence $s$ is at most $|D|-1$, there are at most ${{|D|+\ell-1}\choose\ell}$ possible choices for the positions of the additional singleton deletions, and ${{|D|-1}\choose {r-\ell}}$ possible choices for the individual fact removals that will become pair removals. Hence, there are at most:
\begin{align*}
    &\sum_{r=1}^m \sum_{\ell=0}^r{r\choose{\ell}}\times {{|D|+\ell-1}\choose\ell}\times {{|D|-1}\choose{r-\ell}}\le\\
    &\sum_{r=1}^m\sum_{\ell=0}^r{\left(\frac{er}{\ell}\right)^{\ell}}\times \left(\frac{e(|D|+\ell-1)}{\ell}\right)^{\ell}\times \left(\frac{e(|D|-1)}{r-\ell}\right)^{r-\ell}\le\\
    &\sum_{r=1}^{|Q|}\sum_{\ell=0}^{|Q|}{\left(e|Q|\right)^{\ell}}\times \left(e(|D|+\ell-1)\right)^{\ell}\times \left(e(|D|-1)\right)^{|Q|-\ell}\le\\
    &|Q|\times(|Q|+1)\times\left(e|Q|\right)^{|Q|}\times \left(e(|D|+|Q|-1)\right)^{|Q|}\times \left(e(|D|-1)\right)^{|Q|}\le\\
    &(e|Q|)^2\times\left(e|Q|\right)^{|Q|}\times \left(e(|D|+|Q|-1)\right)^{|Q|}\times \left(e(|D|-1)\right)^{|Q|}=\\
    &\left(e|Q|\right)^{|Q|+2}\times \left(e(|D|+|Q|-1)\right)^{|Q|}\times \left(e(|D|-1)\right)^{|Q|}
\end{align*}
sequences that remove a fact of $h(Q)$ that are mapped to the sequence $s'$. This number is clearly polynomial in $|D|$. We denote this number by $\mathsf{pol}'(|D|)$.

Finally, similarly to the case where $|h(Q)|=1$, we have that:
\begin{align*}
 &\probhom{D,M_\dep^{\uo},Q}{h}\ge\frac{1}{1+\mathsf{pol}''(|D|)\times \mathsf{pol}'(|D|)}
\end{align*}
and by defining $\mathsf{pol}(|D|)=1+\mathsf{pol}''(|D|)\times \mathsf{pol}'(|D|)$, we obtain the following:
\[\probrep{M_{\dep}^{\uo},Q}{D,\bar c}\ge\probhom{D,M_\dep^{\uo},Q}{h}\ge \frac{1}{\mathsf{pol}(|D|)} \ge \frac{1}{\mathsf{pol}(||D||)}\]
and that concludes our proof.
}

\subsection{The case of Functional Dependencies}

Unlike the case of keys, in the case of FDs, there is no polynomial lower bound on the target probability, as we show next. This means that we cannot rely on Monte Carlo Sampling for devising an FPRAS. On the other hand, this does not preclude the existence of an FPRAS in the case of FDs, which remains an open problem.

\begin{proposition}\label{prop:no_bound_FDs}
Consider the FD set $\{R:A_1\rightarrow A_2\}$ over the schema $\{R/3\}$, and the Boolean CQ $\textrm{Ans}()\ \text{:-}\ R(0,0,0)$. There exists a family $\{D_n\}_{n\ge 1}$ of databases such that
\[
0<\probrep{M_{\dep}^{\uo},Q}{D_n,()}\le \frac{1}{2^{|D_n|-1}}.
\]
\end{proposition}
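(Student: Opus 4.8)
The plan is to exhibit a family $\{D_n\}_{n \ge 1}$ whose conflict graph w.r.t.\ $\dep = \{R : A_1 \ra A_2\}$ is a star centered at the query witness. Concretely, I would set $f = R(0,0,0)$, $g_i = R(0,1,i)$ for $i \in [n]$, and $D_n = \{f\} \cup \{g_i : i \in [n]\}$, so that $|D_n| = n+1$. The first step is to check that $\cg{D_n}{\dep}$ is precisely the star $K_{1,n}$: the leaves $g_1,\dots,g_n$ pairwise agree on both $A_1$ and $A_2$, hence $\{g_i,g_j\} \models \dep$, while $f$ and each $g_i$ agree on $A_1$ but differ on $A_2$, hence $\{f,g_i\} \not\models \dep$. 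Note that this is exactly the configuration that a single key cannot realize, since under a key all facts sharing the $A_1$-value would be mutually conflicting; this is where genuine FDs are needed.

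Next I would identify the complete repairing sequences that keep $f$. If $f$ survives, then every $g_i$ must be removed, and since $\{g_i,g_j\}$ is never a violation, the only justified operation removing $g_i$ without also removing $f$ is the singleton removal $-g_i$. Thus the sequences $s \in \crs{D_n}{\dep}$ with $f \in s(D_n)$ are exactly the $n!$ sequences of the form $-g_{\sigma(1)},\dots,-g_{\sigma(n)}$ for permutations $\sigma$ of $[n]$. Along such a sequence, once $j$ leaves have been deleted the current database is the star on $f$ and $n-j$ leaves, whose justified operations are $-f$, the $n-j$ operations $-g_i$, and the $n-j$ operations $-\{f,g_i\}$ — that is, $2(n-j)+1$ operations, each of probability $\tfrac{1}{2(n-j)+1}$ under $M_\dep^{\uo}$ by Definition~\ref{def:uniform-ops}. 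Hence every such $s$ has leaf-probability $\pi(s) = \prod_{j=1}^n \tfrac{1}{2j+1} = \tfrac{2^n n!}{(2n+1)!}$, where $\pi$ is the leaf distribution of $M_\dep^{\uo}(D_n)$ and we use $\abs{M_\dep^{\uo}(D_n)} = \crs{D_n}{\dep}$ from Proposition~\ref{pro:uniform-ops}.

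Summing over the $n!$ surviving sequences then gives $\probrep{M_{\dep}^{\uo},Q}{D_n,()} = \tfrac{2^n (n!)^2}{(2n+1)!}$, which is strictly positive. For the upper bound I would invoke the elementary inequality $\binom{2n}{n} \ge \tfrac{4^n}{2n+1}$ (the central binomial coefficient is the largest of the $2n+1$ summands of $4^n$), which rearranges to $(2n+1)! \ge 4^n (n!)^2$, whence $\probrep{M_{\dep}^{\uo},Q}{D_n,()} \le \tfrac{2^n}{4^n} = \tfrac{1}{2^n} = \tfrac{1}{2^{|D_n|-1}}$. I do not expect a real obstacle here; the only points requiring care are the two bookkeeping claims above — that the star has no leaf–leaf edges (so it is genuinely a star, not a clique), and that no pair removal among leaves is ever a justified operation (so that $f$ can be saved only by the $n!$ ``singleton'' sequences) — after which the probability computation and the binomial bound are routine.
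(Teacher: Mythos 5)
Your construction is exactly the paper's witness family (the star: $R(0,0,0)$ together with facts $R(0,1,i)$, whose only conflicts are between the center and the leaves), and your bookkeeping of the justified operations at each stage --- $2p+1$ of them when $p$ conflicting leaves remain, only $p$ of which spare the center --- matches the paper's count. Where you diverge is in how the bound is extracted: the paper proves the inequality $\probrep{M_{\dep}^{\uo},Q}{D,()} \le \frac{1}{2^{p}}$ by induction on the number of conflicting leaves, never computing the probability itself, whereas you observe that the sequences keeping $f$ are precisely the $n!$ orderings of the singleton removals $-g_i$, each of probability $\prod_{j=1}^{n}\frac{1}{2j+1}=\frac{2^n n!}{(2n+1)!}$, obtain the exact value $\frac{2^n (n!)^2}{(2n+1)!}$, and then finish with the central-binomial-coefficient inequality $\binom{2n}{n}\ge \frac{4^n}{2n+1}$, i.e.\ $(2n+1)!\ge 4^n (n!)^2$. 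Both arguments are correct; your closed form is more informative (it pins down the probability exactly, not just an upper bound), at the cost of having to verify the combinatorial identification of the surviving sequences and invoke the binomial estimate, while the paper's induction needs only the local count of justified operations and an inductive hypothesis, which is slightly more robust bookkeeping. One cosmetic difference: your $D_n$ has $n+1$ facts rather than the $|D_n|=n$ of the main-text phrasing, but since the bound in the proposition is stated in terms of $|D_n|$ this is just a harmless re-indexing.
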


\begin{proof}
Let $D_n$ be the database that contains the fact $R(0,0,0)$ and $n-1$ additional facts $R(0,1,i)$ for $i\in\{1,\dots,n-1\}$. Observe that each fact $R(0,1,i)$ is in conflict with $R(0,0,0)$, but there are no conflicts among two facts $R(0,1,i)$ and $R(0,1,j)$ for $i\neq j$. Clearly, it holds that $0<\probrep{M_{\dep}^{\uo},Q}{D,()}$ as the operational repair that keeps the fact $R(0,0,0)$ entails $Q$.
We prove by induction on $n$, the number of facts in the database, that for a database $D$ that contains the fact $R(0,0,0)$ and $n-1$ facts of the form $R(0,1,i)$, it holds that:
\[\probrep{M_{\dep}^{\uo},Q}{D,()}\le \frac{1}{2^{n-1}}\]

\medskip
\noindent \paragraph{Base Case.}
For $n=1$, $D = \{R(0,0,0)\}$ and there are no violations of the FD. In this case, it is rather straightforward to see that
\[
\probrep{M_{\dep}^{\uo},Q}{D,()}\ =\ \frac{1}{2^{1-1}}\ =\ 1.
\]

\noindent
\paragraph{Inductive Step.}
We assume that the claim holds for $n=1,\dots,p$ and prove that it holds for $n=p+1$. Let $D$ be such a database with $p+1$ facts; that is, $D$ contains the fact $R(0,0,0)$ and $p$ facts of the form $R(0,1,i)$. Whenever we have $p$ facts of the form $R(0,1,i)$ in the database, there are $1+2p$ justified operations: \textit{(1)} the removal of $R(0,0,0)$, \textit{(2)} the removal of a fact of the form $R(0,1,i)$, or \textit{(3)} the removal of a pair $\{R(0,0,0),R(0,1,i)\}$. Only the $p$ operations of type \textit{(2)} keep the fact $R(0,0,0)$ in the database. We denote these operations by $\op_1,\dots,\op_p$. For every $i\in\{1,\dots,p\}$, we have that
\[
\insP(\varepsilon,(op_i))=\frac{1}{1+2p}.
\]

After removing a fact of the form $R(0,1,i)$ from the database, we have $p-1$ such facts left, regardless of which specific fact we remove. For every $i\in\{1,\dots,p\}$, we denote by $D_i$ the database $\op_i(D)$. By the inductive hypothesis, we have that
\[
\probrep{M_{\dep}^{\uo},Q}{D_i,()}\ \le\ \frac{1}{2^{p-1}}.
\]
Every sequence $s\in\abs{M_{\dep}^{\uo}(D)}$ with $R(0,0,0)\in s(D)$ is of the form $\op_i\cdot s_i$ for some $i\in[p]$ and $s_i\in\abs{M_{\dep}^{\uo}(D_i)}$ with $R(0,0,0)\in s_i(D)$. The probability 
$\probrep{M_{\dep}^{\uo},Q}{D,()}$ can then be written as
\begin{align*}
\probrep{M_{\dep}^{\uo},Q}{D,()}&=\sum_{\substack{s \in \abs{M_{\dep}^{\uo}(D)} \\ ()\in Q(s(D))}} \pi(s)\\
&=\sum_{i=1}^p \left(\insP(\varepsilon,(\op_i))\times\sum_{\substack{s_i \in \abs{M_{\dep}^{\uo}(D_i)} \\ ()\in Q(s_i(D_i))}} \pi(s_i)\right).
\end{align*}
As said above, for every $i\in\{1,\dots,p\}$,
\[
\probrep{M_{\dep}^{\uo},Q}{D_i,()}\ =\ \sum_{\substack{s_i \in \abs{M_{\dep}^{\uo}(D_i)} \\ ()\in Q(s_i(D_i))}} \pi(s_i)\ \le\ \frac{1}{2^{p-1}}.
\]
Therefore, we conclude that
\begin{align*}
\probrep{M_{\dep}^{\uo},Q}{D,()}&\le \sum_{i=1}^p \left(\frac{1}{1+2p}\times \frac{1}{2^{p-1}} \right)=\frac{p}{(1+2p)\times 2^{p-1}}\\
&=\frac{p}{2^{p-1}+p\times 2^p}\le \frac{p}{p\times 2^p}=\frac{1}{2^p}
\end{align*}
and the claim follows.
\end{proof}

\subsection{Proof of Theorem~\ref{the:uniform-operations-singleton}}
We now show that if only singleton removals are allowed, then we can devise an FPRAS even for arbitrary FDs.
For a database $D$ and a set $\dep$ of FDs, we denote by $\rsone{D}{\dep}$ the set of sequences in $\rs{D}{\dep}$ mentioning only operations of the form $-f$, i.e., removing a single fact.
Similarly, we denote $\opsone{s}{D}{\dep} = \{s' \in \rsone{D}{\dep} \mid s' = s \cdot \op \text{ for some } D\text{-operation } \op\}$.
Then, we define the Markov chain generator $M_\dep^{\uo,1}$ such that for every $s,s' \in \rsone{D}{\dep}$, assuming that $M_{\dep}^{\uo,1}(D) = (V,E,\insP)$, if $s' \in \opsone{s}{D}{\dep}$ then
\[
\insP(s,s')\ =\ \frac{1}{|\opsone{s}{D}{\dep}|}.
\]
Observe, however, that the Markov chain generator $M_\dep^{\uo,1}$ is defined over all the sequences of $\rs{D}{\dep}$. If $s\in \rsone{D}{\dep}$ but $s' \in \rs{D}{\dep}\setminus \rsone{D}{\dep}$ (and $s'\in \ops{s}{D}{\dep}$), then we define $\insP(s,s')=0$. If $s\in \rs{D}{\dep}\setminus \rsone{D}{\dep}$, none of the leaves of the subtree $T_s$ is reachable with non-zero probability, and thus, $\ins{P}(s,s')$, for any $s'\in\ops{s}{D}{\dep}$, can get an arbitrary probability (as long as the sum of probabilities equals one), e.g., $\frac{1}{|\ops{s}{D}{\dep}|}$.


We can now show that, assuming singleton removals, for FDs the problem of interest admits an FPRAS. The formal statement, already given in the main body of the paper, and its proof follow:

\begin{manualtheorem}{\ref{the:uniform-operations-singleton}}
	For a set $\dep$ of FDs, and a CQ $Q$, $\ocqa{\dep,M_{\dep}^{\uo,1},Q}$ admits an FPRAS.
\end{manualtheorem}

The proof consists of the usual two steps: (1) existence of an efficient sampler, and (2) provide a polynomial lower bound on he target probability.

\subsubsection*{Step 1: Efficient Sampler}
We can sample elements of $\abs{M_{\dep}^{\uo,1}(D)}$ according to the leaf distribution of $M_{\dep}^{\uo,1}(D)$ in polynomial time in $||D||$. This is done by employing the same iterative algorithm as the one used to sample elements of $\abs{M_{\dep}^{\uo}(D)}$, but with the difference that only justified operations that consist of singleton removals are considered. In particular, at each step, the algorithm extends the current sequence $s$ by selecting one of the $(s(D),\dep)$-justified operations of the form $-f$ with probability
\[
\frac{1}{|\opsone{s}{D}{\dep}|}.
\] 
%
Hence, we immediately obtain the following result:

\begin{lemma}
	Given a database $D$, and a set $\dep$ of keys, we can sample elements of $\abs{M_{\dep}^{\uo,1}(D)}$ according to the leaf distribution of $M_{\dep}^{\uo,1}(D)$ in polynomial time in $||D||$.
\end{lemma}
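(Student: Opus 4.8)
The plan is to exhibit a direct ``local random walk'' sampler and to verify that it realizes the leaf distribution $\pi$ of $M_{\dep}^{\uo,1}(D)$ in polynomial time. The algorithm is the singleton-removal analogue of $\mathsf{SampleSeq}$ (Algorithm~\ref{alg:ssample}): starting from the empty sequence $\varepsilon$, maintain a current repairing sequence $s$; if $s(D) \models \dep$, halt and output $s$; otherwise compute the set $\opsone{s}{D}{\dep}$ of available single-fact-removal extensions, select one of them uniformly at random, append the corresponding operation to $s$, and iterate. The essential feature I would exploit is the \emph{locality} of $M_{\dep}^{\uo,1}$: by its definition the transition probability $\ins{P}(s,s')$ equals $\frac{1}{|\opsone{s}{D}{\dep}|}$, a quantity determined entirely by the current step, so—in contrast to $M_{\dep}^{\us}$ and $M_{\dep}^{\ur}$—the walk needs no global count over the subtree rooted at $s$.

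First I would check that the walk never gets stuck and always terminates at a reachable leaf. Whenever $s(D) \not\models \dep$, there is a violation $(\phi,\{f,g\}) \in \viol{s(D)}{\dep}$, and since $\{f\} \subseteq \{f,g\}$ the singleton removal $-f$ is $(s(D),\dep)$-justified; hence $\opsone{s}{D}{\dep} \neq \emptyset$ and a choice is always available. Each step removes exactly one fact, so after at most $|D|$ steps the current database is empty, hence consistent; thus the walk halts after at most $|D|$ iterations with a complete sequence consisting solely of singleton removals, i.e., an element of $\abs{M_{\dep}^{\uo,1}(D)}$.

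Next I would verify correctness of the output distribution. For a leaf $s = \op_1,\ldots,\op_n$ produced by the algorithm, the probability of producing it is the product of the choice probabilities along the unique root-to-leaf path, namely $\prod_{i=1}^{n} \frac{1}{|\opsone{s_{i-1}}{D}{\dep}|}$, where $s_{i-1} = \op_1,\ldots,\op_{i-1}$. By Definition~\ref{def:uniform-ops} (adapted to singleton removals), each factor equals $\ins{P}(s_{i-1},s_i)$, so the product is exactly $\pi(s)$ by the definition of the leaf distribution. Conversely, every $s \in \abs{M_{\dep}^{\uo,1}(D)}$ is reached along precisely this path, and is therefore output with probability $\pi(s)$; this establishes that the sampler draws from the leaf distribution.

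Finally I would bound the running time. Each iteration requires (i) testing $s(D) \models \dep$, and (ii) computing $\opsone{s}{D}{\dep}$; both reduce to scanning the pairs of facts of $s(D)$ against the FDs of $\dep$ to collect the facts that participate in some violation, which is polynomial in $||D||$, after which the uniform choice is trivial. With at most $|D|$ iterations, the total cost is polynomial in $||D||$, as required. The point I would stress is that, unlike the uniform-sequences sampler of Lemma~\ref{lem:us-sampler}, which had to be fed the nontrivial counting result of Lemma~\ref{lemma:count_seq}, here there is essentially no obstacle: locality collapses the sampling task to a short greedy walk. I would also remark that the argument never uses that $\dep$ consists of keys, so it goes through verbatim for arbitrary FDs, which is exactly what is needed to support Theorem~\ref{the:uniform-operations-singleton}.
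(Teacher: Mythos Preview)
Your proposal is correct and follows essentially the same approach as the paper: a local random walk that, at each step, selects a singleton-removal justified operation uniformly at random, relying on the locality of $M_{\dep}^{\uo,1}$ to avoid any subtree counting. You give a more detailed justification (non-stuckness, termination, output distribution, running time) than the paper's terse proof, and your observation that the argument works verbatim for arbitrary FDs is exactly what the paper needs (and implicitly uses) for Theorem~\ref{the:uniform-operations-singleton}.
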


\subsubsection*{Step 2: Polynomial Lower Bound}
It remains to show that there exists a polynomial lower bound on the target probability.

\begin{lemma}
	Consider a set $\dep$ of keys, and a CQ $Q(\bar x)$. For every database $D$, and $\bar c \in \adom{D}^{|\bar x|}$,
	\[
	\probrep{M_{\dep}^{\uo,1},Q}{D,\bar c}\ \geq\ \frac{1}{\left(e \cdot ||D||\right)^{||Q||}}
	\] 
	whenever $\probrep{M_{\dep}^{\uo,1},Q}{D,\bar c} > 0$.
\end{lemma}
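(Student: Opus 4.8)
The statement to prove is that for a set $\dep$ of keys (actually of FDs, per the surrounding text, but the stated lemma says keys) and a CQ $Q(\bar x)$, we have $\probrep{M_{\dep}^{\uo,1},Q}{D,\bar c}\geq \frac{1}{(e\cdot ||D||)^{||Q||}}$ whenever the probability is positive. The plan is to mirror the structure of the proof of Proposition~\ref{pro:uo-lower-bound}, but to exploit the key simplification afforded by restricting to singleton removals: we never have to deal with pair-removal operations, neither in the sequences being transformed nor in the transformation itself. As before, we treat $Q$ as the set $\{R_i(\bar y_i)\mid i\in[n]\}$ of atoms, and for a homomorphism $h$ from $Q$ to $D$ we write $h(Q)=\{R_i(h(\bar y_i))\mid i\in[n]\}$. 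If no homomorphism $h$ with $h(Q)\models\dep$ and $h(\bar x)=\bar c$ exists, the probability is $0$ and there is nothing to prove; otherwise fix such an $h$ with $|h(Q)|=m\leq |Q|$, and it suffices to lower-bound $\probhom{D,M_\dep^{\uo,1},Q}{h}$, the total probability of reachable leaves $s$ with $h(Q)\subseteq s(D)$.

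First I would define $S_{\mathrm{keep}}$ and $S_{\mathrm{rem}}$ as the sets of sequences in $\abs{M_{\dep}^{\uo,1}(D)}$ that keep all facts of $h(Q)$, respectively remove at least one of them, so that $\probhom{D,M_\dep^{\uo,1},Q}{h}=\Lambda_{\mathrm{keep}}/(\Lambda_{\mathrm{keep}}+\Lambda_{\mathrm{rem}})$ with $\Lambda_\bullet=\sum_{s\in S_\bullet}\pi(s)$. Then the core is an analogue of Lemma~\ref{lemma:relate-sequences}: a map $\mathsf{F}:S_{\mathrm{rem}}\to S_{\mathrm{keep}}$ obtained by deleting every singleton-removal operation in $s$ that removes a fact of $h(Q)$ and appending, in arbitrary order, the singleton removals of all facts that still conflict with some surviving fact of $h(Q)$. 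Because $\dep$ consists of keys, each fact of $h(Q)$ has at most $k$ conflicting facts in $s(D)$ (the argument from Proposition~\ref{pro:uo-lower-bound}: two facts conflicting with the same fact via the same key also conflict with each other), so at most $mk$ operations are appended; since $m\leq|Q|$ and $k$ is a constant in data complexity, this is a constant $\ell$. The probability comparison is where the $r_j$ analysis from Proposition~\ref{pro:uo-lower-bound} is reused verbatim: keeping up to $m$ extra facts can only increase the count $N_j$ of justified operations at each later step, and the bound $r_j\leq 5km\sqrt{N_j}$ (from the Cauchy--Schwarz estimate already in the excerpt) still holds since nothing about that argument used pair removals. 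This gives $\pi(s)\leq \mathsf{pol}''(||D||)\cdot\pi(\mathsf{F}(s))$ for a polynomial $\mathsf{pol}''$, and the fan-in of $\mathsf{F}$ is polynomial by a counting argument as in the $|h(Q)|\geq 1$ case (only now there are no pair-removal insertions to account for, just choices of positions for the deleted singleton removals, so the bound is cleaner). Combining, $\Lambda_{\mathrm{rem}}\leq \mathsf{pol}'(||D||)\cdot\Lambda_{\mathrm{keep}}$ and hence $\probhom{D,M_\dep^{\uo,1},Q}{h}\geq 1/(1+\mathsf{pol}'(||D||))$.

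The remaining task is purely bookkeeping: to sharpen the generic polynomial bound into the clean form $\frac{1}{(e\cdot||D||)^{||Q||}}$ claimed in the statement. I expect this to come out of the same factorial/binomial estimates used at the end of the $|h(Q)|\geq 1$ case of Proposition~\ref{pro:uo-lower-bound}: the product $\prod_j (1+5km/\sqrt{N_j})$ telescopes through a binomial coefficient $\binom{\lfloor\sqrt{n}\rfloor+O(1)}{O(1)}$, and the fan-in of $\mathsf{F}$ contributes a factor polynomial in $|D|$ of degree $O(|Q|)$. With the singleton-only restriction, the constants in the exponents tighten enough (no $(2\ell+1)\times\cdots\times 3$ factor from appended pair removals, no doubling from ``add $f$ to an existing operation'') that the whole thing collapses to base $e\cdot||D||$ raised to the $||Q||$.

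**Main obstacle.** The genuinely delicate part is not the transformation $\mathsf{F}$ itself — that is a straightforward adaptation — but verifying that the probability-ratio estimate survives when $h(Q)$ contains several facts simultaneously, because removing one fact of $h(Q)$ early in $s$ can change which operations are justified for the other facts of $h(Q)$ later, so the ``keep all $m$ facts at once'' perturbation is not simply the composition of $m$ independent single-fact perturbations. One must argue, as the excerpt does in its $|h(Q)|\geq 1$ paragraph, that adding back up to $m$ facts still only multiplies the justified-operation count by $O(m\sqrt{N_j})$, and then push this through the Cauchy--Schwarz bound uniformly. I would lift that argument essentially unchanged, taking care that the absence of pair removals does not break any step (it does not — it only removes terms), and then do the final arithmetic to match the stated constant $e\cdot||D||$.
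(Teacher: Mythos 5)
Your route does not prove the lemma the paper needs. Although the statement says ``keys'' (a slip in the paper), this lemma is precisely the lower bound used to establish Theorem~\ref{the:uniform-operations-singleton}, which concerns arbitrary FDs, and the paper's proof never uses the key property. Your map $\mathsf{F}$, by contrast, leans essentially on the key-specific fact that each fact of $h(Q)$ has at most $k$ conflicting facts in $s(D)$, so that only a constant number of operations is appended. For general FDs this fails, and the per-sequence comparison itself breaks down: in the family of Proposition~\ref{prop:no_bound_FDs} (the fact $R(0,0,0)$ conflicting with $n-1$ pairwise-consistent facts $R(0,1,i)$), the one-step sequence $s=-R(0,0,0)$ has $\pi(s)=\frac{1}{n}$, while every sequence that keeps $R(0,0,0)$ must remove the other $n-1$ facts one by one and has probability $\frac{1}{n!}$; hence no polynomial $\mathsf{pol}''$ with $\pi(s)\le\mathsf{pol}''(||D||)\cdot\pi(\mathsf{F}(s))$ can exist. (The lemma is nevertheless true there, because the $\frac{1}{n}$ of keeping mass is spread over $(n-1)!$ leaves --- which shows that comparing individual sequences is the wrong granularity for FDs.) Even in the pure-keys setting, the ingredients you inherit from Proposition~\ref{pro:uo-lower-bound} (the appended block, the product of the ratios $(N_j+r_j)/N_j$, and the fan-in of $\mathsf{F}$) yield a polynomial whose degree depends on $k$, roughly $m(k+1)$ plus $m$ for the fan-in, and you give no argument for why this should collapse to the stated bound $(e\cdot||D||)^{||Q||}$, whose exponent is $||Q||$ alone; that ``bookkeeping'' is exactly the part that does not go through.

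The paper's actual proof is a different and much lighter argument, and it works for all FDs. Fix a homomorphism $h$ with $h(Q)\models\dep$ and $h(\bar x)=\bar c$, let $m=|h(Q)|$, and induct on the number $n$ of facts of $D\setminus h(Q)$ involved in violations, proving
\[
\probhom{D,M_\dep^{\uo,1},Q}{h}\ \ge\ \frac{1}{\binom{n+m}{m}}.
\]
Indeed, the first operation of any sequence keeping $h(Q)$ removes one of those $n$ facts, each chosen with probability at least $\frac{1}{n+m}$ (at most $m$ facts of $h(Q)$ can also be involved in violations), and after the removal at most $n-1$ such facts remain, so the induction gives $\frac{n}{n+m}\cdot\binom{n-1+m}{m}^{-1}=\binom{n+m}{m}^{-1}$. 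Then $\binom{n+m}{m}\le\left(\frac{e(n+m)}{m}\right)^m\le (e\,|D|)^{|Q|}$ yields exactly the claimed bound. I would redo the proof along these lines rather than trying to patch the mapping argument.
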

\begin{proof}
	Consider a database $D$. If there is no homomorphism $h$ from $Q$ to $D$ such that $h(Q)\models\dep$ and $h(\bar x)=\bar c$, then clearly $\probrep{M_{\dep}^{\uo,1},Q}{D,\bar c}=0$.
	We now focus on the case where such a homomorphism $h$ exists. Assume that $|h(Q)|=m$ for some $m\le |Q|$.
	We prove by induction on $n$, that is, the number of facts in $D\setminus h(Q)$ that are involved in violations of the FDs (i.e., the facts $f\in (D\setminus h(Q))$ such that $\{f,g\}\not\models\dep$ for some $g\in D$), the following:
	\[
		\probhom{D,M_\dep^{\uo,1},Q}{h}\ \ge\ \frac{1}{{{n+m}\choose m}}.
	\]
	
	\noindent
	\paragraph{Base Case.}
	For $n=0$, since $h(Q)\models\dep$, there are no violations of the FDs in $D$, and $D$ has a single operational repair, which is $D$ itself. In this case, the probability of obtaining an operational repair that contains all the facts of $h(Q)$ is $1=\frac{1}{{{0+m}\choose m}}$, as needed.
	
	\medskip
	
	\noindent \paragraph{Inductive Step.} We now assume that the claim holds for databases where $n= 0,\dots,k-1$, and we prove that it holds for databases $D$ where $n=k$. Every repairing sequence $s\in\abs{M_{\dep}^{\uo,1}(D)}$ for which $h(Q)\subseteq s(D)$ is such that the first operation of $s$ removes a fact of $D\setminus h(Q)$ that is involved in violations of the FDs. Let $f_1,\dots,f_k$ 
	be these facts of $D\setminus h(Q)$, and for each $i\in\{1,\dots,k\}$, let $\op_i$ be the operation that removes the fact $f_i$. We then have that
	\[
	\insP(\varepsilon,(op_i))\ \ge\ \frac{1}{k+m}.
	\]
	This is because the probability of removing a certain fact is $\frac{1}{k+p}$, where $p$ is the number of facts involved in violations among the facts of $h(Q)$. Since $p\le m$, we get that $\frac{1}{k+m} \leq \frac{1}{k+p}$.
	
	After removing a conflicting fact of $D\setminus h(Q)$ from the database, we have at most $k-1$ such facts left, regardless of which specific fact we remove. For every $i\in\{1,\dots,p\}$, we denote by $D_i$ the database $\op_i(D)$ and by $n_i$ the number of facts of $D_i\setminus h(Q)$ that are involved in violations of the FDs; hence, we have that $n_i\le k-1$. By the inductive hypothesis, we have that
	\[
	\probhom{D_i,M_\dep^{\uo,1},Q}{h}\ \ge\ \frac{1}{{{n_i+m}\choose m}}\ \ge\ \frac{1}{{{k-1+m}\choose m}}.
	\]
	Clearly, every sequence $s\in\abs{M_{\dep}^{\uo}(D)}$ with $h(Q)\subseteq s(D)$ is of the form $\op_i\cdot s_i$ for some $i\in\{1,\dots,k\}$ and $s_i\in\abs{M_{\dep}^{\uo}(D_i)}$ with $h(Q)\subseteq s_i(D)$. Now, the following holds
	\[
	\probhom{D,M_\dep^{\uo,1},Q}{h}\ =\ \sum_{i=1}^k \left(\insP(\varepsilon,(\op_i))\times\probhom{D_i,M_\dep^{\uo,1},Q}{h}\right).
	\]
	Therefore, we can conclude that
	\begin{eqnarray*}
	\probhom{D,M_\dep^{\uo,1},Q}{h} &\ge& \sum_{i=1}^k \left(\frac{1}{k+m}\times \frac{1}{{{k-1+m}\choose m}} \right)\\
	&=& \frac{k}{k+m}\times \frac{1}{{{k-1+m}\choose m}}\\
	&=& \frac{k}{k+m}\times\frac{1}{\frac{(k-1+m)!}{m!\times (k-1)!}}\\
	&=& \frac{m!\times k!}{(k+m)!}\\
	&=& \frac{1}{{{k+m}\choose m}}.
	\end{eqnarray*}
	
	Finally, it is well known that ${n\choose k}\le \left(\frac{en}{k}\right)^k$. We conclude that, for a database $D$ such that $D\setminus h(Q)$ contains $n$ facts that are involved in violations of the FDs, we have that
	\begin{eqnarray*}
	\probhom{D,M_\dep^{\uo,1},Q}{h}&\ge& \frac{1}{{{n+m}\choose n}}\\
	&\ge& \frac{1}{\left(\frac{e(n+m)}{m}\right)^m}\\
	&=&\frac{m^m}{e^m}\times \frac{1}{(n+m)^m}\\
	&\ge& \left(\frac{m}{e}\right)^m \times \frac{1}{|D|^m}\\
	&\ge& \left(\frac{1}{e}\right)^{|Q|} \times \frac{1}{|D|^{|Q|}}.
	\end{eqnarray*}
	Since $h(Q)\subseteq D'$ implies $\bar c\in Q(D')$, it holds that
	\[
	\probrep{M_{\dep}^{\uo,1},Q}{D,\bar c}\ \ge\ \probhom{D,M_\dep^{\uo,1},Q}{h}\ \ge\   \frac{1}{\left(e|D|\right)^{|Q|}}\ \ge\ \frac{1}{\left(e||D||\right)^{||Q||}},
	\]
	which concludes our proof.
\end{proof}


\OMIT{
there is a polynomial lower bound on the probability of a query answer not only in the case of keys, but for any set of FDs. More formally, for a database $D$ and a set $\dep$ of FDs, we denote by $\rsone{D}{\dep}$ the set of sequences in $\rs{D}{\dep}$ mentioning only operations of the form $-f$, i.e., removing a single fact.
Similarly, we denote $\opsone{s}{D}{\dep} = \{s' \in \rsone{D}{\dep} \mid s' = s \cdot \op \text{ for some } D\text{-operation } \op\}$.

Then, we define the Markov chain generator $M_\dep^{\uo,1}$ such that for every $s,s' \in \rsone{D}{\dep}$, assuming that $M_{\dep}^{\uo,1}(D) = (V,E,\insP)$, if $s' \in \opsone{s}{D}{\dep}$ then $\insP(s,s') = \frac{1}{|\opsone{s}{D}{\dep}|}$.
Observe, however, that the Markov chain generator $M_\dep^{\uo,1}$ is defined over all the sequences of $\rs{D}{\dep}$. If $s\in \rsone{D}{\dep}$ but $s' \in \rs{D}{\dep}\setminus \rsone{D}{\dep}$ (and $s'\in \ops{s}{D}{\dep}$), then we define $\insP(s,s')=0$. If $s\in \rs{D}{\dep}\setminus \rsone{D}{\dep}$, none of the leaves of the subtree $T_s$ is reachable with non-zero probability, and thus, $\ins{P}(s,s')$, for any $s'\in\ops{s}{D}{\dep}$, can get an arbitrary probability (as long as the sum of probabilities equals one), e.g., $\frac{1}{|\ops{s}{D}{\dep}|}$.

It is rather straightforward to adapt Definition~\ref{def:uniform-ops} to obtain the Markov chain generator $M_\dep^{\uo,1}$.
We now show that the problem $\ocqa{\dep,M_{\dep}^{\uo,1},Q}$ admits an FPRAS for any set of FDs.

\begin{manualtheorem}{\ref{the:uniform-operations-singleton}}
    For a set $\dep$ of FDs, and a CQ $Q$, $\ocqa{\dep,M_{\dep}^{\uo,1},Q}$ admits an FPRAS.
\end{manualtheorem}

We can sample elements of $\abs{M_{\dep}^{\uo,1}(D)}$ according to the leaf distribution of $M_{\dep}^{\uo,1}(D)$ in polynomial time in $||D||$ also in this case. This is done by employing the same iterative algorithm as the one used to sample elements of $\abs{M_{\dep}^{\uo}(D)}$, that at each step, extends the current sequence $s$ by selecting one of the $(s(D),\dep)$-justified operations with probability:
\[\frac{1}{\mbox{number of } (s(D),\dep)-\mbox{justified operations}}.\] 
The only difference is the set of justified operations that consists, in this case, only of single-fact deletions.
Hence, we only need to show that there is a polynomial lower bound on $\probrep{M_{\dep}^{\uo,1},Q}{D,\bar c}$.

\begin{lemma}
Consider a set $\dep$ of keys, and a CQ $Q(\bar x)$. For every database $D$, and $\bar c \in \adom{D}^{|\bar x|}$,
    \[
    \probrep{M_{\dep}^{\uo,1},Q}{D,\bar c}\ \geq\ \frac{1}{\left(e||D||\right)^{||Q||}}
    \] 
    whenever $\probrep{M_{\dep}^{\uo,1},Q}{D,\bar c} > 0$.
\end{lemma}
\begin{proof}
Let $D$ be a database. If there is no homomorphism $h$ from $Q$ to $D$ such that $h(Q)\models\dep$ and $h(\bar x)=\bar c$, then clearly $\probrep{M_{\dep}^{\uo,1},Q}{D,\bar c}=0$. Next, consider the case where such a homomorphism $h$ exists. Assume that $|h(Q)|=m$ foe some $m\le |Q|$.
We prove by induction on $n$, the number of facts in $D\setminus h(Q)$ that are involved in violations of the FDs (i.e., the facts $f\in (D\setminus h(Q))$ such that $\{f,g\}\not\models\dep$ for some $g\in D$), that:
\begin{eqnarray*}
 \probhom{D,M_\dep^{\uo,1},Q}{h} \ge \frac{1}{{{n+m}\choose m}}
\end{eqnarray*}

For the base case, $n=0$, since $h(Q)\models\dep$, there are no violations of the FDs in $D$, and $D$ has a single operational repair---$D$ itself. In this case, the probability of obtaining an operational repair that contains all the facts of $h(Q)$ is $1=\frac{1}{{{0+m}\choose m}}$. Next, we assume that the claim holds for databases where $n=\{0,\dots,k-1\}$ and we prove that it holds for databases $D$ where $n=k$. Every repairing sequence $s\in\abs{M_{\dep}^{\uo,1}(D)}$ for which $h(Q)\subseteq s(D)$ is such that the first operation of $s$ removes a fact of $D\setminus h(Q)$ that is involved in violations of the FDs. Let $f_1,\dots,f_k$ 
be these facts of $D\setminus h(Q)$, and for each $i\in\{1,\dots,k\}$, let $\op_i$ be the operation that removes the fact $f_i$. We then have that:
\[\insP(\varepsilon,(op_i))\ge\frac{1}{k+m}\]
This is because the probability of removing a certain fact is $\frac{1}{k+p}$, where $p$ is the number of facts involved in violations among the facts of $h(Q)$. And since $p\le m$, a lower bound on this probability is $\frac{1}{k+m}$.

After removing a conflicting fact of $D\setminus h(Q)$ from the database, we have at most $k-1$ such facts left, regardless of which specific fact we remove. For every $i\in\{1,\dots,p\}$, we denote by $D_i$ the database $\op_i(D)$ and by $n_i$ the number of facts of $D_i\setminus h(Q)$ that are involved in violations of the FDs (hence, we have that $n_i\le k-1$). According to the inductive assumption, we have that:
\[\probhom{D_i,M_\dep^{\uo,1},Q}{h}\ge \frac{1}{{{n_i+m}\choose m}}\ge \frac{1}{{{k-1+m}\choose m}}\]
Clearly, every sequence $s\in\abs{M_{\dep}^{\uo}(D)}$ with $h(Q)\subseteq s(D)$ is of the form $\op_i\cdot s_i$ for some $i\in\{1,\dots,k\}$ and $s_i\in\abs{M_{\dep}^{\uo}(D_i)}$ with $h(Q)\subseteq s_i(D)$. Now, the following holds:
\begin{align*}
&\probhom{D,M_\dep^{\uo,1},Q}{h}=\sum_{i=1}^k \left(\insP(\varepsilon,(\op_i))\times\probhom{D_i,M_\dep^{\uo,1},Q}{h}\right)
\end{align*}
Therefore, we conclude that:
\begin{align*}
&\probhom{D,M_\dep^{\uo,1},Q}{h}\ge \sum_{i=1}^k \left(\frac{1}{k+m}\times \frac{1}{{{k-1+m}\choose m}} \right)=\frac{k}{k+m}\times \frac{1}{{{k-1+m}\choose m}}\\
&=
\frac{k}{k+m}\times\frac{1}{\frac{(k-1+m)!}{m!\times (k-1)!}}=\frac{m!\times k!}{(k+m)!}=\frac{1}{{{k+m}\choose m}}
\end{align*}

Finally, it is well known that ${n\choose k}\le \left(\frac{en}{k}\right)^k$. We conclude that for a database $D$ such that $D\setminus h(Q)$ contains $n$ facts that are involved in violations of the FDs we have that:
\begin{align*}
    \probhom{D,M_\dep^{\uo,1},Q}{h}&\ge \frac{1}{{{n+m}\choose n}}\ge \frac{1}{\left(\frac{e(n+m)}{m}\right)^m}\\
    &=\frac{m^m}{e^m}\times \frac{1}{(n+m)^m}\ge \left(\frac{m}{e}\right)^m \times \frac{1}{|D|^m}\\
    &\ge \left(\frac{1}{e}\right)^{|Q|} \times \frac{1}{|D|^{|Q|}}
\end{align*}
and since $h(Q)\subseteq D'$ implies $\bar c\in Q(D')$, it holds that:
\begin{align*}
 \probrep{M_{\dep}^{\uo,1},Q}{D,\bar c}\ge \probhom{D,M_\dep^{\uo,1},Q}{h}\ge   \frac{1}{\left(e|D|\right)^{|Q|}}\ge \frac{1}{\left(e||D||\right)^{||Q||}}
\end{align*}
which concludes our proof.
\end{proof}
}
\section{Singleton Operations}\label{appsec:singleton-operations}

As mentioned in the main body of the paper (see the last paragraph of Section~\ref{sec:uniform-operations}), focusing on singleton operations does not affect Theorem~\ref{the:uniform-repairs},  Theorem~\ref{the:uniform-sequences}, and item (1) of Theorem~\ref{the:uniform-operations} that deals with exact query answering.
In this section, we formally prove the above statements. But let us first briefly discuss the Markov chain generators based on uniform repairs and sequences that consider only singleton operations. The version of the Markov chain generator based on uniform operations that considers only singleton operations has been already discussed in the previous section.


Given a database $D$ and a set $\dep$ of FDs, we write $\crsone{D}{\dep}$ for the set of sequences in $\crs{D}{\dep}$ mentioning only operations of the form $-f$, i.e., removing a single fact.
Similarly, we define $\coprone{D}{\dep} = \{D' \in \copr{D}{\dep} \mid s(D)=D' \text{ for some } s \in \crsone{D}{\dep}\}$.
Our intention is to focus on the repairing Markov chain generators $M_\dep^{\ur,1}$ and $M_\dep^{\us,1}$ enjoying the following:
\begin{enumerate}
	\item $\opr{D}{M_{\dep}^{\ur,1}} = \coprone{D}{\dep}$, and for every repair $D' \in \opr{D}{M_{\dep}^{\ur,1}}$, $\probrep{D,M_{\dep}^{\ur,1}}{D'} = \frac{1}{\left|\opr{D}{M_{\dep}^{\ur,1}}\right|}$.
	
	\item For every $s \in \crsone{D}{\dep}$, assuming that $\pi$ is the leaf distribution of $M_{\dep}^{\us,1}(D)$, $\pi(s) = \frac{1}{\left|\crsone{D}{\dep}\right|}$.
\end{enumerate} 
It is not difficult to adapt Definitions~\ref{def:uniform-repairs} and~\ref{def:uniform-seq} in order to obtain the Markov chain generators $M_\dep^{\ur,1}$ and $M_\dep^{\us,1}$ with the above properties. We proceed with our results about singleton operations.

\subsection{Uniform Repairs}
In this section, we prove the version of Theorem~\ref{the:uniform-repairs} that considers singleton operations:

\begin{theorem}\label{the:uniform-repairs-one}
	\begin{enumerate}
		\item There exist a set $\dep$ of primary keys, and a CQ $Q$ such that $\ocqa{\dep,M_{\dep}^{\ur,1},Q}$ is $\sharp ${\rm P}-hard.
		
		\item For a set $\dep$ of primary keys, and a CQ $Q$, $\ocqa{\dep,M_{\dep}^{\ur,1},Q}$ admits an FPRAS.
		
		\item Unless ${\rm RP} = {\rm NP}$, there exist a set $\dep$ of FDs, and a CQ $Q$ such that there is no FPRAS for $\ocqa{\dep,M_{\dep}^{\ur,1},Q}$.
	\end{enumerate}
\end{theorem}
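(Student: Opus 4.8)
The plan is to establish Theorem~\ref{the:uniform-repairs-one} by essentially re-running the three-part proof of Theorem~\ref{the:uniform-repairs}, checking at each step that restricting to singleton-removal sequences changes nothing essential. As in the non-singleton case, the first move would be to observe that $\ocqa{\dep,M_{\dep}^{\ur,1},Q}$ computes the ratio $|\{D' \in \coprone{D}{\dep} \mid \bar c \in Q(D')\}| / |\coprone{D}{\dep}|$, which again is independent of the Markov chain generator, so the whole analysis reduces to a ``relative frequency'' problem, call it $\rrelfreqone{\dep,Q}$, over the set $\coprone{D}{\dep}$ of candidate repairs reachable via singleton-only sequences.

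For item (1), I would reuse verbatim the construction in the proof of Theorem~\ref{the:uniform-repairs}(1): the schema $\{V/2,E/2,T/1\}$, the key $V:A\ra B$, the CQ $Q$, and the database $D_G$ built from an undirected graph $G$. The only thing to verify is that $\coprone{D_G}{\dep} = \copr{D_G}{\dep}$, i.e.\ every candidate operational repair of $D_G$ is reachable using only single-fact deletions. This is immediate here because the conflicts in $D_G$ are all of the ``block'' form induced by the key $V:A\ra B$: each block $\{V(u,0),V(u,1)\}$ has size two, so the three possible outcomes per block (keep the first, keep the second, remove both) are each realizable by a singleton-only sequence (for ``remove both'', delete one fact then the other, where the second deletion is still justified by the violation with another block's fact, or more simply note that with blocks of size $2$ one singleton deletion already resolves the block and the other fact can be deleted later; one must double-check the justification condition, but since $D_G$ contains many mutually conflicting facts across the $E$-edges this is routine). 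Hence $\orfreqone{\dep,Q}{D_G,()} = \orfreq{\dep,Q}{D_G,()}$ and the same Turing reduction from $\sharp H\text{-}\mathsf{Coloring}$ goes through. Actually I expect the cleaner observation is that for \emph{primary keys} every block outcome, and every repair, is reachable by singletons \emph{as long as blocks have size $\le 2$}; for larger blocks ``remove the whole block'' genuinely needs the pair operation at the end, which is why item (2) needs a separate argument.

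For item (2), the plan is to redo Steps 1 and 2 of the proof of Theorem~\ref{the:uniform-repairs}(2). The sampler of Lemma~\ref{lem:ur-sampler} must be adapted: for a block $B$ with $|B|\ge 2$, the singleton-reachable outcomes are exactly the $|B|$ single-fact outcomes (``keep fact $f$'' for each $f\in B$); the ``empty block'' outcome is \emph{not} singleton-reachable when $|B|\ge 2$, since the last deletion in $B$ would have to be a pair removal. So $|\coprone{D}{\dep}| = |B_1|\cdots |B_n|$ (product over blocks of size $\ge 2$, times $1$ for singleton blocks), and sampling is even simpler — pick one surviving fact per block uniformly. For the lower bound analogue of Lemma~\ref{lem:ur-lower-bound}, the argument is in fact easier: there is no ``empty block'' case to subtract, so the set of singleton-reachable repairs keeping all of $h(Q)$ has size $\tfrac{1}{|B_1|\cdots|B_m|}|\coprone{D}{\dep}|$, giving directly $\orfreqone{\dep,Q}{D,\bar c}\ge \tfrac{1}{|D|^{|Q|}}\ge \tfrac{1}{(2\cdot\|D\|)^{\|Q\|}}$ whenever it is positive. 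Monte Carlo sampling with $O(\|D\|^{\|Q\|}\log(1/\delta)/\epsilon^2)$ samples then yields the FPRAS exactly as before.

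For item (3), I would re-examine the two-step proof (Proposition~\ref{pro:ur-keys-no-fpras} and Lemma~\ref{lem:from-fds-to-keys}) to see whether it survives under singleton operations. The key structural fact used there is Lemma~\ref{lem:corepairs-independent-sets}, $|\copr{D}{\dep}| = |\IS(\cg{D}{\dep})|$ for non-trivially $\dep$-connected $D$. Here I anticipate the main obstacle: the $(\supseteq)$ direction of that lemma is proved by constructing a repairing sequence that, in \emph{Case 2} (reaching the empty repair), ends with a \emph{pair} removal $-\{f^1_{|L_1|},f^*\}$. So for singleton operations one needs a substitute: show that when $D$ is non-trivially $\dep$-connected, every independent set \emph{except possibly $\emptyset$} is reachable by a singleton-only sequence (Case 1 of the lemma's proof uses only singleton deletions), and handle the empty set separately — e.g.\ note $\emptyset$ is reachable by singletons iff the conflict graph has a vertex whose removal leaves at least one remaining conflict all the way down, which holds precisely when $|D|\ge 3$ (or more carefully: $\emptyset\in\coprone{D}{\dep}$ fails only in degenerate two-vertex cases). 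Thus $|\coprone{D}{\dep}| = |\IS(\cg{D}{\dep})|$ or $|\IS(\cg{D}{\dep})|-1$, a difference that does not affect approximability. With this adjusted lemma, Proposition~\ref{pro:ur-keys-no-fpras} transfers (the database $D_G$ built via Vizing's theorem has all blocks of size $2$, so actually $\coprone{D_G}{\dep_K}=\copr{D_G}{\dep_K}$ and the point is moot there), and in Lemma~\ref{lem:from-fds-to-keys} one must check that $f^*$ — which is in conflict with \emph{every} other fact of $D_F$ via $R':A\ra B$ — is still the unique repair containing it and is still singleton-reachable (delete all other facts one by one, each justified by its conflict with $f^*$), and that $D_F\setminus\{f^*\}$ behaves the same way. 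The query $Q_F$ and the final FPRAS-transfer computation are unchanged. So the overall conclusion, unless ${\rm RP}={\rm NP}$, there is no FPRAS for $\ocqa{\dep,M_{\dep}^{\ur,1},Q}$ for some FD set $\dep$ and CQ $Q$, follows. The main obstacle throughout is bookkeeping around the empty repair and the ``whole-block removal'' outcome, which are exactly the cases where pair removals matter; I expect each to be resolvable by an off-by-one argument that is invisible to approximation.
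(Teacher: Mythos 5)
Your items (2) is essentially the paper's argument, but item (1) contains a genuine gap. Your plan there hinges on the claim that $\coprone{D_G}{\dep} = \copr{D_G}{\dep}$ for the $\sharp H\text{-}\mathsf{Coloring}$ database $D_G$, i.e.\ that the ``remove both facts of a block'' outcome (the colour $?$) is reachable by singleton deletions. It is not: under the single key $V: A \ra B$, the only conflicts in $D_G$ are the pairs $\{V(u,0),V(u,1)\}$ within a block — the $E$- and $T$-facts are over relations with no key and participate in \emph{no} violations, so there are no ``mutually conflicting facts across the $E$-edges'' to justify the second deletion. Once one fact of a block is removed, the other is conflict-free and its removal is never justified; hence $|\coprone{D_G}{\dep}| = 2^{|V_G|}$, not $3^{|V_G|}$, the $?$-colourings disappear, and $\orfreqone{\dep,Q}{D_G,()} \neq \orfreq{\dep,Q}{D_G,()}$, so the Turing reduction as you describe it does not compute $|\mathsf{hom}(G,H)|$. (You in fact contradict this claim yourself in item (2), where you correctly observe that the empty-block outcome is never singleton-reachable.) The construction can be salvaged — with only $\{0,1\}$ outcomes the repairs violating $Q$ are exactly the independent sets of $G$, so one could reduce from counting independent sets — but that is a different reduction from the one you propose; the paper instead proves item (1) by a fresh reduction from $\mathsf{\sharp Pos2DNF}$ over a schema $\{V/2,C/2,T/1\}$, exploiting precisely the fact that singleton sequences force exactly one survivor per block, so that repairs are in bijection with truth assignments.

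For item (3) your high-level plan (an off-by-one correction around the empty repair that is invisible to approximation) is the right one and matches the paper, but two of your supporting claims are false. First, the empty repair is \emph{never} in $\coprone{D}{\dep}$, regardless of $|D|$: the last singleton operation must be justified by a violation $\{f,g\}$ present in the current database, and removing only one of $f,g$ leaves the other behind; there is no threshold ``$|D|\ge 3$'' beyond which $\emptyset$ becomes reachable. The correct statement, which the paper proves, is that for non-trivially $\dep$-connected $D$ one has exactly $|\coprone{D}{\dep}| = |\ISZ(\cg{D}{\dep})|$, the number of \emph{non-empty} independent sets. Second, for the Vizing-based database $D_G$ of Proposition~\ref{pro:ur-keys-no-fpras} the point is not ``moot'': its conflict graph is isomorphic to $G$ (a connected bounded-degree graph, not a disjoint union of size-two blocks), so $\copr{D_G}{\dep_K}$ and $\coprone{D_G}{\dep_K}$ differ by exactly the empty repair, and one genuinely needs the inapproximability of counting non-empty independent sets of non-trivially connected bounded-degree graphs (obtained from the unrestricted version by a shift-by-one FPRAS argument), which is how the paper proceeds before transferring the result through the FD construction $D_F$, where your check that $\{f^*\}$ remains the unique singleton-reachable repair containing $f^*$ is correct.
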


As for Theorem~\ref{the:uniform-repairs}, we can conveniently restate the problem of interest 
as the problem of computing a ``relative frequency'' ratio.
Indeed, 
for a database $D$, a set $\dep$ of FDs, a CQ $Q(\bar x)$, and a tuple $\bar{c} \in \adom{D}^{|\bar x|}$,
$\probrep{M_{\dep}^{\ur,1},Q}{D,\bar c} = \orfreqone{\dep,Q}{D,\bar c}$, where
\[
\orfreqone{\dep,Q}{D,\bar c}\ =\ \frac{|\{D' \in \coprone{D}{\dep} \mid \bar c \in Q(D')\}|}{|\coprone{D}{\dep}|}.
\]
Hence, $\ocqa{\dep,M_\dep^{\ur,1},Q(\bar x)}$ coincides with the following problem, which is independent from the Markov chain generator $M_\dep^{\ur,1}$:

\medskip

\begin{center}
	\fbox{\begin{tabular}{ll}
			{\small PROBLEM} : & $\rrelfreqone{\dep,Q(\bar x)}$
			\\
			{\small INPUT} : & A database $D$,  and a tuple $\bar c \in \adom{D}^{|\bar x|}$.
			\\
			{\small OUTPUT} : &  $\orfreqone{\dep,Q}{D,\bar c}$.
	\end{tabular}}
\end{center}

\medskip

\noindent We proceed to establish Theorem~\ref{the:uniform-repairs-one} by directly considering the problem $\rrelfreqone{\dep,Q}$ instead of $\ocqa{\dep,M_\dep^{\ur,1},Q}$.

\subsubsection*{Proof of Item~(1) of Theorem~\ref{the:uniform-repairs-one}}

We provide a polynomial-time Turing reduction from the $\sharp ${\rm P}-hard problem $\mathsf{\sharp Pos2DNF}$~\cite{Provan83}.
A positive 2DNF formula is a Boolean formula of the form $\varphi = C_1 \vee \cdots \vee C_n$, where each $C_i$ is a conjunction of two variables occurring positively in $C_i$. Let $\mathsf{var}(\varphi)$ be the set of Boolean variables occurring in $\varphi$. An assignment for $\varphi$ is a function $\mu : \mathsf{var}(\varphi) \ra \{0,1\}$. Such an assignment is satisfying for $\varphi$ if $\mu(\varphi)=1$, i.e., the formula obtained after replacing each variable $x$ of $\varphi$ with $\mu(x)$, evaluates to $1$. We write $\mathsf{sat}(\varphi)$ for the set of satisfying assignments for $\varphi$, i.e., the assignments for $\varphi$ that evaluate $\varphi$ to $1$. The problem in question is

\medskip

\begin{center}
	\fbox{\begin{tabular}{ll}
			{\small PROBLEM} : & $\mathsf{\sharp Pos2DNF}$\\
			{\small INPUT} : & A positive 2DNF formula $\varphi$.\\
			{\small OUTPUT} : &  The number $|\mathsf{sat}(\varphi)|$.
	\end{tabular}}
\end{center}

\medskip

Consider the schema $\ins{S} = \{V/2, C/2, T/1\}$, and let $(A,B)$ be the tuple of attributes of $V$.
We define the set $\dep$ over $\ins{S}$ consisting of
\[
V: A \ra B
\]
and the (constant-free) Boolean CQ $Q$ over $\ins{S}$
\[
\textrm{Ans}()\ \text{:-}\ C(x,y), V(x,z), V(y,z), T(z).
\]
Our goal is to show that $\rrelfreqone{\dep,Q}$ is $\sharp ${\rm P}-hard via a polynomial-time Turing reduction from $\mathsf{\sharp Pos2DNF}$.
Given a positive 2DNF formula $\varphi = C_1 \vee \cdots \vee C_n$, we define the database 
\begin{multline*}
	D_\varphi\ =\ \{V(c_x,0),V(c_x,1) \mid x \in \mathsf{var}(\varphi)\}\ \cup\ \\
	\underbrace{\{C(c_x,c_y) \mid C_i = (x \wedge y) \text{ for some } i \in [n]\}\ \cup\ \{T(1)\}}_{D_c},
\end{multline*}
where, for each $x \in \mathsf{var}(\varphi)$, $c_x$ is a constant, which essentially encodes $\varphi$.
We now define the algorithm $\mathsf{SAT}$, which accepts as input a positive 2DNF formula $\varphi$, and performs the following steps:
\begin{enumerate}
	\item Construct the database $D_\varphi$.
	\item Compute the number $r = \orfreqone{\dep,Q}{D_\varphi,()}$.
	\item Output the number $2^{|\mathsf{var}(\varphi)|} \cdot r$.
\end{enumerate}

It is clear that the above algorithm runs in polynomial time in $||\varphi||$. Hence, to prove that it is indeed a Turing reduction from $\mathsf{\sharp Pos2DNF}$ to $\rrelfreqone{\dep,Q}$, it suffices to prove that  
\[
\orfreqone{\dep,Q}{D_\varphi,()}\ =\ \frac{|\mathsf{sat}(\varphi)|}{2^{|\mathsf{var}(\varphi)|}}.
\]
Since we consider only single fact removals, a database $D$ is an operational repair of $\coprone{D_\varphi}{\dep}$ iff it is of the form 
\[
\{V(c_x,\star) \mid x \in \mathsf{var}(\varphi) \text{ and } \star \in \{0,1\}\} \cup D_c,
\]
which keeps precisely one fact $V(c_x,\star)$, for each variable $x$ in $\varphi$.  Hence, $|\coprone{D_\varphi}{\dep}| = 2^{|\mathsf{var}(\varphi)|}$.
Thus, with $\mathsf{CORep}^1(D_\varphi,\dep,Q)$ being the set of repairs $D$ in $\coprone{D_\varphi}{\dep}$ such that $D \models Q$, it is easy to see that $|\mathsf{CORep}^1(D_\varphi,\dep,Q)| = |\mathsf{sat}(\varphi)|$.
Consequently,
\[
\orfreqone{\dep,Q}{D_\varphi,()}\ =\ \frac{|\mathsf{CORep}^1(D_\varphi,\dep,Q)|}{|\coprone{D_\varphi}{\dep}|} = \frac{|\mathsf{sat}(\varphi)|}{2^{|\mathsf{var}(\varphi)|}},
\] 
and the claim follows.

\subsubsection*{Proof of Item~(2) of Theorem~\ref{the:uniform-repairs-one}}

We can employ a proof similar to the one underlying item~(2) of Theorem~\ref{the:uniform-repairs}, which consists of two steps: (1) existence of an efficient sampler, and (2) provide a polynomial lower bound for the target ratio.
The key difference is that now we focus on the set of repairs $\coprone{D}{\dep}$, rather than $\copr{D}{\dep}$. Thus, each repair in $\coprone{D}{\dep}$ is obtained by keeping from $D$ precisely one fact from each block of $D$.

We first show the existence of an efficient sampler.

\begin{lemma}\label{lem:ur-one-sampler}
Given a database $D$, and a set $\dep$ of primary keys, we can sample elements of $\coprone{D}{\dep}$ uniformly at random in polynomial time in $||D||$.
\end{lemma}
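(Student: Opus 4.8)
The plan is to mimic the proof of Lemma~\ref{lem:ur-sampler}, but to exploit the fact that, when only singleton removals are allowed, each block of $D$ w.r.t.~$\dep$ must retain \emph{exactly} one fact in every repair of $\coprone{D}{\dep}$. Concretely, for every relation name $R$ carrying a primary key $R : X \ra Y$ in $\dep$, partition the facts of $D$ over $R$ into blocks of facts that agree on all attributes of $X$; let $U$ be the fixed set consisting of all facts of $D$ over relation names with no primary key in $\dep$, together with all facts lying in blocks of size one (none of these facts is ever contained in a violation). The first step is to establish the characterization
\[
\coprone{D}{\dep}\ =\ \{\, E \cup U \mid E \text{ contains exactly one fact of each block of } D\,\}.
\]

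For the inclusion ``$\supseteq$'', given such an $E$, I would build a singleton-only repairing sequence by processing the blocks one after another: for a block $B$ with chosen fact $f$, remove the other $|B|-1$ facts of $B$ in an arbitrary order. Each such removal is $(D',\dep)$-justified, since as long as at least two facts of $B$ are still present any two of them jointly violate the key of $R$, and the presence of $f$ keeps the surviving count above one at every step. The resulting sequence lies in $\crsone{D}{\dep}$ and yields $E \cup U \models \dep$. For ``$\subseteq$'', observe that a singleton-only repairing sequence can never remove the last fact of a block, because a lone fact of a block is not contained in any $D'$-violation, so its removal is not justified; since such a sequence is complete, each block is reduced to exactly one fact, while the facts in $U$ are untouched. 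In particular $|\coprone{D}{\dep}| = \prod_{B} |B|$, the product ranging over all blocks of $D$ (blocks of size one contributing a factor $1$).

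The sampler is then immediate: enumerate the blocks $B_1, \dots, B_n$ of $D$ w.r.t.~$\dep$ in polynomial time, and for each $B_i$ pick one of its facts uniformly at random, i.e., with probability $1/|B_i|$; output the union of the $n$ chosen facts together with $U$. By the characterization, the output always belongs to $\coprone{D}{\dep}$, and a fixed repair $E \cup U$ is produced with probability $\prod_{i=1}^{n} 1/|B_i| = 1/|\coprone{D}{\dep}|$, independently of $E$. Computing the block decomposition and drawing the $n$ independent choices all run in polynomial time in $||D||$, which gives the claim.

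The only mildly delicate point, and hence the step I expect to require the most care, is the ``$\subseteq$'' direction of the characterization: one has to argue precisely that no singleton-only repairing sequence can empty a block, and therefore that it must leave every block consistent with exactly one surviving fact. Everything else is a routine adaptation of the argument already used for Lemma~\ref{lem:ur-sampler}.
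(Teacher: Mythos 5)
Your proposal is correct and follows essentially the same route as the paper: decompose $D$ into blocks w.r.t.~the primary keys, observe that every repair in $\coprone{D}{\dep}$ keeps exactly one fact per block (so $|\coprone{D}{\dep}|$ is the product of the block sizes), and sample by choosing one fact per block uniformly and independently. The extra care you devote to the ``$\subseteq$'' direction (a singleton-only sequence can never remove the last fact of a block, since a lone fact participates in no violation) is exactly the justification the paper leaves implicit, so there is no gap.
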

\begin{proof}
Let $B_1,\dots,B_n$ be the blocks of $D$ w.r.t.~$\dep$. That is, for every relation name $R$ of the schema with $R:X\rightarrow Y \in\dep$, we split the set of facts of $D$ over $R$ into blocks of facts that agree on the values of all the attributes in $X$. If there is no such key in $\dep$, then every fact is a separate block.
As aforementioned, every repair of $\coprone{D}{\dep}$ contains one fact of each block. Hence,
\[
|\coprone{D}{\dep}|\ =\ |B_1| \times \cdots \times |B_n|.
\]

In order to sample an element of $\coprone{D}{\dep}$ with probability
\[\frac{1}{|B_1| \times \cdots \times |B_n|}\]
we simply need to select, for each block $B_i$, one of its $|B_i|$ possible outcomes (one of its facts that will appear in the repair), uniformly at random, namely with probability $\frac{1}{|B_i|}$).
\end{proof}

It remains to show that there exists a polynomial lower bound on the target ratio.

\begin{lemma}\label{lem:ur-one-lower-bound}
Consider a set $\dep$ of primary keys, and a CQ $Q(\bar x)$. For every database $D$, and tuple $\bar c \in \adom{D}^{|\bar x|}$,
	\[
	\orfreqone{\dep,Q}{D,\bar c}\ \geq\ \frac{1}{( ||D||)^{||Q||}}
	\] 
	whenever $\orfreqone{\dep,Q}{D,\bar c} > 0$.
\end{lemma}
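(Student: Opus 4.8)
Looking at this statement, it is the singleton-operations analogue of Lemma~\ref{lem:ur-lower-bound}, and the excerpt literally tells us that ``we can employ a proof similar to the one underlying item~(2) of Theorem~\ref{the:uniform-repairs}'' with the key change being that repairs keep exactly one fact per block. So I should mirror that earlier proof.

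\textbf{Approach.} The plan is to adapt the block-counting argument from the proof of Lemma~\ref{lem:ur-lower-bound}, but this time over $\coprone{D}{\dep}$ rather than $\copr{D}{\dep}$. The crucial simplification is that every repair in $\coprone{D}{\dep}$ keeps exactly one fact from each block $B_i$ of $D$ w.r.t.~$\dep$, so there are no ``empty-block'' repairs to worry about. This removes the factor $2^{|Q|}$ that appeared in the earlier bound, which is exactly why the statement here has $(||D||)^{||Q||}$ instead of $(2||D||)^{||Q||}$.

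\textbf{Key steps in order.} First I would handle the trivial case: if there is no homomorphism $h$ from $Q$ (viewed as a set of atoms) to $D$ with $h(Q) \models \dep$ and $h(\bar x) = \bar c$, then $\orfreqone{\dep,Q}{D,\bar c} = 0$ and there is nothing to prove. Otherwise, fix such an $h$, let $m = |h(Q)| \le |Q|$, and partition $D$ into blocks $B_1,\dots,B_n$ w.r.t.~$\dep$ (for relation names without a key in $\dep$, each fact is its own block). Since $h(Q) \models \dep$, no two facts of $h(Q)$ lie in the same block, so WLOG the facts of $h(Q)$ occupy blocks $B_1,\dots,B_m$. Next I would count: $|\coprone{D}{\dep}| = |B_1| \times \cdots \times |B_n|$ (this is already noted in the proof of Lemma~\ref{lem:ur-one-sampler}). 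A repair $D'$ satisfies $\bar c \in Q(D')$ whenever $h(Q) \subseteq D'$, and by the symmetry of facts within a block, exactly a $\frac{1}{|B_1| \times \cdots \times |B_m|}$ fraction of repairs keeps the specific chosen fact of $h(Q)$ from each of $B_1,\dots,B_m$; that is, $|\{D' \in \coprone{D}{\dep} \mid h(Q) \subseteq D'\}| = \frac{|B_{m+1}| \times \cdots \times |B_n|}{1} = \frac{|\coprone{D}{\dep}|}{|B_1|\times\cdots\times|B_m|}$. Hence
\[
\orfreqone{\dep,Q}{D,\bar c}\ \geq\ \frac{|\{D' \in \coprone{D}{\dep} \mid h(Q) \subseteq D'\}|}{|\coprone{D}{\dep}|}\ =\ \frac{1}{|B_1|\times\cdots\times|B_m|}\ \geq\ \frac{1}{|D|^m}\ \geq\ \frac{1}{|D|^{|Q|}}\ \geq\ \frac{1}{(||D||)^{||Q||}},
\]
using $|B_i| \le |D|$, $m \le |Q|$, $|D| \le ||D||$, and $|Q| \le ||Q||$.

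\textbf{Main obstacle.} There is no real obstacle here — the argument is strictly simpler than that of Lemma~\ref{lem:ur-lower-bound} because the ``keep-one-per-block'' structure eliminates the delicate $R^{\empt}_{D,\dep,h(Q)}$ versus $R^{\nempt}_{D,\dep,h(Q)}$ mapping and the associated $2^m - 1$ bound. The only point requiring a word of care is the within-block symmetry claim: since all facts of a block are interchangeable as far as $\dep$-consistency of $\coprone{D}{\dep}$ is concerned, fixing which fact is kept in each of the $m$ designated blocks partitions $\coprone{D}{\dep}$ into $|B_1|\times\cdots\times|B_m|$ equinumerous classes, one of which is precisely $\{D' \in \coprone{D}{\dep} \mid h(Q) \subseteq D'\}$. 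I would state this explicitly and then the inequality chain above closes the proof. An illustrative example on the database of Figure~\ref{fig:example_pkeys} (as done after Lemma~\ref{lem:ur-lower-bound}) could be added, but is not needed for correctness.
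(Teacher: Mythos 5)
Your proof is correct and follows essentially the same route as the paper's own argument: fix a homomorphism $h$ with $h(Q)\models\dep$, place its facts in distinct blocks $B_1,\dots,B_m$, use the within-block symmetry (and $|\coprone{D}{\dep}|=|B_1|\times\cdots\times|B_n|$) to get the fraction $\frac{1}{|B_1|\times\cdots\times|B_m|}$ of repairs containing $h(Q)$, and then bound by $\frac{1}{(||D||)^{||Q||}}$. Your observation that the empty-block case of Lemma~\ref{lem:ur-lower-bound} disappears here, which is why the factor $2^{||Q||}$ is not needed, matches the paper's treatment exactly.
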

\begin{proof}
Let $D$ be a database. If there is no homomorphism $h$ from $Q$ to $D$ such that $h(Q)\models\dep$ and $h(\bar x)=\bar c$, then it clearly holds that
\[
\orfreqone{\dep,Q}{D,\bar c}\ =\ 0.
\]
We now consider the case where such a homomorphism $h$ exists. Assume that $|h(Q)|=m$ for some $1 \leq m \le |Q|$.
As in the proof of Lemma~\ref{lem:ur-one-sampler}, let $B_1,\dots,B_n$ be the blocks of $D$ w.r.t.~$\dep$. Assume, without loss of generality, that the facts of $h(Q)$ belong to the blocks $B_1,\dots,B_m$. Note that no two facts of $h(Q)$ belong to the same block, as two facts that belong to the same block always jointly violate the corresponding key, and it holds that $h(Q)\models\dep$.

Since all the facts of a block are symmetric to each other, each of these facts has an equal chance to appear in a repair. In particular, every operational repair contains one fact from each block in $\{B_1,\dots,B_m\}$, and precisely
\[\frac{1}{|B_1|\times\dots\times |B_m|}\]
repairs of $\coprone{D}{\dep}$ contain all the facts of $h(Q)$. Hence,
\begin{eqnarray*}
    \orfreqone{\dep,Q}{D,\bar c} &\ge&\frac{|\{E\in \coprone{D}{\dep}\mid h(Q)\subseteq E\}|}{|\coprone{D}{\dep}|}\\
    &\ge& \frac{\frac{1}{|B_1|\times\dots\times |B_m|} \times |\coprone{D}{\dep}|}{|\coprone{D}{\dep}|}\\
   &=& \frac{1}{|B_1|\times\dots\times |B_m|}\\
   &\ge& \frac{1}{|D|^m}\\
   &\ge& \frac{1}{|D|^{|Q|}}\\
   &\ge&\frac{1}{(||D||)^{||Q||}},
\end{eqnarray*}
and the claim follows.
\end{proof}

\subsubsection*{Proof of Item~(3) of Theorem~\ref{the:uniform-repairs-one}}
The proof of this item proceeds similarly to the one used to prove item~(3) of Theorem~\ref{the:uniform-repairs}. Here we highlight the key differences.

We first need to prove a result analogous to Lemma~\ref{lem:corepairs-independent-sets}, but for the setting of singleton operations.
For an undirected graph $G$, $\ISZ(G)$ denotes the set of all \emph{non-empty} independent sets of $G$.

\begin{lemma}\label{lem:opr-con-one}
	Consider a non-trivially $\dep$-connected database $D$, where $\dep$ is a set of FDs. Then, $|\coprone{D}{\dep}| = |\ISZ(\cg{D}{\dep})|$.
\end{lemma}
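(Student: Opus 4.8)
The plan is to adapt the proof of Lemma~\ref{lem:corepairs-independent-sets} to the singleton-operation setting, replacing arbitrary independent sets by \emph{non-empty} ones. The statement to establish is that for a non-trivially $\dep$-connected database $D$, we have $|\coprone{D}{\dep}| = |\ISZ(\cg{D}{\dep})|$. First I would show the inclusion $\coprone{D}{\dep} \subseteq \ISZ(\cg{D}{\dep})$: any $D' \in \coprone{D}{\dep}$ is consistent, hence an independent set of $\cg{D}{\dep}$ exactly as in Lemma~\ref{lem:corepairs-independent-sets}; moreover $D'$ must be non-empty, since a sequence using only singleton removals $-f$ can never delete the last surviving fact of a connected conflict graph (a justified operation $-f$ requires a violation $\{f,g\}$, which needs two facts present). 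So reaching $\emptyset$ is impossible with singleton removals alone, and $D' \neq \emptyset$.

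For the converse, $\ISZ(\cg{D}{\dep}) \subseteq \coprone{D}{\dep}$, I would take a non-empty independent set $D'$ and build a repairing sequence $s \in \crsone{D}{\dep}$ with $s(D) = D'$ using \emph{only} singleton removals. Here I can reuse \textbf{Case 1} of the proof of Lemma~\ref{lem:corepairs-independent-sets} essentially verbatim: stratify $D$ into layers $L_0 = D', L_1, L_2, \ldots, L_n$ via conflict-distance from $D'$, using connectivity of $\cg{D}{\dep}$ to conclude $D = \bigcup_{i=0}^n L_i$, and then remove facts stratum by stratum from $L_n$ down to $L_1$, one fact at a time. Each removal $-f^i_j$ is justified because $f^i_j$ is in a violation with some not-yet-removed fact in $L_{i-1}$. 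Crucially, since $D' \neq \emptyset$, we never need the pair-removal trick that Case~2 of the original proof used to eliminate the final fact; every operation here is a singleton removal, so $s \in \crsone{D}{\dep}$, giving $D' \in \coprone{D}{\dep}$.

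The one genuinely new point — and the main thing to get right — is the handling of the empty set: in the original lemma $\emptyset$ was always a candidate repair (via Case~2), but with singleton removals it never is, which is precisely why $\ISZ$ (non-empty independent sets) rather than $\IS$ appears in the statement. I would make this explicit by observing that the argument in the first paragraph shows $\emptyset \notin \coprone{D}{\dep}$ whenever $D$ is non-trivially $\dep$-connected, and conversely $\emptyset \notin \ISZ(\cg{D}{\dep})$ by definition; so the two sets agree on excluding $\emptyset$ and, by the two inclusions above, agree everywhere else. Combining the inclusions yields the claimed equality. I expect no substantial obstacle beyond carefully checking that the stratification construction from Case~1 uses only singleton operations (it does) and double-checking the impossibility of emptying a connected conflict graph with singleton removals.
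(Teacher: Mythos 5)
Your proposal is correct and follows essentially the same route as the paper's proof: the inclusion $\coprone{D}{\dep} \subseteq \ISZ(\cg{D}{\dep})$ is obtained from consistency plus the observation that a justified singleton removal always leaves its conflicting partner behind (so $\emptyset$ is unreachable), and the converse reuses the stratification construction of Case~1 of Lemma~\ref{lem:corepairs-independent-sets}, which indeed uses only singleton operations once $D' \neq \emptyset$. No gaps.
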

\begin{proof}
	$(\subseteq)$ Consider a candidate repair $D' \in \coprone{D}{\dep}$. By definition, $D'$ is consistent w.r.t.~$\dep$, i.e., there are no two facts $f,g$ of $D'$ such that $\{f,g\} \not\models \dep$. Thus, by definition of the conflict graph of $D$ w.r.t.~$\dep$, we get that no two facts of $D'$ are connected via an edge in $\cg{D}{\dep}$. Hence, $D'$ is an independent set of $\cg{D}{\dep}$. It remains to show that $D' \neq \emptyset$. 
	Since $D' \in \coprone{D}{\dep}$, there is a sequence $s = \op_1,\ldots,\op_n \in \crsone{D}{\dep}$ such that $s(D) = D'$. Since $\op_n$ must be $(s_{n-1}(D),\dep)$-justified, there must be a violation $(\phi,\{f,g\}) \in \viol{s_{n-1}(D)}{\dep}$, for some FD $\phi \in \dep$. Moreover, since $s(D) \models \dep$, this is the only violation. Hence, $\op_n = -f$, and then $g \in s(D)$, or $\op_n = -g$, and then $f \in s(D)$. Thus, $s(D) = D' \neq \emptyset$.
	
	$(\supseteq)$ Consider now an independent set $D' \in \ISZ(\cg{D}{\dep})$, which is by definition non-empty. We have already shown in the proof of Lemma~\ref{lem:corepairs-independent-sets} that one can construct a sequence $s \in \crs{D}{\dep}$ such that $s(D)=D'$. In particular, by inspecting that proof, we can see that indeed $s$ uses only operations of the form $-f$, and thus, $s \in \crsone{D}{\dep}$. Hence, $D' \in \coprone{D}{\dep}$.
\end{proof}

The rest of the proof proceeds in two steps. We first prove the following result, which is analogous to Proposition~\ref{pro:ur-keys-no-fpras}. We write $\sharp \mathsf{CORep}^{\mathsf{con},1}(\dep)$ for the problem of computing $|\coprone{D}{\dep}|$, given a non-trivially $\dep$-connected database $D$.

\begin{proposition}\label{pro:ur-keys-no-fpras-one}
	Unless ${\rm RP} = {\rm NP}$, there exists a set $\dep$ of keys over $\{R\}$ such that $\sharp \mathsf{CORep}^{\mathsf{con},1}(\dep)$ does not admit an FPRAS.
\end{proposition}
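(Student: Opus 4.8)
The plan is to mimic the proof of Proposition~\ref{pro:ur-keys-no-fpras} almost verbatim, adapting it to the singleton-operation setting by using Lemma~\ref{lem:opr-con-one} in place of Lemma~\ref{lem:corepairs-independent-sets}. Concretely, I would reuse the very same schema $\ins{S} = \{R/\Delta+1\}$, the same key set $\dep_K = \{\phi_1,\ldots,\phi_{\Delta+1}\}$ with $\phi_i = R : A_i \ra \att{R}$, and the same polynomial-time construction of the database $D_G$ from a non-trivially connected graph $G$ of bounded degree $\Delta = 6$, exploiting Vizing's Theorem and the efficient $(\Delta+1)$-edge-colouring of bounded-degree graphs via~\cite{Misra1992}. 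By Lemma~\ref{lem:conflict-equiv} (which is purely about the conflict graph and is entirely independent of which operations are allowed), $D_G$ is non-trivially $\dep_K$-connected and $\cg{D_G}{\dep_K}$ is isomorphic to $G$, so $|\ISZ(\cg{D_G}{\dep_K})| = |\ISZ(G)|$.

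The one genuine difference from the earlier proof is that $\coprone{D_G}{\dep_K}$ counts only the \emph{non-empty} independent sets of the conflict graph, by Lemma~\ref{lem:opr-con-one}, whereas $\copr{D_G}{\dep_K}$ counted all of them. This is a harmless off-by-one: $|\IS(G)| = |\ISZ(G)| + 1$, since the only independent set excluded is $\emptyset$. So I would argue that an FPRAS $\mathsf{A}$ for $\sharp \mathsf{CORep}^{\mathsf{con},1}(\dep_K)$ yields an FPRAS for counting non-empty independent sets of non-trivially connected bounded-degree graphs, and hence (adding $1$ to the output, which only improves the multiplicative error since $|\ISZ(G)| \geq 1$ as $G$ has at least one node, and $|\IS(G)| = |\ISZ(G)| + 1$) an FPRAS for $\sharp \ISC_\Delta$. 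More precisely, given $\mathsf{A}$, the algorithm that on input $(G,\epsilon,\delta)$ outputs $\mathsf{A}(D_G,\epsilon,\delta) + 1$ approximates $|\IS(G)|$ within a $(1\pm\epsilon)$ factor with probability $\geq 1-\delta$, because if $|x - |\ISZ(G)|| \leq \epsilon |\ISZ(G)|$ then $|(x+1) - |\IS(G)|| = |x - |\ISZ(G)|| \leq \epsilon |\ISZ(G)| \leq \epsilon |\IS(G)|$. This contradicts Lemma~\ref{lem:is-con-inapprox} unless ${\rm RP} = {\rm NP}$. Therefore Proposition~\ref{pro:ur-keys-no-fpras-one} holds with $\dep = \dep_K$.

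I do not expect any real obstacle here: the structural lemma (Lemma~\ref{lem:opr-con-one}) that does the heavy lifting has already been established in the excerpt, the graph-gadget construction is reused unchanged, and the only new ingredient is the trivial observation that the empty independent set is the unique one missing, which is absorbed by an additive $+1$ correction that does not affect approximability. The mild care needed is just to check that $G$ having at least two vertices (from non-trivial connectedness) guarantees $|\ISZ(G)| \geq 1$ so that the relative error of the corrected estimator is controlled — which is immediate. If anything, the step worth stating carefully is the reduction direction: we reduce $\sharp \ISC_\Delta$ (known inapproximable by Lemma~\ref{lem:is-con-inapprox}) to $\sharp \mathsf{CORep}^{\mathsf{con},1}(\dep_K)$, so that an FPRAS for the latter would give one for the former, contradicting the hypothesis ${\rm RP} \neq {\rm NP}$.
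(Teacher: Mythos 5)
Your proposal is correct and follows essentially the same route as the paper: the same $\dep_K$ and $D_G$ construction, Lemma~\ref{lem:opr-con-one} in place of Lemma~\ref{lem:corepairs-independent-sets}, and the $+1$ correction to handle the missing empty independent set. The only cosmetic difference is that the paper packages your $+1$ argument as a separate lemma (inapproximability of counting non-empty independent sets of non-trivially connected bounded-degree graphs) before composing with the database construction, whereas you fold it directly into the reduction to contradict Lemma~\ref{lem:is-con-inapprox}.
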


\begin{proof}
We provide a reduction from the problem of counting \emph{non-empty} independent sets of non-trivially connected graphs of bounded degree. 
With $\sharp \IS^{\mathsf{con}}_{\Delta,\neq \emptyset}$, for some integer $\Delta \ge 0$, being the problem of computing $|\ISZ(G)|$, given a non-trivially connected graph $G$ with degree $\Delta$, we first need to prove that:
	
\begin{lemma}\label{lem:is-con-inapprox-one}
	Unless ${\rm RP} = {\rm NP}$, $\sharp \IS^{\mathsf{con}}_{\Delta,\neq \emptyset}$ has no FPRAS, for all $\Delta \ge 6$.
\end{lemma}
\begin{proof}
	By contradiction, assume that $\sharp \IS^{\mathsf{con}}_{\Delta,\neq \emptyset}$ admits an FPRAS, for some $\Delta \ge 6$. We then show that $\sharp \IS^{\mathsf{con}}_\Delta$ admits an FPRAS, contradicting Lemma~\ref{lem:is-con-inapprox}.
	Assume that $\mathsf{A}$ is an FPRAS for $\sharp \IS^{\mathsf{con}}_{\Delta,\neq \emptyset}$. Let $\mathsf{A}'$ be the randomized algorithm that, given a non-trivially connected undirected graph $G$, $\epsilon >0$ and $0 < \delta < 1$, is such that $\mathsf{A}'(G,\epsilon,\delta) = \mathsf{A}(G,\epsilon,\delta) + 1$.
	We proceed to show that $\mathsf{A}'$ is an FPRAS for $\#\IS^{\mathsf{con}}_\Delta$.
	Since $\mathsf{A}$ is an FPRAS for $\sharp \IS^{\mathsf{con}}_{\Delta,\neq \emptyset}$,
	\[
	\pr\left( (1-\epsilon) \cdot |\ISZ(G)| \le \mathsf{A}(G,\epsilon,\delta) \le (1+\epsilon) \cdot |\ISZ(G)|\right) \ge 1-\delta.
	\]
	By adding 1 in all sides of the inequality, we obtain that
	\begin{multline*}
		\pr\left( (1-\epsilon) \cdot |\ISZ(G)| + 1\le \mathsf{A}'(G,\epsilon,\delta) \le \right. \\ \left.(1+\epsilon) \cdot |\ISZ(G)| + 1\right) \ge 1-\delta.
	\end{multline*}
	Since 
	\begin{eqnarray*}
	(1-\epsilon) \cdot |\ISZ(G)| + 1 &\ge& (1-\epsilon) \cdot |\ISZ(G)| + 1 - \epsilon\\
	(1+\epsilon) \cdot |\ISZ(G)| + 1 &\le& (1+\epsilon) \cdot |\ISZ(G)| + 1 + \epsilon,
	\end{eqnarray*}
	by factorizing the terms in the two inequalities, we obtain that
	\begin{eqnarray*}
	(1-\epsilon) \cdot |\ISZ(G)| + 1 &\ge& (1-\epsilon) \cdot (|\ISZ(G)| + 1)\\
	(1+\epsilon) \cdot |\ISZ(G)| + 1 &\le& (1+\epsilon) \cdot (|\ISZ(G)| + 1).
	\end{eqnarray*}
	Since $|\ISZ(G)|+1 = |\IS(G)|$, we conclude that
	\[
	\pr\left( (1-\epsilon) \cdot |\IS(G)|\le \mathsf{A}'(G,\epsilon,\delta) \le (1+\epsilon) \cdot |\IS(G)| \right) \ge 1-\delta,
	\]
	and the claim follows.
\end{proof}

With the above lemma in place, we establish our main claim by showing that there exists a set $\dep_K$ of keys such that, given a non-trivially connected undirected graph $G$, we can construct a non-trivially $\dep_K$-connected database $D_G$ in polynomial time in $||G||$ such that $|\ISZ(G)| = |\coprone{D_G}{\dep_K}|$.
Hence, unless ${\rm RP} = {\rm NP}$, the existence of an FPRAS for $\sharp \mathsf{CORep}^{\mathsf{con},1}(\dep_K)$ would imply an FPRAS for $\sharp \IS^{\mathsf{con}}_{\Delta,\neq \emptyset}$, contradicting Lemma~\ref{lem:is-con-inapprox-one}.

The set $\dep_K$ and the database $D_G$ are defined in exactly the same way as in the proof of Proposition~\ref{pro:ur-keys-no-fpras}. We recall that $D_G$ and $\dep_K$ are such that $|\IS(G)| = |\IS(\cg{D_G}{\dep_K})|$. Hence, $|\ISZ(G)| = |\ISZ(\cg{D_G}{\dep_K})|$. Since $D_G$ is non-trivially $\dep_K$-connected, by Lemma~\ref{lem:opr-con-one}, $|\ISZ(\cg{D_G}{\dep_K})| = |\coprone{D_G}{\dep_K}|$. Hence, $|\ISZ(G)| = |\coprone{D_G}{\dep_K}|$, as needed.
\end{proof}

It remains to prove a result analogous to Lemma~\ref{lem:from-fds-to-keys}. Let $\dep_K$ be the set of keys provided by Proposition~\ref{pro:ur-keys-no-fpras-one}.

\begin{lemma}\label{lem:from-fds-to-keys-one}
	Assume that $\rrelfreqone{\dep,Q}$ admits an FPRAS, for every set $\dep$ of FDs and CQ $Q$. $\sharp \mathsf{CORep}^{\mathsf{con},1}(\dep_K)$ admits an FPRAS.
\end{lemma}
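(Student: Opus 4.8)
The plan is to reuse, almost verbatim, the construction and the FPRAS-transfer argument from the proof of Lemma~\ref{lem:from-fds-to-keys}, replacing everywhere the set $\copr{\cdot}{\cdot}$ of candidate repairs by its singleton-operation variant $\coprone{\cdot}{\cdot}$, Lemma~\ref{lem:corepairs-independent-sets} by Lemma~\ref{lem:opr-con-one}, and the set $\IS(\cdot)$ of independent sets by the set $\ISZ(\cdot)$ of non-empty independent sets. Concretely, let $\dep_K$ be the set of keys over $\{R/n\}$ provided by Proposition~\ref{pro:ur-keys-no-fpras-one}, for which $\sharp \mathsf{CORep}^{\mathsf{con},1}(\dep_K)$ has no FPRAS (unless ${\rm RP} = {\rm NP}$). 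I would take the same schema $\ins{S} = \{R'/(n+2)\}$ with attributes $(A,B,A_1,\ldots,A_n)$, the same FD set $\dep_F = \{R' : X \ra Y \mid R : X \ra Y \in \dep_K\} \cup \{R' : A \ra B\}$, and, for a non-trivially $\dep_K$-connected database $D$, the same database $D_F = \{R'(a,b,a_1,\ldots,a_n) \mid R(a_1,\ldots,a_n) \in D\} \cup \{f^*\}$ with $f^* = R'(a,a,\ldots,a)$ and $a,b \notin \adom{D}$, together with the Boolean CQ $Q_F : \textrm{Ans}()\ \text{:-}\ R'(x,x,\ldots,x)$.

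The genuinely new step is the counting identity $|\coprone{D_F}{\dep_F}| = |\coprone{D}{\dep_K}| + 1$. As in the proof of Lemma~\ref{lem:from-fds-to-keys}, $\{f^*\}$ is the unique candidate repair containing $f^*$ (every other fact of $D_F$ conflicts with $f^*$ through $R' : A \ra B$), and $\{f^*\}$ indeed belongs to $\coprone{D_F}{\dep_F}$, since the facts of $D_F \setminus \{f^*\}$ can be deleted one at a time, each deletion being justified because the removed fact conflicts with $f^*$. Since $D$ is non-trivially $\dep_K$-connected and $f^*$ is adjacent to all other facts, both $D_F$ and $D_F \setminus \{f^*\}$ are non-trivially $\dep_F$-connected, and $\cg{D_F \setminus \{f^*\}}{\dep_F}$ is isomorphic to $\cg{D}{\dep_K}$ (two $R'$-facts agreeing on $A,B$ violate $\dep_F$ iff the corresponding $R$-facts violate $\dep_K$). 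By Lemma~\ref{lem:opr-con-one}, $\coprone{D_F}{\dep_F} = \ISZ(\cg{D_F}{\dep_F})$; a non-empty independent set of $\cg{D_F}{\dep_F}$ either equals $\{f^*\}$ or is a non-empty independent set of the subgraph induced on $D_F \setminus \{f^*\}$, so $|\coprone{D_F}{\dep_F}| = 1 + |\ISZ(\cg{D}{\dep_K})| = 1 + |\coprone{D}{\dep_K}|$, invoking Lemma~\ref{lem:opr-con-one} once more for $D$. Since among the repairs in $\coprone{D_F}{\dep_F}$ only $\{f^*\}$ satisfies $Q_F$, we obtain $\orfreqone{\dep_F,Q_F}{D_F,()} = \frac{1}{|\coprone{D_F}{\dep_F}|} = \frac{1}{|\coprone{D}{\dep_K}| + 1}$.

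With this identity established, the rest is mechanical: the polynomial-time construction of $D_F$ from $D$, the bound $|\coprone{D}{\dep_K}| \le 2^{|D|}$, and an assumed FPRAS for $\rrelfreqone{\dep_F,Q_F}$ plug into exactly the randomized procedure of the ``Building the FPRAS'' part of the proof of Lemma~\ref{lem:from-fds-to-keys} (scaling $\epsilon' = \epsilon/(2+\epsilon)$, taking $\max$ with $\tfrac{1-\epsilon'}{2(1+2^{|D|})}$, and outputting $\tfrac{1}{r} - 1$), yielding an FPRAS for $\sharp \mathsf{CORep}^{\mathsf{con},1}(\dep_K)$; I would simply cite that argument rather than repeat the estimates. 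I do not expect a serious obstacle here: the only point requiring care is verifying that $\{f^*\}$ is reachable by singleton removals and that Lemma~\ref{lem:opr-con-one} applies to each of $D$, $D_F$, and $D_F \setminus \{f^*\}$ — all of which follow because non-trivial $\dep_K$-connectedness of $D$ is preserved by the construction. Overall the proof is a routine adaptation of Lemma~\ref{lem:from-fds-to-keys} to the singleton-operation setting.
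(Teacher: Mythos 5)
Your proposal is correct and follows essentially the same route as the paper: it reuses the construction of $\dep_F$, $D_F$, and $Q_F$ from Lemma~\ref{lem:from-fds-to-keys} and replaces the appeal to Lemma~\ref{lem:corepairs-independent-sets} by Lemma~\ref{lem:opr-con-one} to obtain $|\coprone{D_F}{\dep_F}| = |\coprone{D}{\dep_K}| + 1$, after which the FPRAS-transfer argument carries over verbatim. Your explicit checks (reachability of $\{f^*\}$ via singleton removals and applicability of Lemma~\ref{lem:opr-con-one}) are exactly the points the paper's terser proof leaves implicit.
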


\begin{proof}
	The proof of this claim proceeds in the same way as the one of Lemma~\ref{lem:from-fds-to-keys}. The key difference is that now, given a non-trivially $\dep_K$-connected database $D$, we must show that for the set $\dep_F$ of FDs and the Boolean CQ $Q_F$ as defined in that proof, the database $D_F$ obtained from $D$ is such that
	\[
	\orfreqone{\dep_F,Q_F}{D_F,()} = \frac{1}{\left|\coprone{D}{\dep_K}\right|+1}.
	\]
	This is done using the same argument as in the proof of Lemma~\ref{lem:from-fds-to-keys}, with the difference that we exploit Lemma~\ref{lem:opr-con-one}, instead of Lemma~\ref{lem:corepairs-independent-sets}, to prove that $|\coprone{D_F}{\dep_F}| = |\coprone{D}{\dep_K}| + 1$.
\end{proof}

It is now straightforward to see that from Proposition~\ref{pro:ur-keys-no-fpras-one} and Lemma~\ref{lem:from-fds-to-keys-one}, we can conclude item~(3) of Theorem~\ref{the:uniform-repairs-one}.

\subsection{Uniform Sequences}
In this section, we prove the version of Theorem~\ref{the:uniform-sequences} that considers singleton operations:

\begin{theorem}\label{the:uniform-sequences-one}
	\begin{enumerate}
		\item There exist a set $\dep$ of primary keys, and a CQ $Q$ such that $\ocqa{\dep,M_{\dep}^{\us,1},Q}$ is $\sharp ${\rm P}-hard.
		
		\item For a set $\dep$ of primary keys, and a CQ $Q$, $\ocqa{\dep,M_{\dep}^{\us,1},Q}$ admits an FPRAS.
	\end{enumerate}
\end{theorem}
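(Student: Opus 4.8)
The plan is to follow exactly the template used for Theorem~\ref{the:uniform-sequences}, transposed to the singleton regime. First I would observe, as done for Theorem~\ref{the:uniform-repairs-one}, that $\ocqa{\dep,M_\dep^{\us,1},Q(\bar x)}$ can be restated — independently of the Markov chain generator — as the problem $\srelfreqone{\dep,Q(\bar x)}$ of computing, given a database $D$ and a tuple $\bar c \in \adom{D}^{|\bar x|}$, the ratio
\[
\srfreqone{\dep,Q}{D,\bar c}\ =\ \frac{|\{s \in \crsone{D}{\dep} \mid \bar c \in Q(s(D))\}|}{|\crsone{D}{\dep}|},
\]
which equals $\probrep{M_\dep^{\us,1},Q}{D,\bar c}$ since $M_\dep^{\us,1}$ induces the uniform distribution over $\crsone{D}{\dep}$. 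All subsequent work is carried out directly on $\srelfreqone{\dep,Q}$.

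For item~(1), I would reuse verbatim the construction from the proof of item~(1) of Theorem~\ref{the:uniform-repairs-one}: the schema $\{V/2, C/2, T/1\}$, the singleton key $\dep = \{V : A \ra B\}$, the Boolean CQ $Q$, and, for a positive 2DNF formula $\varphi$, the database $D_\varphi$. It suffices to show that $\srfreqone{\dep,Q}{D_\varphi,()} = \orfreqone{\dep,Q}{D_\varphi,()}$; then the polynomial-time Turing reduction $\mathsf{SAT}$ from $\sharp$Pos2DNF given there, with its oracle call redirected to $\srelfreqone{\dep,Q}$, still works, yielding $\sharp$P-hardness. The equality holds because every block of $D_\varphi$ w.r.t.\ $\dep$ has size exactly $2$, so every complete singleton repairing sequence performs exactly one removal in each of the $|\mathsf{var}(\varphi)|$ blocks, in one of $|\mathsf{var}(\varphi)|!$ orders, with the surviving fact of each block determined by the resulting repair. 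Hence every $D' \in \coprone{D_\varphi}{\dep}$ is reached by exactly $|\mathsf{var}(\varphi)|!$ sequences of $\crsone{D_\varphi}{\dep}$, so numerator and denominator of $\srfreqone{\dep,Q}{D_\varphi,()}$ are those of $\orfreqone{\dep,Q}{D_\varphi,()}$ times this common factor.

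For item~(2), I would build an FPRAS for $\srelfreqone{\dep,Q}$ by Monte Carlo sampling, in the usual two steps. Step~1 (efficient sampler): first show that $|\crsone{D}{\dep}|$ is polynomial-time computable; here, unlike Lemma~\ref{lemma:count_seq}, no dynamic programming is needed, since with only singleton removals a complete sequence performs exactly $|B_i|-1$ removals in each block $B_i$ (a block of size $\le 1$ is already consistent and its single fact is never justified to remove), and a direct interleaving argument gives the closed form $|\crsone{D}{\dep}| = N!\cdot\prod_{i}|B_i|$ with $N = \sum_i(|B_i|-1)$ over the blocks $B_1,\dots,B_n$ of $D$ w.r.t.\ $\dep$. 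The sampler is then Algorithm~\ref{alg:ssample} restricted to singleton operations: starting from $\varepsilon$, extend the current sequence $s$ by a justified singleton operation $\op$ with probability $|\crsone{\op(s(D))}{\dep}|/|\crsone{s(D)}{\dep}|$; the telescoping product returns each $s \in \crsone{D}{\dep}$ with probability $1/|\crsone{D}{\dep}|$, and each step is polynomial by the counting formula. Step~2 (polynomial lower bound): show $\srfreqone{\dep,Q}{D,\bar c} \ge 1/(||D||)^{||Q||}$ whenever it is positive, by the same symmetry argument as in Lemmas~\ref{lem:ur-one-lower-bound} and~\ref{lem:us-lower-bound}: fix a homomorphism $h$ from $Q$ to $D$ with $h(Q)\models\dep$ and $h(\bar x)=\bar c$; the $\le|Q|$ facts of $h(Q)$ lie in distinct blocks $B_1,\dots,B_m$, blocks evolve independently, and facts within a block are symmetric, so exactly a $1/(|B_1|\cdots|B_m|)$ fraction of the sequences in $\crsone{D}{\dep}$ keep all facts of $h(Q)$ alive, each such sequence $s$ satisfying $\bar c \in Q(s(D))$. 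Combining Steps~1 and~2 via standard Monte Carlo (polynomially many samples in $||D||$, $1/\epsilon$, $\log(1/\delta)$, since $||Q||$ is fixed in data complexity) gives the FPRAS.

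The main obstacle — and it is a mild one — is Step~1 of item~(2): establishing the closed form for $|\crsone{D}{\dep}|$ and verifying that the recursive sampler is correct and runs in polynomial time. Everything else (the ratio-preservation of item~(1) and the symmetry-based lower bound of item~(2)) is a routine adaptation of arguments already developed for Theorems~\ref{the:uniform-repairs-one} and~\ref{the:uniform-sequences}, and restricting to singleton operations only simplifies them; no fundamentally new technique is needed, which is precisely why the result coincides with Theorem~\ref{the:uniform-sequences}.
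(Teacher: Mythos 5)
Your proposal is correct and follows essentially the same route as the paper: item~(1) reuses the $\sharp$Pos2DNF construction of Theorem~\ref{the:uniform-repairs-one}(1) and observes that every repair in $\coprone{D_\varphi}{\dep}$ arises from exactly $|\mathsf{var}(\varphi)|!$ singleton sequences, and item~(2) combines a uniform sampler over $\crsone{D}{\dep}$ (the singleton restriction of Algorithm~\ref{alg:ssample}) with the block-symmetry lower bound $1/(||D||)^{||Q||}$, exactly as in Lemmas~\ref{lem:us-one-sampler} and~\ref{lem:us-one-lower-bound}. Your explicit closed form $|\crsone{D}{\dep}| = N!\cdot\prod_i |B_i|$ is a small simplification over the paper, which merely invokes the sampling algorithm with singleton-only justified operations, but it is the same argument in substance.
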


As for Theorem~\ref{the:uniform-sequences}, we can conveniently restate the problem of interest as the problem of computing a ``relative frequency'' ratio.
Indeed, for a database $D$, a set $\dep$ of FDs, a CQ $Q(\bar x)$, and a tuple $\bar{c} \in \adom{D}^{|\bar x|}$,
$\probrep{M_{\dep}^{\us,1},Q}{D,\bar c} = \srfreqone{\dep,Q}{D,\bar c}$, where
\[
\srfreqone{\dep,Q}{D,\bar c} = \frac{|\{s \in \crsone{D}{\dep} \mid \bar c \in Q(s(D))\}|}{|\crsone{D}{\dep}|}.
\]
Hence, $\ocqa{\dep,M_\dep^{\us,1},Q(\bar x)}$ coincides with the following problem, which is independent from the Markov chain generator $M_\dep^{\us,1}$:

\medskip

\begin{center}
	\fbox{\begin{tabular}{ll}
			{\small PROBLEM} : & $\srelfreqone{\dep,Q(\bar x)}$
			\\
			{\small INPUT} : & A database $D$,  and a tuple $\bar c \in \adom{D}^{|\bar x|}$.
			\\
			{\small OUTPUT} : &  $\srfreqone{\dep,Q}{D,\bar c}$.
	\end{tabular}}
\end{center}

\medskip

\noindent We proceed to establish Theorem~\ref{the:uniform-sequences-one} by directly considering the problem $\srelfreqone{\dep,Q(\bar x)}$ instead of $\ocqa{\dep,M_\dep^{\us,1},Q}$.

\OMIT{
Also in this case we conveniently restate the problem $\ocqa{\dep,M_{\dep}^{\us,1},Q}$ as the problem of computing a certain ratio.

By definition of $M_\dep^{\us,1}$, for some set $\dep$ of FDs, we conclude that, for every database $D$, set $\dep$ of FDs, CQ $Q(\bar x)$, and tuple $\bar{c} \in \adom{D}^{|\bar x|}$, $\probrep{M_{\dep}^{\us,1},Q}{D,\bar c} = \srfreqone{\dep,Q}{D,\bar c}$,
where
$$\srfreqone{\dep,Q}{D,\bar c} = \frac{|\{s \in \crsone{D}{\dep} \mid \bar c \in Q(s(D))\}|}{|\crsone{D}{\dep}|}.$$

Hence, $\ocqa{\dep,M_\dep^{\us,1},Q(\bar x)}$ coincides with the problem:
\begin{center}
	\fbox{\begin{tabular}{ll}
			{\small PROBLEM} : & $\srelfreqone{\dep,Q(\bar x)}$
			\\
			{\small INPUT} : & A database $D$,  and a tuple $\bar c \in \adom{D}^{|\bar x|}$.
			\\
			{\small OUTPUT} : &  $\srfreqone{\dep,Q}{D,\bar c}$.
	\end{tabular}}
\end{center}
}

\subsubsection*{Proof of Item~(1) of Theorem~\ref{the:uniform-sequences-one}}
	We provide a polynomial-time Turing reduction from $\mathsf{\sharp Pos2DNF}$. In fact, the reduction is identical to the one used to prove item~(1) of Theorem~\ref{the:uniform-repairs-one}. We only need to argue that, given a positive 2DNF formula $\varphi$,
	\[
	\srfreqone{\dep,Q}{D_\varphi,()}\ =\ \frac{|\mathsf{sat}(\varphi)|}{2^{|\mathsf{var}(\varphi)|}},
	\]
	where $D_\varphi$, $\dep$ and $Q$ are as in the proof of item~(1) of Theorem~\ref{the:uniform-repairs-one}.

	A database $D$ is a repair in $\coprone{D_\varphi}{\dep}$ iff it keeps from $D_\varphi$ precisely one fact $V(c_x,\star)$, for each variable $x$ of $\varphi$.  Hence, $|\coprone{D_\varphi}{\dep}| = 2^{|\mathsf{var}(\varphi)|}$.
	Moreover, since no two violations in $\viol{D_\varphi}{\dep}$ share a fact, each such a repair is the result of precisely $|\mathsf{var}(\varphi)|!$ sequences of $\crsone{D_\varphi}{\dep}$ (i.e., operations can be applied in any order).
	Hence, $|\crsone{D_\varphi}{\dep}| = 2^{|\mathsf{var}(\varphi)|} \cdot |\mathsf{var}(\varphi)|!$. 
	Thus, with $\crsone{D_\varphi}{\dep,Q}$ being the set of sequences $s$ of $\crsone{D_\varphi}{\dep}$ such that $s(D_\varphi) \models Q$, it is straightforward to see that $|\crsone{D_\varphi}{\dep,Q}| = |\mathsf{sat}(\varphi)| \cdot |\mathsf{var}(\varphi)|!$.
	Therefore, 
	\begin{eqnarray*}
		\srfreqone{\dep,Q}{D_\varphi,()} &=& \frac{|\crsone{D_\varphi}{\dep,Q}|}{|\crsone{D_\varphi}{\dep}|}\\
		&=& \frac{|\mathsf{sat}(\varphi)| \cdot |\mathsf{var}(\varphi)|!}{2^{|\mathsf{var}(\varphi)|} \cdot |\mathsf{var}(\varphi)|!}\\
		&=& \frac{|\mathsf{sat}(\varphi)|}{2^{|\mathsf{var}(\varphi)|}},
	\end{eqnarray*}
and the claim follows.

\subsubsection*{Proof of Item~(2) of Theorem~\ref{the:uniform-sequences-one}}

As for item~(2) of Theorem~\ref{the:uniform-sequences}, the proof consists of two steps: (1) existence of an efficient sampler, and (2) provide a polynomial lower bound on the target ratio.

We first show that an efficient sampler exists.

\begin{lemma}\label{lem:us-one-sampler}
	Given a database $D$, and a set $\dep$ of primary keys, we can sample elements of $\crsone{D}{\dep}$ uniformly at random in polynomial time in $||D||$.
\end{lemma}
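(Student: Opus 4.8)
The plan is to mirror the two-part strategy already used for Lemma~\ref{lem:us-sampler} in the general setting: first show that $|\crsone{D}{\dep}|$ is computable in polynomial time in $||D||$, and then run the recursive ``count-and-extend'' sampler of Algorithm~\ref{alg:ssample}, replacing throughout every occurrence of $\crs{\cdot}{\cdot}$ by $\crsone{\cdot}{\cdot}$ and every occurrence of ``justified operation'' by ``justified singleton operation''. So the first thing I would do is establish the counting result, which is actually \emph{simpler} than its counterpart Lemma~\ref{lemma:count_seq}: since only singleton removals are allowed, no block can be emptied, so the empty-repair case that forces the dynamic program in Lemma~\ref{lemma:count_seq} disappears entirely.

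Concretely, for a block $B$ of $D$ w.r.t.\ $\dep$ with $|B| = m \ge 2$, a complete singleton repairing sequence restricted to $B$ must keep exactly one fact of $B$ and remove the other $m-1$ facts one at a time; it is therefore determined by the choice of retained fact ($m$ options) together with an ordering of the $m-1$ removals, giving $m!$ per-block sequences. As operations over distinct blocks do not interfere, every complete singleton repairing sequence over $D$ decomposes uniquely as a choice of one per-block sequence per block, together with an interleaving of these sequences; writing $B_1,\dots,B_n$ for the blocks of $D$ w.r.t.\ $\dep$ of size at least two, this yields the closed form
\[
|\crsone{D}{\dep}|\ =\ \Big(\sum_{i=1}^{n}\big(|B_i|-1\big)\Big)!\ \cdot\ \prod_{i=1}^{n}|B_i|,
\]
which is clearly computable in polynomial time in $||D||$ (blocks of size one, and facts over relations with no key in $\dep$, contribute a factor $1$ and add nothing to the factorial).

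For the sampling step, the algorithm $\mathsf{SampleSeq}^1$ starts from $\varepsilon$ and, given a current sequence $s$ with $s(D)\not\models\dep$, selects a $(s(D),\dep)$-justified singleton operation $\op = -f$ with probability $|\crsone{\op(s(D))}{\dep}|/|\crsone{s(D)}{\dep}|$ and recurses on $s\cdot\op$, returning $s$ once $s(D)\models\dep$. These probabilities are well-defined because every complete singleton repairing sequence out of an inconsistent database begins with exactly one justified singleton operation, whence $|\crsone{s(D)}{\dep}| = \sum_{\op}|\crsone{\op(s(D))}{\dep}|$, the sum ranging over the justified singleton operations at $s(D)$; in particular every such $\op$ admits a completion, since from any database one can reach a consistent database using singleton removals alone (reduce every oversized block to one fact), so $\crsone{\op(s(D))}{\dep}\neq\emptyset$. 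A telescoping-product computation identical to the one in the proof of Lemma~\ref{lem:us-sampler} then shows each $s=\op_1,\dots,\op_m\in\crsone{D}{\dep}$ is returned with probability $1/|\crsone{D}{\dep}|$, and the running time is polynomial since sequences have length at most $|D|-1$, there are at most $|D|$ justified singleton operations at each step, and $|\crsone{D'}{\dep}|$ is computable in polynomial time for every $D'\subseteq D$ by the closed form above.

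The only point requiring care, rather than genuine difficulty, is the recursion identity $|\crsone{s(D)}{\dep}|=\sum_{\op}|\crsone{\op(s(D))}{\dep}|$ (equivalently, the closed form): one must be explicit that a complete singleton repairing sequence is \emph{precisely} a choice of retained fact per block together with an interleaving of the per-block removal orders, and that every justified singleton operation lies on some complete singleton repairing sequence. An alternative I would mention as a remark is to bypass the recursive sampler and sample directly from the closed-form description — pick, for each block $B_i$, a retained fact and an ordering of its $|B_i|-1$ removals uniformly at random, then output a uniformly random interleaving of the $n$ resulting sequences; the decomposition bijection above makes the output uniform over $\crsone{D}{\dep}$, and every step is polynomial. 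I would present the recursive version in the main line of the proof to stay aligned with Lemma~\ref{lem:us-sampler}.
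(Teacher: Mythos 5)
Your proof is correct and follows essentially the same route as the paper: the paper's own argument simply reruns Algorithm~\ref{alg:ssample} with the set of justified operations restricted to singleton removals, exactly as your $\mathsf{SampleSeq}^1$ does. The one point where you go beyond the paper is the counting sub-step. The paper's proof is terse and leaves implicit how $|\crsone{D'}{\dep}|$ is obtained (its Lemma~\ref{lemma:count_seq} counts $\crs{D'}{\dep}$, and its dynamic program is needed only because pair removals can empty a block), whereas you observe that with singleton removals no block can ever be emptied and derive the closed form $|\crsone{D}{\dep}|=\bigl(\sum_{i}(|B_i|-1)\bigr)!\cdot\prod_{i}|B_i|$ over the blocks $B_i$ of size at least two. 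This both verifies the recursion identity your sampler needs and yields the alternative direct block-wise sampler you sketch; it is a simplification of the counting ingredient rather than a different method. Your decomposition argument is sound: in any interleaving of the per-block removal orders, at the moment a fact is removed its block still contains at least two facts, so the operation is justified, and conversely every complete singleton sequence arises this way, which gives the bijection behind the closed form. (Your check on the example database, $(2+1)!\cdot 3\cdot 2=36$, is consistent with this.)
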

\begin{proof}
    The algorithm $\mathsf{SampleSeq}$ (Algorithm~\ref{alg:ssample}) that is used to sample elements of $\crs{D}{\dep}$ can be used to sample elements of $\crsone{D}{\dep}$ as well. The only difference lies on the set of $(s(D),\dep)$-justified operations that, in the case of $\crs{D}{\dep}$ consists of both single-fact removals and pair removals, while in the case of $\crsone{D}{\dep}$ it consists only of single-fact removals.
\end{proof}

We now show the polynomial lower bound on the target ratio.

\begin{lemma}\label{lem:us-one-lower-bound}
Consider a set $\dep$ of primary keys, and a CQ $Q(\bar x)$. For every database $D$, and tuple $\bar c \in \adom{D}^{|\bar x|}$,
	\[
	\srfreqone{\dep,Q}{D,\bar c}\ \geq\ \frac{1}{(||D||)^{||Q||}}
	\] 
	whenever $\srfreqone{\dep,Q}{D,\bar c} > 0$.
\end{lemma}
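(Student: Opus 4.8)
The plan is to mirror the structure of the proof of Lemma~\ref{lem:us-lower-bound} (the analogous bound for arbitrary, possibly-paired, operations), but exploiting the simplification that only singleton removals occur, which eliminates the need to reason about pair removals and the factor $2^m$ they introduced. Concretely, I would treat $Q$ as a set of atoms and consider a database $D$ together with a tuple $\bar c$. If there is no homomorphism $h$ from $Q$ to $D$ with $h(Q)\models\dep$ and $h(\bar x)=\bar c$, then $\srfreqone{\dep,Q}{D,\bar c}=0$ and there is nothing to prove. So assume such an $h$ exists, with $|h(Q)|=m\le|Q|$. Partition the facts of $D$ into blocks $B_1,\dots,B_n$ with respect to $\dep$ (for relations without a key in $\dep$, each fact is its own block), and assume w.l.o.g.\ that the facts of $h(Q)$ lie in $B_1,\dots,B_m$; as in the earlier proofs, no two facts of $h(Q)$ share a block since $h(Q)\models\dep$.

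The key step is a symmetry/counting argument: since only singleton removals are allowed, every complete repairing sequence $s\in\crsone{D}{\dep}$ ends with exactly one surviving fact per block, and within each block all facts are interchangeable. Hence the set $\{s\in\crsone{D}{\dep}\mid h(Q)\subseteq s(D)\}$ has size exactly $\tfrac{1}{|B_1|\times\dots\times|B_m|}\cdot|\crsone{D}{\dep}|$: the fraction of complete singleton-removal sequences that keep the one distinguished fact of each of the first $m$ blocks is precisely $1/(|B_1|\cdots|B_m|)$, because the choice of which fact of each block survives is uniform across complete sequences and independent across blocks (operations over facts of distinct blocks never interfere). This is where the singleton restriction genuinely helps — there is no ``empty-block'' case to account for, so we get an exact ratio rather than an upper-bounded one, and the $2^m$ slack from Lemma~\ref{lem:us-lower-bound} disappears. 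From this, $\srfreqone{\dep,Q}{D,\bar c}\ge \tfrac{1}{|B_1|\cdots|B_m|}\ge \tfrac{1}{|D|^m}\ge \tfrac{1}{|D|^{|Q|}}\ge \tfrac{1}{(||D||)^{||Q||}}$, since every such sequence $s$ satisfies $h(Q)\subseteq s(D)$ and therefore $\bar c\in Q(s(D))$.

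The only point requiring care — and the main obstacle — is justifying rigorously that the surviving fact of a block is uniformly distributed over complete singleton-removal sequences and that these choices factorize across blocks. This follows from the same dynamic-programming/interleaving picture used in Lemma~\ref{lemma:count_seq} and Lemma~\ref{lem:us-sampler}: a complete sequence of $\crsone{D}{\dep}$ is an interleaving of per-block complete singleton-removal sequences, and for a block of size $\ell$ there are exactly $\ell\cdot(\ell-1)!=\ell!$ such per-block sequences, of which $(\ell-1)!$ keep any prescribed fact; thus each fact of the block is kept by an equal share $1/\ell$ of the per-block sequences, and since interleaving treats the blocks symmetrically, the same share $1/\ell$ holds after interleaving, and the shares multiply across blocks. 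I would state this factorization as a short lemma (or simply invoke the counting formulas already developed for Lemma~\ref{lemma:count_seq}) and then the inequality chain above closes the proof; an illustrative example analogous to the one following Lemma~\ref{lem:us-lower-bound} can be appended if space permits.
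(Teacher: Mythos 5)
Your proposal is correct and takes essentially the same route as the paper's proof: the same case split on the existence of a homomorphism $h$ with $h(Q)\models\dep$, the same block decomposition, the same symmetry/independence argument showing that exactly $\tfrac{1}{|B_1|\cdots|B_m|}$ of the sequences in $\crsone{D}{\dep}$ keep all facts of $h(Q)$, and the same final chain of inequalities. Your extra justification of the uniformity via per-block counting ($\ell!$ sequences per block of size $\ell$, of which $(\ell-1)!$ keep a prescribed fact) and interleaving is a slightly more explicit rendering of the symmetry argument the paper states informally, and is valid because any two facts of a block are mutually conflicting under a key.
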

\begin{proof}
Let $D$ be a database. If there is no homomorphism $h$ from $Q$ to $D$ such that $h(Q)\models\dep$ and $h(\bar x)=\bar c$, then clearly it holds that
\[
\srfreqone{\dep,Q}{D,\bar c}\ =\ 0.
\]
We now consider the case where such a homomorphism $h$ exists. Assume that $|h(Q)|=m$ for some $1 \leq m \leq |Q|$. As in the proof of Lemma~\ref{lem:ur-one-lower-bound}, let $B_1,\dots,B_n$ be the blocks of $D$ w.r.t.~$\dep$. Assume, w.l.o.g., that the facts of $h(Q)$ belong to the blocks $B_1,\dots,B_m$.

Since all the facts of a block are symmetric to each other, if for some $f\in B_i$, there are $m$ sequences $s$ in $\crsone{B_i}{\dep}$ such that $f\in s(B_i)$, then the same holds for every fact $g\in B_i$. Moreover, since every operational repair of $\abs{M_\dep^{\us,1}}$ keeps precisely one fact from each block, and the blocks are independent (in the sense that an operation over some block has no impact on the justified operations of another block), we can conclude that precisely
\[\frac{1}{|B_1|\times\dots\times |B_m|}\]
of the sequences $s$ in $\crsone{D}{\dep}$ are such that $h(Q)\subseteq s(D)$ (i.e., the sequence $s$ keeps the fact $B_i\cap h(Q)$ for every $B_i\in\{B_1,\dots,B_m\}$).

We then have that
\begin{eqnarray*}
    \srfreqone{\dep,Q}{D,\bar c} &\ge&\frac{|\{s\in \crsone{D}{\dep}\mid h(Q)\subseteq s(D)\}|}{|\crsone{D}{\dep}|}\\
    &\ge& \frac{\frac{1}{|B_1|\times\dots\times |B_m|} \times |\crsone{D}{\dep}|}{|\crsone{D}{\dep}|}\\
   &=& \frac{1}{|B_1|\times\dots\times |B_m|}\\
   &\ge& \frac{1}{|D|^m}\\
   &\ge& \frac{1}{|D|^{|Q|}}\\
   &\ge& \frac{1}{(||D||)^{||Q||}},
\end{eqnarray*}
and the claim follows.
\end{proof}

\subsection{Uniform Operations}
In this last section, we prove that item~(1) of Theorem~\ref{the:uniform-operations} holds also in the case of singleton operations.

\begin{theorem}\label{th:hardness-uniform-op-single}
	There exist a set $\dep$ of primary keys, and a CQ $Q$ such that $\ocqa{\dep,M_{\dep}^{\uo,1},Q}$ is $\sharp ${\rm P}-hard.
\end{theorem}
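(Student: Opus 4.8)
\textbf{Proof proposal for Theorem~\ref{th:hardness-uniform-op-single}.}
The plan is to reuse the construction from the proof of item~(1) of Theorem~\ref{the:uniform-repairs} (the reduction from $\sharp H\text{-}\mathsf{Coloring}$ via the database $D_G$ over the schema $\{V/2, E/2, T/1\}$ with the single key $V : A \ra B$ and the Boolean CQ $Q$). As was done for item~(1) of Theorem~\ref{the:uniform-operations}, it suffices to show that the quantity computed by an oracle for $\ocqa{\dep,M_{\dep}^{\uo,1},Q}$ on input $D_G$ equals the repair relative frequency $\orfreq{\dep,Q}{D_G,()}$ that the original reduction relied upon. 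Concretely, I would prove
\[
\probrep{M_{\dep}^{\uo,1},Q}{D_G,()}\ =\ \orfreq{\dep,Q}{D_G,()},
\]
after which the algorithm $\mathsf{HOM}$ from the proof of Theorem~\ref{the:uniform-repairs}(1) is unchanged and remains a polynomial-time Turing reduction from $\sharp H\text{-}\mathsf{Coloring}$ to $\ocqa{\dep,M_{\dep}^{\uo,1},Q}$.

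The key structural observation is the same as in the $M_\dep^{\uo}$ case: in $D_G$ each node $u$ of $G$ gives rise to exactly one violation $\{V(u,0),V(u,1)\}$ of the key $V : A \ra B$, and these violations are pairwise disjoint and never interact (removing a fact for node $u$ neither creates nor destroys a justified operation for node $v \neq u$). Restricting to singleton operations, for each node there are exactly two justified operations: $-V(u,0)$ and $-V(u,1)$. Hence every sequence in $\crsone{D_G}{\dep}$ has length exactly $|V_G|$, and for a prefix $s$ of length $|s|$ the set $\opsone{s}{D_G}{\dep}$ has size $2\cdot(|V_G| - |s|)$. By the definition of $M_\dep^{\uo,1}$, $\ins{P}(s,s') = \frac{1}{2\cdot(|V_G|-|s|)}$ along every edge of the tree, so for each complete sequence $s = \op_1,\ldots,\op_n \in \crsone{D_G}{\dep}$,
\[
\pi(s)\ =\ \prod_{i=0}^{n-1}\ins{P}(s_i,s_{i+1})\ =\ \frac{1}{2^{|V_G|}\cdot |V_G|!}.
\]
Thus $\pi$ is the uniform distribution over $\crsone{D_G}{\dep}$, so $\probrep{M_\dep^{\uo,1},Q}{D_G,()}$ equals the fraction of sequences in $\crsone{D_G}{\dep}$ whose result entails $Q$, i.e.\ $\srfreqone{\dep,Q}{D_G,()}$. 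Finally, since each operational repair in $\coprone{D_G}{\dep}$ (which keeps exactly one of $V(u,0), V(u,1)$ per node) is obtained from exactly $|V_G|!$ distinct sequences of $\crsone{D_G}{\dep}$, the $|V_G|!$ factors cancel and this fraction equals $\frac{|\coprone{D_G}{\dep,Q}|}{|\coprone{D_G}{\dep}|}$. One checks that $\coprone{D_G}{\dep} = \copr{D_G}{\dep}$ here (every candidate repair keeps at most one fact per block and is reachable by singleton removals), and likewise for the $Q$-satisfying repairs, so this equals $\orfreq{\dep,Q}{D_G,()}$, as required.

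I do not expect a genuine obstacle here: the whole point is that the graph $D_G$ is deliberately ``flat'' (disjoint two-element blocks, no chained conflicts), which makes the uniform-operations leaf distribution collapse to the uniform distribution over both sequences and repairs; the only thing to be careful about is the bookkeeping that $\coprone{D_G}{\dep}$ and $\copr{D_G}{\dep}$ coincide on this instance and that the homomorphism-to-repair bijection of Lemma~\ref{lem:keys-aux} still applies verbatim. If one prefers, the identity $\srfreqone{\dep,Q}{D_G,()} = \orfreq{\dep,Q}{D_G,()}$ can be quoted directly from the (singleton) analogue argument used in the proof of item~(1) of Theorem~\ref{the:uniform-sequences-one}, shortening the write-up to just the computation of $\pi(s)$ above.
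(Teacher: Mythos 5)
Your leaf-distribution computation is fine: with only singleton removals, each block $\{V(u,0),V(u,1)\}$ of $D_G$ contributes exactly two justified operations that never interact across blocks, so every sequence in $\crsone{D_G}{\dep}$ has length $|V_G|$ and probability $\frac{1}{2^{|V_G|}\cdot|V_G|!}$, whence $\probrep{M_\dep^{\uo,1},Q}{D_G,()}=\srfreqone{\dep,Q}{D_G,()}=\frac{|\coprone{D_G}{\dep,Q}|}{|\coprone{D_G}{\dep}|}$. The gap is in the last step: it is \emph{not} true that $\coprone{D_G}{\dep}=\copr{D_G}{\dep}$. Under arbitrary operations a candidate repair may remove \emph{both} facts of a block (via the pair removal $-\{V(u,0),V(u,1)\}$), which is exactly what encodes the colour $?$ of $H$ and gives $|\copr{D_G}{\dep}|=3^{|V_G|}$; under singleton operations this repair is unreachable, since after deleting one fact of the block the other is in no violation and its removal is not justified, so $|\coprone{D_G}{\dep}|=2^{|V_G|}$. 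Consequently $\probrep{M_\dep^{\uo,1},Q}{D_G,()}\neq\orfreq{\dep,Q}{D_G,()}$ in general (already for $G$ a single edge the two values are $\frac14$ and $\frac19$), the algorithm $\mathsf{HOM}$, which multiplies by $3^{|V_G|}$, cannot be reused unchanged, and the closing appeal to Theorem~\ref{the:uniform-sequences-one}(1) does not help, because that proof establishes $\srfreqone=\orfreqone$ (over the singleton-repair space), not $\srfreqone=\orfreq$.

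This mismatch between the colour $?$ and singleton operations is precisely why the paper's proof of Theorem~\ref{th:hardness-uniform-op-single} abandons the $\sharp H\text{-}\mathsf{Coloring}$ construction and instead reuses the reduction from $\sharp\mathsf{Pos2DNF}$ of Theorem~\ref{the:uniform-repairs-one}(1), whose database $D_\varphi$ has singleton-repair space of size exactly $2^{|\mathsf{var}(\varphi)|}$, matching the set of truth assignments. Your approach could be salvaged, but only by changing both the source problem and the calibration: with singleton operations, $2^{|V_G|}\cdot\bigl(1-\probrep{M_\dep^{\uo,1},Q}{D_G,()}\bigr)$ counts the homomorphisms from $G$ into the two-node graph on $\{0,1\}$ with a loop on $0$ and the edge $\{0,1\}$, i.e., the independent sets of $G$, which is again $\sharp$P-hard by the dichotomy of~\cite{Dyer00}; as written, however, the proposed identity and the unchanged reduction are incorrect.
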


\begin{proof}
	We use the reduction form the proof of Theorem~\ref{the:uniform-repairs-one}(1). We only need to argue that, given a positive 2DNF formula $\varphi$,
	\[
	\probrep{M_\dep^{\uo,1},Q}{D_\varphi,()}\ =\ \frac{|\mathsf{sat}(\varphi)|}{2^{|\mathsf{var}(\varphi)|}},
	\]
	where $D_\varphi$, $\dep$ and $Q$ are as in the proof of item~(1) of Theorem~\ref{the:uniform-repairs-one}.
	
	Let $M_\dep^{\uo,1}(D_\varphi) = (V,E,\ins{P})$.
	By the definition of the Markov chain generator, $\abs{M_\dep^{\uo,1}(D_\varphi)} = \crsone{D_\varphi}{\dep}$.
	Moreover, we note that each variable $x$ of $\varphi$ induces a violation $\{V(c_x,0),V(c_x,1)\}$ in $D_\varphi$, which can be resolved with one of two operations removing a single fact. Hence, every complete sequence in $\crsone{D_\varphi}{\dep}$ is of length precisely $|\mathsf{var}(\varphi)|$, and for every non-leaf node $s \in V$ that is also in $\opsone{D_\varphi}{\dep}$, $|\opsone{s}{D_\varphi}{\dep}| = 2 \cdot (|\mathsf{var}(\varphi)| - |s|)$. 
	Hence, by Definition of $M_\dep^{\uo,1}$, with $\pi$ being the leaf distribution of $M_\dep^{\uo}(D_\varphi)$, for each $s = \op_1,\ldots,\op_n \in \crsone{D_G}{\dep} = \abs{M_\dep^{\uo,1}(D_\varphi)}$,
	\[
	\pi(s)\ =\ \ins{P}(s_0,s_1) \cdots \ins{P}(s_{n-1},s_n)\ =\ \frac{1}{2^{|\mathsf{var}(\varphi)|} \cdot |\mathsf{var}(\varphi)|!}.
	\]
	This means that each sequence $s \in \crsone{D_\varphi}{\dep} = \abs{M_\dep^{\uo,1}(D_\varphi)}$ is assigned the same non-zero probability, i.e., $\pi$ is the uniform distribution over $\crsone{D_\varphi}{\dep}$. The latter implies that $\probrep{M_\dep^{\uo,1},Q}{D_\varphi,()} = \srfreqone{\dep,Q}{D_\varphi,()}$. As we have already seen that
	\[
	\orfreqone{\dep,Q}{D_\varphi,()}\ =\ \srfreqone{\dep,Q}{D_\varphi,()}\ =\ \frac{|\mathsf{sat}(\varphi)|}{2^{|\mathsf{var}(\varphi)|}}
	\]
	the claim follows.
\end{proof}

\end{document}